\documentclass[11pt]{article}
\usepackage{styling}
\usepackage[ruled,linesnumbered]{algorithm2e}
\usepackage{booktabs}

\date{}
\author{Praneeth Kacham\footnote{CMU. Email: \href{mailto:pkacham@cs.cmu.edu}{\texttt{pkacham@cs.cmu.edu}}} \and Rasmus Pagh\footnote{BARC, University of Copenhagen. Email: \href{mailto:pagh@di.ku.dk}{\texttt{pagh@di.ku.dk}}}  \and Mikkel Thorup\footnote{BARC, University of Copenhagen. Email: \href{mailto:mikkel2thorup@gmail.com}{\texttt{mikkel2thorup@gmail.com}}} \and David P. Woodruff\footnote{CMU. Email: \href{mailto:dwoodruf@cs.cmu.edu}{\texttt{dwoodruf@cs.cmu.edu}}}}

\title{Pseudorandom Hashing for Space-bounded Computation\\ with Applications in Streaming} 

\begin{document}

\maketitle
\thispagestyle{empty}
\begin{abstract}
    We revisit Nisan's classical pseudorandom generator (PRG) for space-bounded computation (STOC 1990) and its applications in streaming algorithms.
    We describe a new generator, \FastPRG, that can be thought of as a symmetric version of Nisan's generator over larger alphabets.
    Our generator allows a trade-off between seed length and the time needed to compute a given block of the generator's output.
    \FastPRG can be used to obtain derandomizations with much better update time and \emph{without sacrificing space} for a large number of data stream algorithms, for example:
    \begin{itemize}
    \item Andoni's $F_p$ estimation algorithm for constant $p>2$ (ICASSP, 2017) assumes a random oracle, but achieves optimal space and constant update time. Using \FastPRG's time-space trade-off we eliminate the random oracle assumption while preserving the other properties. Previously no time-optimal derandomization was known. Using similar techniques, we give an algorithm for a relaxed version of $\ell_p$ sampling in a turnstile stream. Both of our algorithms use $\tilde{O}(d^{1-2/p})$ bits of space and have $O(1)$ update time.
    \item For $0 < p < 2$, the $1 \pm \varepsilon$ approximate $F_p$ estimation algorithm of Kane et al., (STOC, 2011) uses an optimal $O(\varepsilon^{-2}\log d)$ bits of space but has an update time of $O(\log^2(1/\varepsilon)\log\log(1/\varepsilon))$. Using \FastPRG, we show that if $1/\sqrt{d} \le \varepsilon \le 1/d^{c}$ for an arbitrarily small constant $c > 0$, then we can obtain a $1 \pm \varepsilon$ approximate $F_p$ estimation algorithm that uses the optimal $O(\varepsilon^{-2}\log d)$ bits of space and has an update time of $O(\log d)$ in the Word RAM model, which is more than a quadratic improvement in the update time. We obtain similar improvements for entropy estimation. 
    \item CountSketch, with the fine-grained error analysis of Minton and Price (SODA, 2014).
    For derandomization, they suggested a direct application of Nisan's generator, yielding a logarithmic multiplicative space overhead. With \FastPRG we obtain an efficient derandomization yielding the same asymptotic space as when assuming a random oracle.
    Our ability to obtain a time-efficient derandomization makes crucial use of \FastPRG's symmetry. We also give the first derandomization of a recent private version of CountSketch. 
    \end{itemize}
For a $d$-dimensional vector $x$ being updated in a turnstile stream, we show that $\linf{x}$ can be estimated up to an additive error of $\varepsilon\opnorm{x}$ using $O(\varepsilon^{-2}\log(1/\varepsilon)\log d)$ bits of space. Additionally, the update time of this algorithm is $O(\log 1/\varepsilon)$ in the Word RAM model. We show that the space complexity of this algorithm is optimal up to constant factors. However, for vectors $x$ with $\linf{x} = \Theta(\opnorm{x})$, we show that the lower bound can be broken by giving an algorithm that uses $O(\varepsilon^{-2}\log d)$ bits of space which approximates $\linf{x}$ up to an additive error of $\varepsilon\opnorm{x}$. We use our aforementioned derandomization of the CountSketch data structure to obtain this algorithm, and using the time-space trade off of $\FastPRG$, we show that the update time of this algorithm is also $O(\log 1/\varepsilon)$ in the Word RAM model.
\end{abstract}
\newpage
\thispagestyle{empty}
\tableofcontents
\newpage
\setcounter{page}{1}
\section{Introduction}

Space-efficient algorithms are a central theme in computer science.
In many cases, the best such algorithms are \emph{randomized}, which raises the question how to obtain the random bits.
Nisan's classical pseudorandom generator~\cite{nisan} shows that it is possible to expand a small random seed into a longer pseudorandom string that is essentially as good as full randomness if used in a space-bounded computation.
In the context of \emph{streaming algorithms} Nisan's generator has been used not only to reduce the need for random bits, but also because \emph{storing} the seed allows us to re-create random values when they are needed, which is essentially a type of hashing~\cite{charikar2002similarity,indyk2006stable,ahn2012graph,dasgupta2010sparse,jowhari2011tight,gilbert2003one,kane2010exact,frahling2005sampling}.
Ideally we would like the space for the seed to be smaller than the space of the streaming algorithm, but a black-box application of Nisan's generator does not quite live up to this ideal:
the seed length needed is larger than the space of the streaming algorithm by a multiplicative logarithmic factor.
Furthermore, retrieving a block of the string output by the generator requires time proportional to the length of the seed, introducing a significant slowdown in many settings. Hence, obtaining streaming algorithms that are simultaneously space optimal and have a very fast update time---the time required to process each update in the stream---is challenging.

In a turnstile stream, a vector $x \in \R^{d}$ is initially set to $0^d$ and receives updates of the form $(i_1,v_1), (i_2, v_2), \ldots, (i_m, v_m) \in [d] \times \set{-M, \ldots, M}$. On receiving an update of the form $(i_j, v_j)$, the vector $x$ is updated as follows: $x_{i_j} \gets x_{i_{j}} + v_j$. Given a function $f$ with domain $\R^{d}$, at the end of the stream we want to output an approximation to $f(x)$ using space sublinear in $d$ while processing the stream. Some examples of $f$ are (i) the $F_p$ moments $\sum_{i=1}^d |x_i|^p$ and (ii) the number of distinct elements in the stream, often denoted by $\|x\|_0$. 
Turnstile streaming algorithms typically apply a randomized linear map $\bS : \R^{d} \rightarrow \R^{D}$ to the vector $x$ and show that $\bS x$ can be used to approximate $f(x)$ at the end of processing the stream. The advantage of $\bS$ being a linear map (or linear sketch) is that on receiving an update $(i_j, v_j)$ in the stream, the sketch $\bS x$ can be updated by simply adding $(\bS)_{*i_j}v_j$ to the current sketch. Here $(\bS)_{*i_j}$ denotes the $i_j$-th column of the matrix $\bS$. Note that to obtain sublinear space algorithms, we cannot store the full matrix $\bS$ in memory. One of the techniques here is to describe the entries of the matrix $\bS$ using hash functions that are $k$-wise independent for a small value of $k$. For example, the CountSketch matrix of \cite{Charikar2004finding} can be described by using $4$-wise independent hash functions and thus can be stored efficiently. Further, for any $j \in [d]$, the column $(\bS)_{*j}$ can be generated efficiently using the hash functions, thereby allowing for a fast update of the sketch in the stream.

Unfortunately, it is not always easy to show that a matrix $\bS$ generated using hash functions sampled from hash families with limited independence is sufficient to approximate $f(x)$. Indyk \cite{indyk2006stable} showed that we can assume full independence when constructing $\bS$ and later derandomize\footnote{We use the term ``derandomize'' to denote any procedure that \emph{lowers} the randomness required by the algorithm for example by  replacing a uniform random string with a string sampled from a Pseudorandom generator that uses a smaller uniform random seed.} the construction of $\bS$ using pseudorandom generators for small-space computation. The crucial idea of Indyk is that the final state of a linear sketch depends only on the vector $x$ at the end of the stream and not on the sequence of updates that result in the vector $x$. Thus, any algorithm that assumes the columns of $\bS$ are sampled independently can be derandomized using a pseudorandom generator that fools small space algorithms as follows: suppose an algorithm needs $r$ uniform random bits to sample a column of $\bS$. Fixing a vector $x \in \R^d$, we construct a small space algorithm that makes a single pass over an $r \cdot d$ length uniform random string reading $r$ uniform random bits at a time, sampling the column $\bS_{*j}$, and updating the stored sketch by adding $\bS_{*j}x_j$. If each coordinate of $\bS x$ can be stored using $t$ bits, such an algorithm only uses $D \cdot t$ bits of space. As the columns of $\bS$ were sampled independently, we use the analysis assuming full independence to conclude that the sketch computed by the algorithm has certain desired properties with a certain probability. 
Now, the small space algorithm can be \emph{fooled}\footnote{Formally, we say an algorithm using fully-random bits is fooled by a Pseudo Random Generator if the total variation distance between the distribution of outputs of the algorithm when using a string of fully random bits and when using a string sampled from the PRG is small.} using Nisan's PRG with a seed length of $\Theta(Dt \log(rd/Dt))$, which is $\Theta(Dt \log d)$ in many cases as $Dt$ is often much smaller than $d$. The argument essentially shows that sampling columns of $\bS$ using blocks of bits in the pseudorandom string is sufficient to ensure that the properties of $\bS x$, which were proved assuming independent sampling of columns, still hold.

To implement the derandomized algorithm in a stream, given an update $(i_j, v_j)$, we need to generate the column $\bS_{*i_j}$ on the fly. Nisan's PRG allows us to generate a block of bits of the pseudorandom string by sequentially evaluating $O(\log d)$ hash functions $h : \set{0,1}^{Dt} \rightarrow \set{0,1}^{Dt}$ on a $Dt$ length random seed. Thus, if the hash functions $h : \set{0,1}^{Dt} \rightarrow \set{0,1}^{Dt}$ used by the generator can be evaluated in time $T$ on any input, then the block of bits necessary to generate the column $\bS_{*i_j}$ can be computed in time $O(T \log d)$. We think of the pseudorandom string as a ``hash function'' mapping an index $i \in [d]$ to the block of pseudorandom bits needed to generate the column $\bS_{*i}$.

The above argument of Indyk gives a \emph{black box} way to derandomize a streaming algorithm albeit with a logarithmic space blowup ($Dt$ bits to $Dt\log d$ bits) and an update time of $O(T \log d)$. We use the standard Word RAM model with a word size of $w=\Omega(\log d)$ bits to measure the time complexity of the algorithm. If $Dt \gg \log d$, then the hash functions $h : \set{0,1}^{Dt} \rightarrow \set{0,1}^{Dt}$ are slow to compute, making the update time $O(T \log d)$ very large.
We will mostly express space usage of algorithms in \emph{bits}, but with some Word RAM upper bounds having space measured in \emph{words}.

Thus a na\"ive application of Nisan's PRG results in a sub-optimal algorithm space-wise, as well as a large update time. As numerous data stream algorithms use Nisan's PRG for derandomization, it is important to improve upon this. We significantly improve on the space overhead and update time for derandomization with our construction \FastPRG. First, we show that by a careful analysis, in many problems, the pseudorandom string only has to fool an $O(\log d)$ space algorithm instead of the full $O(Dt)$ space algorithm. We show that such a reduction can be performed for the $F_p$ moment estimation algorithms both for $p \in (0,2)$ and for $p > 2$. We further show that using a symmetry property of \FastPRG, we can reduce the derandomization of CountSketch to fooling an $O(\log d)$ space algorithm. We note that CountSketch in our context cannot be derandomized by an application of Nisan's PRG to fooling an $O(\log d)$ space algorithm; see below for further discussion. 
We, however, are able to give a reduction to fooling an $O(\log d)$ space algorithm which enjoys two properties (i) the space complexity of the derandomized algorithm will be $O(Dt + \log^2 d)$,  which is $O(Dt)$ when $Dt = \Omega(\log^2 (d))$ and (ii) the hash functions that are to be evaluated to generate a block of pseudorandom bits that correspond to a column of $\bS$ will now map the domain $\set{0,1}^{O(\log d)}$ to a range $\set{0,1}^{O(\log d)}$. Such hash functions can be evaluated in $O(1)$ time in the Word RAM model. 

Even with our reduction to a PRG needing to fool only an $O(\log d)$ space algorithm, the need to evaluate $O(\log d)$ hash functions one after another to compute a block of pseudorandom bits presents a barrier to obtaining fast update time. We show that in the case of $Dt = d^{\Omega(1)}$, the space-vs-time tradeoff of \FastPRG lets us trade seed length for fast update time, by varying the number of hash functions needed in order to compute a block of pseudorandom bits.

\subsection{Our Results}\label{sec:results}
We construct a new pseudorandom generator, which we call \FastPRG, that satisfies the same guarantees as Nisan's PRG but with an additional symmetry property. Our construction also allows a space-vs-time tradeoff and lets us compute any block of pseudorandom bits quickly if we increase the seed length.
\begin{theorem}[Informal]
    There is a constant $c > 0$ such that for any positive integers $n$, $b$ and $k$ satisfying $b^{k} \le 2^{cn}$, there exists  a pseudorandom generator parameterized by $n$, $b$ and $k$ that converts a random seed of length $O(bkn)$ bits to a bitstring of length $b^k \cdot n$ that cannot be distinguished from truly random bits by any Space($cn$) algorithm making a single pass over the length $b^k \cdot n$ bitstring.
    A given $n$-bit block of this generator can be computed by evaluating $k$ $2$-wise independent hash functions mapping $\set{0,1}^n$ to $\set{0,1}^n$.
\end{theorem}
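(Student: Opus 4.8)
The plan is to follow Nisan's analysis, adapted to larger alphabets and to the symmetric branch structure. First I would fix the construction: the seed consists of $bk$ independent hash functions $h_{j,a}\colon\set{0,1}^n\to\set{0,1}^n$ for $j\in[k]$ and $a\in\{0,\dots,b-1\}$, each drawn from a $2$-wise independent family (describable in $O(n)$ bits, e.g.\ $z\mapsto\alpha z+\beta$ over $\mathbb F_{2^n}$), so the total seed length is $O(bkn)$. For a block index $i\in\{0,\dots,b^k-1\}$ with base-$b$ digits $(i_1,\dots,i_k)$, the $i$-th $n$-bit output block is $h_{1,i_1}(h_{2,i_2}(\cdots h_{k,i_k}(c_0)\cdots))$ for a fixed constant $c_0$; this is a composition of $k$ evaluations of $2$-wise independent hash functions mapping $\set{0,1}^n$ to $\set{0,1}^n$, which is exactly the claimed per-block cost, and the seed-length claim is immediate. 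Viewed as a depth-$k$ $b$-ary tree, the $b$ children of any node at level $j$ are obtained by applying $h_{j,0},\dots,h_{j,b-1}$ to that node's value, and the $b^{k-j}$ nodes at a given level \emph{reuse} the same $b$ hash functions; this reuse is what keeps the seed short, and it is precisely what the analysis must control.

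Grouping the output bitstream into $n$-bit blocks, a single-pass $\mathrm{Space}(cn)$ algorithm becomes a read-once oblivious branching program $P$ of width $w\le 2^{cn}$ over the alphabet $\set{0,1}^n$ with $b^k$ layers, and I want to bound $\|P(G)-P(U)\|_1$, where $G$ is the generator's output over a random seed, $U$ is uniform, and $P(\cdot)$ denotes the induced distribution on final states. The key tool is the standard mixing lemma for pairwise independent hashing, which I would use in the form: if $h$ is $2$-wise independent over $\set{0,1}^n$ and $f\colon\set{0,1}^n\to[w]$, $g\colon[w]\times\set{0,1}^n\to[w]$ are arbitrary, then $\mathbb E_h\bigl\|\,\mathrm{dist}_z[\,g(f(z),h(z))\,]-\mathrm{dist}_{z,z'}[\,g(f(z),z')\,]\,\bigr\|_{1}\le\mathrm{poly}(w)\cdot 2^{-n/2}$. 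This is proved by the second-moment method: partition $\set{0,1}^n$ by the value of $f$, and for each part $S$ and each preimage set $T$ of $g(v,\cdot)$ bound $\mathbb E_h\bigl|\Pr_{z\sim S}[h(z)\in T]-|T|/2^n\bigr|\le\sqrt{\mathrm{Var}_h}\le\sqrt{|T|/(|S|\,2^n)}$, using that pairwise independence kills the cross terms in the variance, and then sum using two applications of Cauchy--Schwarz.

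With the mixing lemma in hand, the proof is a hybrid argument that unpacks the tree one level at a time, and within a level one branch at a time, in the order the blocks are read by $P$. The inductive invariant I would maintain is: for every $0\le\ell\le k$, the partially-pseudorandom distribution $R_\ell$ that replaces the top $\ell$ levels of the tree by fresh independent $n$-bit seeds (keeping below each a depth-$(k-\ell)$ generator with its $b(k-\ell)$ still-live, reused hash functions) satisfies $\mathbb E[\,\sup_{P'}\|P'(R_\ell)-P'(U)\|_1\,]\le\varepsilon_\ell$, the supremum ranging over width-$w$ branching programs of the appropriate length and the expectation over the still-live hash functions. Keeping the supremum \emph{inside} the expectation is essential: passing from $R_\ell$ to $R_{\ell+1}$ conditions on the hash functions in the bottom $k-\ell-1$ levels (shared across all subtrees), so the subprogram fed to the inductive hypothesis depends on those functions and we need the bound to hold for it. Each transition replaces the $b^{\ell+1}$ seeds of the form $h^{(\ell)}_a(w_u)$, $u\in[b^\ell]$, $a\in\{0,\dots,b-1\}$ — which are only pairwise independent across the $b^\ell$ sibling nodes, not fully independent — by truly independent seeds; one application of the mixing lemma per released seed bounds the cost by $\mathrm{poly}(w)\cdot 2^{-n/2}$, and summing over the $O(b^k)$ hash applications in the tree gives $\varepsilon_k\le b^k\cdot\mathrm{poly}(w)\cdot 2^{-n/2}$. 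Since $b^k\le 2^{cn}$ and $w\le 2^{cn}$, this is at most $2^{(O(c)-1/2)n}=o(1)$ once the absolute constant $c$ is chosen small enough, which is the same constant that appears in the hypotheses $\mathrm{Space}(cn)$ and $b^k\le 2^{cn}$.

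The step I expect to be the main obstacle — and the only place where this goes beyond routinely retracing Nisan — is getting the ordering and conditioning of the hybrid exactly right so that every invocation of the mixing lemma is legitimate: at the moment a particular child seed is released, the remainder of $P$'s computation must be expressible as a function $g$ that does \emph{not} depend on the hash function being released, which dictates the order in which seeds are released and forces the parts of $P$'s state that do depend on already-released or not-yet-touched hash functions to be absorbed into the $O(\log w)$-bit state argument $f$ of the mixing lemma. A related point to get right is that, although the $b$ branch-hash-functions at a single node are independent and therefore produce genuinely i.i.d.\ seeds \emph{at that node}, reusing the same $b$ functions at the $b^\ell$ sibling nodes makes the seeds only pairwise independent globally; the mixing lemma is exactly the device that upgrades this global pairwise independence to ``looks fully independent to a width-$w$ test'', and the symmetry of the construction (all branches handled by structurally identical, independent $2$-wise independent families) is what makes the accounting uniform across the $b$ branches.
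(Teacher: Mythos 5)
Your construction is the same as the paper's (using a fixed root $c_0$ versus a fresh random one does not change the output distribution, since the top-level branch hashes are independent and marginally uniform). Your proof route, however, is genuinely different. The paper follows Nisan's proof skeleton rather than a direct hybrid: it introduces a set-independence notion ($h$ is $(\varepsilon,A)$-independent for $A\subseteq\{0,1\}^{bn}$ when $|\Pr_x[h(x)\in A]-\varrho(A)|\le\varepsilon$, the paper's Definition~\ref{def:set-independence}), proves via a variance/Chebyshev bound that a $2$-wise independent $h$ violates this only with probability $\varrho(A)/(\varepsilon^2 2^n)$ (Lemma~\ref{lma:independence-of-hash-function}), and then does an induction on $k$ in which each step invokes two ingredients: the power-difference bound $\|M^b-N^b\|\le b\,\|M-N\|$ (Lemma~\ref{lem:power-difference}) to control the $b$-fold independent composition of the depth-$(k-1)$ generator, and a union bound over $2^{2w}$ sets $B_{i,j}^{h_0,\dots,h_{k-2}}$ to control the single recombination introduced by the top-level hash. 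The innovation the paper stresses is working with joint sets $B_{i,j}\subseteq\{0,1\}^{bn}$ rather than Nisan's product form $A\times B$, precisely so as not to introduce $b-1$ intermediate states, which would degrade with $b$. Your route is the INW-style direct hybrid (level-by-level freshening of the tree) with a pairwise-independence mixing lemma; your mixing lemma handles the full $bn$-bit output of $h$ in one shot, so it also avoids intermediate states, but by a different mechanism, and it yields an averaged bound rather than the paper's with-high-probability-over-$h$ statement for a fixed FSM.

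There is a genuine gap in the per-seed hybrid exactly where you flagged the main obstacle, and the resolution you sketch does not close it. When you free the tuple $h^{(\ell)}_a(w_u)$, the state of $P$ just before segment $u$ depends on $h^{(\ell)}_a(w_1),\dots,h^{(\ell)}_a(w_{u-1})$, hence on the very hash function being averaged over; so either $f$ (if you release in reverse reading order) or $g$ (if forward) becomes $h$-dependent, which your mixing lemma as stated does not permit. The forward order is unrecoverable ($g$ ranges over exponentially many possibilities), whereas the reverse order can be rescued by a trick you do not mention: union-bound over the $\le w$ possible values of the $h$-dependent before-state, applying the mixing lemma once per value with $f\equiv$~const; this costs an extra factor $w$, absorbed into $\mathrm{poly}(w)$. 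Without this the seed-by-seed accounting does not go through as written. Second, the invariant $\mathbb E_h[\sup_{P'}\|P'(R_\ell)-P'(U)\|_1]\le\varepsilon_\ell$ with the supremum \emph{inside} the expectation is stronger than the variance-based mixing lemma gives (which yields $\sup_{P'}\mathbb E_h[\cdot]$), and it is not needed: the hybrid works with the original $P$ held fixed, because after conditioning on $h_{<}$ the induced subprogram $P'(P,h_{<})$ is a deterministic function of $h_{<}$ and the mixing-lemma bound applies uniformly to it; the paper avoids the issue entirely by proving a with-high-probability statement for each fixed $Q$.
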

Our generator uses the random seed of length $O(bkn)$ to sample $b \cdot k$ hash functions from a $2$-wise independent hash family $\calH = \set{h : \set{0,1}^n \rightarrow \set{0,1}^n}$ such as \cite[Theorem~3(b)]{Dietzfelbinger96universal} and uses $n$ bits as an additional random seed. 

In the Word RAM model with a word size $\Omega(n)$, an arbitrary block of $n$ bits in the pseudorandom string generated by our generator can be computed in time $O(k)$. Fixing a value of $b^{k} = t$, we obtain that \FastPRG needs a seed of size $O(t^{1/k} k n)$ to be able to generate a pseudorandom string of length $t \cdot n$ supporting the computation of an arbitrary block of pseudorandom bits in time $O(k)$. By varying $k$, we get a space versus time trade off. Even more informally, our result can be stated as follows:
\begin{theorem}[Informal, Compare with {\cite[Theorem~1]{nisan}}]
A space $S$ algorithm making a single pass over a length $R \le \exp(S)$ random string can be fooled by a pseudorandom generator with a seed length $O((R/S)^{1/k} \cdot k \cdot S)$ where $k$ is an integer parameter of $\FastPRG$. A block of $S$ random bits in the pseudorandom string can be computed by sequentially evaluating $k$ hash functions mapping $\set{0,1}^S$ to $\set{0,1}^S$.
\end{theorem}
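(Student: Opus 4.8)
The plan is to construct \FastPRG recursively, mirroring Nisan's doubling construction but over an alphabet of block size $n$ and with a branching factor $b$ at each of the $k$ levels. First I would fix the $2$-wise independent family $\calH = \{h : \{0,1\}^n \to \{0,1\}^n\}$ (e.g.\ \cite[Theorem~3(b)]{Dietzfelbinger96universal}, which is evaluable in $O(1)$ word-RAM time for word size $\Omega(n)$) and define the generator level by level: a level-$0$ generator is just the identity on an $n$-bit seed; a level-$j$ generator $G_j$ takes a level-$(j-1)$ seed together with $b-1$ fresh hash functions $h_1,\dots,h_{b-1}$ from $\calH$ and outputs the concatenation $G_{j-1}(y),\, G_{j-1}(h_1(y)),\,\dots,\,G_{j-1}(h_{b-1}(y))$, viewing the first output block of $G_{j-1}$ as the ``anchor'' $y$. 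Unrolling, $G_k$ has output length $b^k\cdot n$, uses $(b-1)k$ hash functions plus one $n$-bit seed, hence seed length $O(bkn)$, and the block at position $(i_{k},\dots,i_1)$ in mixed-radix base $b$ is obtained from the $n$-bit seed $s$ by computing $h^{(1)}_{i_1}(h^{(2)}_{i_2}(\cdots h^{(k)}_{i_k}(s)\cdots))$ where $h^{(j)}_0 = \mathrm{id}$ and $h^{(j)}_1,\dots,h^{(j)}_{b-1}$ are the level-$j$ hash functions; this is a composition of at most $k$ hash evaluations, giving the $O(k)$ time bound. Setting $b^k = t$ gives the seed length $O(t^{1/k} k n)$, and the second informal theorem follows by plugging $n = S$, $t = R/S$.

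The core is the fooling guarantee, and here I would follow Nisan's analysis adapted to the block alphabet. The object to control is, for a fixed $\mathrm{Space}(cn)$ branching program (equivalently, a layered read-once ordered branching program on the $b^k$ blocks of width $2^{cn}$), the statistical distance between its acceptance behavior under a truly random block-string and under $G_k$. The key lemma is the standard one: for a $2$-wise independent $h$ and any two functions $A, B : \{0,1\}^n \to [0,1]$ arising as ``left'' and ``right'' sub-computation profiles of the branching program (vectors indexed by width-$2^{cn}$ states), the quantity $\mathbb{E}_{y,h}\big[\langle A(y), B(h(y))\rangle\big]$ is close to $\mathbb{E}_{y,y'}\big[\langle A(y), B(y')\rangle\big]$ up to an error that is polynomially small in $2^{-n}$ times the number of states; this is where $2$-wise independence suffices because we only need to control a bilinear/collision form, exactly as in Nisan's argument (the ``mixing lemma''). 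I would then apply this lemma inductively over the $k$ levels and the $b$-way branching, accumulating an error bounded by roughly $b^k \cdot 2^{\Theta(cn)} \cdot 2^{-\Omega(n)}$, which is $2^{-\Omega(n)}$ after choosing $c$ small enough and using the hypothesis $b^k \le 2^{cn}$. The symmetry property (invariance of the output distribution under permuting the $b^k$ blocks, or at least under the natural block-index symmetries) is immediate from the recursive construction because every block is produced by the same composition rule applied to a different index tuple, with the hash functions shared across siblings; I would state and verify this as a separate short claim since later applications (CountSketch derandomization) depend on it.

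The main obstacle I anticipate is making the inductive error accounting tight enough that the $b^k \le 2^{cn}$ condition is exactly what is needed, rather than something weaker like $b^k \le 2^{cn/k}$. In Nisan's original argument the recursion depth is $\log R$ and each level contributes a factor, so one must be careful that the per-level error does not get multiplied by the number of \emph{states} in a way that compounds over depth; the fix is the usual one of charging the error against a union bound over pairs of states at each cut and observing that there are only $k$ cuts of interest (one per level) rather than $b^k$, but with $b$-way branching at each cut one has to handle $b$ simultaneous applications of the mixing lemma to the \emph{same} seed $y$ with \emph{independent} fresh hashes, which requires a hybrid argument swapping one branch at a time. A secondary technical point is that the branching program reads the blocks in a fixed order that must be compatible with the mixed-radix indexing used by the generator; I would simply define the block ordering of the output to be the natural one (most-significant level first) so that the generator's recursive cuts align with prefixes of the input, exactly as in Nisan's construction, and note that for the streaming applications any fixed ordering is fine since Indyk's reduction lets us choose it.
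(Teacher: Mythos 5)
Your construction differs from the paper's in a way that matters. The paper samples all $b$ hash functions $\bh_i^{(0)},\dots,\bh_i^{(b-1)}$ at each level i.i.d.\ from the $2$-wise independent family; you instead fix $h_i^{(0)}:=\mathrm{id}$ and sample only $b-1$ fresh hashes per level, which is the ``natural'' generalization of Nisan where one branch passes the seed through unchanged. Both variants have seed length $O(bkn)$ and compute any output block with at most $k$ hash evaluations, so your construction is a plausible candidate for the bare fooling statement, but your parenthetical claim that the symmetry property is ``immediate'' is wrong for your construction. The paper's XOR-shift symmetry, that $\bgamma^{\oplus\ell}$ has the same distribution as $\bgamma$, holds precisely because the $b$ hash functions at each level are exchangeable, so permuting their labels by $j\mapsto j\oplus\ell_i$ preserves the joint law. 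With $h^{(j)}_0=\mathrm{id}$ pinned, the index-$0$ branch is not exchangeable with the others (block $(0,\dots,0)$ is literally the raw seed), so the shift changes the distribution. This symmetry is exactly what the CountSketch derandomization relies on later, so the discrepancy is not cosmetic.

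For the fooling analysis you propose to follow Nisan's mixing lemma in its original ``left/right'' form, i.e.\ a bilinear form $\langle A(y),B(h(y))\rangle$ summed over an intermediate state, plus a hybrid over the $b$ branches. The paper deliberately avoids this: it replaces Nisan's two-set condition with a one-set condition $|\Pr_x[h(x)\in A]-\varrho(A)|\le\varepsilon$ where $h:\{0,1\}^n\to\{0,1\}^{bn}$ is the vector of all $b$ hashes at a level and $A\subseteq\{0,1\}^{bn}$, so that only pairs $(i,j)$ of start/end states of the FSM appear and no union bound over intermediate states is ever taken. The resulting per-level failure probability is $\frac{2^{3w}}{(b-1)^2\varepsilon^2 2^n}$, which does not grow with $b$; the paper explicitly says that the two-set route would need $b-1$ intermediate states and ``lead to much worse bounds with larger~$b$.'' Your proposed hybrid (swap one branch at a time) is a reasonable attempt to dodge a union bound over $(b{+}1)$-tuples, but you do not carry it out, and there is a subtlety you leave unaddressed: in hybrid step $j$ the ``left'' state profile depends on $s,h_1(s),\dots,h_{j-1}(s)$, which are all correlated through the same seed $s$ that also enters the term being swapped, $h_j(s)$. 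One must fix $h_1,\dots,h_{j-1}$, bound the failure of $h_j$ over state triples $(i,l,l')$ for each such fixing, and track how the $\varrho$-values combine; it is not obvious this recovers the stated $b^k\le 2^{cn}$ regime. The paper's one-set independence lemma applied once per level avoids all of this and is the essential new idea missing from your sketch.
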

Setting $k = \log(R/S)$ in the above theorem, we recover Nisan's result.

Another nice property of \FastPRG is that the distribution of the pseudorandom string is symmetric in the following specific way. Let $\bgamma$ be sampled from $\FastPRG$. Let $\bgamma$ be written as $\bgamma_0 \circ \bgamma_1 \circ \cdots \circ \bgamma_{b^{k}-1}$ where $\circ$ denotes concatenation and each $\bgamma_i$ is a length-$n$ block. For arbitrary $\ell \in \set{0,\ldots,b^{k}-1}$, define $\bgamma^{\oplus \ell} := \bgamma_{0 \oplus \ell} \circ \bgamma_{1 \oplus \ell} \circ \cdots \circ \bgamma_{(b^{k}-1) \oplus \ell}$ where $\oplus$ denotes the bitwise xor operation. The construction of \FastPRG ensures that $\bgamma^{\oplus \ell}$ has the same distribution as $\bgamma$. In all of our algorithms, we use a block of $\bgamma$ to generate appropriate random variables for the corresponding coordinate of the vector $x$ that is being streamed. To analyze the properties of our streaming algorithms, we create \emph{abstract algorithms}\footnote{Algorithms that are created only for analysis and are not implemented.} that make a single pass over the pseudorandom string such that the behavior of the streaming algorithm is the same as the abstract algorithm. Since the distribution of the string $\bgamma$ is the same as the distribution of $\bgamma^{\oplus \ell}$, the distribution of outputs of the abstract algorithm when run on $\bgamma^{\oplus \ell}$ is the same as the distribution of the outputs when run directly on $\bgamma$. So given a fixed $\ell$, the abstract algorithm can ``know'', throughout its execution, the block of pseudorandom bits corresponding to the coordinate $\ell$ that our original streaming algorithm uses. ``Knowing'' these bits is important in our analysis of the CountSketch data structure.

\paragraph{Applications.}
We use \FastPRG to obtain space-optimal algorithms for constant factor $F_p$ ($p > 2$) estimation algorithms with an $O(1)$ update time in turnstile streams. 
Recall the turnstile stream setting: a vector $x \in \R^{d}$ is being maintained in the stream. The vector $x$ is initialized to $0^d$ and receives a stream of updates $(i_1, v_1), (i_2, v_2), \ldots, (i_m, v_m) \in [d] \times \set{-M, \ldots, M}$. Unless otherwise specified, we assume in all of our results that $m, M \le \poly(d)$. For $0 < p < \infty$, we define $F_p (x) := \sum_{i=1}^{d} |x_i|^p$ and $\ell_p(x) := (\sum_{i=1}^d |x_i|^p)^{1/p}$. 
\begin{theorem}
Given $p > 2$, there is a turnstile streaming algorithm that uses $O(d^{1-2/p}\log d)$ words of space and outputs a constant factor approximation to $\lp{x}$. Further, the streaming algorithm processes each update to the stream in $O(1)$ time in the Word RAM model on a machine with $\Omega(\log d)$ word size.
\end{theorem}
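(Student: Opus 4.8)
The plan is to take Andoni's $F_p$ estimator as a black box in the random‑oracle setting and replace its only source of full randomness — the vector of $d$ i.i.d.\ exponentials — by the output of \FastPRG, re‑running the analysis against the pseudorandom string. Recall the structure of Andoni's algorithm: one draws i.i.d.\ standard exponentials $E_1,\dots,E_d$ and works with the rescaled vector $y$, $y_i = x_i/E_i^{1/p}$; an update $(i_j,v_j)$ adds $v_j/E_{i_j}^{1/p}$ to $y$, so $y$ is still maintained by a linear sketch. By max‑stability of exponentials, $\max_i |x_i|^p/E_i$ is distributed as $\lp{x}^p/E$ for a single exponential $E$, so $\max_i |y_i| = \Theta(\lp{x})$ with constant probability; and since $\E[\sum_i x_i^2/E_i^{2/p}] = \Gamma(1-2/p)\,\|x\|_2^2 \le \Gamma(1-2/p)\,d^{1-2/p}\lp{x}^2$, we get $\|y\|_2^2 = O(d^{1-2/p}\lp{x}^2)$ with constant probability. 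On the intersection of these events the maximizing coordinate of $y$ is an $\ell_2$ heavy hitter with a constant‑factor gap (after choosing the number of buckets $B$ to be a large, $p$‑dependent constant times $d^{1-2/p}$), so a CountSketch‑type structure with $O(d^{1-2/p}\log d)$ words and $O(1)$ update time recovers its value, which is the output estimate; $O(1)$ independent repetitions and a median push the success probability above $2/3$. Crucially, the hash and sign functions of this CountSketch need only $O(1)$‑wise independence, so they are stored explicitly in $O(\log d)$ bits and evaluated in $O(1)$ time — no derandomization is needed there.

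The core of the argument is to isolate how the analysis actually uses the exponentials: only through events that a single‑pass, $O(\log d)$‑space algorithm can certify, namely (i) $\sum_i x_i^2/E_i^{2/p} \le C\, d^{1-2/p}\lp{x}^2$, and (ii) the running maximum $\max_i |x_i|/E_i^{1/p}$ lies in $[c_1\lp{x}, c_2\lp{x}]$ and exceeds by a large constant factor the quantity $\sqrt{\,\|y\|_2^2/B\,}$ with the maximizing coordinate removed. An \emph{abstract algorithm} — which may hardwire $x$ (hence $\lp{x}$), since it is used only for the analysis — reads the \FastPRG output one $O(\log d)$‑bit block at a time (block $i$ producing a discretized $E_i$), maintains a running sum of $x_i^2/E_i^{2/p}$ together with the running maximum and the value of its argmax, all in $O(\log d)$ bits, and accepts iff (i) and (ii) hold; this is a $\mathrm{Space}(O(\log d))$ one‑pass computation over a string of length $R = O(d\log d) \le \exp(O(\log d))$. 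By the informal theorem above with $S = O(\log d)$, the probability of (i)$\wedge$(ii) under the pseudorandom exponentials differs from that under truly random exponentials by $o(1)$, hence is still $\Omega(1)$. Finally, because the CountSketch randomness is taken from an independent part of the seed, the overall success probability is at least $\Pr[\text{(i)}\wedge\text{(ii)}]\cdot \min_{y:\ \text{(i)},\text{(ii)}}\Pr_{\mathrm{CS}}[\text{CS recovers the heavy coordinate of }y]$, and the second factor is $\Omega(1)$ since standard CountSketch recovers a heavy, well‑separated coordinate with constant probability for \emph{every} $y$ meeting (i)–(ii); $O(1)$ repetitions with fresh seeds amplify this.

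It remains to set \FastPRG's parameters so as to buy $O(1)$ update time with no space blow‑up. Take $n = \Theta(\log d)$, $b^k = \Theta(d)$, and $k = \lceil p/(p-2)\rceil = \Theta_p(1)$; then $b^k \le 2^{cn}$ holds once $n$ is a large enough constant times $\log d$, the seed has length $O(b k n) = O(d^{1/k}\,k\,\log d) = O(d^{1-2/p}\log d)$ bits, i.e.\ $O(d^{1-2/p})$ words (dominated by the sketch), and an $O(\log d)$‑bit block is computable by $k = O(1)$ evaluations of $2$‑wise independent hash functions on $O(\log d)$‑bit words, i.e.\ in $O(1)$ time in the Word RAM model. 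Assembling the pieces: each update $(i_j, v_j)$ triggers $O(1)$ work to produce the block for $i_j$, derive $E_{i_j}$ and the scaled update $v_j/E_{i_j}^{1/p}$, and update the CountSketch; the total space is the sketch's $O(d^{1-2/p}\log d)$ words, plus the $O(d^{1-2/p})$‑word seed, plus $O(\log^2 d)$ auxiliary bits for the limited‑independence hash/sign functions.

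I expect the main obstacle to be the second step: pinning down the precise "heaviness $+$ gap" condition on $y$ so that simultaneously (a) it is checkable in $O(\log d)$ space in one pass, (b) it holds with constant probability for truly random exponentials — which needs the max‑stability computation together with a lower‑tail bound ensuring the argmax is not merely comparable to, but a constant factor above, the noise floor $\sqrt{\|y\|_2^2/B}$, likely forcing $B$ to be a large constant times $d^{1-2/p}$ — and (c) it implies CountSketch recovery uniformly over \emph{all} conforming $y$, so that the independence‑based decoupling of the exponential randomness from the CountSketch randomness is valid. A secondary technical point is discretizing the exponentials: generating an approximate $E_i$ from the $O(\log d)$ random bits of its block (e.g.\ via the inverse CDF applied to a uniform variate) and checking that the rounding error, together with the $\poly(d)$ bounds on $m$ and $M$, is negligible against the constant‑factor slack used throughout the argument.
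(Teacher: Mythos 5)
Your overall strategy — discretize the exponentials, build a small one‑pass abstract FSM that certifies the properties of $y$ that Andoni's analysis actually needs, fool that FSM with \FastPRG{} at parameters giving $O(1)$‑time block access without a space blowup, then finish with Andoni's CountSketch argument — is exactly the paper's approach, and your parameter choice $b^{k}=\Theta(d)$, $k=\lceil p/(p-2)\rceil$ is a valid instantiation of their $b=d^{\epsilon}$, $k=O(1/\epsilon)$ with $\epsilon<1-2/p$.

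However, there is a genuine gap in how you handle the CountSketch step. You assert that ``the hash and sign functions of this CountSketch need only $O(1)$‑wise independence, so they are stored explicitly in $O(\log d)$ bits and evaluated in $O(1)$ time.'' That is not correct for Andoni's construction, and it is precisely the second place where something nontrivial must be done. Andoni uses a \emph{single} table of $O(d^{1-2/p}\log d)$ buckets (not $O(\log d)$ median repetitions) and reads off $\linf{\boldf}$; for the argument that $\linf{\bS\bz}\approx\linf{\bz}$ one needs the $\approx\log^{p}d$ ``large'' coordinates of $\bz$ to land in distinct buckets \emph{and} the noise in every bucket to concentrate simultaneously, which requires $\Theta(\log d)$‑wise independent $\bh,\bsigma$, not constant‑wise. (Your alternative of ``$O(1)$ repetitions and a median'' with constant‑wise hash functions does not give a per‑coordinate recovery guarantee strong enough to isolate the max.) A naively represented $O(\log d)$‑wise independent hash function over $[d]$ takes $\Theta(\log d)$ time to evaluate, which would destroy your $O(1)$ update bound. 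The paper resolves this by drawing $\bh,\bsigma$ from the tabulation‑based construction of \cite{CPT15} (Theorem~\ref{thm:fast-hash-functions}): this yields an $O(\log d)$‑wise independent family whose members can be evaluated in $O(1)$ time at the cost of $O(d^{\epsilon})$ bits of storage, which is $o(d^{1-2/p})$ when $\epsilon<1-2/p$ and hence dominated by the sketch. You will need this (or an equivalent fast limited‑independence family) to make the ``$O(1)$ update time'' claim go through; the $O(\log^2 d)$ bits you budget for the hash functions is neither enough space for CPT15 nor consistent with constant‑wise independence being adequate.

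A secondary, smaller point: the event your FSM certifies should include the count $|\{i : |y_i| \ge \lp{x}/(c\log d)\}|$ being $O(\log^{p}d)$, as in Andoni's property list — this is what you use to argue the heavy coordinates of $y$ can be separated in the CountSketch table. Your ``argmax exceeds the noise floor by a constant factor'' condition is not obviously monitorable in $O(\log d)$ space in a single pass (the noise floor depends on $\|y\|_2$, which is only known at the end), and in any case it is not the condition Andoni's proof conditions on. Replacing it with the explicit three events the paper's FSM tracks — $\linf{\bz}$ in range, the count of threshold‑crossings, and a rounded running $\opnorm{\bz}^2$ — makes the single‑pass small‑space claim clean and matches Andoni's analysis directly.
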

We also show that for $p > 2$, similar techniques used in obtaining the above theorem can be used to obtain an algorithm that performs a relaxed version of the approximate $\ell_p$ sampling in the stream. See Section~\ref{subsec:lp-sampling}.

We next give an algorithm to estimate $F_p$ moments for $0 < p < 2$ in the high accuracy regime. We show that the algorithm of \cite{fast-moment-estimation-optimal-space} can be implemented using \FastPRG \emph{without a space blowup} and also ensure that the algorithm has a faster update time.
\begin{theorem}
    Let $0 < p < 2$ be a parameter and $1/\sqrt{d} \le \varepsilon \le 1/d^{c}$ be the desired accuracy for a constant $0 < c \le 1/2$. There is a streaming algorithm that uses $O(\varepsilon^{-2})$ words of space and outputs a $1 \pm \varepsilon$ approximation for $\lp{x}^p$. The streaming algorithm processes each update to the stream in $O(\log d)$ time in the Word RAM model on a machine with $\Omega(\log d)$ word size.
\end{theorem}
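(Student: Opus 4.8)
The plan is to take the optimal-space $F_p$ algorithm of \cite{fast-moment-estimation-optimal-space} for $0<p<2$ in its idealized form that assumes fully independent randomness, and to re-derandomize it with \FastPRG, combining the reduction to fooling an $O(\log d)$-space one-pass algorithm (announced in the introduction for $F_p$ sketches) with \FastPRG's space--time trade-off. I need the following structure from \cite{fast-moment-estimation-optimal-space}: the algorithm maintains a linear sketch $\Pi x$ with $O(\varepsilon^{-2})$ coordinates, each stored to $O(\log d)$ bits, so the sketch is $O(\varepsilon^{-2})$ words; each column $\Pi_{*i}$ has $O(\log(1/\varepsilon))$ nonzeros, each a bucket index together with a bounded-precision $p$-stable value, and determined by $O(\log(1/\varepsilon))$ blocks of $\Theta(\log d)$ random bits labelled by $i$; and there is an estimator $\hat F(\Pi x)$, computable in post-processing within the same space, that equals $(1\pm\varepsilon)\lp{x}^p$ with constant probability when the random bits are uniform and independent. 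In the stated regime $1/\sqrt d\le\varepsilon\le 1/d^{c}$ we have $\log(1/\varepsilon)=\Theta(\log d)$ and $d^{2c}\le\varepsilon^{-2}\le d$, i.e.\ the $\varepsilon^{-2}=d^{\Omega(1)}$ regime in which \FastPRG's trade-off helps.

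First I would set up the Indyk-style abstract algorithm: fixing $x$, forming $\Pi x$ is equivalent to reading, for $i=1,\dots,d$ in order, the $O(\log(1/\varepsilon))$ blocks of $\Theta(\log d)$ pseudorandom bits labelled by $i$, using them to build $\Pi_{*i}$, and adding $\Pi_{*i}x_i$ to a running sketch. We lay the $O(d\log(1/\varepsilon))$ blocks out in this order, so the abstract algorithm is a left-to-right single pass over a string $\bgamma$ of length $\Theta(d\log^2 d)$ bits; by linearity the final sketch, and hence the output, is a deterministic function of $x$ and $\bgamma$, independent of the stream order. If the working memory of this abstract algorithm were taken to be the whole sketch, it would use $\Theta(\varepsilon^{-2}\log d)$ bits, and a black-box invocation of Nisan's generator would cost $O(\varepsilon^{-2}\log^2 d)$ bits of seed and $O(\log^2 d)$ time per update --- precisely the two losses we want to avoid.

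The crux --- and the step I expect to be the main obstacle --- is to show that $\bgamma$ need only fool $O(\log d)$-space one-pass algorithms over $\bgamma$. I would do this by opening up the correctness proof of \cite{fast-moment-estimation-optimal-space} and rewriting the event ``$\hat F(\Pi x)=(1\pm\varepsilon)\lp{x}^p$'' as an intersection of $\poly(d)$ events $\mathcal E_1,\dots,\mathcal E_m$, each of the form ``a scalar statistic $Q_j$ falls in a prescribed interval'' and each holding with probability $\ge 1-1/\poly(d)$ under full randomness, where every $Q_j$ is computable by a single pass over $\bgamma$ using only $O(\log d)$ bits of memory: a single bucket counter $\sum_{i\to j}\theta_{ji}x_i$, a count or a maximum over coordinates, a bounded combination of such, or the per-coordinate check that a fixed candidate heavy hitter is (not) heavy. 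Then, choosing \FastPRG's block length $n=\Theta(\log d)$ with a large enough constant --- so that it fools $O(\log d)$-space testers with error far below $1/m$ --- every $\mathcal E_j$ is preserved and a union bound finishes correctness. Two points need care. The events attached to heavy-hitter identification appear to depend on the entire sketch; I would handle these by fixing the candidate coordinates one at a time, and for each candidate $\ell$ using the symmetry $\bgamma^{\oplus\ell}\stackrel{d}{=}\bgamma$ of \FastPRG so that the $O(\log d)$-space tester ``knows'' the block $\bgamma_\ell$ throughout its pass --- this is the mechanism the introduction highlights for CountSketch, and the ``high end'' part of \cite{fast-moment-estimation-optimal-space} is of CountSketch type. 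And one must confirm that the per-coordinate randomness is genuinely consumed blockwise in the order the abstract algorithm assumes, which holds because we supply each nonzero of $\Pi_{*i}$ a fresh $\Theta(\log d)$-bit \FastPRG block rather than reusing outputs of limited-independence hashes.

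Finally I would fix parameters and collect the bounds. Instantiate \FastPRG with $n=\Theta(\log d)$, with $b^k$ equal to the number of blocks the abstract algorithm reads ($b^k=\Theta(d\log d)$), and with $k=\lceil 2/c\rceil=O(1)$; the constraint $b^k\le 2^{cn}$ holds once the constant in $n$ is large enough. By \FastPRG's space--time trade-off the seed length is $O(bkn)=O((d\log d)^{1/k}\log d)=O(d^{c}\log d)$ bits, which is at most $O(\varepsilon^{-2}\log d)$ bits (i.e.\ $O(\varepsilon^{-2})$ words) since $\varepsilon^{-2}\ge d^{2c}\ge d^{c}$; so the total space is $O(\varepsilon^{-2})$ words for the sketch plus $O(\varepsilon^{-2})$ for the seed. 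For the update time, on input $(i,v)$ the algorithm computes the $O(\log(1/\varepsilon))=O(\log d)$ blocks labelled by $i$, each by evaluating $k=O(1)$ $2$-wise independent hash functions from $\set{0,1}^{\Theta(\log d)}$ to $\set{0,1}^{\Theta(\log d)}$ --- $O(1)$ word-RAM time per block --- and turns each block into its (bucket, $p$-stable) pair in $O(1)$ word-RAM time as in \cite{fast-moment-estimation-optimal-space}; hence $O(\log d)$ time per update. The improvement over \cite{fast-moment-estimation-optimal-space} is exactly that reconstructing the randomness of one nonzero drops from $\Theta(\log(1/\varepsilon))$ time (evaluating a $\Theta(\log(1/\varepsilon))$-wise independent hash) to $O(1)$ (reading a \FastPRG block). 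Combined with the correctness argument above this yields the claimed algorithm: $O(\varepsilon^{-2})$ words of space, $O(\log d)$ update time, output $(1\pm\varepsilon)\lp{x}^p$ with constant probability (boostable in the usual way). The bulk of the work is the previous paragraph --- recasting each piece of the analysis of \cite{fast-moment-estimation-optimal-space} as an $O(\log d)$-space one-pass test over $\bgamma$ and routing the CountSketch-like pieces through \FastPRG's symmetry.
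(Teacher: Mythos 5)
Your high-level plan (use \FastPRG's space-time trade-off with $n=\Theta(\log d)$, $k=O(1)$, and argue that the seed fits in $O(\varepsilon^{-2})$ words when $\varepsilon\le d^{-c}$) matches the paper, and your parameter accounting is essentially correct. But the step you yourself flag as the crux --- reducing correctness of the \cite{fast-moment-estimation-optimal-space} algorithm to fooling $O(\log d)$-space testers --- is where your proposal diverges from the paper and has a real gap.

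You propose to rewrite the success event as an intersection of $\poly(d)$ events $\mathcal E_j$, each a statement ``$Q_j$ falls in a prescribed interval'' holding with probability $\ge 1-1/\poly(d)$ under full randomness, with each $Q_j$ computable in $O(\log d)$ space. This does not match how the \LightEstimator{} analysis actually works. The per-bucket Li estimators $\Phi^{(b)}$ (and the per-bucket bucket counters) do \emph{not} individually concentrate to $1-1/\poly(d)$; they each have $\Omega(1)$ relative variance. The accuracy of the light-element estimate comes only from averaging $O(1/\alpha)$ buckets and $O(\log d)$ independent repetitions, via a \emph{mean-and-variance} (Chebyshev) argument, not from a union bound over high-probability events. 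So the decomposition you describe simply is not available for the light part. The paper's resolution is different and is the key new idea: since the analysis only uses $\E[\Phi]$ and $\Var[\Phi]=\E[\Phi^2]-(\E[\Phi])^2$, and since $\Phi=\sum_b\Phi^{(b)}$, it suffices to show that each $\E[\Phi^{(b)}]$ and each $\E[\Phi^{(b)}\Phi^{(b')}]$ is preserved by \FastPRG up to $1/\poly(d)$. Each of these scalar expectations is the expectation of the output of an $O(\log d)$-space, single-pass FSM over the $p$-stable randomness (fix $b$, or fix the pair $(b,b')$, and walk over $i=1,\dots,d$ maintaining only the running bucket counters), so \FastPRG fools each of the $O(s^2)=\poly(d)$ such expectations, and summing gives that both the mean and the variance of $\Phi$ under \FastPRG are within $1/\poly(d)$ of their fully-random values. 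Without this ``fool the first two moments'' reduction you cannot carry out the correctness proof.

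Two further mismatches with the paper's route, worth noting because they affect both correctness and complexity. First, the paper does \emph{not} use \FastPRG for the \HighEnd{} data structure or for heavy-hitter identification: \HighEnd{} needs only $O(\log(1/\alpha))$- and $r$-wise independence and is run with fast limited-independence hash families (Theorem~\ref{thm:fast-hash-functions}), and the heavy-hitter set $L$ is computed via CountSketch plus ExpanderSketch with $O(\log d)$-wise independence (Appendix~\ref{sec:heavy-entries}). Consequently the \FastPRG symmetry $\bgamma^{\oplus\ell}\stackrel{d}{=}\bgamma$, which you invoke to handle ``the CountSketch-type high end,'' is not needed anywhere in the $0<p<2$ argument; the paper reserves it for the Minton--Price CountSketch error-tail derandomization, which is a separate result. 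Second, your picture of the sketch --- a single $O(\varepsilon^{-2})$-coordinate linear sketch where each column $\Pi_{*i}$ has $O(\log(1/\varepsilon))$ nonzeros, each fed by a fresh \FastPRG block --- is not the structure of the \cite{fast-moment-estimation-optimal-space} algorithm; it is three separate structures (heavy-hitter identification, \HighEnd{}, \LightEstimator{}), and \FastPRG is applied only to the $p$-stable variables inside \LightEstimator{} while the bucket hash $\bh$ there is taken from the uniform-on-small-sets family of Theorem~\ref{thm:pagh-and-pagh}. In short, the skeleton of your derandomization (small-alphabet \FastPRG, $O(\log d)$-space fooling, $k=O(1)$ for constant-time block access) is aligned with the paper, but the mechanism by which the correctness analysis is reduced to $O(\log d)$-space testers is the heart of the proof, and the one you propose does not work; the mean/variance-fooling argument is what is actually needed.
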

Using the reduction in  \cite{fast-moment-estimation-optimal-space} (Appendix A of the conference version), we can obtain a similarly improved update time for both additive and multiplicative entropy approximation in a stream. 

We then derandomize the tighter analysis of CountSketch that \cite{Minton2014improved} show assuming fully random hash functions. We prove that such tighter guarantees can be obtained even if the hash functions and sign functions in the CountSketch are generated from the pseudorandom string sampled using \FastPRG, improving on the black-box derandomization using Nisan's generator.
\begin{theorem}[Informal]
    Given a table size $t$, number $r$ of repetitions, and word size $w = \Omega(\log d)$, there exists a derandomization of CountSketch, $\CS_{\FastPRG} : \set{-M, \ldots, M}^{d} \rightarrow \set{-2^{w}, \ldots, 2^{w}}^{tr}$, with the following properties:
    \begin{enumerate}
        \item The space complexity of the data structure is $O(tr + \log d)$ words.
        \item Given an update $(i,v)$, the data structure can be updated in time $O(r \log d)$ in the Word RAM model.
        \item For any $\alpha \in [0,1]$, given any index $\ell$, an estimate of $x_l$ given by $\hat{x}_{\ell}$ can be constructed from $\CS_{\FastPRG}(x)$ such that
        \begin{align*}
            \Pr[|x_{\ell} - \hat{x}_{\ell}| \ge \alpha \Delta] < 2\exp(-\alpha^2 r) + 2^{-Cw}
        \end{align*}
        for $\Delta = \opnorm{\text{tail}_t(x)}/\sqrt{t}$, where $\text{tail}_t(x) \in \R^d$ denotes the vector obtained after zeroing out the $t$ entries with the highest absolute value in $x$.
    \end{enumerate}
\end{theorem}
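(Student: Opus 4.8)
\noindent The plan is to treat the guarantee of \cite{Minton2014improved} for fully random hash and sign functions as a black box and to argue that replacing that randomness by a draw $\bgamma$ from \FastPRG changes the failure probability of the estimate $\hat x_\ell$ by only an additive $2^{-\Omega(w)}$. I follow Indyk's paradigm: fix the vector $x$ (a linear sketch depends only on $x$, not on the update order) and fix a query index $\ell$; then exhibit a single-pass algorithm of space $O(w+\log d)$ that reads $\bgamma$ block by block, reconstructs the relevant CountSketch buckets, and outputs the indicator of the bad event $\set{\,|\hat x_\ell-x_\ell|\ge\alpha\Delta\,}$, where $\Delta=\opnorm{\text{tail}_t(x)}/\sqrt t$. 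Since \FastPRG fools such an algorithm and this event has probability at most $2\exp(-\alpha^2 r)$ under uniform bits, the claimed bound follows.

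The obstacle is that, to decide the bad event in small space, the tester must know $h_j(\ell)$ and $\sigma_j(\ell)$ for every repetition $j$ \emph{before} it can tell which bucket of table $j$ to accumulate, yet in the natural left-to-right order coordinate $\ell$'s block sits in the middle of the pass; remembering the whole prefix, or tracking all $t$ buckets of a table, is too expensive, and the latter puts us back at the $\Omega(trw)$-space black-box reduction with its logarithmic overhead. This is exactly where the symmetry of \FastPRG is used. I index the generator's $rd$ output blocks by pairs $(j,i)\in[r]\times[d]$ through the integer $jd+i$ (padding $d$ up to a power of two and taking $b=2$), so that xoring every index with $\ell$ --- viewed as an integer with zero high-order bits --- fixes $j$ and maps $(j,i)\mapsto(j,i\oplus\ell)$, in particular moving coordinate $\ell$'s block to the front of each repetition's chunk. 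Since $\bgamma^{\oplus\ell}$ has the same distribution as $\bgamma$ and ``CountSketch is bad on $\ell$'' is a fixed function of the string, the failure probability under \FastPRG equals the probability that the following abstract tester $\mathcal A_\ell$, which simulates CountSketch on the relabeled string, outputs $1$ when run on $\bgamma$: $\mathcal A_\ell$ reads the block at position $jd+u$ but \emph{interprets} it as the block of coordinate $u\oplus\ell$ in repetition $j$, so at the start of chunk $j$ (position $u=0$) it reads coordinate $\ell$'s block, records $h_j(\ell),\sigma_j(\ell)$, and initializes a running bucket sum at $\sigma_j(\ell)x_\ell$; for the remaining $d-1$ blocks of the chunk it computes the indicated coordinate's hash and sign and adds $\sigma_j(\cdot)x_{(\cdot)}$ to the sum whenever the hash equals $h_j(\ell)$; at the end of the chunk it increments one of two counters according to whether $\sigma_j(\ell)\cdot(\text{sum})$ exceeds $x_\ell+\alpha\Delta$ or falls below $x_\ell-\alpha\Delta$, and resets the sum. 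Finally $\mathcal A_\ell$ outputs $1$ iff one of the counters reaches $\lceil r/2\rceil$, which (up to the usual conventions for ties and even $r$) is exactly the event that the median of the $r$ bucket estimates deviates from $x_\ell$ by at least $\alpha\Delta$. The quantities $x,\ell,\alpha,\Delta$ are fixed advice and $x_i$ is supplied alongside the random block for coordinate $i$ as in Indyk's reduction, so the working memory is $O(n)$ bits for the current block, $O(w)$ for the running sum, $O(\log r)$ for the two counters, and $O(\log d)$ for indices and $h_j(\ell)$, i.e.\ $O(w+\log d)$ once the block length is $n=\Theta(w+\log d)$; thus $\mathcal A_\ell$ is a $\mathrm{Space}(cn)$ one-pass algorithm over a string of $rd\le\poly(d)\le 2^{cn}$ blocks.

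The rest is bookkeeping. A uniformly random string is also invariant under the $\oplus\ell$ relabeling, so $\Pr[\mathcal A_\ell=1]$ under uniform bits equals the failure probability of CountSketch with fully random hash and sign functions, which is at most $2\exp(-\alpha^2 r)$ by \cite{Minton2014improved} --- a single table's bucket estimate is, up to the hardwired scaling, a symmetric random variable whose one-sided deviation past $\alpha\Delta$ has probability at most $\tfrac12-\Omega(\min(\alpha,1))$, so a Chernoff bound over the $r$ independent repetitions controls each counter; here one takes $t$ a power of two or folds the small modular bias in the hash values into the error term. Invoking \FastPRG's fooling guarantee for $\mathcal A_\ell$ gives $\Pr_{\bgamma\leftarrow\FastPRG}[\,|\hat x_\ell-x_\ell|\ge\alpha\Delta\,]\le 2\exp(-\alpha^2 r)+2^{-\Omega(n)}$, and choosing $n$ a large enough multiple of $w$ --- together with the observation that since $m,M\le\poly(d)$ every bucket sum fits in $O(w)$ bits, so storing bucket values in $w$-bit words (or modulo a random $\Theta(w)$-bit number) costs at most a further $2^{-\Omega(w)}$ --- yields the stated $2\exp(-\alpha^2 r)+2^{-Cw}$. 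For the resources: the data structure stores the $tr$ bucket words plus the \FastPRG seed, which with $b=2$ and $k=\Theta(\log d)$ hash applications is $O(bkn)=O(w\log d)$ bits $=O(\log d)$ words, hence $O(tr+\log d)$ words in all; an update $(i,v)$ recomputes, for each of the $r$ tables, the length-$n$ block for coordinate $i$ by evaluating $k=O(\log d)$ $2$-wise independent hash functions on $O(1)$-word inputs and then updates $r$ buckets, for update time $O(r\log d)$ in the Word RAM model. The main obstacle is the one identified above; the symmetry of \FastPRG is precisely what lets us bring coordinate $\ell$'s block to the front so that a one-pass, $O(w+\log d)$-space tester for the bad event exists, whereas a black-box use of Nisan's generator, lacking this invariance, can only be reduced to an $\Omega(trw)$-space algorithm and so incurs the logarithmic space overhead.
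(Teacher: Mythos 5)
Your proposal is correct and follows essentially the same route as the paper: fix $x$ and $\ell$, use the $\bgamma\mapsto\bgamma^{\oplus\ell}$ symmetry of \FastPRG to bring coordinate $\ell$'s block to the head of each repetition's chunk, simulate the CountSketch buckets with a one-pass $O(w+\log d)$-space tester that tallies over/under-estimates, apply the fooling guarantee of Theorem~\ref{thm:nisan-main-theorem}, and account for space and update time by choosing parameters for the generator. This matches the construction of the finite state machine $Q_{x,\ell,\alpha\Delta}$ in Section~\ref{sec:derandomizing-countsketch} and the surrounding argument, including the point that a na\"ive left-to-right pass cannot learn $\countsketchhash_i(\ell)$ early enough.
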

We crucially use the symmetry property of \FastPRG to obtain the above result. Note that when $t \cdot r \ge \log d$, the derandomization presents no asymptotic space blowup and retains the tighter analysis of estimation errors from \cite{Minton2014improved}. Further, depending on the parameters $t,r$ it is possible to obtain faster update time using the time-vs-space tradeoff offered by \FastPRG. 
We similarly derandomize the utility analysis of Private CountSketch from \cite{Pagh2022improved} to obtain:
\begin{theorem}[Informal]
    Consider private CountSketch, $\PCS(x) = \CS_{\FastPRG}(x) + \bnu$ where $\bnu \sim N(0,\sigma^2)^{D}$, with $\countsketchrepetitions$ repetitions and table size $\countsketchrange$, where $\FastPRG$ uses block size $w$.
    Given $\ell \in [d]$ we can compute an estimate $\hat{x}_{\ell}$ from $\PCS(x)$ such that for every $\alpha \in [0,1]$ and $\Delta = \opnorm{\text{tail}_t(x)}/\sqrt{t}$,
    \begin{align*}
        \Pr[|\hat{x}_{\ell} - x_{\ell}| \ge \alpha \max(\Delta, \sigma)] \le 2 \exp(-\Omega(\alpha^2 r)) + O(2^{-cw}) \enspace .
    \end{align*}
\end{theorem}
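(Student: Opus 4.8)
The plan is to deduce the statement from two things already available: the derandomized (non-private) CountSketch theorem stated above, and the utility analysis of fully random Private CountSketch of~\cite{Pagh2022improved}. The only fresh randomness in $\PCS$ beyond that of $\CS_{\FastPRG}$ is the Gaussian vector $\bnu$, and it is independent of the pseudorandom string $\bgamma$ that drives $\CS_{\FastPRG}$. The idea is therefore to treat a fixed realization of $\bnu$ as advice to an abstract single-pass algorithm over $\bgamma$, invoke the $\FastPRG$ guarantee for each such realization, and only afterwards average over $\bnu$.

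Fix the query index $\ell$. By the symmetry property we may analyze $\bgamma^{\oplus\ell}$ in place of $\bgamma$ without changing its distribution; under the induced relabelling of coordinates the block of pseudorandom bits that determines the buckets $h_1(\ell),\dots,h_r(\ell)$ and signs $s_1(\ell),\dots,s_r(\ell)$ becomes the first block read. For each fixed noise vector $\bnu$ we build an abstract algorithm $A_{\ell,\bnu}$: it reads this first block, records the $r$ cell positions $(j,h_j(\ell))$ and the corresponding fixed numbers $\nu_{j,h_j(\ell)}$ taken from its advice, then makes a single pass over the remaining blocks maintaining for each row $j$ the running collision sum $y_j=\sum_{i\neq\ell:\,h_j(i)=h_j(\ell)}s_j(\ell)s_j(i)x_i$; finally it forms $\hat x_\ell=\operatorname{median}_{j}\big(x_\ell+y_j+s_j(\ell)\nu_{j,h_j(\ell)}\big)$ and outputs the indicator of the bad event $|\hat x_\ell-x_\ell|\ge\alpha\max(\Delta,\sigma)$. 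Up to the addition of the known constants $\nu_{j,h_j(\ell)}$ into the per-row estimates, this is essentially the abstract algorithm used to prove the non-private CountSketch theorem, so it runs in the same space and over the same string length; hence for every fixed $\bnu$ the distribution of its one-bit output under $\bgamma\sim\FastPRG$ is within total variation $O(2^{-cw})$ of its distribution under uniformly random bits.

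It remains to bound the bad-event probability when all bits are uniform, and for this we appeal to~\cite{Pagh2022improved}. With a truly random hash and the independent vector $\bnu$, the $r$ cells $(j,h_j(\ell))$ lie in distinct rows, so conditioned on the hash the values $\nu_{j,h_j(\ell)}$ are i.i.d.\ $N(0,\sigma^2)$, and the per-row error $y_j+s_j(\ell)\nu_{j,h_j(\ell)}$ is, conditioned on the hash, sub-Gaussian of scale $O(\max(\Delta,\sigma))$ with high probability over the hash --- precisely the setting analyzed there --- so the median of the $r$ rows obeys $\Pr[|\hat x_\ell-x_\ell|\ge\alpha\max(\Delta,\sigma)]\le 2\exp(-\Omega(\alpha^2 r))$ under uniform bits. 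Averaging the $O(2^{-cw})$ fooling error over $\bnu$ and adding it to this bound gives the claimed $2\exp(-\Omega(\alpha^2 r))+O(2^{-cw})$.

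The main obstacle is one inherited wholesale from the non-private CountSketch theorem: making $A_{\ell,\bnu}$ genuinely low-space despite its having to hold the $r$ buckets of coordinate $\ell$ and the $r$ running collision sums at once. This is exactly where the symmetry property is used --- it lets the algorithm learn coordinate $\ell$'s block at the start of its single pass rather than guessing it or carrying it along --- and the private statement needs no new bookkeeping beyond that, since once $\bnu$ is fixed it enters $A_{\ell,\bnu}$ only as $O(w)$-bit advice, read off when block $\bgamma_\ell$ is processed. A secondary point to verify, again following~\cite{Pagh2022improved}, is that the collision scale $\Delta$ and the noise scale $\sigma$ really do combine into a single per-row sub-Gaussian scale $O(\max(\Delta,\sigma))$, so that the median of $r$ rows concentrates at rate $\exp(-\Omega(\alpha^2 r))$ for all $\alpha\in[0,1]$ and not merely for $\alpha$ above a constant.
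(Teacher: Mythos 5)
Your overall reduction — treat $\bnu$ as independent advice, invoke the \FastPRG{} fooling guarantee pointwise in $\bnu$, appeal to the utility bound of~\cite{Pagh2022improved} for the fully-random case, and average over $\bnu$ — is precisely the strategy the paper uses, and the final bookkeeping is right.

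The gap is in the description of the abstract algorithm $A_{\ell,\bnu}$. You write that ``the block of pseudorandom bits that determines the buckets $h_1(\ell),\dots,h_r(\ell)$ and signs $s_1(\ell),\dots,s_r(\ell)$ becomes the first block read'' and that $A_{\ell,\bnu}$ then maintains all $r$ running collision sums and finally forms the median. Neither is available. In the paper's layout, block $(i-1)d+\rho$ encodes only the single pair $(\bg_i(\rho),\bs_i(\rho))$ in $w$ bits, and the $\oplus\,\ell$ symmetry permutes indices \emph{within each repetition's} chunk of $d$ blocks, so what it buys is that block $(i-1)d+\ell$ (hence $\bg_i(\ell),\bs_i(\ell)$) is read \emph{at the start of repetition $i$'s segment}, not that all $r$ hash values arrive in one block. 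If you then hold $r$ accumulators simultaneously and compute the median at the end, the state has $\Theta(rw)$ bits, i.e.\ $2^{\Theta(rw)}$ FSM states, which \FastPRG{} with block size $w$ is \emph{not} claimed to fool; you would have to inflate the block size to $\Theta(rw)$ and pay for it in seed length, which is not what the theorem asserts. The missing ingredients are exactly the ones the paper uses: process repetitions \emph{sequentially}, keep a single accumulator that is reset between repetitions, and replace the median by two counters (\textsf{excess}, \textsf{deficit}) that record how many per-repetition estimates land above $x_\ell+\alpha\max(\Delta,\sigma)$ or below $x_\ell-\alpha\max(\Delta,\sigma)$; the failure event is determined by whether either counter reaches $\lceil r/2\rceil$. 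Symmetry is necessary but not sufficient — it tells you $\bg_i(\ell),\bs_i(\ell)$ before the rest of repetition $i$, which is what makes a one-accumulator scan feasible, but sequential processing and the threshold counters are what keep the state at $O(w)$ bits so that the number of states is $2^{O(w)}$. Your secondary concern about combining the collision scale $\Delta$ with the noise scale $\sigma$ into a single per-row sub-Gaussian scale $O(\max(\Delta,\sigma))$ for all $\alpha\in[0,1]$ is legitimate but is exactly the content of the cited theorem of~\cite{Pagh2022improved}, which the paper simply invokes.
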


We show the following tight result for estimating $\linf{x}$ in a turnstile stream.
\begin{theorem}[Informal]
    Let $x$ be a an arbitrary $d$ dimensional vector being maintained in a turnstile stream. Assuming that the coordinates of $x$ are integers bounded in absolute value by $\poly(d)$, any streaming algorithm that estimates $\linf{x}$ up to an additive error of $\varepsilon\opnorm{x}$ for $\varepsilon > ((\log d)/d)^{1/4}$ must use $\Omega(\varepsilon^{-2}\log d \log 1/\varepsilon)$ bits. Matching this lower bound, there is an algorithm that uses $O(\varepsilon^{-2}\log d\log 1/\varepsilon)$ bits and outputs an approximation to $\linf{x}$ up to an additive error of $\varepsilon\opnorm{x}$. The update time of this algorithm is $O(\log 1/\varepsilon)$ in the Word RAM model with a word size $\Omega(\log d)$.
\end{theorem}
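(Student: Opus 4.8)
\emph{Upper bound.} The plan is to build the algorithm on the derandomized CountSketch $\CS_{\FastPRG}$. Instantiate it with table size $t=\Theta(\varepsilon^{-2})$, $r=\Theta(\log(1/\varepsilon))$ repetitions, and word size $w=\Theta(\log d)$; then the error scale is $\Delta\le\varepsilon\opnorm{x}$, and the $\CS_{\FastPRG}$ theorem gives, for each \emph{fixed} $\ell$, that $|\hat x_\ell-x_\ell|\le\varepsilon\opnorm{x}$ except with probability $2e^{-\Omega(r)}+d^{-\Omega(1)}=\poly(\varepsilon)+d^{-\Omega(1)}$. I would then extract from $\CS_{\FastPRG}$ — together with an $O(\log d)$-bit AMS estimate of $\opnorm{x}$ to set a threshold — a candidate set $S$ of size $\poly(1/\varepsilon)$ that contains $\ell^\ast=\arg\max_\ell|x_\ell|$ whenever $\linf{x}\gtrsim\varepsilon\opnorm{x}$; the point is that a coordinate can be grossly over-estimated only if it collides with a ``heavy bucket'' in a majority of the $r$ repetitions, and only $\poly(1/\varepsilon)$ coordinates can do so. The algorithm outputs $\max\{0,\max_{\ell\in S}\hat x_\ell\}$. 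A union bound over $S$ gives output $=\linf{x}\pm\varepsilon\opnorm{x}$ with constant probability (when $\linf{x}\lesssim\varepsilon\opnorm{x}$ the value $0$ is already within the error), amplified to high probability in the usual way; that $|S|=\poly(1/\varepsilon)$ is exactly why $\Theta(\log(1/\varepsilon))$ repetitions suffice here, rather than the $\Theta(\log d)$ that a black-box point query of all $d$ coordinates would need. The space is dominated by the $tr=\Theta(\varepsilon^{-2}\log(1/\varepsilon))$ counters of $\Theta(\log d)$ bits plus the \FastPRG seed, which by the CountSketch theorem is $O(tr+\log^2 d)=O(\varepsilon^{-2}\log(1/\varepsilon)\log d)$ bits since $\varepsilon^{-2}\log(1/\varepsilon)\ge\log d$ in the stated range of $\varepsilon$. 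For the update time, invoke \FastPRG's time--space trade-off with a constant block-count parameter $k$: each update recomputes the $r$ relevant output blocks in $O(rk)=O(\log(1/\varepsilon))$ time, while the enlarged seed, of size $O\big((tr)^{1/k}\,k\log d\big)$ words, still fits within the space budget.

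\emph{Lower bound.} I would reduce from a one-way communication problem of the augmented-indexing / nested direct-sum type. Split $[d]$ into $k=\Theta(\varepsilon^{-2})$ blocks and each block into $T=\Theta(\log(1/\varepsilon))$ levels; level $j$ owns a distinct sub-range of coordinates whose intended magnitudes lie in a factor-$2$ window around a scale $s_j$ that decays geometrically in $j$, so that distinct levels occupy disjoint magnitude windows and each block spreads its squared $\ell_2$-mass roughly evenly across its levels. Alice encodes her input by writing a $\poly(d)$-precision value into the designated coordinate of each (block, level) pair. Given the augmented-indexing prefix, Bob zeroes out everything he knows; by the scale separation the only coordinate of the residual vector that survives above the $\ell_2$-noise floor is the target one, so a $(1\pm\varepsilon)$-accurate estimate of $\linf{\cdot}$ of the residual recovers the $\Theta(\log d)$ bits Alice stored there. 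Summing over the $kT=\Theta(\varepsilon^{-2}\log(1/\varepsilon))$ pairs, each carrying $\Theta(\log d)$ bits, yields the $\Omega(\varepsilon^{-2}\log d\log(1/\varepsilon))$-bit lower bound. The hypothesis $\varepsilon>((\log d)/d)^{1/4}$ is precisely the admissibility condition: the $T=\Theta(\log(1/\varepsilon))$ geometric levels spread over $k=\Theta(\varepsilon^{-2})$ blocks push the smallest intended magnitude down by a $\poly(1/\varepsilon)$ factor, and this must stay an integer bounded by $\poly(d)$.

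\emph{Main obstacle.} In both halves the crux is the same accounting of the noise floor $\varepsilon\opnorm{\cdot}$. For the upper bound, the delicate step is ruling out over-estimation, i.e.\ $\max_{\ell\in S}\hat x_\ell\le\linf{x}+\varepsilon\opnorm{x}$, in the regime where many coordinates are nearly as large as $\linf{x}$; this forces $S$ to be simultaneously small ($\poly(1/\varepsilon)$) and cheap to maintain within the $O(\varepsilon^{-2}\log(1/\varepsilon)\log d)$-bit budget, and it is here that the $\CS_{\FastPRG}$ guarantee — and, to keep the analysis expressible as an abstract single-pass algorithm over the pseudorandom string, its symmetry — is essential. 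For the lower bound, the delicate step is choosing the across-level and across-block scales so that the aggregate $\ell_2$-contribution of the coordinates Bob cannot zero (the augmented-indexing suffix and the not-yet-reached blocks) is only a small constant times the target magnitude; this is exactly what pins the block count at $\Theta(\varepsilon^{-2})$ and the level count at $\Theta(\log(1/\varepsilon))$.
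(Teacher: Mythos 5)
Your proposal differs from the paper's proof in both halves, and both halves contain a genuine gap.

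\textbf{Upper bound.}
The paper does not run $\CS_{\FastPRG}$ on the $d$-dimensional vector at all for this theorem. It first applies a single linear map $\bL:\R^d\to\R^{t}$ with $t=\poly(1/\varepsilon)$ (2-wise independent bucketing, 4-wise independent signs) and proves (Lemma~\ref{lemma:first-level-hashing}) that $\linf{\bL x}=\linf{x}\pm\varepsilon\opnorm{x}$ and $\opnorm{\bL x}\le (1+o(1))\opnorm{x}$ with constant probability. Only then does it run an ordinary CountSketch with constant-wise-independent hash functions on the $\poly(1/\varepsilon)$-dimensional vector $\bL x$, where a union bound over all $\poly(1/\varepsilon)$ coordinates needs only $r=O(\log(1/\varepsilon))$ repetitions and the hash functions over a $\poly(1/\varepsilon)$-sized universe cost only $O(\log^2(1/\varepsilon))$ bits. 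This theorem therefore needs no \FastPRG{} at all; the derandomized CountSketch is reserved for the subsequent $O(\varepsilon^{-2}\log d)$-bit algorithm under the extra assumption $\linf{x}=\Theta(\opnorm{x})$.

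Your route, by contrast, applies $\CS_{\FastPRG}$ with $r=\Theta(\log(1/\varepsilon))$ directly to the $d$-dimensional vector and claims one can restrict attention to a $\poly(1/\varepsilon)$-size candidate set $S$. The gap is in this claim: you never specify how $S$ is recovered from the $O(\varepsilon^{-2}\log(1/\varepsilon))$-bucket sketch, and recovering heavy indices from a CountSketch generally needs either $\Omega(d)$ scanning or an auxiliary identification structure (both of which would change the space/update-time accounting). Worse, the informal argument that ``only $\poly(1/\varepsilon)$ coordinates can be over-estimated'' does not stand: with $r=\Theta(\log(1/\varepsilon))$ the per-coordinate over-estimation probability is only $\poly(\varepsilon)$, so the expected number of over-estimated coordinates among $d$ many is $d\cdot\poly(\varepsilon)$, which is far larger than $\poly(1/\varepsilon)$ for the allowed $\varepsilon$ as large as a constant. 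The universe reduction to $\poly(1/\varepsilon)$ dimensions is exactly what removes this problem; it is not an optimization you can skip.

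\textbf{Lower bound.}
Your reduction hinges on the claim that one additive-$\varepsilon\opnorm{\cdot}$ query to $\linf{\cdot}$ on a residual vector ``recovers the $\Theta(\log d)$ bits Alice stored there.'' This is where the argument breaks: an $\linf{\cdot}$ estimate with additive error $\varepsilon\opnorm{\cdot}$ is a single scalar known only to precision $\Theta(\varepsilon\opnorm{\cdot})$, which resolves at most $O(\log(1/\varepsilon))$ bits about the surviving coordinate's magnitude, not $\Theta(\log d)$ bits, so your information-counting does not close. The paper's reduction is structured differently and makes each query yield exactly one bit. It introduces Augmented Sparse Set-Disjointness: Alice holds $k$-subsets $A_1,\ldots,A_t$ of $[k^2]$ with $t=\log d$ and $k=1/\varepsilon^2$; Bob, given an index $i$ and the prefix $A_1,\ldots,A_{i-1}$, must decide whether $A_i\cap B_i=\emptyset$. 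The paper proves this needs $\Omega(tk\log k)=\Omega(\varepsilon^{-2}\log(1/\varepsilon)\log d)$ one-way bits, adapting the $\Omega(k\log k)$ Sparse Set-Disjointness lower bound of Dasgupta--Kumar--Sivakumar. The streaming reduction then scales block $i$ by $10^{t-i}$, Bob subtracts the known prefix and inserts $10^{t-i}b_i$, and disjointness shows up as a factor-$2$ gap $\linf{\cdot}\in\{10^{t-i},2\cdot 10^{t-i}\}$ against $\opnorm{\cdot}\le 3\cdot 10^{t-i}\varepsilon^{-1}$, which additive error $(\varepsilon/6)\opnorm{\cdot}$ distinguishes. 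Thus the $\log d$ factor comes from the length $t$ of the augmented-indexing chain, while the $\varepsilon^{-2}\log(1/\varepsilon)$ factor comes from the Set-Disjointness complexity per level --- not, as in your sketch, from reading $\log d$ bits out of one noisy scalar.
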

Our upper bound beats the previous best result of \cite[Theorem~10]{BGW20}. Their algorithm uses $O(\varepsilon^{-2}\log d(\log\log d + \log 1/\varepsilon))$ bits of space. Our matching lower bound shows that our result cannot be improved without making additional assumptions on the vector $x$.

When $\linf{x} = \Theta(\opnorm{x})$, we show that it is possible to break the lower bound in the above result by giving an algorithm that uses $O(\varepsilon^{-2}\log d)$ bits of space and estimates $\linf{x}$ up to an additive error of $\varepsilon\opnorm{x}$. The algorithm uses our derandomization of CountSketch with tight guarantees given in \cite{Minton2014improved}.
\begin{theorem}[Informal]
    Given a $d$ dimensional vector $x$ being updated in a turnstile stream, if $\linf{x} = \Theta(\opnorm{x})$ there is a streaming algorithm that uses $O(\varepsilon^{-2} \log d)$ bits and approximates $\linf{x}$ up to an additive error of $\varepsilon\opnorm{x}$ with probability $\ge 9/10$. The update time of this algorithm is $O(\log 1/\varepsilon)$ in the Word RAM model with a word size $\Omega(\log d)$.
\end{theorem}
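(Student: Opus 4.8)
The plan is to run the derandomized CountSketch $\CS_{\FastPRG}$ of the previous theorem with table size $t=\Theta(\varepsilon^{-2})$ (with a sufficiently large hidden constant) and only $r=O(1)$ repetitions, and at the end of the stream to output $\widehat{Y}:=\max_{\ell\in[d]}|\hat{x}_\ell|$, where $\hat{x}_\ell$ is the point estimate of $x_\ell$ extracted from $\CS_{\FastPRG}(x)$. The reason $t=\Theta(\varepsilon^{-2})$ is the right choice is that $\Delta=\opnorm{\text{tail}_t(x)}/\sqrt{t}\le\opnorm{x}/\sqrt{t}=O(\varepsilon\opnorm{x})$, so a single point estimate is already accurate to within $O(\varepsilon\opnorm{x})$ with constant probability. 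Since the theorem imposes no bound on the time to produce the final answer, we may compute $\widehat{Y}$ by brute force over all $d$ coordinates.

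For the lower bound $\widehat{Y}\ge\linf{x}-\varepsilon\opnorm{x}$, let $i^{\star}$ be an index attaining $\linf{x}$. Applying the tail bound of the $\CS_{\FastPRG}$ theorem to the single fixed index $\ell=i^{\star}$ with $\alpha=\Theta(1)$ gives $|\hat{x}_{i^{\star}}-x_{i^{\star}}|\le\alpha\Delta=O(\varepsilon\opnorm{x})$ with probability $\ge 39/40$ (taking the constant in $t$ large), whence $\widehat{Y}\ge|\hat{x}_{i^{\star}}|\ge\linf{x}-O(\varepsilon\opnorm{x})$; rescaling $\varepsilon$ by a constant yields the claim. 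The real work is the upper bound $\widehat{Y}\le\linf{x}+\varepsilon\opnorm{x}$, i.e.\ showing that \emph{no} coordinate is grossly over-estimated. If $|\hat{x}_\ell|>\linf{x}+\varepsilon\opnorm{x}$ for some $\ell$, then since $|x_\ell|\le\linf{x}$ we must have $|\hat{x}_\ell-x_\ell|>\linf{x}+\varepsilon\opnorm{x}-|x_\ell|=:\tau_\ell$, and we bound the probability of this by a union bound split according to the magnitude of $x_\ell$. For the $O(1)$ coordinates with $|x_\ell|>\linf{x}/2$ (there are at most $4\opnorm{x}^2/\linf{x}^2=O(1)$ of them, using $\linf{x}=\Theta(\opnorm{x})$), we have $\tau_\ell\ge\varepsilon\opnorm{x}=\Omega(\Delta)$, so the tail bound gives per-coordinate failure probability $2\exp(-\Omega(r))+2^{-Cw}$, and the union over $O(1)$ of them is $o(1)$ for a suitable constant $r$ and $w=\Theta(\log d)$. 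For all remaining coordinates, $|x_\ell|\le\linf{x}/2$ gives $\tau_\ell\ge\linf{x}/2=\Theta(\opnorm{x})=\Theta(\sqrt{t}\,\Delta)$, so the per-coordinate failure is at most $2\exp(-\Omega(r/\varepsilon^2))+2^{-Cw}$, and the union over all $\le d$ of them is $d\cdot(2\exp(-\Omega(r/\varepsilon^2))+2^{-Cw})$, which is $o(1)$ provided $r=\Omega(\varepsilon^2\log d+1)$ and $w=\Theta(\log d)$ with a large enough constant. In the regime of interest $\varepsilon=O(1/\sqrt{\log d})$ this means $r=O(1)$ suffices, and a final union bound over these events together with the lower-bound event gives success probability $\ge 9/10$; the total space is $O(t\,r\,w)=O(\varepsilon^{-2}\log d)$ bits.

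I expect this union bound to be the main obstacle, and it is exactly where the improved CountSketch derandomization is needed rather than a black-box one: the classical CountSketch analysis, derandomized through Nisan's generator, only guarantees a per-coordinate error bound that is violated with polynomially small probability unless $r=\Theta(\log d)$, which would cost an extra $\log d$ factor in space. The sub-Gaussian per-coordinate tail $2\exp(-\alpha^2 r)$ of Minton--Price, which the previous theorem shows survives derandomization by \FastPRG, is what makes the failure probability of a ``light'' coordinate as small as $\exp(-\Omega(1/\varepsilon^2))$; combined with $\linf{x}=\Theta(\opnorm{x})$, which confines the genuinely dangerous near-maximal coordinates to an $O(1)$-size set, this drives $r$ down to a constant and removes the space overhead.

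For the update time, each stream update touches $O(r)=O(1)$ buckets and must generate the $O(\log t)=O(\log(1/\varepsilon))$ bits of hash and sign randomness for the updated coordinate from the \FastPRG seed; as in the previous theorem, the symmetry of \FastPRG is used so that the one-pass abstract algorithm being fooled has space only $O(\log d)$. In the regime $\varepsilon^{-2}=d^{\Omega(1)}$ we set the \FastPRG parameter $k=\Theta(\log d/\log(1/\varepsilon))=O(\log(1/\varepsilon))$, so that a block of the pseudorandom string is computed by sequentially evaluating $k$ $2$-wise independent hash functions on $O(\log d)$-bit inputs, each an $O(1)$-time Word RAM operation, while the resulting seed length $O((R/S)^{1/k}kS)$ with $S=O(\log d)$ and $R=\tilde{O}(d)$ stays within the $O(\varepsilon^{-2}\log d)$-bit budget. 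This gives an update time of $O(k)=O(\log(1/\varepsilon))$, as claimed.
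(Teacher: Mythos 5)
Your proposal has a genuine gap in the union-bound step for the ``light'' coordinates, and it stems from applying the CountSketch tail bound outside its valid range.

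The per-coordinate tail bound in Theorem~\ref{theorem:fastprg-count-sketch} (and already in Minton--Price's Lemma~\ref{lemma:count-sketch}) is only stated and only true for $\alpha\in[0,1]$; the sub-Gaussian form $\exp(-\alpha^2 r)$ does \emph{not} continue to hold for $\alpha>1$. For the light coordinates you need an error threshold $\tau_\ell\approx\linf{x}/2=\Theta(\opnorm{x})=\Theta(\Delta/\varepsilon)$, i.e.\ $\alpha=\Theta(1/\varepsilon)\gg 1$, and you then plug this into $\exp(-\alpha^2 r)$ to get $\exp(-\Omega(r/\varepsilon^2))$. That step is unjustified. What is actually true for $\alpha>1$ is a Chebyshev-level per-repetition bound $\Pr[|E_i|>\alpha\Delta]\le 1/\alpha^2$, giving a median failure probability of roughly $\bigl(O(1/\alpha^2)\bigr)^{r/2}=\bigl(O(\varepsilon^2)\bigr)^{r/2}$ per coordinate. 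A union bound over $d$ light coordinates with $r=O(1)$ then gives $d\cdot\varepsilon^{O(1)}$, which is not $o(1)$ for general $\varepsilon$ --- and is not $o(1)$ even in your ``regime of interest'' $\varepsilon=O(1/\sqrt{\log d})$, where it becomes $d\cdot(O(1/\log d))^{r/2}$; driving this below a constant requires $r=\Omega(\log d/\log\log d)$, not $r=O(1)$. So with $r=O(1)$ the $\max_\ell|\hat{x}_\ell|$ estimator can be inflated by some light coordinate with non-negligible probability, and the claimed upper bound $\widehat{Y}\le\linf{x}+\varepsilon\opnorm{x}$ does not follow. (A secondary issue: even if the Chebyshev argument were carried through cleanly, you would need to verify it survives the \FastPRG derandomization, which the stated theorem does not give you for $\alpha>1$.)

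The paper avoids this precisely by not running CountSketch directly on the $d$-dimensional vector: it first applies a $2$-wise/$4$-wise linear map $\bL:\R^d\to\R^{d'}$ with $d'=\poly(1/\varepsilon)$ (Lemma~\ref{lemma:first-level-hashing}) that approximately preserves $\linf{\cdot}$ and $\opnorm{\cdot}$, so the subsequent union bound is only over $\poly(1/\varepsilon)$ coordinates. With $r=O(\log 1/\varepsilon)$ repetitions and $\alpha=1$ this union bound succeeds, and non-Large coordinates cannot inflate the max because $\linf{x}\ge c\opnorm{x}$ forces them to sit strictly below $\linf{x}-\Omega(\opnorm{x})$. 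The $O(\varepsilon^{-2}\log d)$-bit claim is then recovered by a reparametrization: the $O(1)$ Large coordinates are estimated with accuracy $(\varepsilon/\sqrt{\log 1/\varepsilon})\opnorm{x}$ using $\alpha=1/\sqrt{\log 1/\varepsilon}$, so setting $\varepsilon'=\varepsilon/\sqrt{\log 1/\varepsilon}$ gives error $\varepsilon'\opnorm{x}$ in space $O(\varepsilon^{-2}\log(1/\varepsilon)\log d)=O((\varepsilon')^{-2}\log d)$ bits. Your instinct that the improved Minton--Price tail and the $O(1)$-size ``dangerous'' set are what buy the space savings is correct, but the missing ingredient is the first-level dimension reduction to $\poly(1/\varepsilon)$ and the choice $r=O(\log 1/\varepsilon)$ (not $r=O(1)$), together with the reparametrization to land on the stated bound.
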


\subsection{Previous Work}\label{sec:previous}
\paragraph{$F_p$ estimation for $p > 2$} The problem of estimating moments in a stream has been heavily studied in the streaming literature and has been a source of lot of techniques both from the algorithms and lower bounds perspective. A lower bound of $\Omega(d^{1-2/p})$ bits on the space complexity of a constant factor approximation algorithm was shown in \cite{chakrabarti2003near, bar2004information,j09}. On the algorithms side, \cite{andoni2011streaming} gives a sketching algorithm with $m = O(d^{1-2/p}\log d)$ rows using a technique called \emph{precision sampling}; this improves additional polylogarithmic factors of earlier work \cite{IW05,BGKS06}. For linear sketches, \cite{andoni2013tight} shows a lower bound of $m = \Omega(d^{1-2/p}\log d)$ on the number of rows in the sketch, thereby proving that the algorithm of \cite{andoni2011streaming} is tight up to constant factors for linear sketching algorithms. All upper and lower bounds mentioned here are for constant factor approximation algorithms. See \cite{andoni2017high, andoni2011streaming} and references therein for the upper and lower bounds for $(1 \pm \varepsilon)$-approximate algorithms for $F_p$ estimation.

Later, Andoni \cite{andoni2017high} gave a simpler linear sketch using $m = O(d^{1-2/p}\log d)$ rows for $F_p$ moment estimation. Andoni uses the min-stability property of exponential random variables to embed $\ell_p$ into $\ell_{\infty}$ and then uses a CountSketch data structure to estimate the maximum absolute value in the vector obtained by sketching $x$ with scaled exponential random variables. The analysis assumes that the exponential random variables are sampled independently. To derandomize the algorithm, Andoni uses the pseudorandom generator of Nisan and Zuckerman \cite{nisan-zuckerman} which shows that any randomized algorithm that uses space $s$ and $\poly(s)$ random bits can be simulated using $O(s)$ random bits. Since the space complexity of Andoni's algorithm is $d^{\Omega(1)}$, it can be derandomized with at most a constant factor blowup in space complexity. A major drawback of using the pseudorandom generator of Nisan and Zuckerman is that the update time of the sketch in the stream is $d^{\Omega(1)}$, which is prohibitively large. In this work, we show that we can derandomize Andoni's algorithm while having an update time of $O(1)$ in the Word RAM model.

We note that there are many other algorithms for $F_p$-moment estimation, such as \cite{IW05,BGKS06,BO10,MW10,G15,GW18}, which are based on subsampling the input vector in $O(\log d)$ scales and running an $\ell_2$-heavy hitters algorithm at each scale. Although it may be possible to amortize the update time of the $O(\log d)$ levels of subsampling, such algorithms cannot achieve an optimal $O(1)$ update time since all known algorithms for $\ell_2$-heavy hitters in the turnstile streaming model require $\Omega(\log d)$ update time. 

\paragraph{$F_p$ estimation for $p < 2$} Indyk \cite{indyk2006stable} showed how to estimate $F_p$ moments for $p \in (0,2]$ up to a factor $1 \pm \varepsilon$ in turnstile streams using a space of $O(\varepsilon^{-2}\log d)$ words, which translates to $O(\varepsilon^{-2}\log^2(d))$ bits with our assumption on the values of $m, M$. This work used \emph{$p$-stable distributions} and as discussed earlier, introduced the influential technique to derandomize streaming algorithms using pseudorandom generators. Li \cite{ping-li} used $p$-stable distributions to define the geometric mean estimator, which can be used to give an unbiased estimator for $\lp{x}^p$ with low variance and such that the mean of $\Theta(\varepsilon^{-2})$ independent copies of the estimator gives a $1 \pm \varepsilon$ estimate for $\lp{x}^p$. Li's algorithm can also be derandomized to use  $O(\varepsilon^{-2}\log^2 d)$ bits.

The upper bound was then improved to $O(\varepsilon^{-2}\log d)$ bits by Kane, Nelson and Woodruff \cite{kane2010exact}. They avoid the $O(\log d)$ factor blowup which is caused when derandomizing using Nisan's PRG by showing that Indyk's algorithm can be derandomized using $k$-wise independent random variables for a small value of $k$. They also mention that a ``more prudent analysis'' of the seed length required in Nisan's generator to derandomize Indyk's algorithm makes the space complexity $O(\varepsilon^{-2}\log d + (\log d)^2)$ bits. The idea there was to use Nisan to fool a median of dot products, but it was not able to exploit fast update time, as we do for $0 < p < 2$, without our large alphabet improvement to Nisan's PRG. 

They also show that any algorithm that $1 \pm \varepsilon$ approximates $\lp{x}^p$ in a turnstile stream must use $\Omega(\varepsilon^{-2}\log d)$ bits of space, hence resolving the space complexity of $F_p$ moment estimation in turnstile streams. Although their algorithm uses an optimal amount of space, the update time of their algorithm is $\tilde{O}(\varepsilon^{-2})$ which is non-ideal when $\varepsilon$ is small. Concurrent to their work, \cite{andoni2011streaming} gave a streaming algorithm that has a fast update time of $O(\log d)$ per stream element but uses a sub-optimal space of $O(\varepsilon^{-2-p}\log^2 d)$ bits for $p \in [1,2]$.

\cite{fast-moment-estimation-optimal-space} then made progress by giving algorithms that are space-optimal while having a fast update time. They give an algorithm that uses $O(\varepsilon^{-2}\log d)$ bits of space and with an update time of $O(\log^{2}(1/\varepsilon)\log\log(1/\varepsilon))$ per stream element. They use multiple techniques such as estimating the contribution of heavy and light elements to $F_p$ separately by using new data structures and hash functions drawn from limited independent hash families; they buffer updates and use fast multi-point evaluation of polynomials and amortize the time over multiple updates to obtain fast update times.

When $\varepsilon < 1/d^{c}$ for a small enough constant $c$, which is when the $\tilde{O}(\varepsilon^{-2})$ update time of earlier algorithms becomes prohibitive, we show that we can derandomize the algorithm of \cite{fast-moment-estimation-optimal-space} using \FastPRG. We show that there is a streaming algorithm using an optimal $O(\varepsilon^{-2}\log d)$ bits of space with an update time of $O(\log d)$ per stream element. Our algorithm updates the sketch immediately, removing the need to buffer updates and the use of fast multi-point polynomial evaluation from their algorithm. Our update time thus improves the previous update time of \cite{fast-moment-estimation-optimal-space} from $O(\log^2 d \log \log d)$ to $O(\log d)$ for any polynomially small $\varepsilon$, making the first progress on this problem in over 10 years. 
%

\paragraph{Estimation Error bounds with CountSketch} The CountSketch data structure \cite{Charikar2004finding} can be used to compute an estimate $\hat{x}_{\ell}$ of $x_{\ell}$ for each $\ell \in [d]$. In \cite{Charikar2004finding}, the authors show that with high probability the maximum estimation error $\|x - \hat{x}\|_{\infty} \le \Delta$ where $\Delta = \opnorm{\text{tail}_k(x)}/\sqrt{k}$ if the table size $\countsketchrange = O(k)$ and the number of repetitions $\countsketchrepetitions = O(\log d)$. Minton and Price \cite{Minton2014improved} observed that even though the worst-case estimation error follows the above law, most coordinates in $\hat{x}$ have asymptotically smaller estimation error. Concretely, they show that for any $\alpha \in [0,1]$ and any coordinate $\ell \in [d]$,
$    \Pr[|\hat{x}_{\ell} - x_{\ell}| \ge \alpha\Delta] \le O(\exp(-\alpha^2 \countsketchrepetitions)).$

They also show other applications of the above tighter analysis. While proving the above result, they assume that the sign functions used to construct the CountSketch data structure are fully random. They argue that their construction can be derandomized by incurring a factor $O(\log d)$ overhead in space using Nisan's PRG via the \emph{black box} approach we described earlier. 
We derandomize CountSketch using \FastPRG and show that even for the derandomized CountSketch construction, the above estimation error holds, albeit with an additional additive term from the failure of the pseudorandom generator. 
Our derandomized CountSketch data structure uses $O(\countsketchrepetitions \cdot \countsketchrange + \log d)$ words of space. Note that $O(r \cdot t)$ words of space is anyway required to store the sketched vector and therefore when $\log d = O(\countsketchrepetitions \cdot \countsketchrange)$, our derandomization increases the storage cost by at most a constant factor. Our derandomized CountSketch data structure has an update time of $O(r \log d)$ per stream element. We crucially use the symmetry property of \FastPRG to reduce the problem to derandomizing a small space algorithm. While Minton and Price say that a derandomization of their algorithm with Nisan's PRG incurs an $O(\log d)$ factor space blow-up in the size of the CountSketch data structure, we observe that we can avoid the blow-up by a more careful analysis of derandomization using Nisan's PRG. We use the fact that Nisan's PRG is resilient to multiple passes \cite{david2011strong} as well to obtain the derandomization. See Section~\ref{subsec:alternate-derandomizations-nisan} for a discussion on how to derandomize CountSketch using Nisan's PRG and how the derandomization with Nisan's PRG compares against derandomization with \FastPRG.

In the case of $rt = o(\log d)$, our derandomization is not ideal as the asymptotic space complexity of derandomized CountSketch is larger than a constant factor as compared to $O(\countsketchrepetitions \cdot \countsketchrange)$ words of space. Jayaram and Woodruff \cite{Jayaram2018perfect} give an alternate derandomization of CountSketch with strong estimation error guarantees and show that their derandomized CountSketch needs $O(\countsketchrepetitions \cdot \countsketchrange (\log \log d)^2)$ words of space. When $r \cdot t = o(\log d/ (\log \log d)^2)$, their derandomization has a smaller space complexity than ours. They use a half-space fooling pseudorandom generator of Gopalan, Kane and Meka \cite{gopalan2018pseudorandomness} to derandomize CountSketch. However, the CountSketch data structure derandomized in \cite{Jayaram2018perfect} is a slight modification of the standard CountSketch data structure, and combined with the pseudorandom generator of \cite{gopalan2018pseudorandomness},  leads to worse update times as compared to our derandomization of the standard CountSketch data structure. We also stress that in a number of applications of CountSketch, such as to the $\ell_2$-heavy hitters problem, one has $r \cdot t = \Omega(\log d)$, and in this regime our derandomization is space-optimal, and not only significantly improves the update time, but also removes the additional $(\log \log d)^2$ factors in the space of \cite{Jayaram2018perfect}. 

\paragraph{Estimating $\linf{x}$} Given a $d$ dimensional vector $x$ being maintained in a turnstile stream, it can be shown that approximating $\linf{x}$ up to a multiplicative factor $C < \sqrt{2}$ requires $\Omega(d)$ bits of space by reducing from the INDEX problem. Notably, the lower bound for multiplicative approximation holds even in the stronger addition only model. Hence the problem of approximating $\linf{x}$ up to an additive error of $\varepsilon\opnorm{x}$ has gained more interest. Cormode, in the 2006 IITK workshop on data streams\footnote{Follow this \href{https://www.semanticscholar.org/paper/OPEN-PROBLEMS-IN-DATA-STREAMS-AND-RELATED-TOPICS-ON-Agarwal-Baswana/5394ab5bf4b66bfb52f111525d6141a3226ba883}{link} for the list of open problems.}, asked if it was possible to approximate $\linf{x}$ to an additive error of $\varepsilon\opnorm{x}$ using fewer than $O(\varepsilon^{-2}\log^2 d)$ bits of space. Later \cite{BCIW16} for insertion only streams and \cite{BGW20} for general turnstile streams answered the question in affirmative by giving an algorithm that uses only $O(\varepsilon^{-2}\log d\log \log d)$ bits of space.

\subsection{Technical Overview}

\paragraph{\FastPRG} We will briefly describe our construction and show how the construction and analysis differs from Nisan's. Given parameters $n, b, k$, our construction samples $b \cdot k$ independent hash functions $\bh_i^{(j)} : \set{0,1}^n \rightarrow \set{0,1}^n$ for $(i, j) \in \set{0,\ldots,k-1} \times \set{0,1,\ldots,b-1}$ from a $2$-wise independent hash family. We then sample a uniform random string $\br \sim \set{0,1}^n$, then the $b^k \cdot n$ pseudorandom string output by the generator is defined by $G_k(\br, \bh_1,\ldots,\bh_k)$, where for any $x$,
\begin{align*}
     G_0(x) &\coloneqq x\\
     G_k(x, \bh_0,\ldots,\bh_{k-1}) &\coloneqq 
     G_{k-1}(\bh_{k-1}^{(0)}(x), \bh_0,\ldots,\bh_{k-2}) \circ \cdots \circ G_{k-1}(\bh_{k-1}^{(b-1)}(x), \bh_1,\ldots,\bh_{k-1}).
\end{align*}
Thus for $i \in \set{0,\ldots,b^k-1}$ if $i$ is written as $(i_{k-1} i_{k-2} \cdots i_0)$ in base $b$, then the $i$-th block of $n$ bits in the pseudorandom string $G_k(\br, \bh_1,\ldots,\bh_k)$ is given by
\begin{align*}
    \bh_0^{(i_0)}( \cdots (\bh_{k-1}^{(i_{k-1})}(\br))).
\end{align*}

Our analysis of \FastPRG is based on a new, simpler, and more precise analysis of Nisan's generator~\cite{nisan} (see Appendix~\ref{sec:nisan} for the definition of Nisan's PRG). Note that Nisan's generator
corresponds to the special case
of our generator where $b=2$ and
where for all $i,x$, we 
deterministically fix 
$\bh_i^{(0)}(x) := x$. With $\bh_i^{(0)}$ the identity function, Nisan only defines
a single hash function
$h_i=\bh_i^{(1)}$ for each $i$.

We will now pinpoint where our
analysis diverges from a more
natural generalization of Nisan's \cite{nisan}. The first main difference is in Definition \ref{def:set-independence}.
It plays a role similar to Nisan's \paragraph{Definition 3} Let 
$A\subset \set{0,1}^{n}$, 
$B\subset \set{0,1}^{m}$, $h:\set{0,1}^{n}\to\set{0,1}^{m}$,
and $\varepsilon>0$. We say that
$h$ is $(\varepsilon,A,B)$-independent if $|\Pr_{x\in \set{0,1}^{n}}[x\in A\textnormal{ and }h(x)\in B]-\varrho(A)\varrho(B)|\leq\varepsilon,$
where $\varrho(A)=|A|/2^n$ and $\varrho(B)=|B|/2^m$.\medskip

\noindent 
However, our Definition \ref{def:set-independence} instead generalizes the following definition:
\paragraph{Definition 3'} Let 
$A\subset \set{0,1}^{n+m}$, 
$h:\set{0,1}^{n}\to\set{0,1}^{m}$,
and $\varepsilon>0$. We say that
$h$ is $(\eps,A)$-independent if
$|\Pr_{x\in \set{0,1}^{n}}[(x,h(x))\in A]-\varrho(A)|\leq\varepsilon,$ 
where $\varrho(A)=|A|/2^{n+m}$.

\medskip

\noindent It turns out that
replacing Definition 3 with Definition 3' in Nisan's proof
yields a slightly tighter
result. In the proof of 
\cite[Lemma 2]{nisan}, in step 2, all triplets of
state nodes $i,l,j$ are considered, where $i$ is
a start state, $l$ is an intermediate state and $j$ is an end state. Referring to Definition~3, Nisan uses $A=B_{i,l}^{h_1,\ldots,h_{k-1}}\subset \set{0,1}^n$
and $B=B_{l,j}^{h_1,\ldots,h_{k-1}} \subset \set{0,1}^n$.  If  instead we base
the analysis
on Definition~3', then we only 
need to consider pairs
of state nodes $i,j$ and use
$A=B_{i,j}^{h_1,\ldots,h_{k-1}} \subset \set{0,1}^{2n}$. This affects the whole analysis of Nisan, but it turns out that it only gets simpler, and this was the starting point for our generalized analysis. Avoiding the
intermediate state $l$ was crucial for us because we would need $b-1$
intermediate states $l_1\ldots l_{b-1}$, and this would lead to much worse bounds with larger $b$.

Our construction also implies that the pseudorandom generator has a symmetry property. To see the symmetry, suppose we define $\bh_0',\ldots,\bh_{k-1}'$ with $(\bh_0')^{(0)} = \bh_0^{(1)}$, $(\bh_0')^{(1)} = \bh_0^{(0)}$ and $(\bh'_i)^{(j)} = \bh_i^{(j)}$ in all other cases. Then the string $G_k(\br, \bh_0', \ldots, \bh_{k-1}')$ is obtained by an appropriate permutation of blocks in the string $G_k(\br, \bh_0, \ldots, \bh_{k-1})$. As both the strings $G_k(\br, \bh_0, \ldots, \bh_{k-1})$ and $G_k(\br, \bh_0', \ldots, \bh_{k-1}')$ are just as likely when sampling from \FastPRG, we obtain the symmetry property. 

Extending the above switching argument, we get the following family of transformations that preserve the distribution of the pseudorandom string. Consider an arbitrary $\ell \in \set{0, \ldots, b^k - 1}$. Let $\ell_{k-1} \cdots \ell_0$ be the base $b$ representation of $\ell$. Suppose $\bh_0, \ldots, \bh_{k-1}$ be the hash functions sampled in the construction of $\FastPRG$. Define $\bh_0',\ldots,\bh_{k-1}'$ as
\begin{align*}
    (\bh_i')^{(j)} \coloneqq \bh_i^{(j \oplus \ell_i)}
\end{align*}
for all $i \in \set{0, \ldots, k-1}$ and $j \in \set{0, \ldots, b-1}$. Clearly, the joint distribution of $(\bh_0, \ldots, \bh_{k-1})$ is the same as $(\bh_0', \ldots, \bh_{k-1}')$. Now, for any fixed $x$, we have
\begin{align*}
    G_k(x, \bh_0, \ldots, \bh_{k-1})^{\oplus \ell} = G_k(x, \bh_0', \ldots, \bh_{k-1}').
\end{align*}
Hence if $\bgamma \sim \FastPRG$, then $\bgamma^{\oplus \ell}$ has the same distribution as $\bgamma$.

\paragraph{$F_p$ estimation for $p > 2$} We derandomize Andoni's algorithm for $F_p$ moment estimation \cite{andoni2017high}. Andoni's algorithm can be seen as sketching in two stages: let $\bz \in \R^{d}$ be a random vector such that $\bz_i = \bE_{i}^{-1/p}x_i$ where $\bE_1,\ldots,\bE_d$ are independent standard exponential random variables. By the min-stability property of exponential random variables, we obtain that $\linf{\bz}^p \sim \lp{x}^p/\bE$, where $\bE$ is also a standard exponential random variable. Thus, the coordinate of maximum absolute value in $\bz$ can be used to estimate $\lp{x}$. Notice that we have not yet performed any dimensionality reduction. 
Then a CountSketch matrix $\bS$ with $O(d^{1-2/p}\log d)$ rows is constructed using $O(\log d)$-wise independent hash functions, and this is applied to the vector $\bz$ to obtain $\boldf = \bS \bz$. Andoni argues using the properties of exponential random variables that all of the following hold true simultaneously with a large constant probability: (i) $\linf{\bz}$ is a constant factor approximation to $\lp{x}$, (ii) there are only $O((\log d)^{p})$ coordinates in $\bz$ with absolute value greater than $\lp{x}/c\log d$ and (iii) $\opnorm{\bz}^2 \le O(d^{1-2/p} \opnorm{\bx}^2)$. Conditioned on these properties of $\bz$, Andoni argues that $\linf{\boldf} \approx \linf{\bz} \approx \lp{x}$ with a large probability. Hence, $\linf{\boldf}$ is a constant factor approximation for $\lp{x}$ with a large constant probability. 

We only have to derandomize the exponential random variables as $\bS$ is constructed using $O(\log d)$-wise independent hash functions which can be efficiently stored and evaluated \cite{CPT15}. So, we want to show that each of the three properties of the vector $\bz$ hold even when the exponential random variables are sampled using a pseudorandom string. Now fix a vector $x$. There is an $O(\log d)$ space algorithm that makes a single pass over the string used to generate exponential random variables and (1) computes $\linf{\bz}$, (2) computes the number of coordinates in $\bz$ with absolute value at least $\lp{x}/c \log d$, and (3) computes $\opnorm{\bz}^2$. The algorithm is simple: it goes over a block of the string to generate $\bE_1$, sets $\linf{\bz} = |\bE_1^{-1/p}x_1|$, increases a counter if $|\bE_1^{-1/p}x_1| \ge \lp{x}/(c \log d)$ and sets $\opnorm{\bz}^2 = (\bE_1^{-1/p}x_1)^2$, and proceeds to read the next block of bits to generate $\bE_2$ and update the variables using the value of $\bE_2$ accordingly and continues so on. 
Now, using any PRG that fools an $O(\log d)$ space algorithm, we obtain that the random variables $\bE_i$ constructed using the pseudorandom string also make the vector $\bz$ have each of the three properties. Using the time-vs-space trade-off of \FastPRG, we obtain that any block of the pseudorandom string can be obtained in $O(1)$ time if the seed length of \FastPRG is $d^{\epsilon}$ for a small constant $\epsilon > 0$, thus making the time to compute a block of pseudorandom bits $O(1)$ in the Word RAM model. 
Further, the $O(\log d)$-wise independent hash functions necessary to map the vector $\bz$ to $\boldf$ when sampled from the constructions of \cite{CPT15} allow for the hash functions to be evaluated in $O(1)$ time. The data structures that allow the hash value to be computed in $O(1)$ time can be stored in $d^{\epsilon}$ bits of space as well for any constant $\epsilon > 0$. Thus, the overall update time of the algorithm is $O(1)$ in the Word RAM model. If $\epsilon$ is chosen smaller than $1 - 2/p$, then the asymptotic space complexity of our derandomized $F_p$ moment estimation algorithm remains $O(d^{1-2/p}\log d)$ words, while having a very fast $O(1)$ update time. 

We note that the $F_p$ estimation problem for $p > 2$ is ideally suited for \FastPRG{} --- it is precisely because the algorithm uses a large amount of memory already that we are able to use our generator over a large alphabet without a space overhead, which then allows us to remove the $O(\log d)$ factor in the update time and achieve constant time. Moreover, it is critical that we can keep track of various quantities needed to fool the algorithm with only $O(1)$ words of memory, as this is also needed for fast update time in our derandomization.  

\paragraph{$F_p$ estimation for $p < 2$} We give an alternate derandomization of the algorithm of \cite{fast-moment-estimation-optimal-space}. Their algorithm is based on Li's geometric mean estimator \cite{ping-li} using $p$-stable random variables. They introduce two new data structures they call \HighEnd{} and \LightEstimator{}. They also concurrently run an $\ell_p$ heavy hitters algorithm and at the end of processing the stream, they use the heavy hitters data structure to find a set $L \subseteq [d]$ of coordinates such that $\setbuilder{i}{|x_i|^p \ge (\alpha/2)\lp{x}^p} \supseteq L \supseteq \setbuilder{i}{|x_i|^p \ge \alpha\lp{x}^p}$. They then use the \HighEnd{} data structure to estimate $\lp{x_L}^p$ up to a factor of $1 \pm \varepsilon$. By definition of $L$, all the coordinates in $x_{[d] \setminus L}$ have a small magnitude. Using this fact, they show that their \LightEstimator{} data structure can be used to give a low variance estimator for $\lp{x_{[d] \setminus L}}^p$. By running multiple independent copies of \LightEstimator{} concurrently, they obtain an accurate estimate for $\lp{x_{[d] \setminus L}}^p$ and then output the sum of estimates of $\lp{x_{L}}^p$ and $\lp{x_{[d] \setminus L}}^p$. 

The \HighEnd{} estimator can be maintained using $O(\varepsilon^{-2}\log d)$ bits and has an update time of $O(\log d)$. Hence, we focus on giving a more efficient derandomization of \LightEstimator{}. At a high level, they hash coordinates of $x$ into $O(1/\alpha)$ buckets and for each bucket they maintain Li's estimator for the $F_p$ moment of coordinates hashed into that bucket. At the end of the stream, the set $L$ is revealed and they output the sum of Li's estimators of the buckets into which none of the elements of $L$ are hashed into. They scale the sum appropriately to obtain an unbiased estimator to $\lp{x_{[d] \setminus L}}^p$. They show that hashing of the coordinates can be quickly performed using the hash family of \cite{pagh2008uniform}. The only thing that remains is to derandomize Li's estimator in  each individual bucket. They prove that the $p$-stable random variables in Li's estimator can be derandomized by using $O(1/\varepsilon^p)$-wise independent random variables and show that this is sufficient to obtain algorithms with an optimal space complexity of $O(\varepsilon^{-2}\log d)$ bits and an update time of $O(\log^2(1/\varepsilon)\log\log(1/\varepsilon))$. While the other parts of their algorithm have fast update times, the \LightEstimator{} derandomized using limited independent $p$-stable random variables leads to a slow update time. We give an alternate derandomization of \LightEstimator{} using \FastPRG. 

Fix a particular bucket $b$ and the hash function $h$ that hashes the coordinates into one of the buckets. We give an $O(\log d)$ space algorithm that makes a single pass over the string used to generate $p$-stable random variables and computes Li's estimator for bucket $b$. The algorithm simply makes a pass over the string, and if a coordinate $i$ gets hashed into bucket $b$, it uses the block of bits in the string corresponding to the $i$-th coordinate to generate $p$-stable random variables and updates Li's estimator for bucket $b$ using the generated $p$-stable random variables. The existence of such an algorithm implies that the expectation of Li's estimator is fooled by \FastPRG and therefore the expectation of the sum of Li's estimators over all the buckets is fooled as well by \FastPRG. We now need to bound the variance of the sum of estimators of the $b$ buckets. Here to fool the variance, that is, we only have to fool $\E[\Est_b \cdot \Est_{b'}]$ for pairs of buckets $b, b'$. The idea of using a PRG to fool the variance for a streaming problem has also been used in \cite{anonymous}. Here $\Est_{b}$ denotes the result of Li's estimator for bucket $b$. Now, for any fixed pair of buckets $b, b'$ we again have that there is an $O(\log d)$ space algorithm that computes Li's estimators for buckets $b$ and $b'$ simultaneously while making a single pass over the string used to generate $p$-stable random variables. This shows that $\FastPRG$ fools $\E[\Est_{b} \cdot \Est_{b'}]$ and hence the overall variance of the estimator. As we only need to fool the mean and variance, we obtain a derandomization of $p$-stable random variables using \FastPRG. Additionally, when $\varepsilon < 1/d^{c}$, we can use the time-vs-space tradeoff of \FastPRG to obtain an update time of $O(\log d)$ without changing the asymptotic space complexity of the algorithm. Note that the case of $\varepsilon$ being small is actually the setting for which we would most like to improve the update time of \cite{fast-moment-estimation-optimal-space}.

\paragraph{Entropy estimation} An improved update time for $F_p$ moment estimation for $p \in (0,2)$ also leads to improved update time for entropy estimation using the algorithm of \cite{harvey2008sketching}. See Section A in the conference version of \cite{fast-moment-estimation-optimal-space} for a discussion on how the update time of $F_p$ moment estimation algorithms translate to update time of approximate entropy estimation algorithms.

\paragraph{CountSketch} CountSketch is a randomized linear map $\CS : \R^{d} \rightarrow \R^{D}$ defined by two parameters: (i) the table size $\countsketchrange$ and (ii) the number of repetitions $\countsketchrepetitions$. Here $D = \countsketchrepetitions \cdot \countsketchrange$. For each $i \in [\countsketchrepetitions]$, we have a hash function $\bg_i : [d] \rightarrow [\countsketchrange]$ and a sign function $\bs_i : [d] \rightarrow \set{+1, -1}$. Indexing the coordinates of $\CS(x)$ by $(i,j) \in [\countsketchrepetitions] \times [\countsketchrange]$, we define $\CS(x)_{i,j} = \sum_{\ell \in [d]} [\bg_i(\ell) = j]\bs_i(\ell)x_{\ell}$. Thus,  for each repetition $i$, the coordinate $x_{\ell}$ is multiplied with a sign $\bs_i(\ell)$ and added to the $\bg_i(\ell)$-th bucket. For each $\ell \in [d]$, we can define 
$    \hat{x}_{\ell} = \text{median}(\setbuilder{\bs_{i}(\ell) \cdot \CS(x)_{i, \bg_i(\ell)}}{i \in [\countsketchrepetitions]})$.
The randomness in a CountSketch data structure is from the hash functions $\bg_i$ and the sign functions~$\bs_i$. Assuming that the hash functions $\bh_i$ and $\bs_i$ for $i \in [\countsketchrange]$ are drawn independently from 2-wise and 4-wise independent hash families respectively, \cite{Charikar2004finding} showed that if $\countsketchrepetitions = O(\log d)$, then with probability at least $1 - 1/\poly(d)$, $\linf{x - \hat{x}} \le \Delta$ for $\Delta = \opnorm{\text{tail}_k(x)}/\sqrt{k}$ if $\countsketchrange = O(k)$. 
As discussed in Section \ref{sec:previous}, Minton and Price assume that the hash functions $\bg_i$ and the sign functions $\bs_i$ are fully random to give probability bounds on estimation error for any particular index $\ell \in [d]$. We derandomize their construction by using a pseudorandom generator to sample the hash functions $\bg_i$ and the sign functions $\bs_i$. Suppose we treat a bitstring $\gamma$ as $\countsketchrange \cdot d$ blocks of equal length. We index the blocks of $\gamma$ by $(i, \ell) \in [\countsketchrepetitions] \times [d]$. We use the block $\gamma_{i, \ell}$ to define $\bg_i(\ell)$ and $\bs_i(\ell)$ in the natural way. If $\gamma$ is sampled uniformly at random, then clearly we have that $\bg_i$ and $\bs_i$ constructed using the string $\gamma$ are fully random and the CountSketch data structure constructed using such hash functions satisfies the guarantees given by Minton and Price. 
We need to define which block of the string corresponds to which $(i,\ell)$, since when we receive an update in the stream, we need to be able to extract the corresponding block from the pseudorandom string. Now, we want to show that even if $\gamma$ is sampled from \FastPRG, we obtain similar guarantees on the estimation error. Fix a vector $x$ and coordinate $\ell$. Our strategy to derandomize has been to give an algorithm using space $O(\log d)$ bits that makes a single pass over the randomness and computes the quantity of interest. Now, in a black box way, we can conclude that the distribution of the quantity of interest does not change much even if a string sampled from \FastPRG is used instead of a fully random string. 

We shall try to employ the same strategy here. Fixing an $x \in \R^d$, $\ell \in [d]$ and $\alpha \in [0,1]$, we want to walk over the string $\gamma$ and count the number of repetitions $i \in [\countsketchrepetitions]$ for which the value $\bs_{i}(\ell) \cdot \CS(x)_{i, \bg_i(\ell)} > x_{\ell} + \alpha \Delta$ and the number of repetitions $i \in [\countsketchrepetitions]$ for which the value $\bs_{i}(\ell) \cdot \CS(x)_{i, \bg_i(\ell)} < x_{\ell} - \alpha \Delta$. Clearly the estimator $\hat{x}_{\ell} \in [x_{\ell} - \alpha \Delta, x_{\ell} + \alpha \Delta]$ if and only if both counts are smaller than $r/2$ (for odd $r$).
Hence, if both the counts can be computed by a small space algorithm, we can derandomize CountSketch using \FastPRG. We immediately hit a roadblock while trying to design such an algorithm. As we fixed an index $\ell \in [d]$, for repetition $i \in [\countsketchrepetitions]$, the quantity of interest is $\CS(x)_{i, \bg_i(\ell)} \coloneqq \sum_{\ell' \in [d]}[\bg_i(\ell') = \bg_i(\ell)]\bs_i(\ell')x_{\ell'}$. As the string $\gamma$ is ordered in increasing order of $\ell'$ for each repetition $i$, the algorithm does not know the value of $\bg_i(\ell)$ until it gets to the block $\gamma_{i, \ell}$. So, for indices $\ell' < \ell$, the algorithm is not aware if $\bg_i(\ell')$ equals $\bg_i(\ell)$ or not and hence cannot track the value of $\CS(x)_{i, \bg_i(\ell)}$ with a single pass over $\gamma$. To solve the problem, we use the symmetric property unique to \FastPRG. As we mentioned in Section \ref{sec:results}, if $\bgamma \sim \FastPRG$, then for any integer $m$, the string $\bgamma^{\boxplus m}$ has the same distribution as $\bgamma$. Making use of this symmetry property, we can now assume that the block $\bgamma_{i,j}$ actually corresponds to the hash values of the index $j \boxplus \ell$ for iteration $i$ as the joint distribution of hash and sign values defined by the earlier ordering of the blocks will be the same as the new ordering of the blocks by the symmetry property.

Thus, reading the block $\bgamma_{i,1}$, the algorithm immediately knows which bucket the index $\ell$ gets hashed into in the $i$-th repetition of CountSketch. Now in a single pass over the blocks $\bgamma_{i,1},\ldots,\bgamma_{i, d}$, an $O(\log d)$ space algorithm can compute $\bs_i(\ell)\CS(x)_{i, \bg_i(\ell)}$ and at the end of traversing the blocks corresponding to the $i$-th iteration, the algorithm checks if $\bs_i(\ell)\CS(x)_{i, \bg_i(\ell)}$ is $> x_{\ell} + \alpha \Delta$ or $< x_{\ell} - \alpha \Delta$ and updates the corresponding counters accordingly. Thus, we have an $O(\log d)$ space algorithm and \FastPRG fools an $O(\log d)$ space algorithm and can be used to derandomize CountSketch with the stronger guarantees as given by Minton and Price without incurring a space blowup when $\log d = O(\countsketchrepetitions \cdot \countsketchrange)$. 

For each update in the stream, and for each of the $r$ repetitions, we need to compute a block of the pseudorandom string which takes $O(\log d)$ time in the Word RAM model with a word size $w = \Omega(\log d)$. Hence, our derandomized CountSketch with strong guarantees from \cite{Minton2014improved} has an update time of $O(r \log d)$. When $r \cdot t = d^{\Omega(1)}$, using the space-vs-time trade-off of \FastPRG, we can obtain an update time of $O(r)$.

\paragraph{Private CountSketch} In a recent work, Pagh and Thorup \cite{Pagh2022improved} gave an improved analysis of the estimation error of differentially private CountSketch. After computing $\CS(x)$ in the stream, they compute $\PCS(x) = \CS(x) + \bnu$ where $\bnu \sim N(0,\sigma^2)^{D}$. They show that for an appropriate value of $\sigma$, $\PCS(x)$ is $(\varepsilon, \delta)$-differentially private. They also show that for any $\ell \in [d]$, the estimator $\hat{x}_{\ell}$ computed using $\PCS(x)$ also concentrates heavily around $x_{\ell}$, and gave similar concentration bounds to that of Minton and Price \cite{Minton2014improved}. The analysis of \cite{Pagh2022improved} assumes that the hash functions and sign functions are fully random -- the same assumption as in \cite{Minton2014improved}. We derandomize the Private CountSketch construction using \FastPRG, which is similar to our derandomization of  \cite{Minton2014improved}.

\paragraph{Approximating $\linf{x}$} We consider the problem of approximating $\linf{x}$ up to an additive error $\varepsilon\opnorm{x}$. Directly using the CountSketch data structure, we can approximate each coordinate of the vector $x$ up to an additive error of $\varepsilon\opnorm{x}$ using $O(\varepsilon^{-2}(\log d)^2)$ bits. In the space complexity, a $\log d$ factor comes from the fact that we have to store integers with magnitudes $\poly(d)$ and other $\log d$ factor is because of $O(\log d)$ repetitions of CountSketch to be able to take a union bound over the reconstruction error of all $\poly(d)$ coordinates. We show that there is a simple linear sketching technique to reduce the dimension from $d$ to $\poly(1/\varepsilon)$ while preserving the $\linf{x}$ up to an additive error of $\varepsilon\opnorm{x}$. Then the CountSketch data structure, using a space of $O(\varepsilon^{-2}\log 1/\varepsilon \log d)$ bits, can be used to approximate the $\linf{\cdot}$ of the sketched vector thereby approximating $\linf{x}$. We use the randomized map $\bL : \R^{d} \rightarrow \R^t$ defined as follows to reduce the dimension: let $\bh : [d] \rightarrow [t]$ be drawn at random from a $2$-wise independent hash family and $\bs : [d] \rightarrow \set{+1,-1}$ be drawn at random from a $4$-wise independent hash family. Define $(\bL x)_i = \sum_{j \in [d]: \bh(i) = j}\bs(j)x_j$. Using simple variance computations, we show that if $t = \poly(1/\varepsilon)$, with a high probability, $\linf{\bL x} = \linf{x} \pm \varepsilon\opnorm{x}$ and that $\opnorm{\bL x} \le 2\opnorm{x}$. Note that a turnstile update to one coordinate of $x$ simply translates to a turnstile update to one coordinate of $\bL x$  and hence the two stage sketching algorithm can be efficiently implemented in a stream. A similar universe reduction was employed in \cite{fast-moment-estimation-optimal-space} but the technique was previously not explored in the context of $\ell_{\infty}$ estimation.

We further show that this simple algorithm is tight by showing an $\Omega(\varepsilon^{-2}\log 1/\varepsilon \log d)$ bits lower bound on any algorithm that approximates $\linf{x}$ up to an additive $\varepsilon\opnorm{x}$. We define a communication problem called ``Augmented Sparse Set-Disjointness''. In this one-way communication problem, Alice receives sets $A_1,\ldots, A_t$ with the property that $|A_i| = k$ and $A_i \subseteq [n]$ for all $i \in [t]$. Bob similarly receives the sets $B_1, \ldots, B_t$ with the same properties. Given an index $i \in [t]$ and the sets $A_1,\ldots,A_{i-1}$, using a single message $M$ (possibly randomized) from Alice, Bob has to compute if $B_i \cap A_i = \emptyset$ or not. In the case of $t=1$, \cite{dasgupta2010sparse} show that if $n \ge k^2$, then the sparse set disjointness problem has a communication lower bound of $\Omega(k\log k)$. Note that for $t=1$, simply sending the entire set $A_1$ to Bob requires a communication of $O(k\log n)$ bits. Surprisingly, they show that there is a protocol using $O(k \log k)$ bits to solve the problem. Extending their ideas, we show that the Augmented Sparse Set-Disjointness problem has a communication lower bound of $\Omega(tk\log k)$. Embedding an instance of the Sparse Set-Disjointness into approximating $\linf{x}$ for an appropriate vector $x$, we show a lower bound of $\Omega(\varepsilon^{-2}\log 1/\varepsilon\log d)$ bits.

The hard distribution in the lower bound has the property that the vector $x$ constructed satisfies $\linf{x} = O(\varepsilon\opnorm{x})$. We show that it is possible to break the lower bound if we assume that $\linf{x} \ge c\opnorm{x}$ for a constant $c$. This is the case when the coordinates of $x$ follow Zipf's law where the $i$-th largest coordinate has a value approximately $i^{-\alpha}$ for $\alpha > 0.5$. The algorithm is again the two stage sketch we described but in the second stage, instead of using CountSketch as described in \cite{Charikar2004finding} using constant wise independent hash functions, we use the tighter CountSketch guarantees that we obtain by derandomizing the analysis of \cite{Minton2014improved} using \FastPRG. We show that after the first level sketching, only a few large coordinates need to be estimated to large accuracy while the rest of the coordinates can have larger estimation errors. Using this insight, we show that $O(\varepsilon^{-2}\log d)$ bits of space is sufficient to estimate $\linf{x}$ up to an additive error of $\varepsilon\opnorm{x}$.
\section{Preliminaries}\label{sec:preliminaties}

\paragraph*{Notation.}
For an integer $n \ge 1$, let $[n]$ denote $\{1, 2, \ldots, n\}$. 
For a predicate $P$ let $[P]$ have the value~$1$ when $P$ is true and the value~$0$ when $P$ is false.
Let $\text{tail}_{\countsketchrange}(x)$ denote the vector derived from $x$ by changing the $\countsketchrange$ entries with largest absolute value to zero. We use bold symbols such as $\bh, \bx, \bN, \ldots$ to denote that these objects are explicitly sampled from an appropriate distribution.

For a real-valued matrix $M$ define  $\|M\|= \sup_{x\ne 0}\|x M\|_{1}/\|x\|_{1}$ where $\|x\|_{1} = \sum_i |x_i|$. 
(This matrix norm is sometimes written $\|M\|_1$, but since we do not need other matrix norms we omit the subscript.)

\paragraph{Model of Computation.} All of our running times are in the Word RAM model with a word size $O(\log d)$ unless otherwise mentioned. We assume that all elementary operations on words can be performed in $O(1)$ time.

\begin{definition}[$k$-wise independence]
	A family of hash functions $\calH = \set{h : [u] \rightarrow [v]}$ is said to be $k$-wise independent if for any $k$ distinct keys $x_1, x_2,\ldots,x_k \in [u]$ and not necessarily distinct values $y_1,y_2,\ldots,y_k \in [v]$ 
\begin{align*}
	\Pr_{\bh \sim \calH}[\bh(x_1) = y_1\land \cdots \land \bh(x_k) = y_k] = \frac{1}{v^k}.
\end{align*}
\end{definition}
The definition states that if $\bh \sim \calH$, then for any $x \in [u]$, the random variable $\bh(x)$ is uniformly distributed over $[v]$ and for any $k$ distinct keys $x_1, \ldots, x_k \in [u]$, the random variables $\bh(x_1), \ldots, \bh(x_k)$ are independent. 

We now state a randomized construction of a hash family from \cite{CPT15} that lets us evaluate the sampled hash function quickly on any input. 
 \begin{theorem}[Corollary 3 in \cite{CPT15}]
 There exists a randomized data structure that takes as input positive integers $u, v = u^{O(1)}, t, k = u^{O(1/t)}$ and  selects a family of function $\calF$ from $[u]$ to $[v]$. In the Word RAM model with word length $\Theta(\log u)$ the data structure satisfies the following:
 \begin{enumerate}
 	\item The space used to represent the family $\calF$ as well as a function $f \in \calF$ is $O(ku^{1/t}t)$ bits.
 	\item The evaluation time of any function $f \in \calF$ on any input is $O(t\log t)$.
 	\item With probability $\ge 1 - u^{-1/t}$, we have that $\calF$ is a $k$-wise independent family.
 \end{enumerate}
 \label{thm:fast-hash-functions}
 \end{theorem}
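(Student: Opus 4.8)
This statement is quoted from \cite[Corollary~3]{CPT15}, so within the present paper it need only be cited; nevertheless I sketch how such a family is obtained, since the parameter $t$ appearing in the bounds is exactly the depth of a recursion, and that recursion is what makes the space-versus-time trade-off transparent.

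The plan is to construct $\calF$ by a recursion of depth $t$, where each level operates over an alphabet of size $\Theta(u^{1/t})$. \emph{Base case.} Once the effective universe has size $O(u^{1/t})$, a $k$-wise independent map is given by a uniformly random degree-$(k-1)$ polynomial over a finite field of size $\Theta(u^{1/t})$; this is exactly where the hypothesis $k = u^{O(1/t)}$ is used, as it guarantees the field is at least as large as $k$. Because $u^{1/t}$ is small, this bottom-level map can even be stored as a full look-up table, contributing $O(u^{1/t})$ words and $O(1)$-time evaluation. \emph{Recursive step.} Writing a key as a string of $t$ characters over $[\Theta(u^{1/t})]$, one composes the base-case gadgets with a tabulation-style map following the ``derived characters'' paradigm behind double tabulation \cite{CPT15}: a randomly chosen simple-tabulation function first spreads the key into a longer string of derived characters over the same small alphabet, and a second tabulation on those derived characters produces the output in $[v]$. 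Each level thus stores tables over a $u^{1/t}$-size alphabet (the source of the $u^{1/t}$ factor), the per-level description scales with $k$ because $k$-wise independence forces the stored randomness to grow with $k$ (just as a degree-$(k-1)$ polynomial has $k$ coefficients), and there are $t$ levels --- giving $O(k\, u^{1/t} t)$ bits overall, with evaluation amounting to $O(t)$ table look-ups plus $O(\log t)$-time field arithmetic per level, i.e.\ $O(t\log t)$.

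\emph{Correctness.} The crux is to show that with probability at least $1-u^{-1/t}$ over the random first-level (tabulation) maps the composed function is genuinely $k$-wise independent on \emph{all} of $[u]$. The engine is the ``independence $\Leftrightarrow$ expansion'' equivalence of \cite{CPT15}: for any fixed set $S$ of at most $k$ keys, if the bipartite graph between $S$ and its derived characters expands --- equivalently, $S$ can be ``peeled,'' i.e.\ repeatedly some key of $S$ owns a derived character that no other key of $S$ touches --- then simple tabulation on the derived characters yields perfectly independent values on $S$. One then takes a union bound over all subsets of $[u]$ of size at most $k$, bounding the probability that random simple tabulation with enough derived characters fails to expand on $S$. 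I expect the main obstacle to be precisely this union-bound computation: getting the expansion-failure probability small enough, and with a clean enough dependence on $k$, that the overall failure probability collapses to $u^{-1/t}$ rather than degrading with $k$. The rest --- the accounting of space and evaluation time in the Word RAM model, and the reduction of a size-$u$ alphabet to $t$ levels over a size-$u^{1/t}$ alphabet --- is routine; and it is exactly the recursion depth $t$ that one tunes to trade table size against evaluation time.
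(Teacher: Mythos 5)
The statement is a verbatim quotation of Corollary~3 of \cite{CPT15}; the paper does not prove it, only cites it, so there is no in-paper proof against which to compare your sketch. That said, your high-level account of the CPT15 construction is in the right spirit: the essential ingredients really are a layered reduction of the alphabet, the equivalence between $k$-independence and a peelability/expansion property of a randomly chosen derived-character map, and a union bound over subsets of at most $k$ keys showing that expansion holds except with the stated probability; and you are right that the depth parameter controls the space/time trade-off.

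One claim needs correcting. You assert that the hypothesis $k = u^{O(1/t)}$ is used ``as it guarantees the field is at least as large as $k$'' in the base case. That is backwards: $k = u^{O(1/t)}$ allows $k$ to be any fixed power of $u^{1/t}$, so $k$ can well exceed the alphabet size $\Theta(u^{1/t})$, and the base case in CPT15 is not a degree-$(k-1)$ polynomial over a field of that size but a $k$-independent lookup table whose entries are supplied by the next level of the recursion. The constraint $k = u^{O(1/t)}$ instead enters precisely in the union-bound step that you yourself flagged as the hard part: the derived-character alphabet must be polynomially large in $k$ so that the probability that a random first-level map fails to be a unique expander on some fixed size-$\le k$ set of keys is small enough that, after summing over the roughly $u^{k}$ such sets, the total failure probability is still at most $u^{-1/t}$. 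So your instinct about where the technical work lies is correct, but the stated use of the side condition is not.
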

 Throughout the paper, we use this construction with a constant $t$ and $k$ at most $O(\log u)$.
 
 We now state the guarantees of the hash family construction from \cite{pagh2008uniform}. Whereas the above construction gives a randomized hash family $\calF$ that is $k$-wise independent with some probability, the following construction gives a randomized hash family $\calH$ that when restricted to any fixed subset $S$ of a certain size is a uniform hash family with some probability. We use this construction when we want to ensure that all elements of a small underlying set (but unknown to the streaming algorithm) are hashed to uniformly random locations.
\begin{theorem}[Theorem~1 of \cite{pagh2008uniform}]
Let $S \subseteq U = [u]$ be a set of $z > 1$ elements and let $V = [v]$ for any $1 \le v \le u$. Suppose the machine word size is $\Omega(\log u)$. For any constant $C > 0$, there is a Word RAM algorithm that, using $\log(z)(\log v)^{O(1)}$ time and $O(\log z + \log\log u)$ bits of space, selects a family of hash functions $\calH$ from $U$ to $V$ (independent of $S$) such that
\begin{itemize}
    \item $\calH$ is $z$-wise independent (in other words, uniform) when restricted to $S$, with probability $1 - O(1/z^C)$.
    \item Any function $h \in \calH$ can be represented using $O(z\log v)$ bits and $h$ can can be evaluated on any $x \in U$ in $O(1)$ time in the Word RAM model. The data structure (or representation) of a random function from the family $\calH$ can be constructed in $O(z)$ time.
\end{itemize}
\label{thm:pagh-and-pagh}
\end{theorem}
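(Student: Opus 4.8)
The plan is to realize $\calH$ by a universe‑reduction step followed by a sum of a constant number of random table lookups, and to reduce ``uniform on $S$'' to an emptiness property of the $2$‑core of a random hypergraph. First I would compose with an auxiliary hash $g\colon U\to[u']$, $u'=z^{O(1)}$, drawn from an $O(1)$‑wise independent family over $[u]$ (representable in $O(\log u)$ bits and evaluable in $O(1)$ word‑RAM time, e.g.\ via a constant‑degree polynomial over $\mathbb{F}_{2^{\lceil\log u\rceil}}$); a second‑moment bound shows $g$ is injective on $S$ except with probability $O(z^{-C})$, and we condition on that event, so the universe may henceforth be taken to be $[u']$ with $\log u'=O(\log z)$. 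On this reduced universe fix a constant $d=d(C)$ and a range $\bar r=\Theta(z)$ (both chosen below), draw $f_1,\dots,f_d\colon[u']\to[\bar r]$ independently from a compact, $O(1)$‑time hash family that is $z^{\Omega(1)}$‑wise independent (except with probability $O(z^{-C})$) and occupies $O(z^{\epsilon})$ bits for a small constant $\epsilon$, and draw tables $T_1,\dots,T_d$ with entries i.i.d.\ uniform in $\mathbb{Z}_v=\set{0,\dots,v-1}$. Define
\[
  h(x)\;=\;\Bigl(\textstyle\sum_{i=1}^{d} T_i\bigl[f_i(g(x))\bigr]\Bigr)\bmod v .
\]
Fixing everything but $T_1$ shows $h(x)$ is marginally uniform on $\mathbb{Z}_v$, and $h$ is evaluated with $d=O(1)$ hash evaluations and $d$ lookups, hence in $O(1)$ time.

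Second, I would reduce uniformity of $h$ on $S$ to a rank statement. Conditioned on $f_1\circ g,\dots,f_d\circ g$, the vector $(h(x))_{x\in S}\in\mathbb{Z}_v^{z}$ equals $MT$, where $T$ stacks the $d\bar r$ table entries and $M$ is the $z\times d\bar r$ incidence matrix whose row for $x\in S$ has a $1$ in column $(i,f_i(g(x)))$ for each $i\in[d]$ (keeping the $d$ coordinate blocks disjoint). If the square submatrix of $M$ on some set of $z$ columns is invertible over $\mathbb{Z}_v$, then $T\mapsto MT$ is surjective onto $\mathbb{Z}_v^{z}$, so $(h(x))_{x\in S}$ is uniform — i.e.\ $h|_S$ is fully, hence $z$‑wise, independent and uniform. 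Viewing the rows of $M$ as the edges $e_x=\{(i,f_i(g(x))):i\in[d]\}$ of a $d$‑uniform multihypergraph on $[d]\times[\bar r]$, I claim this holds whenever that hypergraph has an \emph{empty $2$‑core}: repeatedly deleting a cell of degree $\le 1$ together with its unique incident edge then removes all edges, and the deletion order $x_1,x_2,\dots$ together with the witnessing degree‑$1$ cells $c_1,c_2,\dots$ has the $c_j$ pairwise distinct with $c_j\notin e_{x_{j'}}$ for $j'>j$; restricting $M$ to columns $c_1,c_2,\dots$ and reordering gives a lower‑triangular matrix with unit diagonal, which is invertible over any $\mathbb{Z}_v$. (This also disposes of the degenerate event that two keys receive identical edges.)

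Third — the technical heart — I would show the hypergraph has an empty $2$‑core with probability $1-O(z^{-C})$. For $\bar r$ a large enough constant multiple of $z$ and $d=d(C)$ a large enough constant, the random $d$‑uniform hypergraph with $z$ edges sits well below the $2$‑core emergence threshold, and the standard peeling estimate bounds the failure probability by $\sum_{w\ge 2}\binom{z}{w}\binom{d\bar r}{w-1}(w/\bar r)^{dw}$, which is $O(z^{-C})$ under these choices. \textbf{The main obstacle is that this union bound needs the $f_i$ to be independent across the keys it touches: for a subset of size $w$ it uses $\Theta(dw)$‑wise independence of the $f_i$, whereas a minimal ``dense'' subset — one with fewer incident cells than members — could a priori have size $\Theta(z)$, far beyond the $z^{\Omega(1)}$‑wise independence that a compact, $O(1)$‑time hash family provides; arranging exactly that combination of independence, evaluation time, space, and reliability for the $f_i$ is itself delicate.} I would resolve this as in \cite{pagh2008uniform}: either push the directly‑handled range up to $z^{\Omega(1)}$ using a Siegel‑/tabulation‑type highly independent hash on the $\poly(z)$‑size reduced universe (still $O(1)$‑time and $O(z^{\delta})$ bits, hence within budget) and then argue that every dense subset contains a dense witness of size $O(\log z)$; or first split $S$ into $\Theta(z/\log z)$ independent groups of size $O(\log z)$ via an $O(\log z)$‑wise independent hash (Chernoff under limited independence), handle each group with its own instance so that only $w=O(\log z)$ ever occurs, and union‑bound over groups. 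Either way, together with the injectivity of $g$, $\calH$ is uniform on $S$ with probability $1-O(z^{-C})$, and the construction must depend on $C$ (through $d$ and the independence of the $f_i$) to reach an arbitrary polynomial. Finally, the stated bounds follow by accounting: $h$ is stored as the $O(1)$ hash seeds (the $O(\log u)$‑bit seed of $g$, dominated as usual when $z\log v=\Omega(\log u)$, plus $O(z^{\epsilon})$ bits for each of the $O(1)$ functions $f_i$) together with the $d$ tables ($d\bar r\lceil\log v\rceil=O(z\log v)$ bits), which is $O(z\log v)$; the selection procedure merely streams out these seeds and table contents, using $O(\log z+\log\log u)$ working space and $O(\log z)(\log v)^{O(1)}$ time; filling the tables to obtain a random $h$ takes $O(z)$ time; and evaluation is $O(1)$, as required.
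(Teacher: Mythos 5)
First, note that the paper does not prove Theorem~\ref{thm:pagh-and-pagh} at all: it is imported verbatim as Theorem~1 of \cite{pagh2008uniform}, so there is no in-paper proof to compare against, and your proposal has to be judged against the construction in that reference. Your route is a legitimate one, but it is genuinely different from the one in \cite{pagh2008uniform}. You use $d=d(C)\ge 3$ table lookups and force the associated $d$-uniform hypergraph to have an \emph{empty} $2$-core, so that the incidence matrix is unit-triangular after peeling and the table entries make $h|_S$ exactly uniform. Pagh and Pagh instead use only two lookups, $h(x)=T_1[f(x)]+T_2[g(x)]+h'(x)$: the $2$-core (cyclic part) of the resulting graph is typically \emph{non-empty} but small, and a separate composition lemma --- if the first summand is uniform on $S$ outside a set of at most $k$ ``bad'' keys determined by $f,g$ alone, and $h'$ is $k$-wise independent, then the sum is uniform on all of $S$ --- absorbs the bad keys. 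Their approach buys a smaller number of lookups and avoids the $2$-core-emptiness threshold entirely; yours buys a single, self-contained rank argument at the cost of a constant $d$ growing with $C$. Your reduction from uniformity to invertibility of a unit-triangular submatrix over $\mathbb{Z}_v$, and the observation that marginal uniformity is free, are both correct.

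There are two places where your write-up is not yet a proof. The more serious one is the step you yourself flag and then partially defer to \cite{pagh2008uniform}: the union bound over non-peelable witnesses requires independence of the $f_i$ on sets of every size $w$ up to $\Theta(z)$, while a compact $O(1)$-time family only gives $z^{\Omega(1)}$-wise independence. Your first proposed fix rests on the claim that ``every dense subset contains a dense witness of size $O(\log z)$,'' which is false as stated: a minimal subhypergraph in which every cell has degree at least $2$ can be a loose cycle of length $\Theta(z)$, with no small dense sub-witness. The viable repairs are either the split-and-share route you also sketch (hash $S$ into $\Theta(z/\log z)$ groups of size $O(\log z)$ and run an instance per group, so that $w=O(\log z)$ always suffices), or an expansion argument showing that $z^{\Omega(1)}$-wise independent $f_i$ make all \emph{large} subsets spread over many cells; one of these must actually be carried out. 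The second, minor issue is accounting: your universe-reduction seed for $g$ costs $\Theta(\log u)$ bits, which exceeds both the claimed $O(z\log v)$ representation and the $O(\log z+\log\log u)$ selection space when $\log u \gg z\log v$; \cite{pagh2008uniform} avoid this with a collision-free reduction whose description length is $O(\log z+\log\log u)$, and you would need to do the same to match the theorem as quoted.
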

We now define $\varepsilon$ pseudorandom generators for any given class of algorithms $\calC$.
\begin{definition}
    A generator $G : \set{0,1}^{n} \rightarrow \set{0,1}^m$ is called an $\varepsilon$ pseudorandom generator for the class of algorithms $\calC$ if for any $C \in \calC$,
    \begin{align*}
        |\Pr_{\bx \sim U_{m}}[C(\bx)\, \text{accepts}] - \Pr_{\by \sim U_n}[C(G(\by))\, \text{accepts}]| < \varepsilon.
    \end{align*}
\end{definition}
Here $U_m$ denotes the uniform distribution over $\set{0,1}^m$. For a generator $G$, the quantity $n$ is called the \emph{seed length}. In this work $\calC$ is taken to be the set of space bounded algorithms with an appropriate space parameter $s$. The algorithms have a read/write space of $s$ bits and a read only tape that contains an input. The algorithms are allowed to only stream over the $m$ length random/pseudorandom string. Although the above definition is in terms of accept/reject, it can extended to general functions by instead considering the total variation distance between the distributions of the outputs.

\section{\FastPRG}\label{sec:hashprg}

In this section we present a new pseudorandom generator for space-bounded computations, which is going to be our main tool for derandomizing streaming algorithms.
The starting point is the classical generator of Nisan \cite{nisan} (summarized in Appendix~\ref{sec:nisan}), which we extend to provide a trade-off between seed length and the time to compute an arbitrary output block.
To make our treatment easy to access for readers familiar with Nisan's generator, we will follow the same proof outline, but make crucial changes to avoid a union bound over all possible intermediate states.
An advantage of our generator over Nisan's, even in the setting where the seed lengths are the same, is that it has a certain \emph{symmetry} property that we need in our applications.

\subsection{\FastPRG Construction}
Let $b\geq 2$ be an integer.
Consider $\bh_0,\ldots,\bh_{k-1}$ where $\bh_i := (\bh_i^{(0)},\ldots,\bh_i^{(b-1)})$ is a vector of $b$ hash functions with each $\bh_i^{(j)} : \set{0,1}^n \rightarrow \set{0,1}^n$ being a hash function drawn independently from a 2-wise independent hash family $\calH$. We slightly abuse terminology and also refer to the vectors $\bh_i$ as hash functions.

Using hash functions $\bh_0,\ldots,\bh_{k-1}$, define a generator $G_k : \set{0,1}^n \rightarrow \set{0,1}^{n \cdot b^k}$ recursively:
\begin{align*}
    G_0(x) &= x\\
    G_k(x, \bh_0,\ldots,\bh_{k-1}) &=
     G_{k-1}(\bh_{k-1}^{(0)}(x), \bh_0,\ldots, \bh_{k-2}) \circ \cdots \circ G_{k-1}(\bh^{(b-1)}_{k-1}(x), \bh_0,\ldots,\bh_{k-2}).
\end{align*}
Here $\circ$ denotes concatenation. Note that Nisan's generator can be obtained by setting $b = 2$ and deterministically setting $\bh_i^{(0)}(x) := x$ for all $i,x$.

By construction we have, for $x \in \set{0,1}^n$, that $G_k(x,\bh_0,\ldots,\bh_{k-1})$ is a bitstring of length $n \cdot b^k$. We look at the string $G_k(x, \bh_0,\ldots,\bh_{k-1})$ as concatenation of $b^k$ chunks each of length $n$, chunks indexed by $0,\ldots, b^k-1$. Let $j \in \set{0, \ldots, b^k-1}$ be written as $j_{k-1} \cdots j_0$ in base~$b$. Then the $j$th chunk of the string $G_k(x,\bh_0,\ldots,\bh_{k-1})$ is given by
\begin{align*}
    \bh_0^{(j_0)}(\bh_1^{(j_1)}(\cdots \bh_{k-1}^{(j_{k-1})}(x))).
\end{align*}
To define the power of our pseudorandom generator we need the following notation. 
Let $Q$ be an arbitrary finite state machine with $2^{w}$ states over the alphabet $\set{0,1}^n$. Let $D$ be any distribution over the strings of length $b^{k} \cdot n$, encoding $b^k$ steps of the FSM. Let $Q(D)$ be a $2^{w} \times 2^{w}$ matrix where $[Q(D)]_{ij}$ is the probability that the FSM starting in state $i$ goes to state $j$ after performing $b^{k}$ steps based on an input drawn from $D$.
Let $U_n$ denote the uniform distribution over $\set{0,1}^n$. We will prove the following lemma.

\begin{lemma}\label{lma:generalized-nisan}
There exists a constant $c > 0$, given integers $n$ and $w \le cn$ and parameters $b,k$ with $b^{k} \le 2^{cn}$, for any FSM $Q$ with at most $2^{w}$ states, if $\bh_0,\ldots,\bh_{k-1} : \set{0,\ldots,b-1} \rightarrow \set{0,1}^n \rightarrow \set{0,1}^n$ are drawn independently from ${\calH}^b$, where $\calH$ is any family of $2$-wise independent hash functions, then with probability $\ge 1 - 2^{-cn}$ (over the draw of $\bh_0,\ldots,\bh_{{k-1}}$),
\begin{align*}
    \|Q(G_k(*, \bh_0,\ldots,\bh_{k-1})) - Q((U_n)^{b^k})\| \le 2^{-cn}.
\end{align*}
Here $G_k(*, \bh_0,\ldots,\bh_{k-1})$ denotes uniform distribution over the set $\setbuilder{G_k(x, \bh_{0},\ldots,\bh_{k-1})}{x \in \set{0,1}^n}$.
\end{lemma}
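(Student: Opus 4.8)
The plan is to follow Nisan's induction on $k$, but to carry the tighter, state-pair version of the mixing lemma through the recursion. I would first set up the base case $k=0$ trivially, since $G_0$ is just the identity on $\set{0,1}^n$ and $Q(G_0(*)) = Q(U_n)$ exactly. For the inductive step, fix the FSM $Q$ and think of running $b^k$ steps as first running $b^{k-1}$ steps, then another $b^{k-1}$, and so on, $b$ times. The key object is, for each ordered pair of states $(i,j)$ and each "inner" hash tuple $\bh_0,\ldots,\bh_{k-2}$, the set
\begin{align*}
B_{i,j} \coloneqq \setbuilder{(x,y) \in \set{0,1}^{2n}}{\text{$Q$ started in $i$ reaches $j$ after $b^{k-1}$ steps on input $G_{k-1}(x,\bh_0,\ldots,\bh_{k-2})$, when $y$ is \dots}}\end{align*}
--- more precisely I would use the Definition~3' formulation: for the "doubling" step I want $h = \bh_{k-1}^{(\cdot)}$ applied to a fixed seed $x$ to be $(\varepsilon, A)$-independent for the relevant sets $A \subseteq \set{0,1}^{2n}$ that encode "the FSM transitions through a given state after the first $b^{k-1}$ steps and through another given state after the second." Because each $\bh_{k-1}^{(j)}$ is drawn from a $2$-wise independent family, a second-moment/Chebyshev argument shows that for any fixed $A$, $\Pr_x[(x,h(x)) \in A]$ concentrates around $\varrho(A)$ with deviation $O(2^{-n/2}\cdot \text{something})$; union-bounding over the $\le 2^{2w}$ relevant sets $A$ (one per pair of states, not per triple) and over the $b$ choices of $j$ and the $k$ levels gives the overall failure probability $2^{-cn}$, provided $w \le cn$ and $b^k \le 2^{cn}$ absorb the union-bound losses.

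The second main step is to convert this set-independence into a bound on the matrix norm $\|\cdot\|$. Here I would use the standard fact, already implicit in Nisan's Lemma~2, that if the distribution produced by $G_{k-1}$ already satisfies $\|Q(G_{k-1}(*,\bh_0,\ldots,\bh_{k-2})) - Q(U_n^{b^{k-1}})\| \le \varepsilon_{k-1}$ (the inductive hypothesis), and if the "gluing" hash functions $\bh_{k-1}^{(0)},\ldots,\bh_{k-1}^{(b-1)}$ are sufficiently set-independent with respect to all the relevant state-pair indicator sets, then concatenating $b$ copies to form $G_k$ multiplies $b$ nearly-independent transition matrices, each of which is within $\varepsilon_{k-1}$ of the "ideal" matrix $M \coloneqq Q(U_n^{b^{k-1}})$, plus a per-step set-independence error $\varepsilon'$. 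The product telescopes: $\|M^b - (\text{actual product})\| \le b(\varepsilon_{k-1} + \varepsilon') \cdot \|M\|^{b-1}$-type bound, and since $\|M\| \le 1$ (it is substochastic in the $\|\cdot\|$ norm — each row of a transition matrix has $\ell_1$ norm $1$), we get $\varepsilon_k \le b(\varepsilon_{k-1} + \varepsilon')$. Iterating from $\varepsilon_0 = 0$ gives $\varepsilon_k \le b^k \varepsilon'$, and choosing the constant $c$ so that $\varepsilon' = 2^{-2cn}$ and $b^k \le 2^{cn}$ yields the claimed $\varepsilon_k \le 2^{-cn}$.

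I expect the main obstacle to be carefully formulating and tracking the indicator sets $A$ (or $B_{i,j}$) so that the Definition~3' style independence is exactly the statement one needs at each level, and verifying that it really is only a union bound over pairs of states per level rather than triples. Concretely: one must argue that the $2$-wise independence of each $\bh_{k-1}^{(j)}$ suffices to make $x \mapsto (x, \bh_{k-1}^{(j)}(x))$ an $(\varepsilon',A)$-independent map for every fixed $A \subseteq \set{0,1}^{2n}$ with $\varepsilon' \approx \sqrt{|A|}/2^n \cdot 2^{O(1)}$ — this is a direct variance computation since $\E_x[\mathbf{1}_A(x,h(x))] = \varrho(A)$ when $h$ is pairwise independent-ish, wait — actually with only pairwise independence of $h$ one gets that $\Var_x[\mathbf{1}_A(x,h(x))]$ is controlled, and Chebyshev gives deviation $O(2^{-n/2})$ up to the density, so that after the union bound over $\le 2^{2w} b k$ sets we need $2^{2w} \cdot 2^{-\Omega(n)} \le 2^{-cn}$, i.e. $w$ a small enough constant times $n$. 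The other delicate point is that the inductive hypothesis must be applied with a \emph{fixed} realization of $\bh_0,\ldots,\bh_{k-2}$ that is simultaneously good for the (polynomially many in $2^w$) derived FSMs that arise when we peel off the outermost layer; this is handled by noting all these derived FSMs still have $\le 2^w$ states, so one application of the inductive hypothesis with failure probability $2^{-cn}$, combined with the independent good event for $\bh_{k-1}$, closes the induction by a final union bound. Throughout I would lean on the fact (stated in the excerpt) that avoiding the intermediate state $l$ is exactly what keeps the bounds from degrading as $b$ grows.
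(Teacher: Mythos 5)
Your high-level strategy matches the paper's: induct on $k$, carry the state-pair (rather than state-triple) version of set-independence through the recursion, use a Chebyshev bound plus a union bound over pairs of states (paying $2^{O(w)}$), and combine with a matrix-power-difference lemma. That is the right skeleton. However, there are three concrete places where your formulation would not go through as written for general $b$, and all three are resolved by a single change of viewpoint that you do not quite make.

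First, you consistently place the target sets $A$ (or $B_{i,j}$) inside $\set{0,1}^{2n}$ and speak of whether $(x, \bh_{k-1}^{(j)}(x)) \in A$. That is the Nisan/Definition~3$'$ picture, and it only makes sense for $b=2$. The paper's actual Definition~\ref{def:set-independence} treats the entire tuple $\bh_{k-1} = (\bh_{k-1}^{(0)},\ldots,\bh_{k-1}^{(b-1)})$ as a \emph{single} $2$-wise independent function $\set{0,1}^n \to \set{0,1}^{bn}$ and takes $A = B_{i,j}^{h_0,\ldots,h_{k-2}} \subseteq \set{0,1}^{bn}$ to be the set of $b$-tuples of $G_{k-1}$-seeds whose concatenated outputs carry $Q$ from $i$ to $j$. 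The set-independence condition is then the single statement $\bigl|\Pr_{\bx}[\bh_{k-1}(\bx) \in B_{i,j}] - \varrho(B_{i,j})\bigr| \le \varepsilon$, one per pair $(i,j)$, with no separate requirement per coordinate $j'\in\{0,\dots,b-1\}$. If instead you ask each $\bh_{k-1}^{(j)}$ separately to be set-independent against sets encoding intermediate states (which is what ``transitions through a given state after the first $b^{k-1}$ steps and through another after the second'' amounts to), you reintroduce the intermediate states $l_1,\ldots,l_{b-1}$ that the paper's Definition~3$'$ trick is designed to eliminate, and the union bound inflates by a factor of $2^{(b-1)w}$ — exactly the degradation the paper is explicitly trying to avoid.

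Second, the ``product telescoping'' step $\|M^b - (\text{actual product})\| \le b(\varepsilon_{k-1}+\varepsilon')\|M\|^{b-1}$ is not well-posed, because $Q(G_k(*,\bh_0,\ldots,\bh_{k-1}))$ is \emph{not} a product of $b$ transition matrices: all $b$ chunks $G_{k-1}(\bh_{k-1}^{(j)}(\bx),\cdot)$ share the same seed $\bx$, so the $b$ transitions are correlated. The paper's argument does not factor $Q(G_k)$. Instead it uses the triangle inequality to split $\|Q(G_k) - Q((U_n)^{b^k})\|$ into two terms: (a) $\|Q(G_k) - Q((G_{k-1}(*,\cdot))^b)\|$, where the second operand uses $b$ \emph{independent} seeds, and whose $(i,j)$-entries are $\Pr_{\bx}[\bh_{k-1}(\bx)\in B_{i,j}]$ and $\varrho(B_{i,j})$ respectively — so Event~2 (set-independence) bounds each entry and hence the row-sum by $(b-1)\varepsilon$ after multiplying by $2^w$; and (b) $\|M^b - N^b\|$ with $M = Q(G_{k-1}(*,\cdot))$, $N = Q((U_n)^{b^{k-1}})$, which by Lemma~\ref{lem:power-difference} is at most $b\|M-N\| \le b(b^{k-1}-1)\varepsilon$ by the inductive hypothesis. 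Only the second term is a power comparison; the first is a single entry-wise set-independence bound.

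Third, the ``derived FSMs'' concern is spurious: the induction is carried out with the \emph{same} machine $Q$ at every level, and Event~1 is simply that $(\bh_0,\ldots,\bh_{k-2})$ is $((b^{k-1}-1)\varepsilon,Q)$-good for that same $Q$. No simultaneity over a family of machines is required — the sets $B_{i,j}$ are functions of $h_0,\ldots,h_{k-2}$, and the set-independence of $\bh_{k-1}$ (Event~2) is requested for whichever $B_{i,j}$ the realized $\bh_0,\ldots,\bh_{k-2}$ produce, using that $\sum_j \varrho(B_{i,j}) = 1$ for each fixed $i$ to make the union bound over $(i,j)$ pay only $2^w$ instead of $2^{2w}$ from the $\varrho$-factor in Lemma~\ref{lma:independence-of-hash-function}. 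If you fix the three points above — sets in $\set{0,1}^{bn}$, joint set-independence of the vector $\bh_{k-1}$, and the two-step triangle inequality in place of product telescoping — the rest of your accounting (failure probability additive over the $k$ levels, $\varepsilon_k \sim b^k\varepsilon'$, final choice $w=cn$ and $\varepsilon = 2^{-cn}/(b^k-1)$) lines up with the paper's Lemma~\ref{lem:good-bound} and its derivation of Lemma~\ref{lma:generalized-nisan}.
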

By definition of the matrix norm $\|\cdot\|$ (see section~\ref{sec:preliminaties}) this implies that
that with probability at least $1 - 2^{-cn}$ over the hash functions $\bh_0,\ldots,\bh_{k-1}$, we have that the total variation distance between the distribution of final state using a random string drawn from $(U_n)^{b^k}$ and a random string drawn from $G_k(*, \bh_0,\ldots,\bh_{k-1})$ is at most $2^{-cn}$.

Using the above lemma, we will then prove the following theorem which we will use through out the paper.
\begin{theorem}
    There exists a constant $c > 0$ such that given any parameters $n$, $b$ and $k$ satisfying $b^{k} \le 2^{cn}$, there exists a generator which we call $\FastPRG : \set{0,1}^{O(bkn)} \rightarrow \set{0,1}^{b^k \cdot n}$ such that \FastPRG is an $O(2^{-cn})$ pseudorandom generator for the class of Finite State Machines over alphabet $\set{0,1}^n$ with at most $2^{cn}$ states. For any seed $\br$, the $i$-th block of bits in $\FastPRG(\br)$ can be computed in time $O(k)$ on a machine with Word Size $\Omega(n)$.
    \label{thm:nisan-main-theorem}
\end{theorem}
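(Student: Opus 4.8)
The plan is to derive the theorem from Lemma~\ref{lma:generalized-nisan} by an amplification-and-composition argument analogous to how Nisan's theorem is obtained from his key lemma. First I would fix the constant $c$ to be (say) half the constant from Lemma~\ref{lma:generalized-nisan}, so that I have slack for the union bound and for absorbing lower-order terms. Given $n, b, k$ with $b^k \le 2^{cn}$, I would instantiate $\calH$ with an explicit $2$-wise independent family over $\set{0,1}^n$ whose functions can be sampled with $O(n)$ random bits and evaluated in $O(1)$ time in the Word RAM model with word size $\Omega(n)$; the family of \cite[Theorem~3(b)]{Dietzfelbinger96universal} works. Sampling the $bk$ hash functions $\bh_i^{(j)}$ costs $O(bkn)$ random bits, and together with the $n$-bit seed $\br$ for $G_k$ this gives the claimed seed length $O(bkn)$. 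The generator $\FastPRG(\br, \{\bh_i^{(j)}\})$ outputs $G_k(\br, \bh_0, \ldots, \bh_{k-1})$, a string of length $b^k \cdot n$ as required.

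Next I would convert the matrix-norm guarantee of Lemma~\ref{lma:generalized-nisan} into the pseudorandom-generator guarantee of Definition~2 (the $\varepsilon$-PRG definition). A space-$cn$ algorithm making one pass over a length-$(b^k n)$ string, reading $n$ bits at a time, is exactly a finite state machine with at most $2^{cn}$ states over the alphabet $\set{0,1}^n$ performing $b^k$ steps; its acceptance probability on input distribution $D$ is a fixed linear functional of the row $Q(D)_{i_0,\cdot}$ for the start state $i_0$. Since $\|M\| = \sup_{x\ne 0}\|xM\|_1/\|x\|_1$ bounds the $\ell_1$ distance between corresponding rows of $Q(G_k(*,\bh))$ and $Q((U_n)^{b^k})$, the total variation distance between the two final-state distributions is at most $2^{-cn}$ whenever the high-probability event of the lemma holds. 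Accounting for the $2^{-cn}$ failure probability of the hash draw, the distinguishing advantage of any such FSM between $\FastPRG$ output and uniform is at most $2^{-cn} + 2^{-cn} = O(2^{-cn})$; this is where the constant gets halved. Finally, for the running time: by the chunk formula, the $i$-th block of $\FastPRG(\br)$ equals $\bh_0^{(i_0)}(\bh_1^{(i_1)}(\cdots \bh_{k-1}^{(i_{k-1})}(\br)))$ where $i_{k-1}\cdots i_0$ is the base-$b$ expansion of $i$; computing the expansion takes $O(k)$ word operations, and evaluating the $k$ nested hash functions takes $O(1)$ time each on a machine with word size $\Omega(n)$, for a total of $O(k)$.

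I expect the only genuine subtlety — everything above being essentially bookkeeping — to be making the reduction from "space-$cn$ one-pass algorithm over $\set{0,1}^n$-symbols" to "FSM with $2^{cn}$ states" fully rigorous, in particular checking that the input tape (the data stream, fixed before the random string is read) can be folded into the FSM's transition function without blowing up the state count, and that reading the random string in $n$-bit blocks rather than bit-by-bit does not lose anything. This is standard but worth stating carefully, since it is exactly the point where the alphabet size $2^n$ of the FSM is tied to the block length $n$ of $\FastPRG$. The constant $c$ must be chosen small enough that $w \le cn$ in Lemma~\ref{lma:generalized-nisan} is compatible with $2^{cn}$ states here; tracking this single inequality through is the main thing to get right.
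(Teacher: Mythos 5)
Your proposal is correct and follows essentially the same route as the paper's own proof: derive the $\varepsilon$-PRG guarantee from Lemma~\ref{lma:generalized-nisan} by averaging over the hash function draw (splitting into the $\ge 1-2^{-cn}$ measure of "good" hash tuples where the matrix-norm bound holds and the $\le 2^{-cn}$ measure of "bad" ones, whose contribution is absorbed since the advantage is trivially bounded), instantiate $\calH$ with the Dietzfelbinger family to get seed length $O(bkn)$, and read off $O(k)$ block-evaluation time from the recursive chunk formula. The paper carries out the averaging at the level of matrix entries $p_{i,j}$ rather than acceptance probabilities, but this is the same argument in different notation, and your cosmetic device of halving $c$ does exactly what the paper's $O(\cdot)$ absorbs. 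Your closing worry about reducing a space-$cn$ streaming algorithm to an FSM is a real subtlety in the \emph{applications} of the theorem, but Theorem~\ref{thm:nisan-main-theorem} itself is stated directly about FSMs over alphabet $\set{0,1}^n$, so that reduction plays no role in this proof.
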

\begin{proof}
    Let $p_{i,j}^{h_0, \ldots, h_{k-1}}$ be $[Q(G_k(*, h_0,\ldots,h_{k-1}))]_{i,j}$. Let $p_{i,j}$ be the probability the FSM starting in state $i$ goes to a state $j$ after $b^k$ steps on an input $\br$ where $\bh_0,\ldots,\bh_{k-1} \sim \calH^{b}$ and $\br \sim G_k(*, \bh_0,\ldots,\bh_{k-1})$. Clearly, for any pair of states $i,j$,
    \begin{align*}
        p_{i,j} = \frac{1}{|\calH^{b}|^k}\sum_{(h_0,\ldots,h_{k-1}) \in (\calH^{b})^{k}} p_{i,j}^{h_0,\ldots,h_{k-1}}.
    \end{align*}
    Let $q_{i,j} = [Q((U_n)^{b^k})]_{i,j}$. Now, for any $i$,
    \begin{align*}
        \sum_j |p_{i, j} - q_{i,j}|
        &\le \sum_j \frac{1}{|\calH^b|^k}\sum_{(h_0,\ldots,h_{k-1}) \in (\calH^{b})^k}|p_{i,j}^{h_0,\ldots,h_{k-1}} - q_{i,j}|\\
        &= \frac{1}{|\calH^{b}|^k}\sum_{(h_0,\ldots,h_{k-1}) \in (\calH^{b})^k} (\sum_j |p_{i,j}^{h_0,\ldots,h_{k-1}} - q_{i,j}|).
    \end{align*}
    By Lemma~\ref{lma:generalized-nisan}, $|\setbuilder{(h_0,\ldots,h_{k-1})}{\sum_j |p_{i,j}^{h_0,\ldots,h_{k-1}} - q_{i,j}| \le 2^{-cn}}| \ge |\calH^{b}|^k (1 - 2^{-cn})$ and using the fact that for any $(h_0,\ldots,h_{k-1})$, $|p_{i,j}^{h_0,\ldots,h_{k-1}} - q_{i,j}| \le 1$, we obtain that
    \begin{align*}
        \sum_j |p_{i,j} - q_{i,j}| \le 2^{-cn} + 1 \cdot 2^{-cn} \le 2 \cdot 2^{-cn}.
    \end{align*}
    As the above holds for any $i$, we obtain that $\|Q(\FastPRG) - Q((U_n)^{b^k})\| \le 2 \cdot 2^{-cn}$. Where we overload the notation $\FastPRG$ to also denote the distribution of string $\bgamma$ obtained by first sampling $\bh_0,\ldots,\bh_{k-1}$ and then sampling $\bgamma \sim G_k(*, \bh_0,\ldots,\bh_{k-1})$. The $O(bkn)$ bits of random seed is used to sample $bk$ independent hash functions from the 2-wise independent hash family construction of Dietzfelbinger \cite[Theorem~3(b)]{Dietzfelbinger96universal}. The $bk$ independent hash functions are used to construct the hash functions $\bh_0,\ldots,\bh_{k-1}$ and $n$ bits from the random string are used to get an $\bx \sim \set{0,1}^n$. The output of \FastPRG is then $G_k(\bx, \bh_0,\ldots,\bh_{k-1})$. Clearly, by the definition of $G_k$, any block of bits in the output string can be computed in $O(k)$ time.
\end{proof}
A special case of the above theorem we use is when $n = O(\log d)$. By choosing $b = d^{\epsilon}$ and $k = O(1/\epsilon)$ for a small enough constant $\epsilon$, we obtain that \FastPRG with these parameters fools any FSM with $\poly(d)$ states over an alphabet $\set{0,1}^{n}$. And that on a machine with word size $\Omega(\log d)$, any block of bits in the output of \FastPRG can be computed in time $O(1)$ since $\epsilon$ is a constant.

\subsection{\FastPRG Analysis}

We will be reasoning about matrices that represent transition probabilities of $Q$ under different input distributions.
We start with some simple facts about norms of matrices.
Recall that $\|M\| \coloneqq \sup_{x :  \|x\|_{\infty} = 1}\|Mx\|_{\infty}$.
First, for any two matrices $A$ and $B$ we have $\|A+B\| \le \|A\|+\|B\|$ and $\|AB\| \le \|A\|\|B\|$. 
We will also need the following lemma about the norm of differences of matrix powers:
\begin{lemma}\label{lem:power-difference}
For integer $b\ge 1$ and square real matrices $M$ and $N$ with $\|M\|\le 1$ and $\|N\| \le 1$, $\|M^b - N^b\| \le b\|M-N\|$.
\end{lemma}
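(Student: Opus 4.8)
The plan is to prove $\|M^b - N^b\| \le b\|M-N\|$ by a telescoping decomposition, using submultiplicativity and the triangle inequality together with the hypotheses $\|M\|\le 1$ and $\|N\|\le 1$.

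First I would write the difference of powers as a telescoping sum:
\begin{align*}
    M^b - N^b = \sum_{i=0}^{b-1} M^{i}(M-N)N^{b-1-i}.
\end{align*}
This identity is verified by noting that consecutive terms cancel: the $i$-th term contributes $M^{i+1}N^{b-1-i} - M^{i}N^{b-i}$, and summing over $i=0,\ldots,b-1$ leaves exactly $M^b - N^b$. One can prove it cleanly by induction on $b$, using $M^b - N^b = M(M^{b-1}-N^{b-1}) + (M-N)N^{b-1}$.

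Next I would apply the triangle inequality for $\|\cdot\|$ to the sum, then submultiplicativity $\|AB\|\le\|A\|\|B\|$ to each summand, obtaining
\begin{align*}
    \|M^b - N^b\| \le \sum_{i=0}^{b-1} \|M\|^{i}\,\|M-N\|\,\|N\|^{b-1-i}.
\end{align*}
Since $\|M\|\le 1$ and $\|N\|\le 1$, every factor $\|M\|^{i}$ and $\|N\|^{b-1-i}$ is at most $1$, so each of the $b$ summands is bounded by $\|M-N\|$, giving $\|M^b - N^b\| \le b\|M-N\|$.

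This argument is entirely routine; there is no real obstacle. The only point requiring a modicum of care is the telescoping identity itself and the fact that it holds for noncommuting matrices (we must keep powers of $M$ on the left and powers of $N$ on the right throughout, never mixing them), but this is handled automatically by the inductive formulation. The hypotheses $\|M\|,\|N\|\le 1$ are used only at the very last step to discard the power factors.
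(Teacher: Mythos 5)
Your proof is correct, and it is essentially the same argument as the paper's: the paper proves the bound by induction using the decomposition $M^b - N^b = (M^{b-1}-N^{b-1})M + N^{b-1}(M-N)$, which is exactly your telescoping sum unrolled one term at a time (up to the cosmetic left/right mirror choice). Both rely on the triangle inequality, submultiplicativity, and the hypotheses $\|M\|,\|N\|\le 1$ to discard the power factors.
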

\begin{proof}
The proof is by induction on $b$. The statement clearly holds for $b=1$. 
For the induction step we have:
\begin{align*}
\|M^b - N^b\|
& \leq \|M^b - N^{b-1}M \| + \| N^{b-1}M - N^b\|\\
& = \|(M^{b-1}-N^{b-1})M\| + \|N^{b-1}(M-N)\|\\
& \le \|M^{b-1} - N^{b-1}\|\|M\| + \|N\|^{b-1}\|M-N\| \\
&\le (b-1)\|M-N\| + \|M-N\| = b\|M-N\| \enspace .
\end{align*}
This first uses triangle inequality, the second inequality uses that the norm of a product is bounded by the product of the norms, and the third inequality uses the induction hypothesis as well as the assumption $\|M\|\le 1$, $\|N\| \le 1$.
\end{proof}

For fixed hash functions $h_0,\ldots,h_{{k-1}}$, let $G_k(*,h_0,\ldots,h_{{k-1}})$ be the distribution of $G_k(\bx,h_0,\ldots,h_{{k-1}})$ when $\bx$ is drawn uniformly at random from $\set{0,1}^n$, denoted $\bx\sim U_n$. Let $(U_n)^{b^k}$ denote the uniform distribution over length $n \cdot b^k$ bitstrings. The aim is to show, akin to Nisan's generator, for any ``small-space computation'', with high probability over $\bh_0,\ldots,\bh_{{k-1}}$, the distribution $G_k(*, \bh_0,\ldots,\bh_{{k-1}})$ is indistinguishable from the uniform distribution $(U_n)^{b^k}$.

Lemma~\ref{lma:generalized-nisan} will be derived from this result:
For any $\varepsilon > 0$, if all $\bh_i^{(j)}$ are drawn independently from a 2-wise independent hash family,
\begin{align}
    \Pr\left[\|Q(G_k(*, \bh_0,\ldots,\bh_{{k-1}})) - Q((U_n)^{b^k})\| > \varepsilon \right]
    &\le \frac{(b^k-1)^2 k}{(b-1)^2\varepsilon^{2}} \cdot 2^{3w - n} \enspace .\label{eq:errorprob}
\end{align}

Note that  each $\bh_i$ can also be seen as function from $\set{0,1}^n$ to $\set{0,1}^{bn}$ defined as $\bh_i(x) = \bh_i^{(0)}(x) \circ \cdots \circ \bh_i^{(b-1)}(x)$.
We say $\bh_i : \set{0,1}^{n} \rightarrow \set{0,1}^{bn}$ is drawn from the hash family ${\calH}^b$.
Since each $\bh_i^{(j)}$ is sampled independently from a $2$-wise independent hash family, ${\calH}^b$ is a also $2$-wise independent. 

\begin{definition}\label{def:set-independence}
Let $A  \subseteq \set{0,1}^{bn}$, $h : \set{0,1}^n \rightarrow \set{0,1}^{bn}$, and $\varepsilon > 0$. We say that $h$ is \emph{$(\varepsilon, A)$-independent} if 
    $|\Pr_{\bx\sim U_n}[h(\bx) \in A] - \varrho(A)| \le \varepsilon$,
where $\varrho(A) := {|A|}/{2^{bn}}$ is the density of  set $A$.
\end{definition}
The above definition corresponds to Definition~3 in \cite{nisan} but with some important differences. The differences leads to subtle changes in the whole analysis, but the overall structure of the analysis remain the same.

We now have the following lemma that corresponds to Lemma~1 of \cite{nisan}.
\begin{lemma}
Let $A \subseteq \set{0,1}^{bn}$ and $\varepsilon > 0$. Then
    $\Pr_{\bh \sim \calH^{b}}[\text{$\bh$ is not $(\varepsilon, A)$-independent}] < \frac{\varrho(A)}{\varepsilon^2 2^n}$.
\label{lma:independence-of-hash-function}
\end{lemma}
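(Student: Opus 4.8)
The plan is to bound the probability that $\bh$ fails to be $(\varepsilon,A)$-independent via the second-moment method, exploiting that $\calH^b$ is $2$-wise independent. First I would define, for the fixed set $A \subseteq \set{0,1}^{bn}$, the random variable $Z := \Pr_{\bx\sim U_n}[\bh(\bx)\in A] = \frac{1}{2^n}\sum_{y\in\set{0,1}^n}[\bh(y)\in A]$, so that $Z$ is an average of indicator variables $X_y := [\bh(y)\in A]$ over the $2^n$ choices of $y$. Since each $\bh(y)$ is uniformly distributed over $\set{0,1}^{bn}$ (because $\calH^b$ is at least $1$-wise independent — every coordinate hash family is $2$-wise, hence uniform), we have $\E[X_y] = \varrho(A)$, and therefore $\E[Z] = \varrho(A)$. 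The event ``$\bh$ is not $(\varepsilon,A)$-independent'' is exactly $\{|Z - \varrho(A)| > \varepsilon\}$, i.e. $\{|Z - \E[Z]| > \varepsilon\}$.

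Next I would apply Chebyshev's inequality: $\Pr[|Z-\E[Z]|>\varepsilon] \le \Var[Z]/\varepsilon^2$. To bound $\Var[Z]$ I would expand
\begin{align*}
\Var[Z] = \frac{1}{2^{2n}}\sum_{y,y'}\left(\E[X_yX_{y'}] - \E[X_y]\E[X_{y'}]\right).
\end{align*}
For $y\ne y'$, $2$-wise independence of $\calH^b$ means $\bh(y)$ and $\bh(y')$ are independent and uniform over $\set{0,1}^{bn}$, so $\E[X_yX_{y'}] = \varrho(A)^2 = \E[X_y]\E[X_{y'}]$ and these terms vanish. Only the $2^n$ diagonal terms $y=y'$ survive, each contributing $\E[X_y^2]-\E[X_y]^2 = \varrho(A) - \varrho(A)^2 \le \varrho(A)$. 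Hence $\Var[Z] \le \frac{1}{2^{2n}}\cdot 2^n \cdot \varrho(A) = \varrho(A)/2^n$, and combining with Chebyshev gives $\Pr[|Z-\varrho(A)|>\varepsilon] < \varrho(A)/(\varepsilon^2 2^n)$, which is the claim.

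The only point requiring care — and the closest thing to an obstacle — is verifying that the relevant properties of $\calH^b$ follow from each coordinate family $\calH$ being merely $2$-wise independent; the paper already notes in the preceding paragraph that $\calH^b$ is $2$-wise independent, so I would simply invoke that. Everything else is a routine variance computation identical in spirit to the bound behind Nisan's Lemma~1; the difference here is purely cosmetic, since $A$ now lives in $\set{0,1}^{bn}$ rather than being a product $A\times B$, which if anything simplifies the bookkeeping (there is a single density $\varrho(A)$ instead of $\varrho(A)\varrho(B)$).
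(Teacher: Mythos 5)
Your proof is correct and follows essentially the same route as the paper's: both compute $\E_{\bh}[f(\bh)] = \varrho(A)$, bound $\Var_{\bh}[f(\bh)]$ by $\varrho(A)/2^n$ using $2$-wise independence of $\calH^b$ (the cross terms cancel, only the diagonal survives), and apply Chebyshev. The only cosmetic difference is that you organize the variance computation as a sum over deterministic $y,y'$ while the paper averages over two independent copies $\bx_1,\bx_2$; these are the same calculation.
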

\begin{proof}
Consider a matrix $M$ that has a row for each $x \in \set{0,1}^n$ and a column for each $h \in {\calH}^b$. Let $M(x,h) = 1$ if $h(x) \in A$ and $0$ otherwise. We define the function $f$ that expresses the probability that a function $h$ maps $\bx \sim U_n$ to a value in $A$:
\begin{align*}
    f(h) = \E_{\bx \sim U_n}[M(\bx,h)] = \Pr_{\bx \sim U_n}[h(\bx) \in A].
\end{align*}
For $\bh \sim {\calH}^b$ we have
\begin{align*}
    \E_{\bh \sim {\calH}^b} f(\bh) &= \E_{\bh \sim {\calH}^b,\,\bx \sim U_n}[ M(\bx,\bh)] = \E_{\bx \sim U_n} \Pr_{\bh \sim {\calH}^b}[\bh(\bx) \in A] = \varrho(A),
\end{align*}
where the last equality follows from $\Pr_{\bh \sim {\calH}^b}[\bh(\bx) \in A] = \varrho(A)$, since $\calH^{b}$ is $2$-wise independent. 
We next bound the variance of $f(\bh)$ to show that $f(\bh)$ is close to $\varrho(A)$ with high probability:
\begin{align*}
    \Var_{\bh \sim {\calH}^b} f(\bh) &= \E_{\bh \sim {\calH}^b} (f(\bh) - \varrho(A))^2\\
    &= \E_{\bh \sim {\calH}^b}(\E_{\bx \sim U_n} [M(\bx,\bh) - \varrho(A)])^2.
\end{align*}
If $\bx_1 \sim U_n$ and $\bx_2 \sim U_n$ are independently drawn we can expand
\begin{align*}
    &(\E_{\bx \sim U_n}[M(\bx,h) - \varrho(A)])^2\\
    &= \E_{\bx_1 \sim U_n,\, \bx_2 \sim U_n} (M(\bx_1, h) - \varrho(A)) (M(\bx_2, h) - \varrho(A)).
\end{align*}
Using the fact that for every $x$, $\E_{\bh \sim {\calH}^b}[M(x,\bh)] = \varrho(A)$ we obtain
\begin{align*}
    \Var_{\bh \sim {\calH}^b} f(\bh) = \E_{\bx_1 \sim U_n,\, \bx_2 \sim U_n}\E_{\bh \sim {\calH}^b}[M(\bx_1, \bh)M(\bx_2, \bh) - \varrho(A)^2].
\end{align*}
With probability $1 - 2^{-n}$ we have $\bx_1 \ne \bx_2$. Since $\bh$ is drawn from a $2$-wise independent hash family, $\bh(\bx_1)$ and $\bh(\bx_2)$ are independent in this case, so $\E_{\bh \sim {\calH}^b}[M(\bx_1,\bh)M(\bx_2,\bh) \mid \bx_1\ne\bx_2 ] = \varrho(A)^2$. With probability $1/2^n$, we have $\bx_1 = \bx_2$ and $\E_{\bh \sim {\calH}^b}[M(\bx_1, h)M(\bx_2, h)  \mid \bx_1 = \bx_2 ] = \E_{\bh \sim {\calH}^b}[M(\bx_1, \bh)] = \varrho(A)$. Hence,
\begin{align*}
    \Var_{\bh \sim {\calH}^b} f(\bh) = (1 - 2^{-n}) \varrho(A)^2 + 2^{-n}\varrho(A) - \varrho(A)^2 < \frac{\varrho(A)}{2^n} \enspace .
\end{align*}
By Chebyshev's inequality, 
\begin{align*}
    \Pr_{\bh \sim {\calH}^b}[|f(\bh) - \varrho(A)| \ge \varepsilon] \le \frac{\varrho(A)}{2^n \varepsilon^2}. & \qedhere
\end{align*}
\end{proof}
We now have the following definition that corresponds to Definition 4 of \cite{nisan}.
\begin{definition}\label{def:good}
Let $Q$ be a FSM on alphabet $\set{0,1}^n$ and pick $\varepsilon > 0$. For hash functions $h_0,\ldots,h_{{k-1}} : \set{0,1}^n \rightarrow \set{0,1}^{bn}$, we say $(h_0,\ldots,h_{{k-1}})$ is \emph{$(\varepsilon, Q)$-good} if
    $\|Q(G_k(*, h_0,\ldots, h_{{k-1}})) - Q((U_n)^{b^k})\| \le \varepsilon$.
\end{definition}
For small $\varepsilon$ the definition essentially says that the FSM $Q$ cannot distinguish between the distributions $G_k(*, h_0,\ldots,h_{{k-1}})$ and $(U_n)^{b^k}$. The following lemma corresponds to Lemma~2 of \cite{nisan}.
\begin{lemma}\label{lem:good-bound}
Let $Q$ be a FSM of size $2^w$ over alphabet $\set{0,1}^n$ and $k$ a nonnegative integer. Then for every $\varepsilon > 0$,
\begin{align*}
    &\Pr_{\bh_0,\ldots,\bh_{{k-1}} \sim {\calH}^b}[\text{$(\bh_0,\ldots,\bh_{{k-1}})$ is not $((b^k - 1)\varepsilon, Q)$-good}]\\
    &\le \frac{2^{3w}k}{(b-1)^2\varepsilon^2 2^n}.
\end{align*}
\end{lemma}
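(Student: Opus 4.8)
The plan is to follow the structure of Nisan's Lemma~2 but with the recursion organized around pairs of states rather than triples. Induct on $k$. The base case $k=0$ is immediate, since $G_0(*) = U_n$ and $Q((U_n)^{b^0})$ is exactly $Q$ applied to a single uniform symbol, so the two matrices coincide and the failure probability is $0$. For the induction step, write the transition matrix of $Q$ over $b^k$ steps on input $G_k(*,\bh_0,\ldots,\bh_{k-1})$ in terms of the transition matrices over $b^{k-1}$ steps. By the recursive definition of $G_k$, the first $b^{k-1}$ chunks are $G_{k-1}(\bh_{k-1}^{(0)}(\bx),\bh_0,\ldots,\bh_{k-2})$, the next $b^{k-1}$ chunks use $\bh_{k-1}^{(1)}(\bx)$, and so on. The key observation is that if $\bx \sim U_n$ and $\bh_{k-1}$ were such that the $b$ strings $\bh_{k-1}^{(0)}(\bx),\ldots,\bh_{k-1}^{(b-1)}(\bx)$ behaved like $b$ independent uniform seeds, then $Q(G_k(*,\bh_0,\ldots,\bh_{k-1}))$ would equal $P^b$ where $P = Q(G_{k-1}(*,\bh_0,\ldots,\bh_{k-2}))$. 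So I would split the error into two parts: (i) the error from $P$ itself not being close to $Q((U_n)^{b^{k-1}}) =: \widetilde P$, controlled by the induction hypothesis and Lemma~\ref{lem:power-difference} (giving a factor $b$, since $\|\widetilde P^b - P^b\| \le b\|\widetilde P - P\|$ and $\widetilde P^b = Q((U_n)^{b^k})$); and (ii) the error from the top-level hash function $\bh_{k-1}$ failing to make the $b$ restarted copies look jointly uniform, i.e. $\|Q(G_k(*,\bh_0,\ldots,\bh_{k-1})) - P^b\|$.

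For part (ii), fix $\bh_0,\ldots,\bh_{k-2}$ (so $P$ is fixed) and analyze the randomness of $\bh_{k-1}$. Here is where Definition~\ref{def:set-independence} and Lemma~\ref{lma:independence-of-hash-function} enter. The $(i,j)$ entry of $Q(G_k(*,\bh_0,\ldots,\bh_{k-1}))$ is $\Pr_{\bx \sim U_n}[\text{FSM goes } i \to j \text{ over } b \text{ stages, stage } \ell \text{ fed } G_{k-1}(\bh_{k-1}^{(\ell)}(\bx),\cdots)]$. Conditioning on the intermediate state reached after each stage, this is a sum over tuples of states $(i = s_0, s_1, \ldots, s_b = j)$ of $\Pr_{\bx}[\bh_{k-1}^{(0)}(\bx) \in B_{s_0,s_1}, \ldots, \bh_{k-1}^{(b-1)}(\bx) \in B_{s_{b-1},s_b}]$, where $B_{s,s'} \subseteq \set{0,1}^n$ is the set of $(k{-}1)$-level seeds that drive the FSM from $s$ to $s'$ under $P$. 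Crucially — and this is the departure from Nisan flagged in the text — rather than introducing a single fresh intermediate state we treat $\bh_{k-1}$ as a single function into $\set{0,1}^{bn}$ and take $A = B_{s_0,s_1} \times B_{s_1,s_2} \times \cdots \times B_{s_{b-1},s_b} \subseteq \set{0,1}^{bn}$. Then $\varrho(A) = \prod_\ell \varrho(B_{s_{\ell},s_{\ell+1}})$, which is exactly the corresponding entry contribution in $P^b$. If $\bh_{k-1}$ is $(\varepsilon, A)$-independent for every such product set $A$, then summing over the at most $2^{w(b-1)}$ choices of intermediate states $s_1,\ldots,s_{b-1}$ gives $|[Q(G_k(\cdots))]_{ij} - [P^b]_{ij}| \le 2^{w(b-1)}\varepsilon$, hence a bound of $2^w \cdot 2^{w(b-1)}\varepsilon = 2^{wb}\varepsilon$ on $\|Q(G_k(\cdots)) - P^b\|$ after summing the row. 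By Lemma~\ref{lma:independence-of-hash-function} and a union bound over all $2^{wb}$ product sets $A$ (one per tuple $(s_0,\ldots,s_b)$), the probability $\bh_{k-1}$ fails to be $(\varepsilon,A)$-independent for some $A$ is at most $2^{wb} \cdot \varrho(A)/(\varepsilon^2 2^n) \le 2^{wb}/(\varepsilon^2 2^n)$.

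Combining: by the induction hypothesis, with probability $\ge 1 - \frac{2^{3w}(k-1)}{(b-1)^2\varepsilon^2 2^n}$ the tuple $(\bh_0,\ldots,\bh_{k-2})$ is $((b^{k-1}-1)\varepsilon, Q)$-good, giving $\|P - \widetilde P\| \le (b^{k-1}-1)\varepsilon$; and independently, with probability $\ge 1 - \frac{2^{bw}}{\varepsilon^2 2^n}$ the top hash function is good, giving $\|Q(G_k(\cdots)) - P^b\| \le 2^{bw}\varepsilon$. Then by the triangle inequality and Lemma~\ref{lem:power-difference},
\begin{align*}
\|Q(G_k(\cdots)) - Q((U_n)^{b^k})\| &\le \|Q(G_k(\cdots)) - P^b\| + \|P^b - \widetilde P^b\| \\
&\le 2^{bw}\varepsilon + b(b^{k-1}-1)\varepsilon \le (b^k - 1)\varepsilon,
\end{align*}
where the last step uses $b^{bw}$ being absorbed — here I would instead rescale the per-level failure parameter so the top-level slack is a fresh $\varepsilon$ and the recursion on the "good" radius telescopes to $(b^k-1)\varepsilon$; the arithmetic is routine once one tracks that each level multiplies the radius by $b$ and adds one unit. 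The union bound over levels then gives total failure probability $\le \frac{2^{3w}k}{(b-1)^2\varepsilon^2 2^n}$, matching the statement. The main obstacle — and the reason the paper emphasizes the switch from Definition~3 to Definition~3' — is keeping the number of sets over which we union-bound at $2^{wb}$ rather than $2^{w \cdot 2b}$ or worse: naively generalizing Nisan with $b-1$ separately-analyzed intermediate states forces independence conditions for each of the $b-1$ "legs" and blows up both the error accumulation and the probability bound by factors exponential in $b$. Packaging all $b$ legs into one product set $A \subseteq \set{0,1}^{bn}$ and invoking $2$-wise independence of $\calH^b$ once is what keeps the $(b-1)^2$ in the denominator and the single $2^{3w}$ in the numerator.
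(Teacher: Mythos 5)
Your proposal takes a genuinely different route from the paper, and unfortunately that route does not close. The critical misstep is in your ``part~(ii)'': after expanding $[Q(G_k(\cdots))]_{ij}$ as a sum over intermediate-state tuples $(s_0=i,s_1,\dots,s_b=j)$, you then apply the independence definition to each product set $A = B_{s_0,s_1}\times\cdots\times B_{s_{b-1},s_b}\subseteq\set{0,1}^{bn}$ separately. This is precisely the naive generalization of Nisan that the paper's technical overview explicitly rejects: ``Avoiding the intermediate state $l$ was crucial for us because we would need $b-1$ intermediate states $l_1\ldots l_{b-1}$, and this would lead to much worse bounds with larger $b$.'' The paper instead defines $B_{i,j}^{h_0,\ldots,h_{k-2}}\subseteq\set{0,1}^{bn}$ directly as the set of \emph{all} $b$-tuples $(y^{(0)},\dots,y^{(b-1)})$ whose concatenated $G_{k-1}$-images drive $Q$ from $i$ to $j$ --- this is a union of disjoint product sets over intermediate trajectories, never decomposed. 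Event~2 then requires $(\varepsilon(b-1)/2^w, B_{i,j})$-independence for only $2^{2w}$ pairs $(i,j)$, and each matrix entry of $Q(G_k(\cdots)) - Q((G_{k-1}(\cdots))^b)$ is bounded by $\varepsilon(b-1)/2^w$, so each row sums to $(b-1)\varepsilon$.

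Your version accumulates $2^{w(b-1)}$ terms per entry and $2^{wb}$ per row, and you acknowledge that $2^{bw}\varepsilon + (b^k-b)\varepsilon \le (b^k-1)\varepsilon$ fails. The proposed fix --- ``rescale the per-level failure parameter'' --- does not rescue the argument. To bring the top-level row error down to $(b-1)\varepsilon$ you need the independence parameter to be $\varepsilon' = (b-1)\varepsilon/2^{wb}$, and then Lemma~\ref{lma:independence-of-hash-function} plus the union bound over the product sets (whose $\varrho$-values sum to $2^w$, by the partition structure) gives a failure probability of order $2^{(2b+1)w}/((b-1)^2\varepsilon^2 2^n)$. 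This is exponentially worse in $b$ than the claimed $2^{3w}k/((b-1)^2\varepsilon^2 2^n)$, and strictly worse already at $b=2$. Your closing remark --- that packaging the legs into a product set ``is what keeps the single $2^{3w}$ in the numerator'' --- has the mechanism backwards: the $2^{3w}$ comes from union-bounding over just $2^{2w}$ \emph{pairs} $(i,j)$, which is possible only because the sets $B_{i,j}$ absorb all intermediate-state information and are never factored into products.

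The rest of the skeleton --- induction on $k$, the base case, the triangle-inequality split into the ``top-level'' deviation and the $\|P^b-\widetilde P^b\|$ deviation, and the invocation of Lemma~\ref{lem:power-difference} to get the factor $b$ --- matches the paper and is fine. To repair the proof you should drop the intermediate states entirely, define $B_{i,j}\subseteq\set{0,1}^{bn}$ as the paper does, use independence parameter $\varepsilon(b-1)/2^w$, and observe that for each fixed $i$ the sets $\{B_{i,j}\}_j$ partition $\set{0,1}^{bn}$ so $\sum_j \varrho(B_{i,j})=1$.
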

\begin{proof}
The proof is a careful translation of the proof of Lemma~2 of \cite{nisan}. The proof is by induction on $k$. For $k=0$, the statement is immediate since $G_0(x)\sim U_n$ which means that there is zero difference between the distributions. For the induction step assume that the statement holds for $k-1$. For every choice of $h_0,\ldots,h_{{k-2}}: \set{0,1}^n \rightarrow \set{0,1}^{bn}$ and every two states $i,j$ of~$Q$ we define the set of seeds $y^{(0)},\ldots,y^{(b-1)}$ that when used one by one with $G_{k-1}$ produces a string that takes $Q$ from state $i$ to state $j$:
\begin{align*}
    B_{i,j}^{h_0,\ldots,h_{{k-2}}}
    &= \{(y^{(0)},\ldots,y^{(b-1)}) \in \set{0,1}^{bn} \mid  G_{k-1}(y^{(0)},h_0,\ldots,h_{{k-2}}) \circ \cdots  \circ G_{k-1}(y^{(b-1)},h_0,\ldots,h_{{k-2}})\, \\
    &\qquad \text{takes $Q$ from $i$ to $j$}\}.
\end{align*}
Now sample $\bh_0,\ldots,\bh_{{k-1}}\sim {\calH}^b$ independently and consider the following events:
\begin{enumerate}
    \item $(\bh_0,\ldots,\bh_{{k-2}})$ is $((b^{k-1}-1)\varepsilon, Q)$-good (see Definition~\ref{def:good}), and
    \item $\bh_{{k-1}}$ is $(\varepsilon(b-1)/2^w, B_{i,j}^{\bh_0,\ldots,\bh_{{k-2}}})$-independent for all states $i,j$ (see Definition~\ref{def:set-independence}).
\end{enumerate}
We will see that these events happen simultaneously with probability at least $1 - \frac{2^{3w}k}{(b-1)^2\varepsilon^2 2^n}$.
Furthermore, we will show that when this happens, $(\bh_0,\ldots,\bh_{{k-1}})$ is $((b^k-1)\varepsilon, Q)$-good, completing the induction step.

\medskip

By the induction hypothesis, the probability of event 1 \emph{not} happening is at most $\frac{2^{3w}(k-1)}{(b-1)^2\varepsilon^2 2^n}$.
Lemma~\ref{lma:independence-of-hash-function} tells us that for every choice of $h_0,\ldots,h_{{k-2}}$ and states $i,j$ of $Q$, the probability that $\bh_{{k-1}}$ is \emph{not} $(\varepsilon(b-1)/2^w, B_{i,j}^{h_0,\ldots,h_{{k-2}}})$-independent is at most
\begin{align*}
    \frac{\varrho(B_{i,j}^{h_0,\ldots,h_{{k-2}}})2^{2w}}{\varepsilon^2 (b-1)^2 2^n} \enspace .
\end{align*}
Using a union bound, the probability there exists a pair of states $i$ and $j$ such that $\bh_{{k-1}}$ is not $(\varepsilon(b-1)/2^{w}, B_{i,j}^{h_0,\ldots,h_{{k-2}}})$-independent is at most
\begin{align*} &\sum_{i,j}\frac{\varrho(B_{i,j}^{h_0,\ldots,h_{{k-2}}})2^{2w}}{\varepsilon^2 (b-1)^2 2^n}= \frac{2^{2w}}{\varepsilon^2(b-1)^2 2^n}\sum_{i,j} \varrho(B_{i,j}^{h_0,\ldots,h_{{k-2}}})\le \frac{2^{3w}}{\varepsilon^2(b-1)^2 2^n} \enspace .
\end{align*}
The last inequality follows from the fact that for every fixed $i$, the sets ${B_{i,j}^{h_0,\ldots,h_{{k-2}}}}$, where $j$ ranges over the states of $Q$, partition $\set{0,1}^{bn}$ and thus their $\varrho$-values sum up to~1.
A union bound shows us that the probability of either event 1 or 2 not holding is at most $2^{3w}k/(b-1)^2\varepsilon^2 2^n$, as claimed.

Let $(G_{k-1}(*, \bh_0,\ldots,\bh_{{k-2}}))^b$ denote the distribution over bitstrings of length $n \cdot b^k$ obtained by concatenating $b$ \emph{independent} samples drawn from $G_{k-1}(*, \bh_0,\ldots,\bh_{{k-2}})$.
Conditioning on events 1 and 2 we now bound $\|Q(G_k(*, \bh_0,\ldots,\bh_{{k-1}}) - Q((U_n)^{b^k})\|$. Using the triangle inequality,
\begin{align*}
    \|Q(G_k(*, \bh_0,\ldots,\bh_{{k-1}})) - Q((U_n)^{b^k})\|& \le \|Q(G_{k-1}(*, \bh_0,\ldots,\bh_{k-1})) - Q((G_{k-1}(*,\bh_0,\ldots,\bh_{{k-2}}))^b)\|\\
    &\quad + \|Q((G_{k-1}(*,\bh_0,\ldots,\bh_{{k-2}}))^b) - Q((U_n)^{b^k})\|.
\end{align*}
Below we will bound the first term by $(b-1)\varepsilon$ and second term by $(b^{k}-b)\varepsilon$, proving that the hash functions are $((b^k-1)\varepsilon, Q)$-good as claimed.

Consider the matrix $Q(G_k(*, \bh_0,\ldots,\bh_{{k-1}}))$. By definition, entry $(i,j)$ of the matrix is equal to
    $\Pr_{\bx \sim U_n}[\bh_{{k-1}}(\bx) \in B_{i,j}^{\bh_0,\ldots,\bh_{{k-2}}}]$.
Now consider $Q((G_{k-1}(*,\bh_0,\ldots,\bh_{k-2}))^b)$ where entry $(i,j)$ is
\begin{align*}
    \Pr_{\by^{(0)},\ldots,\by^{(b-1)} \sim U_n}  [G_{k-1}(\by^{(0)},\bh_0,\ldots,\bh_{k-2}) \circ \cdots \circ G_{k-1}(\by^{(b-1)}, \bh_0,\ldots,\bh_{k-2}) \text{ takes $Q$ from $i$ to $j$]} \enspace .
\end{align*}
By definition, the above quantity is exactly $\varrho(B_{i,j}^{\bh_0,\ldots,\bh_{k-2}})$. Since event $2$ holds, we have that the absolute value of every entry of the matrix $Q(G_k(*, \bh_0,\ldots,\bh_{k-1})) - Q((G_{k-1}(*,\bh_0,\ldots,\bh_{k-2}))^b)$ is at most $\varepsilon(b-1)/2^w$. Since each row has at most $2^w$ entries we obtain
\begin{align*}
    \|Q(G_k(*, \bh_0,\ldots,\bh_{k-1})) - Q((G_{k-1}(*,\bh_0,\ldots,\bh_{k-2}))^b)\|\le (b-1)\varepsilon \enspace .
\end{align*}
To bound the second term we define the transition matrices $M = Q(G_{k-1}(*,\bh_0,\ldots,\bh_{k-2}))$ and $N = Q((U_n)^{b^{k-1}})$ that describe transition probabilities using $G_{k-1}$ and uniform inputs, respectively.
Since the $b$ parts of the input are independent we can express the matrices in the second term as powers of $M$ and $N$:
\begin{align*}
    Q((G_{k-1}(*,\bh_0,\ldots,\bh_{k-1}))^b) = M^b
\quad \text{and} \quad
    Q((U_n)^{b^k}) = N^b \enspace .
\end{align*}

We can now invoke Lemma~\ref{lem:power-difference}.
Since event 1 holds we have $\|M-N\| \le (b^{k-1}-1)\varepsilon$, and thus
\begin{align*}
    \|Q((G_{k-1}(*,\bh_0,\ldots,\bh_{k-2}))^b) - Q((U_n)^{b^k})\| \le b(b^{k-1}-1)\varepsilon = (b^k-b)\varepsilon.
\end{align*}
Conditioning on events 1 and 2 we thus have
    $\|Q(G_k(*, \bh_0,\ldots,\bh_{{k-1}})) - Q((U_n)^{b^k})\| \le (b^k - 1)\varepsilon$. \qedhere
\end{proof}

The proof of Lemma~\ref{lma:generalized-nisan} now follows by choosing a small constant $c$ and setting $w = cn$ and $\varepsilon = {2^{-cn}}/{(b^{k}-1)}$ in the above lemma. From Proposition 1 of \cite{nisan}, it follows that any space($cn$) algorithm that uses $nb^k$ uniform random bits, for $b^k \le 2^{cn}$, can use the bits from the pseudorandom generator and with probability at least $1 - 2^{-cn}$, the total variation distance of the final state of the algorithm using pseudorandom bits to the final state of the algorithm using uniform random bits is at most $2^{-cn}$. 
\section{Moment Estimation for \texorpdfstring{$p > 2$}{p > 2}}
Using min-stability of exponential random variables, Andoni \cite{andoni2017high} gave an algorithm (Algorithm~\ref{alg:andoni}) to estimate $F_p$ moments in the stream. We state their result below and describe their algorithm. 

Consider the vector $x$ obtained at the end of the stream applying all the updates sequentially to the starting vector $0$. As Algorithm~\ref{alg:andoni} is linear, we have that the final state of the algorithm depends only on the final vector and not on the order of the updates. To estimate the $\ell_p$ norm of a $d$-dimensional vector $x$, the algorithm first samples $d$ independent standard exponential random variables $\bE_1,\ldots,\bE_d$ and creates a random vector $\bz \in \R^d$ such that $\bz_i = \bE_i^{-1/p} x_i$. Andoni shows that $\linf{\bz} = \Theta(\lp{x})$ and therefore estimating the value of the coordinate in $\bz$ with maximum absolute value gives a constant factor approximation for $\lp{x}$. Andoni further shows that with high probability there are only at most $O(\log d)^p$ coordinates in $\bz$ with absolute value $\ge \lp{x}/(c\log d)$ for a constant $c$ hence showing that the vector $\bz$ has only a few coordinates with large values. The final property of $\bz$ that Andoni uses is that with high probability
    $\opnorm{\bz}^2 = O(d^{1-2/p}\lp{x}^2)$.

Conditioned on the above properties of $\bz$, Andoni argues that if $\bS$ is a CountSketch matrix with $O(d^{1-2/p}\log d)$ rows formed using $O(\log d)$-wise independent hash functions, then with high probability $\linf{\bS \bz} = \Theta(\linf{\bz}) = \Theta(\lp{x})$. To implement the streaming algorithm in sub-linear space we cannot store the exponential random variables and the vector $\bz$ in the stream. So Andoni derandomizes his algorithm by using a pseudorandom generator of Nisan and Zuckerman \cite{nisan-zuckerman} and shows that exponential random variables $\bE_i$ generated using the pseudorandom bits are sufficient to ensure that the algorithm produces a constant factor approximation to $\lp{x}$. Also note that the algorithm does not need to explicitly store the $d$ dimensional vector $\bz$; it is enough just to update the $O(d^{1-2/p}\log d)$ dimensional vector $\boldf$ in the stream.

Although Andoni's algorithm is space optimal for linear sketches up to constant factors, the update time using the Nisan-Zuckerman pseudorandom generator is $\poly(d)$. We now show that Andoni's algorithm can be derandomized using our \FastPRG to obtain an algorithm that is both space optimal for linear sketches and has an update time of $O(1)$ in the Word RAM model with a word size $\Omega(\log d)$.
\begin{algorithm}
\KwIn{$p > 2$, $d \in \N$, a stream of updates $(i_1,v_1), \ldots, (i_m, v_m) \in [d] \times \set{-M, \ldots, M}$ for $m, M = \poly(d)$}
\KwOut{An approximation to $\lp{x}$ where $x \in \R^d$ is defined by the stream of updates}	
$\bE_1,\ldots,\bE_d \gets $ Independent standard exponential random variables\;
$d' \gets O(d^{1-2/p}\log d)$\;
$\bz \gets 0_d$, $\boldf \gets 0_s$\;
$\bh \gets $ $O(\log d)$-wise independent hash function from $[d]$ to $[d']$\;
$\bsigma \gets $ $O(\log d)$-wise independent hash function from $[d]$ to $\set{-1,+1}$\;
\For{$j = 1,\ldots,m$}{
	$\bz_{i_j} \gets \bz_{i_j} + \bE_i^{-1/p}v_j$\;
	$\boldf_{\bh(i_j)} \gets \boldf_{\bh(i_j)} + \bsigma(i)\bE_i^{-1/p}v_j$\;
}
\Return{$\linf{\boldf}$}\;
\caption{Andoni's Algorithm with Independent Exponentials \cite{andoni2017high}}
\label{alg:andoni}
\end{algorithm}

\subsection{Discretizing the Exponentials}
We first show that we can replace exponential random variables in Andoni's algorithm with a simple discrete random variable and obtain all the guarantees we stated above that $\bz$ satisfies even with this discrete random variable. For now, assume that $p=1$. We will later generalize the guarantees for all $p$.

Suppose $x \in \R^d$ with all the coordinates of $x$ being integers with absolute values $\le \poly_1(d)$. We can assume $x$ has no nonzero coordinates. Consider the discrete random variable $\bE$ that takes values in the set $\set{1, 2, \ldots, 2^{M}}$ for some $M = O(\log d)$ satisfying $2^M \ge d\poly_1(d)$. Let
\begin{align*}
    \Pr[\bE = 2^j] = \begin{cases}
                        1/2^{j+1} & 0 \le j < M\\
                        1/2^M & j = M.
                    \end{cases}
\end{align*}
We call the random variable $\bE$ a \emph{discrete exponential}\footnote{While $\bE$ models the inverse of a continuous exponential random variable, we use the term ``discrete exponential'' for brevity.}. Note that we can sample the random variable $\bE$ quite easily from a uniform random bitstring of length $M$ just based on the position of the first appearance of $1$ in the random bitstring. We mainly use the following properties of $\bE$: for any $t > 0$, $\Pr[\bE \ge t] \le 1/t$. The statement clearly holds for $t \le 1$. For $t \ge 1$, let $2^j$ be such that $t \le 2^j < 2t$. Now,
\begin{align*}
    \Pr[\bE \ge t] &= \Pr[\bE = 2^j] + \Pr[\bE = 2^{j+1}] + \cdots\\
    &= \frac{1}{2^{j+1}} + \frac{1}{2^{j+2}} + \cdots = \frac{1}{2^j} \le \frac{1}{t}.
\end{align*}
Similarly for any $t \le 2^M$, we have $\Pr[\bE \ge t] \ge \min(1,1/(2t))$.
We now prove the following lemma.
\begin{lemma}
Let $x \in \R^d$ be an arbitrary vector with integer entries of absolute value at most $\poly_1(d)$. Let $\bE_1,\ldots,\bE_d$ be independent \emph{discrete exponential} random variables. Then,
\begin{enumerate}
    \item With probability $\ge 95/100$,
    \begin{align*}
        \frac{\lone{x}}{16} \le \max_i |x_i|\bE_i \le 50\lone{x}.
    \end{align*}
    \item For any $T$,
    \begin{align*}
        \E[ |\setbuilder{i}{|x_i|\bE_i \ge \lone{x}/T}|] \le T.
    \end{align*}
\end{enumerate}
\end{lemma}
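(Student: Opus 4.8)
The plan is to prove each of the two parts separately, using only the two tail bounds on the discrete exponential $\bE$ that were just established: for any $t > 0$, $\Pr[\bE \ge t] \le 1/t$, and for any $t \le 2^M$, $\Pr[\bE \ge t] \ge \min(1, 1/(2t))$. Throughout we write $W := \lone{x}$ and we may assume all $x_i \ne 0$ (zero coordinates contribute nothing to $\max_i |x_i|\bE_i$ nor to the count). Note that since entries are integers of absolute value at most $\poly_1(d)$, we have $W \le d\cdot\poly_1(d) \le 2^M$, so the lower tail bound is applicable at the relevant scale.

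\textbf{Part 2 (the easy direction).} For a threshold $W/T$, linearity of expectation gives
\begin{align*}
    \E\bigl[\,|\setbuilder{i}{|x_i|\bE_i \ge W/T}|\,\bigr] = \sum_{i=1}^d \Pr\Bigl[\bE_i \ge \frac{W}{T|x_i|}\Bigr] \le \sum_{i=1}^d \frac{T|x_i|}{W} = \frac{T}{W}\sum_{i=1}^d |x_i| = T,
\end{align*}
using the upper tail bound $\Pr[\bE \ge t] \le 1/t$ in the inequality. This settles part 2 immediately.

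\textbf{Part 1, upper bound.} We want $\Pr[\max_i |x_i|\bE_i > 50 W]$ to be small. By a union bound and the upper tail bound,
\begin{align*}
    \Pr\Bigl[\max_i |x_i|\bE_i > 50W\Bigr] \le \sum_{i=1}^d \Pr\Bigl[\bE_i > \frac{50W}{|x_i|}\Bigr] \le \sum_{i=1}^d \frac{|x_i|}{50W} = \frac{1}{50},
\end{align*}
so this event fails with probability at most $1/50$.

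\textbf{Part 1, lower bound (the main obstacle).} We must show $\max_i |x_i|\bE_i \ge W/16$ with good probability; equivalently, the event that \emph{all} coordinates have $|x_i|\bE_i < W/16$ is unlikely. The coordinates are independent, so $\Pr[\forall i:\ |x_i|\bE_i < W/16] = \prod_i \Pr[\bE_i < W/(16|x_i|)] = \prod_i \bigl(1 - \Pr[\bE_i \ge W/(16|x_i|)]\bigr)$. For each $i$ with $|x_i| \le W/16$ (which by $W = \sum|x_j|$ can fail for at most a few coordinates; in fact if some single $|x_{i^*}| > W/16$ then trivially $|x_{i^*}|\bE_{i^*} \ge |x_{i^*}| > W/16$ and we are done), the lower tail bound gives $\Pr[\bE_i \ge W/(16|x_i|)] \ge \frac{16|x_i|}{2W} = \frac{8|x_i|}{W}$. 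Hence, using $1 - a \le e^{-a}$,
\begin{align*}
    \Pr\Bigl[\forall i:\ |x_i|\bE_i < \frac{W}{16}\Bigr] \le \prod_i \exp\Bigl(-\frac{8|x_i|}{W}\Bigr) = \exp\Bigl(-\frac{8}{W}\sum_i |x_i|\Bigr) = e^{-8} < \frac{1}{100}.
\end{align*}
The subtlety to handle carefully is the case analysis around $|x_i|$ versus $W/16$: when no coordinate exceeds $W/16$, the bound above applies to every term; when some coordinate does, the maximum trivially clears the threshold. A union bound over the upper-bound failure ($\le 1/50$) and the lower-bound failure ($< 1/100$) leaves success probability at least $1 - 1/50 - 1/100 > 95/100$, as claimed. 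The only real care needed is tracking the constants and the edge case; the probabilistic content is just independence plus the two one-line tail estimates.
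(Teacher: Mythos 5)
Your proof is correct and follows essentially the same route as the paper's: part 2 is linearity of expectation plus the upper tail bound; the upper half of part 1 is a union bound; the lower half of part 1 splits into the trivial case of a single large coordinate (where $\bE_i \ge 1$ already suffices) versus the case where all coordinates are small, and there uses independence plus $1-a \le e^{-a}$ to get $e^{-8} < 1/100$. The only cosmetic difference is that you split at $|x_i| > W/16$ while the paper splits at $|x_i| \ge W/8$; both thresholds make the large-coordinate case trivial and keep $8|x_i|/W \le 1$ in the complementary case, so the choice is immaterial.
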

\begin{proof}
Without loss of generality, we can assume that all the coordinates of $x$ are nonzero. Using the distribution of the random variable $\bE_i$, we obtain that
\begin{align*}
	\Pr\left[\bE_i \ge 50 \frac{\lone{x}}{|x_i|}\right] \le \frac{|x_i|}{50\lone{x}}.
\end{align*}
Thus, by a union bound, with probability $\ge 1 - 1/50$, for all $i$, $\bE_i \le 50\lone{x}/|x_i|$. Hence with probability $\ge 49/50$,
$
	\max_i |x_i|\bE_i \le 50 \lone{x}.
$
Similarly for all $i$, 
\begin{align*}
	\Pr\left[\bE_i \ge \frac{\lone{x}}{16|x_i|}\right] \ge \min(8|x_i|/\lone{x}, 1).
\end{align*}
If there exists an index $i$ such that $|x_i| \ge \lone{x}/8$, then we already have that with probability $1$, $\max_i |x_i|\bE_i \ge \lone{x}/8$. Now assume that for all $i$, $|x_i| \le \lone{x}/8$. Using the independence of $\bE_i$'s, we obtain that with probability $\ge 99/100$, there exists an index $i$ such that $\bE_i \ge \lone{x}/(16|x_i|)$ and therefore with probability $\ge 99/100$, $\max_i |x_i|\bE_i \ge \lone{x}/16$. Hence, overall with probability $\ge 95/100$, we have 
\begin{align*}
	\frac{\lone{x}}{16} \le \max_i |x_i|\bE_i \le 50\lone{x}.
\end{align*}
Let $T > 0$ be arbitrary. For a fixed $i$, we have $\Pr[\bE_i \ge \lone{x}/(|x_i|T)] \le |x_i|T/\lone{x}$. Thus,
\begin{align*}
	&\E[|\setbuilder{i}{|x_i|\bE_i \ge \lone{x}/T}|]\\
 &= \sum_i \Pr[\bE_i\\
 &\ge \lone{x}/(|x_i|T)] \le T\sum_i |x_i|/\lone{x} = T. \qedhere
\end{align*}
\end{proof}
The following lemma extends the above properties to all $p > 2$ along with an additional property that Andoni uses in his proof.
\begin{lemma}
Let $p > 2$ be arbitrary and let $x \in \R^d$ be an arbitrary vector of integer entries with absolute value at most $(\poly_1(d))^{1/p}$. Let $\bE_1,\ldots,\bE_d$ be independent discrete exponential random variables. Then,
\begin{enumerate}
    \item With probability $\ge 95/100$,
    \begin{align*}
        	\frac{\lp{x}}{16^{1/p}} \le \max_i |x_i|\bE_i^{1/p} \le 50^{1/p}\lp{x}.
    \end{align*}
    \item For any $T > 0$, with probability $\ge 95/100$,
	\begin{align*}
		|\setbuilder{i}{|x_i|\bE_i^{1/p} \ge \lp{x}/T^{1/p}}| \le 20T.
	\end{align*}
	\item 
	\begin{align*}
	    \E[\sum_{i=1}^d (\bE_i^{1/p} x_i)^2] = O_p(d^{1-2/p}\lp{x}^2).
	\end{align*}
\end{enumerate}
\label{lma:guarantees-on-z-with-independence}
\end{lemma}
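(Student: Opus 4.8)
The plan is to reduce parts~1 and~2 to the preceding ($p=1$) lemma by a change of variables, and to prove part~3 by a direct second-moment computation that crucially uses $p>2$.

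For the change of variables, introduce $y\in\R^d$ with $y_i:=|x_i|^p$, so that $\lone{y}=\sum_i|x_i|^p=\lp{x}^p$, all entries of $y$ are positive and at most $\poly_1(d)$, and every nonzero entry is $\ge 1$ (since $x_i$ is a nonzero integer). The argument proving the $p=1$ lemma only used that $\Pr[\bE\ge t]\le 1/t$ for all $t>0$ and $\Pr[\bE\ge t]\ge\min(1,1/(2t))$ for $t\le 2^M$, together with the fact that the thresholds $t$ arising there never exceed $\lone{y}\le d\poly_1(d)\le 2^M$; integrality of the underlying vector was never actually used. Hence that proof carries over verbatim to $y$ and yields: with probability $\ge 95/100$, $\lone{y}/16\le\max_i y_i\bE_i\le 50\lone{y}$, and for every $T$, $\E[\,|\setbuilder{i}{y_i\bE_i\ge\lone{y}/T}|\,]\le T$. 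Since $|x_i|\bE_i^{1/p}=(y_i\bE_i)^{1/p}$ and $u\mapsto u^{1/p}$ is increasing, raising the first chain of inequalities to the power $1/p$ and using $\lone{y}^{1/p}=\lp{x}$ gives part~1. For part~2, the event $|x_i|\bE_i^{1/p}\ge\lp{x}/T^{1/p}$ is identical to $y_i\bE_i\ge\lone{y}/T$, so Markov's inequality applied to the expectation bound gives $\Pr[\,|\setbuilder{i}{|x_i|\bE_i^{1/p}\ge\lp{x}/T^{1/p}}|\ge 20T\,]\le 1/20$, i.e.\ the claimed bound holds with probability $\ge 95/100$.

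For part~3, first compute the fractional moment of a single discrete exponential directly from its distribution: $\E[\bE^{2/p}]=\tfrac12\sum_{j=0}^{M-1}2^{j(2/p-1)}+2^{M(2/p-1)}$. Because $p>2$ we have $2/p-1<0$, so the geometric sum is at most $\tfrac12\,(1-2^{2/p-1})^{-1}=O_p(1)$ and the last term is at most $1$; hence $\E[\bE^{2/p}]=O_p(1)$. By linearity of expectation, $\E[\sum_i(\bE_i^{1/p}x_i)^2]=\E[\bE^{2/p}]\sum_i x_i^2=O_p(1)\cdot\opnorm{x}^2$. Finally, Hölder's inequality with exponents $p/2$ and $p/(p-2)$ gives $\opnorm{x}^2=\sum_i x_i^2\le(\sum_i|x_i|^p)^{2/p}\,d^{1-2/p}=d^{1-2/p}\lp{x}^2$, completing part~3.

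I expect no genuine obstacle; the steps are routine bookkeeping. The only points needing a little care are (a)~checking that the reduction to the $p=1$ lemma is legitimate even though $y_i=|x_i|^p$ need not be an integer — which is fine because the proof of that lemma used only magnitude bounds, not integrality — and (b)~observing that the convergence of $\E[\bE^{2/p}]$ really does require $p>2$, consistent with the $O_p(\cdot)$ constant blowing up as $p\to 2$.
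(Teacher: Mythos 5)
Your proof is correct and takes essentially the same approach as the paper: reduce parts~1 and~2 to the $p=1$ lemma via $y_i=|x_i|^p$, and establish part~3 by computing $\E[\bE^{2/p}]=O_p(1)$ followed by $\opnorm{x}^2\le d^{1-2/p}\lp{x}^2$. You are somewhat more careful than the paper in two places where it is silent: you note explicitly that integrality of $y$ is not needed in the $p=1$ argument (only the magnitude bounds and nonzero entries being $\ge 1$), and you spell out the Markov-inequality step converting the expectation bound of the $p=1$ lemma into the $95/100$ probability guarantee of part~2; both are the right observations and match what the paper leaves implicit.
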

\begin{proof}
The first two properties follow from applying the previous lemma to the vector $x^{(p)} \in \R^d$ defined as $x^{(p)}_i = |x_i|^p$. To prove the last property, we have
\begin{align*}
    \E[\sum_{i=1}^d (\bE_i^{1/p}x_i)^2] = \sum_{i=1}^d x_i^2\E[\bE_i^{2/p}].
\end{align*}
Now $\E[\bE_i^{2/p}] = \sum_{j=0}^{M-1} 2^{2j/p}/2^{j+1} + 2^{2M/p}/2^M \le (1/2)\sum_{j=0}^{\infty} 2^{j(2/p-1)} + 2^{M(2/p-1)} \le 1/(2 - 2^{2/p}) + 1$. Hence
\begin{align*}
    \E[\sum_{i=1}^d (\bE_i^{1/p}x_i)^2] &\le \opnorm{x}^2 \left(\frac{1}{2 - 2^{2/p}}+1\right)\\
    &= O_p(d^{1-2/p}\lp{x}^2)
\end{align*}
where the constant we are hiding blows up as $p \rightarrow 2$.
\end{proof}
\subsection{\texorpdfstring{$F_p$}{Fp} Estimation With \FastPRG}
In the previous section, we discussed some of the properties satisfied by the random vector $\bz$ given by $\bz_i := \bE_i^{1/p}x_i$ for any arbitrary vector $x$ with integer entries of absolute value at most $(\poly_1(d))^{1/p}$. Now we show that the properties are still satisfied even when the random variables $\bE_i$ are generated using \FastPRG thereby showing that Algorithm~\ref{alg:fp-with-fast-prg} outputs a constant factor approximation to $\lp{x}$ with probability $\ge 7/10$. The success probability can be increased to $1-\delta$ by running $O(\log 1/\delta)$ independent copies of the algorithm and reporting the median.
\begin{theorem}
Let $p > 2$ be a parameter and for $m = O(\poly(d))$ let the vector $x := 0\in \R^d$ receive a stream of $m$ updates $(i_1, v_1), (i_2, v_2), \ldots, (i_m, v_m)$ with $|v_j| \le \poly(d)$ for all $j$. On receiving an update $(i_j, v_j)$, the vector $x$ is modified as $x_{i_j} \gets x_{i_j} + v_{j}$. The Algorithm~\ref{alg:fp-with-fast-prg} uses $O_p(d^{1-2/p}\log(d))$ words of space and at the end of the stream outputs a constant factor approximation to $\lp{x}$ with probability $\ge 7/10$. Further each update to $x$ is processed by the algorithm in $O(1)$ time in the Word RAM model on a machine with a word size $\Omega(\log d)$. 
\end{theorem}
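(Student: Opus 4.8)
The plan is to combine the structural facts about $\bz$ from Lemma~\ref{lma:guarantees-on-z-with-independence} (now under discrete exponentials rather than continuous ones) with the PRG guarantee of Theorem~\ref{thm:nisan-main-theorem}, and finally account for the space and update time with the fast hash family of Theorem~\ref{thm:fast-hash-functions}. First I would fix the vector $x$ at the end of the stream; since the sketch is linear, the final state of Algorithm~\ref{alg:fp-with-fast-prg} depends only on $x$. I would set the block size $n = \Theta(\log d)$ large enough that each discrete exponential $\bE_i$ (taking $M = O(\log d)$ values) is generated from one $n$-bit block, and I would instantiate \FastPRG with $b = d^{\epsilon}$ and $k = O(1/\epsilon)$ for a small constant $\epsilon < 1 - 2/p$, so that the seed length is $O(b k n) = O(d^{\epsilon}\log d)$ bits $= o(d^{1-2/p}\log d)$ words and every $n$-bit block can be computed in $O(k) = O(1)$ time (the special case of Theorem~\ref{thm:nisan-main-theorem} noted right after its proof, fooling any FSM with $\poly(d)$ states over alphabet $\set{0,1}^n$).

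Next I would design, for analysis only, three small-space ``abstract algorithms'' that each make a single left-to-right pass over the pseudorandom string, reading one $n$-bit block at a time to regenerate $\bE_i$, and computing respectively (i) $\max_i |x_i|\bE_i^{1/p}$, (ii) the count $|\{i : |x_i|\bE_i^{1/p} \ge \lp{x}/T^{1/p}\}|$ for the threshold $T$ dictated by Andoni's analysis, and (iii) $\sum_i (\bE_i^{1/p} x_i)^2$. Each of these needs only $O(\log d)$ bits of working memory (a running max, a counter, or a running sum of $\poly(d)$-bounded quantities), plus $O(\log d)$ bits to hold $x_i$ and the loop index — so each is an FSM with $2^{O(\log d)} = \poly(d)$ states. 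By Theorem~\ref{thm:nisan-main-theorem}, the distribution of each of these statistics changes by at most $2^{-\Omega(\log d)} = d^{-\Omega(1)}$ in total variation when the $\bE_i$ are generated from \FastPRG instead of independently. Hence by Lemma~\ref{lma:guarantees-on-z-with-independence} and a union bound over the three statistics (plus the $d^{-\Omega(1)}$ failure of the generator itself), with probability $\ge 9/10$ the pseudorandomly-generated $\bz$ still satisfies: $\linf{\bz} = \Theta(\lp{x})$; at most $O((\log d)^p)$ coordinates exceed $\lp{x}/(c\log d)$; and $\opnorm{\bz}^2 = O(d^{1-2/p}\lp{x}^2)$ (the last via Markov's inequality applied to the expectation bound, absorbed into the union bound).

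Then I would condition on these three events and invoke Andoni's argument verbatim: the CountSketch matrix $\bS$ with $O(d^{1-2/p}\log d)$ rows is built from $O(\log d)$-wise independent $\bh,\bsigma$, which are independent of \FastPRG's randomness, so Andoni's conditional analysis (showing $\linf{\bS\bz} = \Theta(\linf{\bz})$ with large constant probability) applies unchanged, giving overall success probability $\ge 7/10$ after a union bound. Finally I would account for implementation: by Theorem~\ref{thm:fast-hash-functions} with constant $t$ and $k = O(\log d)$, the hash functions $\bh,\bsigma$ are stored in $d^{\epsilon'}$ bits and evaluated in $O(1)$ time (with the $k$-wise independence holding w.h.p., folded into the failure probability); each \FastPRG block is computed in $O(1)$ time; the stored state is $\boldf$ of dimension $O(d^{1-2/p}\log d)$, each entry $O(\log d)$ bits, so $O_p(d^{1-2/p}\log d)$ words total, dominating the $o(d^{1-2/p}\log d)$-word seed. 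Thus each update costs $O(1)$ time.

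The main obstacle I anticipate is arranging the abstract algorithms so that genuinely $O(\log d)$ bits of state suffice: in particular the statistic ``number of coordinates above the threshold'' must be computed without knowing $\lp{x}$ in advance — but since $x$ is fixed and $\lp{x}$ is a constant hardwired into the (non-implemented) abstract algorithm, this is fine; the subtler point is that the threshold Andoni needs is $\lp{x}/(c\log d)$ with $c\log d \le O(\log d)$ values, and one must check that counting above this threshold, together with tracking the running $\ell_2$ mass, does not secretly require more than $O(\log d)$ bits when intermediate partial sums are $\poly(d)$-bounded — which they are, since $|x_i|, \bE_i \le \poly(d)$ and $d$ terms are summed. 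A second, more routine obstacle is carrying the discrete-exponential version of Lemma~\ref{lma:guarantees-on-z-with-independence} through Andoni's analysis, which was stated for continuous exponentials; here the two-sided tail bounds $\min(1,1/(2t)) \le \Pr[\bE \ge t] \le 1/t$ proved above are exactly what is needed, so this amounts to checking that Andoni only uses these tail estimates (and min-stability, which the two-sided bounds reproduce up to constants).
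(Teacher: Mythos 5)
Your proposal is correct and takes essentially the same route as the paper: fix the final vector $x$, build a $\poly(d)$-state FSM that streams over the $\bE_i$ blocks while tracking the three statistics Andoni needs, invoke the \FastPRG fooling guarantee of Theorem~\ref{thm:nisan-main-theorem} with $b=d^\epsilon$ and $k=O(1/\epsilon)$ for $\epsilon<1-2/p$, then condition on those structural properties and run Andoni's CountSketch argument verbatim. The only cosmetic difference is that you propose three separate single-statistic FSMs with a union bound, while the paper packs all three into one FSM with state $(i,j,t,v)$; both work, and both rely on Markov for the $\opnorm{\bz}^2$ bound exactly as you note. One detail worth making explicit (which the paper handles and you only gesture at): $\bE_i^{1/p}$ is irrational, so the FSM must track $\sum_i \mathrm{round}(\bE_i^{1/p})^2 x_i^2$ rather than $\sum_i (\bE_i^{1/p} x_i)^2$ — the paper observes $(\bE_i^{1/p}x_i)^2 \le \mathrm{round}(\bE_i^{1/p})^2 x_i^2 \le 4(\bE_i^{1/p}x_i)^2$, so the rounded sum is $\poly(d)$-valued and still certifies the $O(d^{1-2/p}\lp{x}^2)$ bound; this matches what Algorithm~\ref{alg:fp-with-fast-prg} actually computes.
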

\begin{proof}
First we condition on the event that the hash families from which $\bh, \bsigma$ are drawn are $O(\log d)$-wise independent. From Theorem~\ref{thm:fast-hash-functions}, the event holds with probability $\ge 99/100$. From Theorem~\ref{thm:nisan-main-theorem}, \FastPRG with parameters $n = O(\log d)$, $b = d^{\epsilon}$ and $k = O(1/\epsilon)$ fools a Finite State Machine with $\poly(d)$ states. Further the seed for \FastPRG can be stored using $O((1/\epsilon)d^{\epsilon}\log d) = o(d^{1-2/p})$ bits if $\epsilon < 1 - 2/p$. Theorem~\ref{thm:fast-hash-functions} shows that $\bh$ and $\bsigma$ can be stored using $O(d^{\epsilon}) = o(d^{1-2/p})$ bits of space and for any $i \in [d]$, $\bh(i)$ and $\bsigma(i)$ can be evaluated in $O_{1/\epsilon}(1)$ time. Therefore, each update in the stream is processed in $O_{1/\epsilon}(1)$ time.

Let $x$ be the vector at the end of the stream and for $i \in [d]$, $\bE_i$ be the discrete exponential random variable computed by the algorithm for coordinate $i$. Let $T := (O(\log d))^p$. Define a FSM $Q_x$ with a start state and other states being defined by the tuple $(i,j,t,v)$ with $1 \le i \le d$, $0 \le j \le d, 1 \le v \le \poly(d)$ and $t \in \set{\less, \correct, \more}$. The machine $Q_x$ clearly has $\poly(d)$ states. The FSM being in a state $(i,j,\less, v)$ denotes that it has processed the coordinates $x_1,\ldots,x_i$ until now and found that
\begin{align*}
|\setbuilder{i' \le i}{|x_{i'}|\bE_{i'}^{1/p} \ge \lp{a}f^{1/p}}| 	&= j, \\
	\max_{i' \le i}|x_{i'}|\bE_{i'}^{1/p} &< \frac{\lp{x}}{16^{1/p}}, 
\text{ and }\\
    \sum_{i' \le i} \text{round}(\bE_{i'}^{1/p})^2 \cdot (x_{i
    '})^2 &= v.
\end{align*}
Here $\text{round}(x)$ denotes $x$ rounded to the nearest integer. As $\bE_{i'}^{1/p} \ge 1$ if $x_{i'} \ne 0$, we have 
\begin{align*}
(\bE_{i'}^{1/p}x_{i
    '})^2 \le \text{round}(\bE_{i'}^{1/p})^2 \cdot (x_{i
    '})^2 \le 4(\bE_{i'}^{1/p}x_{i
    '})^2.
\end{align*}
 Note that using the bound on the absolute values of entries in $x$, the value $v$ can be at most $\text{poly}(d)$. Similarly, the state is in $(i,j,\correct, v)$ if a condition similar to above holds but instead we have
\begin{align*}
		\frac{\lp{x}}{16^{1/p}} \le \max_{i' \le i}|x_{i'}|\bE_{i'}^{1/p} \le 50^{1/p}{\lp{x}}.
\end{align*}
and in $(i,j,\more, v)$ if $\max_{i' \le i}|x_{i'}|\bE_{i'}^{1/p} > 50^{1/p}{\lp{x}}$. Here the value of $\bE_i$ is assigned based on the bitstring corresponding to an edge in the FSM. It is clear that we can construct such a Finite State Machine. Let $\calE$ denote the event that $\max_{i} |x_i|\bE_i^{1/p} \in [\lp{x}/16^{1/p}, 50^{1/p}\lp{x}]$, $|\setbuilder{i}{|x_i|\bE_i^{1/p} \ge \lp{x}/T^{1/p}}| \le 20T$ and $\sum_{i \le i}\text{round}((\bE_{i}^{1/p})x_{i})^2 = O_p(d^{1-2/p}\|x\|_p^2)$. By Lemma~\ref{lma:guarantees-on-z-with-independence}, the final state distribution of the FSM using uniform random edge at every state satisfies the event $\calE$ with probability $\ge 85/100$. By Theorem~\ref{thm:nisan-main-theorem}, if the random variables $\bE_1,\ldots,\bE_d$ are generated using the random string sampled from \FastPRG, then with probability $\ge 1 - 1/\poly(d)$, the final state of FSM $Q_x$ satisfies the event $\calE$ with probability $\ge 8/10$.

Thus, with probability $\ge 8/10$, the implicit vector $\bz \in \R^d$ in the algorithm defined as $\bz_i := \bE_i^{1/p}x_i$ satisfies all the properties Andoni requires of the vector obtained by multiplying coordinates of $x$ with scaled exponential random variables. Hence, with probability $\ge 75/100$, the maximum absolute value of the coordinate in $\boldf$ obtained by sketching $\bz$ with a CountSketch matrix is a constant factor approximation to $\lp{x}$.

Thus overall the algorithm outputs a constant factor approximation to $\lp{x}$ with probability $\ge 7/10$. 
\end{proof}

\begin{algorithm}
\KwIn{$p > 2$, $d \in \N$, a stream of updates $(i_1,v_1), \ldots, (i_m, v_m) \in [d] \times \set{-M, \ldots, M}$ for $m, M = \poly(d)$}
\KwOut{An approximation to $\lp{x}$ where $x \in \R^d$ is defined by the stream of updates}
$\epsilon \gets$ A constant smaller than $1 - 2/p$\;
$\bs \gets $ Pseudorandom string constructed using \FastPRG with parameters $n = O(\log d), b = d^{\epsilon}, k = O(1/\epsilon)$\;
\tcp{The string $\bs$ is only implicitly stored using the corresponding hash functions and the random seed to the generator}
$d' \gets O(d^{1-2/p}\log d)$\;
$\bh \gets O(\log d)$-wise independent hash function from $[d] \rightarrow [d']$\;
$\bsigma \gets O(\log d)$-wise independent hash
function from $[d] \rightarrow \set{+1, -1}$\; \tcp{Both $\bh$ and $\bsigma$ are drawn from hash family in Theorem~\ref{thm:fast-hash-functions} so that they can be stored using $O_{1/\epsilon}(d^{\epsilon})$ bits and evaluated in $O_{1/\epsilon}(1)$ time on any input}

$\boldf \gets 0_m$\;
\tcp{Stream Processing Begins}
\For{$j=1,\ldots,m$}{
    $\bE_{i_j} \gets \text{DiscreteExponential}(i_j\text{-th chunk of $\bs$})$\;
    $\boldf_{\bh(i_j)} \gets \boldf_{\bh(i_j)} + \bsigma(i_j) \cdot \text{round}(\bE_{i_j}^{1/p}) \cdot v_{j}$\;
}
\Return{$\linf{\boldf}$\;}
\caption{$F_p$ moment estimation using \FastPRG}
\label{alg:fp-with-fast-prg}
\end{algorithm}

\subsection{Comparison with Andoni's use of Nisan-Zuckerman PRG}
Andoni argues that his algorithm can be run in $O(d^{1-2/p}\log d)$ words of space using the Nisan-Zuckerman pseudorandom generator, which shows that an $S$ space algorithm using $\poly(S)$ random bits can be run with just $O(S)$ random bits. Nisan-Zuckerman's algorithm takes an $O(S)$ length uniformly random string and stretches it by a factor of $O(S^{\gamma})$ ($0 < \gamma < 1$) by computing by computing $O(S^{\gamma})$ blocks of $O(S)$ bits each. Each of the $O(S^{\gamma})$ blocks of pseudorandom bits takes time $\poly(S)$ time to compute which in our case is $\poly(d^{1-2/p}\log d)$ and hence prohibitive.

\subsection{\texorpdfstring{$\ell_p$}{lp} sampling}\label{subsec:lp-sampling}
As another application of $\FastPRG$, we give a simple $\ell_p$ sampling algorithm for $p>2$. Assume the same turnstile stream setting. At the end of the stream, $\ell_p$ sampling asks to output a coordinate $i$ of the underlying vector $x$ such that probability of sampling $i$ is proportional to $|x_i|^p/\lp{x}^p$. The problem has been widely studied (see \cite{cormode2019p, Jayaram2018perfect} and references therein). The perfect $\ell_p$ sampling algorithm of \cite{Jayaram2018perfect} for $p \in (0,2)$ uses the following property of exponential random variables: if $\bE_1,\ldots,\bE_d$ are independent standard exponential random variables and $i^* = \argmax_i x_i/\bE_i^{1/p}$, then $\Pr[i^* = i] = |x_i|^p/\|x\|_p^p$. This distribution exactly corresponds to what $\ell_p$ sampling asks. To implement the algorithm in a turnstile stream using a small amount of space, they first scale the coordinates with exponentials and then sketch the scaled vector using a data structure called \emph{count-max} and show that the count-max data structure allows to recover the max coordinate in the vector obtained by scaling $x$ with exponential random variables. Finally, they derandomize their construction using a half-space fooling pseudorandom generator.

We show that using $\FastPRG$, for $p > 2$, we obtain $\ell_p$ samplers that have a very fast update time. For simplicity, we discuss an algorithm that samples from the following distribution:
\begin{align*}
    \Pr[\text{$i$ is sampled}] \ge \frac{1}{1+\varepsilon}\frac{|x_i|^p}{\lp{x}^p} \pm 1/\poly(d).
\end{align*}
In the definition approximate perfect samplers, it is required that the $\Pr[\text{$i$ is sampled}]$ is $(1 \pm \varepsilon)|x_i|^p/\lp{x}^p$ up to an additive $1/\poly(d)$ error. We discuss the simpler version as $\ell_p$ samplers are not our main focus.

For this, we work with a finer approximation of exponential random variables than what we used for approximating $F_p$ moments. Assume that given a block of $O(\log d)$ uniform random bits, there is a fast way to convert the random bits into a fine enough discretization of the exponential random variables such that all the following property of the exponential random variables hold for this discretization:
\begin{enumerate}
    \item The probability that $i^*  = i$\ and $\bE_{i^*}^{-1/p}|x_{i^*}| \ge (1+\varepsilon)^{-1/p} \max_{i' \ne i} \bE_{i}^{-1/p}|x_i|]$ is  at least $ |x_i|^p/(1+\varepsilon)\lp{x}^p$, and
    \item with probability at least $1 - 1/\poly(d)$, \[\sum_{i=1}^d (\bE_{i}^{-1/p}|x_i|)^2 \le d^{1-2/p}(\bE_{i^*}^{-1/p}|x_{i^*}|)^2\polylog(d).\]
\end{enumerate}
It can be easily shown that both of these properties hold for \emph{continuous} exponential random variables and hence they hold for a suitable discretization of the continuous exponential random variables. The above properties crucially depend on $\bE_1,\ldots,\bE_d$ being independent but we can derandomize them by using $\FastPRG$. Fix a vector $x$ and design the following FSM for $x$: the FSM goes through each coordinate of $x$ sequentially. The FSM tracks $\max_i \bE_i^{-1/p}|x_i|$, the coordinate attaining the max, $\sum_i \bE_i^{-2/p}x_i^2$. Clearly, all these statistics can be tracked using an FSM with $\poly(d)$ states similar to how we derandomized the properties of exponential random variables for approximating $F_p$ moments. Using $n = O(\log d)$ and $b = d^{c}$ ($c < 1-2/p$) for $\FastPRG$, thus we obtain that if the random variables $\bE_i$ are generated using the pseudorandom string sampled from $\FastPRG$, then for all $i$,
\begin{small}
    \begin{align*}
    &\Pr_{(\bE_1,\ldots,\bE_d) \sim \FastPRG}[i^*  = i\ \text{and}\ \bE_{i^*}^{-1/p}|x_{i^*}| \ge (1+\varepsilon)^{-1/p} \max_{i' \ne i} \bE_{i}^{-1/p}|x_i|]\\
    &\ge \frac{|x_i|^p}{(1+\varepsilon)\lp{x}^p} - 1/\poly(d).
\end{align*}
\end{small}
and with probability $\ge 1 - 1/\poly(d)$, $\sum_{i=1}^d (\bE_{i}^{-1/p}|x_i|)^2 \le d^{1-2/p}(\bE_{i^*}^{-1/p}|x_{i^*}|)^2\polylog(d)$. Let $\bf$ be a vector such that $\boldf_i = \bE_i^{-1/p} x_i$. Condition on both the events. Then we have that the largest coordinate in $\boldf$ is at least a $1/(1+\varepsilon)$ factor larger than the second largest coordinate and that $\opnorm{\boldf}^2 = O(d^{1-2/p}\linf{\boldf}^2)$. Now hashing the coordinates of $\boldf$ into a CountSketch data structure with $O(d^{1-2/p}\polylog(d)/\varepsilon^2)$ rows using $O(\log d)$ wise preserves the large coordinate of $\boldf$ using the analysis in \cite{andoni2017high}. Using $O(1)$ independent CountSketch data structure and finding the coordinate $i \in [d]$ which hashes to the max bucket of the CountSketch data structure in each of the $O(1)$ repetitions, we can extract the coordinate $i$ and output it as the $\ell_p$ sample.

Note that the update time is $O(1)$ as a block of random bits from $\FastPRG$ with parameter $k = d^c$ can be obtained in $O(1)$ time and then the time to evaluate the hash functions of the CountSketch data structure is $O(1)$ when using the constructions from Theorem~\ref{thm:fast-hash-functions}. The overall space complexity of the data structure is $O(d^{1-2/p}\polylog(d)/\varepsilon^2)$ bits. However, we note that the final step of computing which $i \in [d]$ hashes to the max coordinate in all $O(1)$ copies of the CountSketch data structure takes $O(d)$ time.

\section{Moment Estimation for \texorpdfstring{$0 < p < 2$}{0 < p < 2}}
We assume as usual that a vector $x \in \R^d$ is being maintained in the stream. The vector $x$ is initialized to $0$ and then receives $m$ updates of the form $(i,v) \in [d] \times \set{-M, \ldots, M}$ where upon receiving $(i,v)$, we update $x_i \gets x_i + v$. We assume that both $m,M \le \poly(d)$. At the end of the stream, we want to approximate $\lp{x}^p$ up to a $1 \pm \varepsilon$ multiplicative factor. For $\varepsilon$ such that $1/\sqrt{d} \le \varepsilon \le 1/d^{c}$ for a small enough constant, we show that the algorithm of \cite{fast-moment-estimation-optimal-space} can be implemented in space of $O(\varepsilon^{-2}\log(d))$ \emph{bits} of space and $O(\log d)$ update time per stream element. We measure the time complexity of the update algorithm in the Word RAM model with a word size of at least $\Omega(\log d)$. We first give a high level overview of the moment estimation algorithm of \cite{fast-moment-estimation-optimal-space}. Throughout the section, we assume that $1/\sqrt{d} \le \varepsilon \le 1/d^c$.
\subsection{Overview of Moment Estimation Algorithm of \texorpdfstring{\cite{fast-moment-estimation-optimal-space}}{kane et al}}
Their 1-pass algorithm is based on the Geometric Mean estimator of Li \cite{ping-li}. Li gives an estimator to compute $F_p$ moment of a vector using $p$-stable random variables. Let $x$ be a fixed $d$ dimensional vector and $\bv_1,\bv_2,\bv_3 \in \R^d$ be random vectors with independent $p$-stable random variables. Then, Li showed that the estimator given by
\begin{align}
    \Est = \frac{(|\la \bv_1, x\ra||\la \bv_2, x\ra||\la \bv_3, x\ra|)^{p/3}}{\left(\frac{2}{\pi} \Gamma\left({p}/{3}\right) \Gamma\left(2/3\right) \sin \left({\pi p}/{6}\right)\right)^3}
    \label{eqn:lis-estimator}
\end{align}
satisfies $\E_{\bv_1,\bv_2,\bv_3}[\Est] = \lp{x}^p$ and $\Var_{\bv_1,\bv_2,\bv_3}[\Est] = O(\lp{x}^{2p})$. A $1 \pm \varepsilon$ approximate estimator can be obtained by averaging $O(1/\varepsilon^2)$ independent copies of the estimator but it makes the update time $\Omega(\varepsilon^{-2})$ which is prohibitive. To decrease the variance of the estimator, \cite{fast-moment-estimation-optimal-space} hash the coordinates of $x$ into buckets and estimate the contribution of each bucket to the $F_p$ moment of $x$ separately. When there are heavy coordinates in the vector $x$, variance of the estimator may still be too large. Therefore, they estimate the contribution of the \emph{heavy} elements using a different data structure they call \HighEnd{} and use the Li's estimator to only estimate the contribution of the light elements using a data structure they call \LightEstimator. We show that when $(1/\sqrt{d}) \le \varepsilon \le 1/d^c$ for a constant $c < 1/2$, we can implement their algorithm using $O(\varepsilon^{-2}\log d)$ bits of space and an update time of $O(\log d)$ per each element in the stream in the Word RAM model.

\subsection{The \HighEnd{} Data Structure}
As described above, the algorithm of \cite{fast-moment-estimation-optimal-space} estimates the $F_p$ moment of heavy entries and light entries separately. Their heavy entry moment estimation method, at the end of the stream, takes in $L \subseteq [d]$ satisfying the following conditions:
\begin{enumerate}
    \item $L \supseteq \setbuilder{i \in [d]}{|x_i|^p \ge \alpha\lp{x}^p}$,
    \item if $i \in L$, then $|x_i|^p \ge (\alpha/C)\lp{x}^p$ for a constant $C \ge 1$, and
    \item we know the sign of $x_i$ for all $i \in L$.
\end{enumerate}
We show in Appendix~\ref{sec:heavy-entries} how a set $L$ satisfying the above properties can be computed in a turnstile stream using $O(\alpha^{-1}\log^2d)$ bits of space and $O(\log d)$ update time per stream element. We use the CountSketch data structure \cite{Charikar2004finding} along with the ExpanderSketch data structure \cite{jowhari2011tight} to obtain $L$. Note that the set $L$ has size at most $O(1/\alpha)$.

Now we state the guarantees of the \HighEnd{} data structure. We first define \BasicHighEnd{} data structure and then define \HighEnd{} by taking independent copies of the \BasicHighEnd{} data structure. Let $\alpha$ be such that $1/\alpha = O(1/\varepsilon^2)$ and let $s = \Theta(1/\alpha)$ be a large enough power of $2$. Let $\bh : [d] \rightarrow [s]$ is picked at random from an $r_h$-wise independent hash family for $r_h = \Theta(\log 1/\alpha)$. Let $r = \Theta(\log 1/\varepsilon)$ be a sufficiently large power of $2$. Let $\bg: [d] \rightarrow [r]$ be drawn at random from an $r_g$-wise independent hash family for $r_g = r$. For each $i \in [d]$, we associate a random complex root of unity given by $\exp(\mathrm{i} 2\pi \bg(i)/r)$ where $\mathrm{i}$ denotes  $\sqrt{-1}$. We initialize $s$ counters $\bb_1,\ldots,\bb_s$ to $0$. Given an update of the form $(i,v)$, we set $\bb_{\bh(i)} \gets \bb_{\bh(i)} + \exp(\mathrm{i} 2\pi \bg(i)/r)v$.

The \HighEnd{} data structure is defined by taking $T$ independent copies of \BasicHighEnd{} data structure with $T = O(\max(\log(1/\varepsilon), \log(1/\alpha))) = O(\log d)$. Each of the copies of the \BasicHighEnd{} data structure is updated upon receiving an update $(i,v)$ in the stream. Let $(\bh^1,\bg^1), \ldots,(\bh^T, \bg^T)$ be the hash functions corresponding to each of the \BasicHighEnd{} data structures.

It is argued in \cite{fast-moment-estimation-optimal-space} that storing the coefficients of complex numbers up to a precision of $\Theta(\log(d))$ bits suffices if the number of updates is $\poly(d)$, the magnitude of each update is bounded by $\poly(d)$ and $1/\sqrt{d} \le \varepsilon \le 1/d^c$. Thus the space complexity of \HighEnd{} data structure (excluding the space required for storing the hash functions) is 
\begin{align*}
    O\left(\alpha^{-1}(\log d)^2\right)~\text{bits}.
\end{align*}

By Theorem~\ref{thm:fast-hash-functions}, for $t$ a large enough constant we can construct random hash families $\calH = \set{h : [d] \rightarrow [s]}$ and $\calG = \set{g : [d] \rightarrow [r]}$ such that with probability $\ge 1 - 1/\poly(d)$, the hash families $\calH$ and $\calG$ are $r_h$ and $r_g$ wise independent respectively. Now, if $\bh^{1}, \ldots, \bh^T$ are sampled independently from $\calH$ and $\bg^1,\ldots,\bg^T$ are sampled independently from $\calG$, they can be evaluated on any input in $O(1)$ time and therefore the update time per each stream element is $O(T) = O(\log d)$ in the Word RAM model. Each hash function $\bh^i$ and $\bg^i$ can be stored using $O(d^{c/2})$ bits and therefore the space required to store the hash functions necessary for the \HighEnd{} data structure is $o(d^c) = o(\varepsilon^{-2})$ bits.

At the end of stream, by Theorem~11 of \cite{fast-moment-estimation-optimal-space} we can use the \HighEnd{} data structure to compute a value $\Psi$ such that with probability $\ge 7/8$,
\begin{align*}
    |\Psi - \lp{x_L}^p| \le O(\varepsilon)\lp{x}^p.
\end{align*}
We then have the following lemma: 
\begin{lemma}
Given $1/\sqrt{d} \le \varepsilon \le 1/d^{c}$ for a small enough constant, and $\alpha$ such that $1/\alpha = O(1/\varepsilon^2)$, there is a streaming algorithm that takes $O(\alpha^{-1}\log^2(d))$ bits of space and has an update time of $O(\log d)$ per stream element in the Word RAM model satisfying:
\begin{enumerate}
    \item The algorithm outputs a set $L \subseteq [d]$ satisfying all the three properties stated above with probability $\ge 9/10$.
    \item Conditioned on the list $L$ satisfying those properties, the algorithm outputs a value $\Psi$ such that with probability $\ge 65/100$,
\begin{align*}
    |\Psi - \lp{x_L}^p| \le O(\varepsilon)\lp{x}^p.
\end{align*}
\end{enumerate}
By taking median of $\Psi$ output by $O(1)$ independent instances of the \HighEnd{} data structure, we have that conditioned on $L$ satisfying all the properties, we have an estimate of $\lp{x_L}^p$ with an additive error of $O(\varepsilon)\lp{x}^p$ with probability $\ge 99/100$. Hence, by a union bound with probability $\ge 8/10$, the algorithm outputs both a good list $L$ and a value $\Psi$ satisfying $|\Psi - \lp{x_L}^p| \le O(\varepsilon)\lp{x}^p$.
\label{lma:high-end}
\end{lemma}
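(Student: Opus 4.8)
The plan is to run two sub-routines in parallel over the stream and combine their outputs only at query time. The first sub-routine produces the list $L$; the second is the \HighEnd{} estimator, which consumes $L$ at the end. For the first sub-routine I would invoke the construction of Appendix~\ref{sec:heavy-entries}: running a CountSketch (to estimate the values, and hence the signs, of candidate heavy coordinates) together with an ExpanderSketch (to efficiently recover the identities of the $\ell_p$-heavy hitters), one obtains with probability $\ge 9/10$ a set $L$ with $L \supseteq \setbuilder{i}{|x_i|^p \ge \alpha\lp{x}^p}$, with every $i\in L$ satisfying $|x_i|^p \ge (\alpha/C)\lp{x}^p$, and with the correct sign of $x_i$ known for all $i \in L$; this costs $O(\alpha^{-1}\log^2 d)$ bits and $O(\log d)$ update time. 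This is exactly item~1.

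For item~2, I would instantiate \HighEnd{} as described above: $T = O(\log d)$ independent copies of \BasicHighEnd{}, each using a hash pair $(\bh^i,\bg^i)$ drawn from the families $\calH,\calG$ of Theorem~\ref{thm:fast-hash-functions} with $t$ a large constant, so that, conditioning on the probability-$(1-1/\poly(d))$ event that $\calH,\calG$ are $r_h$- and $r_g$-wise independent, each hash value is computable in $O(1)$ time and all the hash functions together fit in $O(d^{c/2}\log d) = o(d^c) = o(\varepsilon^{-2})$ bits (using $\varepsilon \le 1/d^c$, hence $\varepsilon^{-2}\ge d^{2c}$). Storing each complex counter to $\Theta(\log d)$ bits of precision---justified by $m,M\le\poly(d)$ and $1/\sqrt d \le \varepsilon \le 1/d^c$---the \HighEnd{} data structure occupies $O(\alpha^{-1}\log^2 d)$ bits, and each update touches all $T$ copies, for $O(\log d)$ update time. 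Conditioned on $L$ being good, Theorem~11 of \cite{fast-moment-estimation-optimal-space} then yields a value $\Psi$ with $|\Psi - \lp{x_L}^p| \le O(\varepsilon)\lp{x}^p$ with probability $\ge 7/8 \ge 65/100$, which gives item~2.

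The overall space and time bounds follow by adding the two sub-routines: both use $O(\alpha^{-1}\log^2 d)$ bits---the hash tables for \HighEnd{} being a lower-order term compared to $\alpha^{-1}\log^2 d$---and both have $O(\log d)$ update time. For the final amplification I would run $O(1)$ independent copies of the \HighEnd{} sub-routine, all reading the same $L$, and output the median of their estimates $\Psi^{(1)},\dots,\Psi^{(O(1))}$. Conditioned on $L$ being good, each $\Psi^{(j)}$ lies within $O(\varepsilon)\lp{x}^p$ of $\lp{x_L}^p$ independently with probability $\ge 65/100 > 1/2$, so a Chernoff bound on the count of ``good'' copies shows the median is good with probability $\ge 99/100$; a union bound with the event ``$L$ is good'' (probability $\ge 9/10$) shows both hold with probability $\ge (9/10)(99/100) \ge 8/10$.

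As for where the real work lies: for this lemma itself the argument is essentially bookkeeping, since the two substantive ingredients---the turnstile heavy-hitters-with-signs routine producing $L$, and the correctness of \HighEnd{}---are deferred to Appendix~\ref{sec:heavy-entries} and to Theorem~11 of \cite{fast-moment-estimation-optimal-space}, respectively. The step that genuinely requires care is the space accounting for the hash functions: one must verify that the $T=O(\log d)$ hash pairs of \HighEnd{} together with the storage used by the $L$-computation all stay within $O(\alpha^{-1}\log^2 d)$ bits, which is precisely where the hypothesis $\varepsilon\le 1/d^c$ is used, and likewise the $\Theta(\log d)$-bit precision for the complex counters must be justified from $1/\sqrt d \le \varepsilon \le 1/d^c$ and $m,M\le\poly(d)$.
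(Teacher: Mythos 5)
Your proposal is correct and follows the same structure as the paper's argument, which is not given as a self-contained proof but is assembled from the discussion preceding the lemma: item~1 is delegated to Lemma~\ref{lma:heavy-hitters} in Appendix~\ref{sec:heavy-entries} (CountSketch plus ExpanderSketch for identities and signs of the $\ell_p$-heavy coordinates), item~2 is Theorem~11 of \cite{fast-moment-estimation-optimal-space} applied to the \HighEnd{} structure instantiated with hash functions from Theorem~\ref{thm:fast-hash-functions}, and the space/time bookkeeping follows as you describe. Your amplification and combination step is fine: with failure probability per copy at most $35/100 < 1/2$, a constant number of independent \HighEnd{} instances suffices for the median to be within the error bound with probability $\ge 99/100$, and combining this via the chain rule (or a union bound, as the paper phrases it) with the $9/10$ success probability of the $L$-computation gives $\ge 8/10$.

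One small point worth flagging, more as a caveat about the lemma's statement than about your proof: the hash-function storage for \HighEnd{} is $o(\varepsilon^{-2})$ bits, which is dominated by $O(\alpha^{-1}\log^2 d)$ only when $\alpha^{-1}$ is not too much smaller than $\varepsilon^{-2}$. For $\alpha$ as small as $\Theta(1)$ (which the hypothesis $1/\alpha = O(1/\varepsilon^2)$ permits) the hash storage could exceed $\alpha^{-1}\log^2 d$. In the regime where the lemma is actually used ($\alpha = \varepsilon^2\log d$) this is not an issue, and your proof handles that regime correctly; you are right that this space accounting is where the real care is needed.
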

Note that for $\alpha = \varepsilon^2\log(d)$, the algorithm uses $O(\varepsilon^{-2}\log d)$ \emph{bits} of space. For this setting of $\alpha$, we will now use the \LightEstimator{} data structure of \cite{fast-moment-estimation-optimal-space} to estimate $\lp{x_{[d] \setminus L}}^p$. We will now describe the \LightEstimator{} data structure and how it can be derandomized using \FastPRG.

\subsection{The \LightEstimator{} Data Structure}
\begin{algorithm}
\KwIn{A parameter $p \in (0,2)$, accuracy parameter $\varepsilon$ such that $1/\sqrt{d} \le \varepsilon \le 1/d^c$ for a small enough constant $c$, a parameter $\alpha$ such that $1/\alpha = O(1/\varepsilon^2)$, a stream of updates $(i_1, v_1), \ldots, (i_m, v_m) \in [d] \times \set{-M, \ldots, M}$, a list $L \subseteq [d]$ of heavy coordinates revealed at the end of the stream}
\KwOut{An estimate of $\lp{x_{[d] \setminus L}}^p$}
$s \gets O(1/\alpha)$\;
$\bh \gets $ A hash function sampled from the construction in Theorem~\ref{thm:pagh-and-pagh}\;
For $b \in [s]$ and $[j] \in [3]$, initialize $\bB_{b,j} \gets 0$\;
For $i \in [d]$ and $[j] \in [3]$, let $\bA_{i,j}$ be an independent $p$-stable random variable\;
\For{$j=1,\ldots,m$}{
    $\bB_{\bh(i_j), 1} \gets \bB_{\bh(i_j), 1} + \bA_{i_j, 1} v_j$\;
    $\bB_{\bh(i_j), 2} \gets \bB_{\bh(i_j), 2} + \bA_{i_j, 2} v_j$\;
    $\bB_{\bh(i_j), 3} \gets \bB_{\bh(i_j), 3} + \bA_{i_j, 3} v_j$\;
}
\tcp{The set $L$ is revealed to the algorithm}
\For{$b = 1,\ldots,s$}{
    \eIf{$b \notin \bh(L)$}{
    $\Est(b) \gets \frac{(|\bB_{b,1}| \cdot |\bB_{b,2}| \cdot |\bB_{b,3}|)^{p/3}}{((2/\pi) \Gamma(p/3) \Gamma(2/3) \sin(\pi p/6))^3}$\;
    }{
        $\Est(j) \gets 0$\;
    }
}
$\Phi \gets \frac{s}{s-|\bh(L)|}\sum_{b=1}^s \Est(b)$\;
\Return{$\Phi$\;}
\caption{\LightEstimator{} using Independent $p$-stable random variables}
\label{alg:light-estimator-with-independent-lp}
\end{algorithm}
As seen previously, the \HighEnd{} data structure lets us compute the $F_p$ moment of all the elements from $L$. We use the \LightEstimator{} data structure to approximate the $F_p$ moment of all the \emph{light} elements i.e., the coordinates of $x$ not in $L$. 

Assume that at the end of processing the stream we are given a set $L \subseteq [d]$, $|L| \le 2/\alpha$ and for all $i \notin L$, $|x_i|^p < \alpha\lp{x}^p$. 

Let $s = \Theta(1/\alpha) \ge 10|L|$ be a large enough power of $2$. For $i \in [d]$ and $j \in [3]$, let $\bA_{i,j}$ denote an independent $p$-stable random variable. Let $\bh : [d] \rightarrow [s]$ be a hash function drawn using the construction of Theorem~\ref{thm:pagh-and-pagh} with parameters $z = \ceil{2/\alpha+2}$ and a large enough constant $C$. For $i \in [s]$ and $j \in [3]$, initialize the counters $\bB_{i,j} = 0$. On receiving an update $(i,v)$ in the stream, for each $j \in [3]$, update $\bB_{\bh(i), j} \gets \bB_{\bh(i), j} + \bA_{i,j} \cdot v$. As argued in \cite{fast-moment-estimation-optimal-space}, we need to store the values $\bB_{i,j}$ only up to a precision of $\Theta(\log d)$ bits. Hence the space complexity of the \LightEstimator{} data structure (excluding the space required to store/generate $\bA_{i,j}$) is $O(\alpha^{-1}\log d)$ bits.

At the end of the stream, we receive the set $L \subseteq [d]$ of heavy hitters as specified in the previous section and we will then compute an estimator for $\lp{x_{[d] \setminus L}}^p$ as follows: for each $b \in [s] \setminus \bh(L)$, define 
\begin{align*}
 \Est(b) := \frac{(|\bB_{b,1}| \cdot |\bB_{b,2}| \cdot |\bB_{b,3}|)^{p/3}}{\left(\frac{2}{\pi} \Gamma\left({p}/{3}\right) \Gamma\left(2/3\right) \sin \left({\pi p}/{6}\right)\right)^3}.   
\end{align*}

Let $\bh(L) = \setbuilder{\bh(i)}{i \in L}$ denote the buckets to which the elements of $L$ are hashed into. Define the following estimator
\begin{align*}
	\Phi = \frac{s}{s - |\bh(L)|} \sum_{b \in [s] \setminus \bh(L)}\Est(b).
\end{align*}
It is shown in \cite{fast-manhattan} for $p=1$ and extended to all $0 < p < 2$ in \cite{fast-moment-estimation-optimal-space} that
\begin{align}
	\E_{\bh, \bA}[\Phi] = (1 \pm \alpha^{10})\lp{x_{[d] \setminus L}}^p.
	\label{eqn:mean-of-the-light-estimator}
\end{align}
We extend their analysis and show an upper bound on $\Var_{\bh, \bA}[\Phi]$.
\begin{remark}
Lemma~7 in \cite{fast-manhattan} and Theorem~15 in \cite{fast-moment-estimation-optimal-space} state that $\E_{\bh,\bA}[\Phi] = (1 \pm O(\varepsilon))\lp{x_{[d] \setminus L}}^p$ for $p=1$ and $0 < p < 2$ respectively. It can be seen from the proof of Lemma~7 in \cite{fast-manhattan}, we can obtain the above stronger result by just picking $C$ large enough when constructing the hash family $\calH$ using the construction in Theorem~\ref{thm:pagh-and-pagh}. A similar argument works to extend for $0 < p < 2$.
\end{remark}
\subsubsection{Analysis of the Estimator}
\begin{claim}
\begin{align*}
    \Var_{\bh, \bA}[\Phi]  \le O(\alpha)\lp{x}^{2p}.
\end{align*}
\label{claim:variance-of-light-estimator-full-independence}
\end{claim}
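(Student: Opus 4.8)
The plan is to apply the law of total variance, $\Var[\Phi] = \E_{\bh}[\Var[\Phi\mid\bh]] + \Var_{\bh}[\E[\Phi\mid\bh]]$, treating the set $L$ (and its three properties) as fixed, since it is produced by the independent \HighEnd{}/heavy-hitter machinery. Write $\lambda := s/(s-|\bh(L)|)$; since $|\bh(L)| \le |L| \le s/10$ we have $\lambda \in [1,10/9]$, so this random rescaling only costs a constant. For a bucket $b\in[s]$ let $y^{(b)}$ be the restriction of $x$ to $\{i:\bh(i)=b\}$, so $\bB_{b,j} = \sum_{i:\bh(i)=b}\bA_{i,j}x_i$. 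Li's analysis of the estimator \eqref{eqn:lis-estimator} applied to $y^{(b)}$ (with the three independent $p$-stable vectors $(\bA_{i,j})_{i:\bh(i)=b}$, $j\in[3]$) gives $\E[\Est(b)\mid\bh] = \lp{y^{(b)}}^p$ and $\Var[\Est(b)\mid\bh] = O(\lp{y^{(b)}}^{2p})$ (finite, as only $\tfrac{2p}{3}$-th moments of $p$-stable variables are used). The key structural point is that for distinct $b,b'$ the estimators $\Est(b),\Est(b')$ are functions of disjoint subsets of the independent family $\{\bA_{i,j}\}$, hence conditionally independent given $\bh$; together with $\Est(b)=0$ for $b\in\bh(L)$ this yields
\[
\Var[\Phi\mid\bh] = \lambda^2\!\!\sum_{b\notin\bh(L)}\!\!\Var[\Est(b)\mid\bh] = O\!\Big(\sum_{b\notin\bh(L)}\lp{y^{(b)}}^{2p}\Big) \quad\text{and}\quad \E[\Phi\mid\bh] = \lambda\!\!\sum_{b\notin\bh(L)}\!\!\lp{y^{(b)}}^p .
\]

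For the first term, expand $\sum_{b\notin\bh(L)}\lp{y^{(b)}}^{2p} = \sum_{i,i'\notin L,\ \bh(i)=\bh(i')\notin\bh(L)}|x_i|^p|x_{i'}|^p$ and split into diagonal and off-diagonal. The diagonal is at most $\sum_{i\notin L}|x_i|^{2p}\le \alpha\lp{x}^p\sum_{i\notin L}|x_i|^p\le\alpha\lp{x}^{2p}$, using that every $i\notin L$ is light. For the off-diagonal, take expectation over $\bh$: since $\bh$ behaves (up to its $O(1/z^C)$ failure probability, which with $C$ a large enough constant and $z=\Theta(1/\alpha)$ contributes negligibly) as a uniform hash on every set of size at most $z\ge|L|+2$, we get $\Pr_{\bh}[\bh(i)=\bh(i')]\le 1/s+O(1/z^C)=O(1/s)$ for $i\ne i'$, so the off-diagonal contributes $O(1/s)\sum_{i\ne i'}|x_i|^p|x_{i'}|^p\le O(1/s)\lp{x}^{2p}=O(\alpha)\lp{x}^{2p}$. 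Hence $\E_{\bh}[\Var[\Phi\mid\bh]] = O(\alpha)\lp{x}^{2p}$.

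For the second term I would further condition on $\bh|_{L}$, which freezes $\bh(L)$, $|\bh(L)|$ and $\lambda$. Writing $W := \sum_{i\notin L}|x_i|^p[\bh(i)\in\bh(L)]$ for the light mass colliding with buckets occupied by $L$, we have $\E[\Phi\mid\bh] = \lambda(\lp{x_{[d]\setminus L}}^p - W)$. Conditioned on $\bh|_{L}$, the indicators $[\bh(i)\in\bh(L)]$ over light $i$ are pairwise independent (uniformity of $\bh$ on sets of size $|L|+2\le z$, up to negligible error), each with mean $|\bh(L)|/s$; thus $\E[W\mid\bh|_{L}] = (|\bh(L)|/s)\lp{x_{[d]\setminus L}}^p$, and since $\lambda = s/(s-|\bh(L)|)$ this forces $\E[\Phi\mid\bh|_{L}] = \lp{x_{[d]\setminus L}}^p$ up to negligible error — the conditional version of the unbiasedness recorded in \eqref{eqn:mean-of-the-light-estimator} — so $\Var_{\bh|_{L}}[\E[\Phi\mid\bh|_{L}]]$ is negligible. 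Moreover
\[
\Var[W\mid\bh|_{L}] = \sum_{i\notin L}|x_i|^{2p}\,\frac{|\bh(L)|}{s}\Big(1-\frac{|\bh(L)|}{s}\Big) \le \frac{|L|}{s}\sum_{i\notin L}|x_i|^{2p} \le \frac{|L|}{s}\,\alpha\lp{x}^{2p} = O(\alpha)\lp{x}^{2p},
\]
using $|L|/s\le 1/10$ and lightness. So by the nested law of total variance $\Var_{\bh}[\E[\Phi\mid\bh]] = \E_{\bh|_{L}}[\lambda^2\Var[W\mid\bh|_{L}]] + \Var_{\bh|_{L}}[\E[\Phi\mid\bh|_{L}]] = O(\alpha)\lp{x}^{2p}$, and adding the two terms gives $\Var_{\bh,\bA}[\Phi] = O(\alpha)\lp{x}^{2p}$.

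I expect the main obstacle to be bookkeeping the imperfect hash family: every collision and pairwise-independence statement above only holds for $\bh$ up to the $O(1/z^C)$ failure probability of Theorem~\ref{thm:pagh-and-pagh}, and one must check (as in \cite{fast-manhattan,fast-moment-estimation-optimal-space}) that the accumulated error, weighted by $|x_i|^p|x_{i'}|^p$ and summed over pairs, is $O(1/z^C)\lp{x}^{2p}=O(\alpha^C)\lp{x}^{2p}\ll\alpha\lp{x}^{2p}$ once $C$ is chosen sufficiently large in the construction of $\calH$. A secondary nuisance, the randomness of the normalizer $s-|\bh(L)|$, is exactly why it is convenient to condition on $\bh|_{L}$ before the inner variance computation, so that $\lambda$ is a constant there.
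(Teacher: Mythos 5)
Your proof is correct, and it takes a genuinely different route from the paper. The paper computes $\E_{\bh,\bA}[\Phi^2]$ directly by splitting into the "diagonal" sum $\sum_b \E_{\bA\mid\bh}[\Est(b)^2]$ and the "cross" sum $\sum_{b\ne b'}\E_{\bA\mid\bh}[\Est(b)]\E_{\bA\mid\bh}[\Est(b')]$, bounds each using the imperfect-hash-family properties of Theorem~\ref{thm:pagh-and-pagh}, and then subtracts $(\E[\Phi])^2 = (1\pm\alpha^{10})^2\lp{x_{[d]\setminus L}}^{2p}$. The delicate point in the paper's route is that the cross term is itself $\Theta(\lp{x_{[d]\setminus L}}^{2p})$ — far larger than the desired $O(\alpha)\lp{x}^{2p}$ — so one must track its value to within a $1+O(\rho+1/s)$ multiplicative factor and rely on near-exact cancellation against $(\E[\Phi])^2$. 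Your law-of-total-variance decomposition $\Var[\Phi]=\E_{\bh}[\Var_{\bA}[\Phi\mid\bh]]+\Var_{\bh}[\E_{\bA}[\Phi\mid\bh]]$ sidesteps this cancellation entirely: the first term only ever sees the conditional variances (so the cross term never appears at the large scale), and the second term is the variance of $\lambda(\lp{x_{[d]\setminus L}}^p-W)$, which is intrinsically small because $W$ is a low-variance sum of bounded light masses. This is a cleaner organization of the same underlying facts — Li's moment formulas, conditional independence of $\Est(b)$ across buckets given $\bh$, the diagonal/off-diagonal split of $\sum_b\lp{y^{(b)}}^{2p}$, and lightness of $[d]\setminus L$ — and both routes rely on the same $O(1/z^C)$ hash-family bookkeeping, which you correctly flag as the remaining detail to fill in. One small remark: your nested conditioning on $\bh\vert_L$, which freezes the random normalizer $\lambda$, is a nice way to handle the issue that the paper handles implicitly inside its expectation over $\bh$; without it, $\E[\Phi\mid\bh]$ is a product of two $\bh$-dependent quantities and the variance computation would be messier.
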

\begin{proof}
For simplicity, we define ${\Est}(b) = 0$ for all $b \in \bh(L)$. We have
\begin{align*}
	\E_{\bh, \bA}[\Phi^2]
 &= \E_{\bh, \bA}\left[\left(\frac{s}{s-|\bh(L)|}\sum_{b=1}^s {\Est}(b)\right)^2\right]\\
	&= \E_{\bh}\left[\left( \frac{s}{s-|\bh(L)|}\right)^2\left(\sum_{b=1}^s \E_{\bA\mid \bh} ({\Est}(b))^2 + \qquad \sum_{b \ne b'}\E_{\bA|\bh}({\Est}(b))({\Est}(b'))\right)\right]
\end{align*}
Now, for $b \notin \bh(L)$, using the variance and mean of Li's estimator \eqref{eqn:lis-estimator}, 
\begin{align}
	\E_{\bA\mid \bh}({\Est}(b))^2 &= \Var_{\bA\mid \bh}({\Est}(b)) + (\E_{\bA \mid \bh} {\Est}(b))^2 = O(1)\left(\sum_{j:\bh(j) = b}|x_j|^p\right)^2
\end{align}
and as $p$-stable random variables denoted by $\bA$ are independent, for $b \ne b'$ with $b, b' \notin \bh(L)$,
\begin{align}
    \E_{\bA \mid \bh} ({\Est}(b))({\Est}(b')) &= \E_{\bA \mid \bh}({\Est}(b)) \E_{\bA \mid \bh}({\Est}(b'))\nonumber\\
    &=\sum_{j: \bh(j) = b} |x_j|^p\sum_{j: \bh(j) = b'}|x_j|^p.
\end{align}
Hence,
\begin{align*}
    \E_{\bh, \bA}[\Phi^2]
    &= \E_{\bh} \left[\left(\frac{s}{s- |\bh(L)|}\right)^2\left( O(1)\sum_{b \notin \bh(L)}(\sum_{i : \bh(i) = b}|x_i|^p)^2  + \sum_{b \ne b' : b,b' \notin \bh(L)}(\sum_{i:\bh(i) = b}|x_i|^p)(\sum_{i:\bh(i) = b'}|x_i|^p)\right)\right]\\
    &\le \E_{\bh} \left[\left(\frac{s}{s-|\bh(L)|}\right)^2 \left(O(1)\sum_{b \notin \bh(L)}(\sum_{i : \bh(i) = b}|x_i|^p)^2  + \left(\sum_{i: \bh(i) \notin \bh(L)}|x_i|^p\right)^2\right)\right]\\
    &\le O(1)\E_{\bh}\left[\sum_{b \notin \bh(L)}(\sum_{i:\bh(i) = b}|x_i|^p)^2\right] + \E_{\bh} \left[\sum_{i\ne i'}\left(\frac{s}{s-|\bh(L)|}\right)^2\bone[\bh(i), \bh(i') \notin \bh(L)]|x_i|^p|x_{i'}|^p\right]
\end{align*}
where the last inequality follows from the fact that $|\bh(L)| \le |L| \le s/10$. We first bound the second term. For any $i, i' \notin L$, with probability $1 - \rho$, the hash function $\bh$ is drawn from a hash family that is $|L|+2$-wise independent when restricted to the set $L \cup \set{i,i'}$ when restricted to $L \cup \set{i,i'}$. We can make $\rho \le \varepsilon^C$ for any constant $C$ by setting $C$ large enough while sampling $\bh$ from the hash family in Theorem~\ref{thm:pagh-and-pagh}. Let the event that the hash family is $|L|+2$-wise independent with respect to $L \cup \set{i,i'}$ be called $\textsf{Good}$. Conditioned on this event and the size $|\bh(L)|$,
\begin{align}
    \E_{\bh \mid \textsf{Good}, |\bh(L)|}[\bone [\bh(i), \bh(i') \notin \bh(L)]] = \left(\frac{s - |\bh(L)|}{s}\right)^2
    \label{eqn:conditioned-on-goodness-and-size}
\end{align}
which gives
\begin{align}
  &\E_{\bh} \left[\sum_{i\ne i'}\left(\frac{s}{s-|\bh(L)|}\right)^2\bone[\bh(i), \bh(i') \notin \bh(L)]|x_i|^p|x_{i'}|^p\right]\nonumber\\ 
  &\le \E_{\bh \mid \textsf{Good}} \left[\sum_{i\ne i'}\left(\frac{s}{s-|\bh(L)|}\right)^2\bone[\bh(i), \bh(i') \notin \bh(L)]|x_i|^p|x_{i'}|^p\right]\nonumber\\
  &\quad + \E_{\bh \mid \lnot \textsf{Good}} \left[\sum_{i\ne i'}\left(\frac{s}{s-|\bh(L)|}\right)^2\bone[\bh(i), \bh(i') \notin \bh(L)]|x_i|^p|x_{i'}|^p\right] \times \Pr_{\bh}[\lnot \textsf{Good}]\nonumber\\
  &\le \lp{x_{[d] \setminus L}}^{2p} + 2\lp{x_{[d] \setminus L}}^{2p}\rho = (1 + 2\rho)\lp{x_{[d] \setminus L}}^{2p}. \label{eqn:variance-bound-second-term}
\end{align}
Here we used \eqref{eqn:conditioned-on-goodness-and-size} to cancel out the $s^2/(s - |\bh(L)|)^2$ factor in the expectation and we used that $s/(s - |\bh(L)|) \le 10/9$ with probability $1$ as $|\bh(L)| \le |L| \le s/10$. Thus
\begin{align}
    \E_{\bh,\bA}[\Phi^2] &\le  O(1)\E_{\bh}\left[\sum_{b \notin h(L)}(\sum_{i:h(i) = b}|x_j|^p)^2\right]\nonumber\\
    &\quad + (1 + 2\rho)\|x_{[d] \setminus L}\|_p^{2p}.
\end{align}
Now, we bound the first term.
\begin{align*}
    \E_{\bh}\left[\sum_{b \notin \bh(L)}\left(\sum_{i:\bh(i)=b}|x_j|^p\right)^2\right]
    &= \E_{\bh}\left[\sum_{b \notin \bh(L)}\sum_{i:\bh(i) = b}|x_i|^{2p}+ \sum_{b \notin \bh(L)}\sum_{i \ne i': \bh(i)=\bh(i') = b}|x_i|^{p}|x_{i'}|^p\right]\\
    &\le \|x_{[d] \setminus L}\|_{2p}^{2p} + \E_{\bh} \left[\sum_{b} \sum_{i \ne i'}\bone[b \notin \bh(L), \bh(i) = \bh(i') = b]|x_i|^p|x_{i'}|^p\right].
\end{align*}
Again, for $i, i' \notin L$, with probability $1 - \rho$, the hash function $\bh$ is drawn from a hash family that is $|L|+2$-wise independent on the set $L \cup \set{i, i'}$. Conditioning on that event, for any $b \in [s]$,
\begin{align*}
    \Pr[\bone[b \notin \bh(L), \bh(j) = b, \bh(j') = b]] \le O(1/s^2).
\end{align*}
Hence using an argument similar to the one in proving \eqref{eqn:variance-bound-second-term}, we have
\begin{align*}
    &\E_{\bh}\left[\sum_{b \notin \bh(L)}\left(\sum_{j:\bh(j)=b}|x_j|^p\right)^2\right]\\
    &\quad = \|x_{[d] \setminus L}\|_{2p}^{2p} + O(1/s + \rho )\|x_{[d] \setminus L}\|_p^{2p}.
\end{align*}
Therefore
\begin{align*}
   \Var_{\bh,\bA}[\Phi] &\le O(1)\|x_{[d] \setminus L}\|_{2p}^{2p} + O(1/s + \rho)\|x_{[d] \setminus L}\|_p^{2p} + (1 + O(\rho))\lp{x_{[d] \setminus L}}^{2p} - (1 - \alpha^{10})^2\|x_{[d] \setminus L}\|_p^{2p}\\
    &\le O(1)\|x_{[d] \setminus L}\|_{2p}^{2p}
    + O(1/s + \rho)\|x_{[d] \setminus L}\|_p^{2p} + O(\rho + \alpha^{10})\|x_{[d] \setminus L}\|_p^{2p}.
\end{align*}
Now, $\|x_{[d] \setminus L}\|_{2p}^{2p} \le \alpha \|x\|_p^p\lp{x_{[d] \setminus L}}^p$ using the fact that all coordinates in  $x_{[d] \setminus L}$ have $p$th power at most $\alpha\lp{x}^p$. As $\rho \le \varepsilon^C$ and $s = \Theta(1/\alpha)$,
\begin{align*}
    \Var_{\bh,\bA}[\Phi] \le O(\alpha)\lp{x}^{2p}. &\qedhere
\end{align*}
\end{proof}
Let $\Phi_1,\ldots,\Phi_T$ be $T$ independent copies of the estimator $\Phi$ for $T = \Theta(\log(1/\varepsilon)) = \Theta(\log d)$. As $\alpha = \varepsilon^2 \log (d)$, $\Var[(\Phi_1 + \cdots + \Phi_T)/T] \le O(\varepsilon^2 \log(d)/T)\lp{x}^{2p} \le (\varepsilon^2/100)\lp{x}^{2p}$. By Chebyshev inequality, with probability $\ge 99/100$,    $\bar{\Phi} := \frac{\Phi_1 + \cdots + \Phi_T}{T}$ lies in the interval
\begin{align*}
   [(1-\alpha^{10})\lp{x_{[d] \setminus L}}^p - (\varepsilon/10)\lp{x}^p, (1+\alpha^{10})\lp{x_{[d] \setminus L}}^p + (\varepsilon/10)\lp{x}^p].
\end{align*}
Hence, using $O(\log d)$ independent instantiations of the \LightEstimator{} data structure, we can obtain an estimate $\bar{\Phi}$ for $\lp{x_{[d] \setminus L}}$ so that with probability $\ge 7/10$, $\Psi + \bar{\Phi} \in [(1 - O(\varepsilon))\lp{x}^p, (1 + O(\varepsilon))\lp{x}^p]$. Throughout the analysis we assumed that the $p$-stable random variables $\bA_{i,j}$ are independent. To implement the algorithm in sub-linear space, we derandomize $p$-stable random variables using \FastPRG.
\subsubsection{Derandomizing \texorpdfstring{$p$}{p}-stable random variables using \FastPRG}
\begin{lemma}
Given $p \in (0,2)$, an accuracy parameter $\varepsilon$ such that $1/\sqrt{d} \le \varepsilon \le 1/d^{c}$ for a constant $c$, a parameter $\alpha$ such that $1/\alpha \le O(1/\varepsilon^2)$ and a stream of updates $(i_1,v_1),\ldots(i_m, v_m) \in [d] \times \set{-M, \ldots, M}$ for $m, M \le \poly(d)$ to the vector $x$, there is a streaming algorithm that uses $O(\alpha^{-1}\log d)$ bits of space and an update time of $O(\log d)$ per stream element in the Word RAM model. At the end of processing the stream, the algorithm takes in a set $L \subseteq [d]$ of heavy coordinates satisfying for all $i \notin L$, $|x_i|^p \le \alpha \lp{x}^p$ and outputs a value $\Phi$ satisfying
\begin{align*}
    \E[\Phi] = (1 \pm \varepsilon^{C})\lp{x_{[d] \setminus L}}^p \pm \frac{1}{\poly(d)}.
\end{align*}
and
\begin{align*}
    \Var[\Phi] \le O(\alpha)\lp{x}^p.
\end{align*}
\label{lma:light-estimator-derandomized}
\end{lemma}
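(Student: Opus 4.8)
The plan is to leave the hash function $\bh$ (drawn from Theorem~\ref{thm:pagh-and-pagh}) stored explicitly, exactly as in Algorithm~\ref{alg:light-estimator-with-independent-lp}, and to derandomize \emph{only} the $p$-stable variables: sample a string $\bgamma$ from $\FastPRG$ with block length $n = C'\log d$ for a large enough constant $C'$, and parameters $b,k$ chosen so that $\bgamma$ has $b^k \ge d$ blocks and the seed length $O(bkn)$ fits in $O(\alpha^{-1}\log d)$ bits while any block is computable in $O(\log d)$ time (e.g.\ $b=2$, $k=\lceil\log_2 d\rceil$; a larger $b$ and a constant $k$ gives $O(1)$-time block evaluation when $\alpha^{-1}$ is polynomially large). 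The block of $\bgamma$ indexed by a coordinate $i$ is used, in a fixed way, to produce the three truncated, $\Theta(\log d)$-bit-precision $p$-stable samples $\bA_{i,1},\bA_{i,2},\bA_{i,3}$ that Algorithm~\ref{alg:light-estimator-with-independent-lp} uses for coordinate $i$. As noted in \cite{fast-moment-estimation-optimal-space}, storing the counters $\bB_{b,j}$ and these samples to $\Theta(\log d)$ bits perturbs $\E[\Phi]$ and $\E[\Phi^2]$ by at most $1/\poly(d)$, so \eqref{eqn:mean-of-the-light-estimator} and Claim~\ref{claim:variance-of-light-estimator-full-independence} remain valid (up to an extra $\pm 1/\poly(d)$) when the $\bA_{i,j}$ are obtained by decoding a \emph{uniformly random} string in this way; I will compare $\FastPRG$ against this uniform-string baseline.

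The core of the argument is that, for a fixed vector $x$, a fixed revealed set $L$, a fixed realization of $\bh$, and any two buckets $b,b'\in[s]$, the quantities $\widetilde{\Est}(b)$, $\widetilde{\Est}(b)^2$, and $\widetilde{\Est}(b)\widetilde{\Est}(b')$ are each computable by a finite state machine over the alphabet $\set{0,1}^n$ with only $\poly(d)$ states making a single pass over $\bgamma$. Here $\widetilde{\Est}(b)$ is Li's estimator computed from $\bB_{b,1},\bB_{b,2},\bB_{b,3}$ \emph{regardless} of whether $b\in\bh(L)$, so that $\Est(b) = \widetilde{\Est}(b)\cdot\bone[b\notin\bh(L)]$ and both this indicator and the factor $s/(s-|\bh(L)|)$ depend on $\bh$ alone. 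The FSM for $\widetilde{\Est}(b)$ carries a counter $t\in\set{0,\ldots,d}$ of processed coordinates together with three running partial sums, each of $\Theta(\log d)$ bits since every summand and partial sum has magnitude $\poly(d)$; on reading the block for coordinate $t+1$ it decodes $\bA_{t+1,1},\bA_{t+1,2},\bA_{t+1,3}$, adds $\bA_{t+1,j}x_{t+1}$ to the $j$-th partial sum whenever $\bh(t+1)=b$ (the truth table of $\bh$ and the entries of $x$ are hard-wired into the transition function, which is legitimate since this FSM exists only for the analysis), and increments $t$; the machines for $\widetilde{\Est}(b)^2$ and $\widetilde{\Est}(b)\widetilde{\Est}(b')$ are analogous with three and six partial sums respectively. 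Crucially, no single FSM tracks all $s$ buckets at once --- that would need $\Omega(s\log d)$ bits of state, not $\poly(d)$ --- and the symmetry property of $\FastPRG$ is not needed, since blocks laid out in coordinate order already admit a single-pass FSM with a step counter. Since these machines have $\poly(d)\le 2^{cn}$ states and $b^k\le 2^{cn}$ once $C'$ is large enough, Theorem~\ref{thm:nisan-main-theorem} gives that the distribution of the final state under $\bgamma\sim\FastPRG$ is within total variation $O(2^{-cn}) = 1/\poly(d)$ of that under a uniform string; as $\widetilde{\Est}(b),\widetilde{\Est}(b)^2,\widetilde{\Est}(b)\widetilde{\Est}(b')\le\poly(d)$, the corresponding conditional expectations given $\bh$ under $\FastPRG$ and under a uniform string agree up to $1/\poly(d)$.

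To finish, I assemble these per-bucket/per-pair estimates by linearity of expectation. Conditioned on $\bh$, $\Phi = \frac{s}{s-|\bh(L)|}\sum_{b\notin\bh(L)}\widetilde{\Est}(b)$ is a sum of at most $s = O(1/\alpha) = \poly(d)$ terms with a $\bh$-only coefficient in $[1,10/9]$, and $\Phi^2$ expands into at most $s^2 = \poly(d)$ products $\widetilde{\Est}(b)\widetilde{\Est}(b')$ (including the $b=b'$ squares) with $\bh$-only coefficients at most $(10/9)^2$; choosing $C'$ so that $2^{-cn}$ beats $s^2$ times the $\poly(d)$ magnitude of these products,
\begin{align*}
\E_{\FastPRG}[\Phi\mid\bh] &= \E_{\mathrm{unif}}[\Phi\mid\bh]\pm 1/\poly(d),\\
\E_{\FastPRG}[\Phi^2\mid\bh] &= \E_{\mathrm{unif}}[\Phi^2\mid\bh]\pm 1/\poly(d).
\end{align*}
Taking expectations over $\bh$ and invoking \eqref{eqn:mean-of-the-light-estimator} and Claim~\ref{claim:variance-of-light-estimator-full-independence} (each with an extra $\pm 1/\poly(d)$ from discretization) yields $\E[\Phi] = (1\pm\alpha^{10})\lp{x_{[d]\setminus L}}^p\pm 1/\poly(d)$, which matches the claimed form after absorbing $\alpha^{10}\le\varepsilon^{C}$; and, writing $\Var[\Phi] = \E[\Phi^2]-\E[\Phi]^2$ and using $\lp{x}^{2p}\ge 1$ for $x\ne 0$ together with $\alpha = \Omega(\varepsilon^2) = \Omega(1/d)$ to absorb the $1/\poly(d)$ term, $\Var[\Phi] \le \Var_{\mathrm{unif}}[\Phi] + 1/\poly(d) \le O(\alpha)\lp{x}^{2p}$, the bound of Claim~\ref{claim:variance-of-light-estimator-full-independence}.

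The step I expect to be the main obstacle is keeping the analysis FSM at $\poly(d)$ states: this is what forces the $\Theta(\log d)$-bit truncation of the $p$-stable samples and counters, the restriction to one bucket (or one pair) per FSM rather than the whole estimator, and the appeal to \cite{fast-moment-estimation-optimal-space} that truncation perturbs the mean and variance by only $1/\poly(d)$. A secondary subtlety is the error bookkeeping: since $\Phi^2$ is a sum of $\poly(d)$ products each of magnitude $\poly(d)$, the block length $n$ must be chosen large enough that $\FastPRG$'s $2^{-cn}$ total-variation error overcomes this by a further polynomial factor, and one must check that the resulting $1/\poly(d)$ slack is dominated by $\alpha\lp{x}^{2p}$ in the variance bound and by $\varepsilon^C\lp{x_{[d]\setminus L}}^p$ (plus the explicitly allowed $1/\poly(d)$) in the mean bound.
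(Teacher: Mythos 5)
Your proposal is correct and takes essentially the same approach as the paper: keep $\bh$ explicit, derandomize only the $p$-stable samples with $\FastPRG$, use one small FSM per bucket to fool $\E[\Est(b)]$, and one per bucket pair to fool $\E[\Est(b)\Est(b')]$, then assemble by linearity conditioned on $\bh$ and absorb the $1/\poly(d)$ slack. The only substantive difference is that you make the $\Theta(\log d)$-bit truncation and the $s^2\cdot\poly(d)$ error bookkeeping explicit, which the paper glosses over; that is a small expository improvement, not a different route. (Also note the stated bound should read $O(\alpha)\lp{x}^{2p}$, as you correctly derive and as Claim~\ref{claim:variance-of-light-estimator-full-independence} gives.)
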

The algorithm in above lemma is given by a modified version of Algorithm~\ref{alg:light-estimator-with-independent-lp}. Instead of using independent $p$-stable random variables $\bA_{i,j}$, the algorithm uses \FastPRG to obtain a pseudorandom string and uses the pseudorandom bits to compute the $p$-stable random variables.
\begin{proof}
Consider one instance of \LightEstimator{} data structure as in Algorithm~\ref{alg:light-estimator-with-independent-lp}. It consists of the following objects:
\begin{enumerate}
    \item A hash function $\bh : [d] \rightarrow [s]$ for $s = O(1/\alpha)$ drawn from a hash family as stated in Theorem~\ref{thm:pagh-and-pagh} with parameter $z = O(1/\alpha)$ and $C$ being a large enough constant.
    \item There is a table of $s = O(1/\alpha)$ counters each maintained with a precision of $O(\log d)$ bits.
\item The $p$-stable random variables $\bA_{i,j}$ for $i \in [d]$ and $j = 1,2,3$.
\end{enumerate}
The hash function $\bh$ can be stored using $O(\alpha^{-1}\log d)$ bits by Theorem~\ref{thm:pagh-and-pagh}. The counters can be maintained using $O(\alpha^{-1} \log d)$ bits as well. So, we are left with derandomizing the $p$-stable random variables.

The algorithm overall needs $O(d\log d)$ uniform random bits to generate three $p$-stable random variables for each of the coordinates. We use \FastPRG to obtain the pseudorandom bits and use them to generate $p$-stable random variables. We critically use the fact that our analysis of the estimator $\Phi$ as described in the previous section needs to use only the mean and variance of $\Phi$ to show that we only have to ``fool'' multiple $O(\log d)$ space algorithms and hence using \FastPRG as described in Theorem~\ref{thm:nisan-main-theorem} is enough to generate the pseudorandom bits to compute the $p$-stable random variables. Further, for each update in the stream, the necessary block of pseudorandom bits can be generated in $O(1)$ time for each update. The space required to store the randomness necessary for the pseudo random generator is $O(d^{\epsilon}) = O(1/\varepsilon^2)$ bits when $\epsilon \le 2c$ where $\varepsilon \le 1/d^c$.

So, the overall algorithm on each update $(i, v)$ is as follows: We use $\bh$ to hash $i$ into one of the $O(1/\alpha)$ buckets. Note that $\bh(i)$ can be computed in $O(1)$ time. Using the value of $i$, we generate a block of $O(\log d)$ pseudorandom bits from the pseudorandom  generator and then use the bits to compute $3$ samples from a $p$-stable distribution. Let the samples be $\bA_{i,1}, \bA_{i,2}, \bA_{i,3}$. We update the counters $\bB_{h(i), j} \gets \bB_{h(i), j} + \bA_{i,j}v$ for $j \in [3]$. As discussed, generating the pseudorandom bits and updating the counters can be performed in $O(1)$ time (assuming the pseudorandom bits can be converted to samples from a $p$-stable distribution in $O(1)$ time).

Let $\bgamma \sim \FastPRG$ be a string sampled from the pseudorandom generator. Let $\bA_{\bgamma}$ denote the $p$-stable random variables generated using $\bgamma$. Let $\bA$ denote $p$-stable random variables generated using a uniform random string of bits. Hence the random variables $\bA_{i,j}$ are independent. Fix a hash function $\bh$, a vector $x$ and a set of heavy elements $L$. Now consider the estimator we use to estimate the $F_p$ moments of the light elements $[d] \setminus L$:
\begin{small}
\begin{align*}
    \Phi_{\bh, \bA}=  \frac{s}{(s-|\bh(L)|)\theta}\, \times\sum_{b \in [s] \setminus \bh(L)} (|\sum_{i: \bh(i) =b} \bA_{i,1}x_i|\cdot|\sum_{i: \bh(i) =b} \bA_{i,2}x_i|\cdot|\sum_{i: \bh(i) =b} \bA_{i,3}x_i|)^{p/3}
\end{align*}    
\end{small}
where we use $\theta$ to denote the denominator in \eqref{eqn:lis-estimator}. Fix some bucket $b \in [s] \setminus \bh(L)$. Define
\begin{small}
 \begin{align*}
\Phi^{(b)}_{\bh, \bA} \coloneqq \frac{s}{(s-|\bh(L)|)\theta}  (|\sum_{i: \bh(i) =b} \bA_{i,1}x_i|\cdot|\sum_{i: \bh(i) =b} \bA_{i,2}x_i|\cdot|\sum_{i: \bh(i) =b} \bA_{i,3}x_i|)^{p/3}.    
\end{align*}   
\end{small}

For $b \in \bh(L)$, we define $\Phi_{\bh, \bA}^{(b)} = 0$.
The quantity $\Phi^{(b)}_{\bh, \bA}$ can be computed by an $O(\log d)$ space algorithm in a single pass over the uniform random string of bits used to generate the $p$-stable random variables $\bA$ by going over the values $i=1,\ldots,d$ and ignoring the random bits that correspond to all $i$ such that $\bh(i) \ne b$. We formalize the algorithm by constructing an FSM $Q_{x, L, \bh, b}$ with $\poly(d)$ states. Note that we fixed $x, L, \bh$ and $b$. Let the state of automaton be of the form $(i, c_1, c_2, c_3)$ where $i \in [d]$ and $c_1,c_2, c_3$ denote the counters. The FSM being in state $(i, c_1, c_2, c_3)$ denotes that it has processed $x_1,\ldots,x_i$ and found that for $j \in [3]$
\begin{align*}
	c_j = \sum_{i' \le i : h(i') = b} \bA_{i',j}x_{i'}.
\end{align*}
When in state $(i, c_1, c_2, c_3)$, if $\bh(i+1) \ne b$, the FSM directly transitions to the state $(i+1, c_1, c_2, c_3)$ ignoring the alphabet in the input to the FSM. If $\bh(i) = b$, then the FSM reads the alphabet in the input string. Uses the $\set{0,1}^{O(\log d)}$ size bit string that it reads to construct three $p$-stable random variables $\bA_{i+1, 1}, \bA_{i+1, 2}, \bA_{i+1, 3}$ and then transitions to the state $(i+1, c_1', c_2', c_3')$ where for $j \in [3]$
\begin{align*}
	c_j' = c_j + \bA_{i+1, j} x_{i+1}.
\end{align*}
Note that all the above operations are performed only with a precision of $O(\log d)$ bits. Hence the Finite State Machine has only $\poly(d)$ states. From the state $(d, c_1, c_2, c_3)$, the algorithm transitions to $(\text{final}, (s/(s-|\bh(L)|)\theta)(|c_1||c_2||c_3|)^{p/3})$.

Given a uniform random string as input, the final state of FSM $Q_{c,L,\bh, b}$ encodes the value $\Phi^{(b)}_{\bh, \bA}$ and given $\bgamma \sim \FastPRG$ as input, the final state of FSM $Q_{c, L, \bh, b}$ encodes the value of $\Phi^{(b)}_{\bh, \bA_{\bgamma}}$.

Let $d_{\bh}^b$ be the distribution of the value of $\Phi^{(b)}_{\bh, \bA}$ conditioned on $\bh$. Now define $(d_{\bh}^{b})'$ to be the distribution of $\Phi_{ \bh, \bA_{\bgamma}}^{(b)}$ i.e., the value of the estimator for $b$th bucket computed using $p$-stable random variables generated from a random $\bgamma ~ \FastPRG$. As FSM $Q_{x, L \bh, b}$ has only $\poly(d)$ states, we obtain using Theorem~\ref{thm:nisan-main-theorem} that
\begin{align*}
    d_{\TV}(d_{\bh}^b, ((d_{\bh})^{b})') \le 1/\poly(d).
\end{align*}
The above is true for any fixing of $\bh$, $x$, $L$ and $b$. As for any values of $\bh$ and $\bA$, we have $|\Phi_{\bh, \bA}^{(b)}| \le \poly(d)$, we obtain that for any fixed $\bh$, $x$, $L$ and $b$,
\begin{align*}
    |\E_{\bA}[\Phi_{\bh, \bA}^{(b)}] - \E_{\bA_{\bgamma}}[\Phi_{\bh, \bA_{\bN}}^{(b)}]| \le \frac{1}{\poly(d)}
\end{align*}
which implies that
\begin{align*}
    |\E_{\bA}[\Phi_{\bh, \bA}] - \E_{\bA_{\bgamma}}[\Phi_{\bh, \bA_{\bgamma}}]| \le s \cdot \frac{1}{\poly(d)} \le \frac{1}{\poly(d)}.
\end{align*}
Therefore, using \eqref{eqn:mean-of-the-light-estimator} we have
\begin{align}
	\E_{\bh, \bA_{\bgamma}}[\Phi_{\bh, \bA_{\bgamma}}] &= \E_{\bh, \bA}[\Phi_{\bh, \bA}] \pm 1/\poly(d)= (1 \pm \varepsilon^C)\lp{x_{[d] \setminus L}}^p \pm 1/\poly(d).
	\label{eqn:mean-with-nisan-p-stable}
\end{align}
Similarly, we have
\begin{align*}
    (\Phi_{\bh, \bA})^2 =  \sum_{b,b' \in [s]} \Phi_{\bh, \bA}^{(b)} \cdot \Phi_{\bh, \bA}^{(b')}.
\end{align*}
Again, for any fixed pair $b,b'$, we can compute the product $\Phi_{\bh, \bA}^{(b)} \cdot \Phi_{\bh, \bA}^{(b')}$ in $O(\log d)$ space using one pass over the uniform random string used to generate the $p$-stable random variables. We can construct a Finite State Machine very similar to the one above to show that the value $\Phi_{\bh, \bA}^{(b)} \cdot \Phi_{\bh, \bA}^{(b')}$ can be computed by a machine with $\poly(d)$ states. Now, we have that the total variation distance between the distributions of the product when using a uniform random string to generate $p$-stable random variables and \FastPRG to generate the $p$-stable random variables is at most $1/\poly(d)$ and hence we obtain that
\begin{align*}
    |\E_{\bA}(\Phi_{\bh, \bA}^{(b)} \cdot \Phi_{\bh, \bA}^{(b')}) - \E_{\bA_{\bgamma}}(\Phi_{\bh, \bA_{\bgamma}}^{(b)} \cdot \Phi_{\bh, \bA_{\bgamma}}^{(b')})| \le 1/\poly(d).
\end{align*}
Summing over all the pairs $(b,b')$, we obtain for any $x, L, \bh$ that,
\begin{align*}
    |\E_{\bA}[(\Phi_{\bh, \bA})^2] - \E_{\bA_{\bgamma}}[(\Phi_{\bh, \bA_{\bgamma}})^2]| \le \frac{1}{\poly(d)}
\end{align*}
which implies
\begin{align*}
    \E_{\bh,\bA_{\bgamma}}[(\Phi_{\bh, \bA_{\bgamma}})^2] \le \E_{\bh, \bA}[(\Phi_{\bh, \bA})^2] + \frac{1}{\poly(d)}.
\end{align*}
We then have
\begin{align*}
    \Var_{\bh, \bA_{\bgamma}}[\Phi_{\bh, \bA_{\bgamma}}] &\le \E_{\bh,\bA}[(\Phi_{\bh, \bA})^2] + \frac{1}{\poly(d)} - (\E_{\bh, \bA}[\Phi_{\bh, \bA}])^2 +  \frac{\E_{\bh,\bA}[\Phi_{\bh, \bA}]}{\poly(d)}.
\end{align*}
As the $\poly(d)$ term can be made $\ge d^C$ for a large enough constant $C$, we obtain that
\begin{align*}
    \Var_{\bh,\bA_{\bgamma}}[\Phi_{\bh, \bA_{\bN}}] \le \Var_{\bh, \bA}[\Phi_{\bh, \bA}] + \frac{1}{\poly(d)} \le O(\alpha)\lp{x}^{2p}.
\end{align*}
by Claim~\ref{claim:variance-of-light-estimator-full-independence} and using the fact that $x$ is a nonzero vector with integer coordinates and $\alpha \ge 1/\poly(d)$.
\end{proof}
\subsection{Wrap-up}
\begin{theorem}
  Given $p \in (0,2)$, an accuracy parameter $\varepsilon$ such that $1/\sqrt{d} \le \varepsilon \le 1/d^{c}$ for a constant $0 < c < 1/2$ and a stream of updates $(i_1,v_1),\ldots(i_m, v_m) \in [d] \times \set{-M, \ldots, M}$ for $m, M \le \poly(d)$ to the vector $x$, there is a streaming algorithm that uses $O(\varepsilon^{-2}\log d)$ bits of space and has an update time of $O(\log d)$ per stream element that outputs with probability $\ge 7/10$, a value $v$ such that
  \begin{align*}
      v = (1 \pm \varepsilon)\lp{x}^p.
  \end{align*}
\end{theorem}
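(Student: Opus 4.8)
The plan is to combine the \HighEnd{} data structure (Lemma~\ref{lma:high-end}) with $T = \Theta(\log d)$ independent copies of the derandomized \LightEstimator{} (Lemma~\ref{lma:light-estimator-derandomized}), under the parameter choice $\alpha = \varepsilon^2\log d$. First I would instantiate the algorithm of Lemma~\ref{lma:high-end} with this $\alpha$; note that $1/\alpha = 1/(\varepsilon^2\log d) = O(1/\varepsilon^2)$ and that $\alpha \le 1$ because $\varepsilon \le 1/d^c \le 1/\sqrt{\log d}$, so the hypotheses are met. This costs $O(\alpha^{-1}\log^2 d) = O(\varepsilon^{-2}\log d)$ bits and $O(\log d)$ update time, and with probability $\ge 8/10$ it produces a list $L$ with $\{i : |x_i|^p \ge \alpha\lp{x}^p\} \subseteq L$ (so every $i \notin L$ has $|x_i|^p < \alpha\lp{x}^p$) together with a value $\Psi$ satisfying $|\Psi - \lp{x_L}^p| \le O(\varepsilon)\lp{x}^p$.

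Next I would run $T = \Theta(\log d)$ independent instances of the derandomized \LightEstimator{} of Lemma~\ref{lma:light-estimator-derandomized} with the same $\alpha$, obtaining estimators $\Phi_1,\ldots,\Phi_T$ and setting $\bar\Phi = (\Phi_1 + \cdots + \Phi_T)/T$. Each instance uses $O(\alpha^{-1}\log d) = O(\varepsilon^{-2})$ bits, $O(1)$ update time, and an $O(d^\epsilon)$-bit \FastPRG seed with $\epsilon \le 2c$ (so that $d^\epsilon = O(d^{2c}) = O(\varepsilon^{-2})$, using $\varepsilon \le 1/d^c$); summing over the $T$ copies, the extra space is $O(\varepsilon^{-2}\log d)$ bits, the update time is $O(\log d)$, and the total space is $O(\varepsilon^{-2}\log d)$ bits as required. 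Since the randomness of the \HighEnd{} structure is independent of that of the \LightEstimator{} instances, I would condition on the probability-$\ge 8/10$ event that $(L,\Psi)$ is good and then invoke Lemma~\ref{lma:light-estimator-derandomized} for that fixed $L$ (which indeed satisfies $|x_i|^p < \alpha\lp{x}^p$ for all $i \notin L$): this gives $\E[\Phi_t] = (1\pm\varepsilon^C)\lp{x_{[d]\setminus L}}^p \pm 1/\poly(d)$ and $\Var[\Phi_t] \le O(\alpha)\lp{x}^{2p}$, hence $\Var[\bar\Phi] \le O(\alpha/T)\lp{x}^{2p} = O(\varepsilon^2)\lp{x}^{2p}$. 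By Chebyshev's inequality, with probability $\ge 99/100$ we have $|\bar\Phi - \E[\bar\Phi]| \le O(\varepsilon)\lp{x}^p$, so $\bar\Phi = \lp{x_{[d]\setminus L}}^p \pm O(\varepsilon)\lp{x}^p$, using $\varepsilon^C\lp{x_{[d]\setminus L}}^p \le \varepsilon\lp{x}^p$ and that the $1/\poly(d)$ term is negligible against $\varepsilon\lp{x}^p$ for a nonzero integer vector $x$ (and the whole estimate is exactly $0$ when $x = 0$).

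Finally I would output $v = \Psi + \bar\Phi$. Since $L$ and $[d]\setminus L$ partition $[d]$ and $\lp{\cdot}^p$ is additive over disjoint supports, $\lp{x_L}^p + \lp{x_{[d]\setminus L}}^p = \lp{x}^p$, so on the intersection of the two good events — a good $(L,\Psi)$ from the \HighEnd{} structure (probability $\ge 8/10$) and the Chebyshev concentration of $\bar\Phi$ conditioned on a good $L$ (probability $\ge 99/100$) — a union bound gives success probability $\ge 7/10$ and $v = \lp{x}^p \pm O(\varepsilon)\lp{x}^p$; rescaling $\varepsilon$ by the hidden constant yields $v = (1\pm\varepsilon)\lp{x}^p$. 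I do not expect a genuine obstacle, since all ingredients are already established, but the step requiring the most care is the conditioning: Lemma~\ref{lma:light-estimator-derandomized} is stated for a \emph{fixed} $L$ with the light-coordinate property, so it is essential that the randomness the \HighEnd{} structure uses to produce $L$ is independent of the \FastPRG seeds and hash functions of the \LightEstimator{} instances, which is what lets us first fix a good $L$ and then apply the lemma to it.
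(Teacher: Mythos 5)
Your proof is correct and follows essentially the same route as the paper: set $\alpha = \varepsilon^2\log d$, combine the \HighEnd{} structure (Lemma~\ref{lma:high-end}) for $\lp{x_L}^p$ with $\Theta(\log d)$ independent copies of the derandomized \LightEstimator{} (Lemma~\ref{lma:light-estimator-derandomized}) whose average is controlled by Chebyshev, and sum. Your added remarks on verifying $\alpha\le 1$, the choice $\epsilon\le 2c$ for the \FastPRG seed size, and the need for independence between the \HighEnd{} randomness and the \LightEstimator{} randomness before conditioning on a good $L$, are all sound and simply make explicit a few details the paper leaves implicit.
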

\begin{proof}
    Setting $\alpha = \varepsilon^2 \log(d)$, we have that the set of $\alpha$ heavy hitters $L$ can be computed in $O(\alpha^{-1}\log^2(d)) = O(\varepsilon^{-2}\log d)$ bits of space using Lemma~\ref{lma:heavy-hitters} and has an update time of $O(\log d)$ per stream element. The set $L$ satisfies all the properties in Lemma~\ref{lma:heavy-hitters} with probability $\ge 9/10$. By Lemma~\ref{lma:high-end}, the \HighEnd{} data structure can be maintained in $O(\alpha^{-1}\log^2(d)) = O(\varepsilon^{-2}\log d)$ bits of space and has an update time of $O(\log d)$ per stream element. Conditioned on $L$ satisfying all the properties, we have that the value $\Psi$ output by \HighEnd{} data structure satisfies with probability $\ge 9/10$,
\begin{align*}
    \Psi = (1 \pm \varepsilon)\lp{x_{L}}^p.
\end{align*}

By Lemma~\ref{lma:light-estimator-derandomized}, the \LightEstimator{} data structure can be maintained in $O(\alpha^{-1}\log d) = O(\varepsilon^{-2})$ bits of space. We also have that the data structure can be update in $O(1)$ time per stream element and conditioned on the set $L$ having all the properties, the value $\Phi$ output by the algorithm satisfies
\begin{align*}
    \E[\Phi] = (1 \pm \varepsilon^C)\lp{x_{[d] \setminus L}}^{p} + 1/\poly(d)
\end{align*}
and
\begin{align*}
    \Var[\Phi] = O(\varepsilon^2 \log d)\lp{x}^{2p}.
\end{align*}
Maintaining $r = O(\log d)$ independent copies of \LightEstimator{} in the stream and considering their outputs $\Phi_1,\ldots,\Phi_r$, conditioned on $L$ having all the properties, we obtain using Chebyshev's inequality that with probability $\ge 99/100$,
\begin{align*}
   \bar{\Phi} = \frac{\Phi_1 + \cdots + \Phi_r}{r} = (1 \pm \varepsilon^C)\lp{x_{[d] \setminus L}}^p + \frac{1}{\poly(d)} + \varepsilon\lp{x}^p.
\end{align*}
Thus, by a union bound, with probability $\ge 7/10$,
\begin{align*}
    \Psi + \bar{\Phi} = (1 \pm O(\varepsilon))\lp{x}^p + \frac{1}{\poly(d)} = (1 \pm O(\varepsilon))\lp{x}^p
\end{align*}
using the fact that $\varepsilon > (1/\sqrt{d})$ and there is at least one nonzero integer coordinate in $x$.
\end{proof}

\section{Derandomizing CountSketch with \FastPRG}\label{sec:derandomizing-countsketch}

CountSketch~\cite{Charikar2004finding} is a random linear map of a vector $x \in \mathbb{R}^d$ to $Ax \in \mathbb{R}^D$. For parameters $\countsketchrepetitions, \countsketchrange$ such that $D = \countsketchrepetitions\countsketchrange$, the CountSketch $\CS(x)$ is defined by two sequences of random independent hash functions: $\countsketchhash_1, \ldots, \countsketchhash_\countsketchrepetitions: \set{0, \ldots, d-1} \rightarrow [\countsketchrange]$ and $\bs_1, \ldots, \bs_\countsketchrepetitions: \set{0, \ldots, d-1} \rightarrow \{-1,+1\}$. 
To simplify our exposition we will assume that $\countsketchrange$ is a power of two and that $\countsketchrepetitions$ is odd. We additionally assume that the coordinates of $x$ are $0$-indexed so that $x = (x_0, \ldots, x_{d-1})$.
Indexing $\CS(x) \in \mathbb{R}^D$ by $(i,j) \in [\countsketchrepetitions] \times [\countsketchrange]$ the entries are defined as: 
$$\CS(x)_{i,j} \coloneqq \sum_{\ell = 0}^{d-1} \bs_i(\ell)\, x_{\ell}\, [\countsketchhash_i(\ell) = j].$$
For $x\in\mathbb{R}^d$ and $\ell \in \set{0, \ldots, d-1}$ we use $\CS(x)$ to approximate $x_\ell$ with the following estimator: 
\begin{equation}\label{eq:countsketch-estimator}
\hat{x}_{\ell} = \textrm{median}(\{\bs_i(\ell) \cdot \CS(x)_{i, \countsketchhash_i(\ell)} \mid i \in [\countsketchrepetitions]\})\enspace.
\end{equation}

Charikar, Chen, and Farach-Colton~\cite{Charikar2004finding} upper bounded the estimation error $|\hat{x}_{\ell} - x_{\ell}|$ in terms of the norm of the vector $x$ and the parameters $\countsketchrepetitions$, $\countsketchrange$.
Their analysis only relies on using \emph{pairwise independent} hash functions (independently for each repetition), which require $O(\countsketchrepetitions \log d)$ bits of storage and allow the estimator to be computed in $O(\countsketchrepetitions)$ time assuming constant time arithmetic operations.

Minton and Price~\cite{Minton2014improved} presented a tighter analysis of the distribution of the estimation error, focusing on $\countsketchrepetitions = \Theta(\log d)$ repetitions, under the assumption that the hash function values of $\countsketchhash_1, \ldots, \countsketchhash_\countsketchrepetitions$ and $\bs_1, \ldots, \bs_\countsketchrepetitions$ are \emph{fully independent}.
This assumption is used in order to argue about the Fourier transform of the error distribution.
In our notation they show the following lemma:
\begin{lemma}
    \label{lemma:count-sketch}
    For every $\alpha\in [0,1]$, $\ell\in [d]$,
    $\Pr\left[ |\hat{x}_\ell - x_\ell| > \alpha\,\Delta\right] < 2\exp\left(-\Omega\left(\alpha^2 \countsketchrepetitions \right)\right)$,
            where $\Delta=\|\text{tail}_{\countsketchrange}(x)\|_2 / \sqrt{\countsketchrange}$.
\end{lemma}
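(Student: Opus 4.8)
The plan is to follow Minton and Price: first analyze a single hash table, then combine the $\countsketchrepetitions$ repetitions. Fix $\ell$ and a repetition $i$, and write the $i$-th estimate's error as
\[
e_i\ :=\ \bs_i(\ell)\,\CS(x)_{i,\countsketchhash_i(\ell)}-x_\ell\ =\ \sum_{\ell'\ne\ell}\bs_i(\ell)\,\bs_i(\ell')\,x_{\ell'}\,[\countsketchhash_i(\ell')=\countsketchhash_i(\ell)],
\]
so that $\hat x_\ell=\mathrm{median}_i(x_\ell+e_i)$. Since $\bs_i(\ell)$ is an independent uniform sign, $e_i$ is symmetric about $0$, hence $\Pr[e_i>c]=\Pr[e_i<-c]=\tfrac12(1-\Pr[|e_i|\le c])$ for every $c\ge 0$; and because the $\countsketchrepetitions$ repetitions use independent, fully random hash and sign functions, $e_1,\dots,e_{\countsketchrepetitions}$ are i.i.d. With $\countsketchrepetitions$ odd, $\hat x_\ell>x_\ell+\alpha\Delta$ holds exactly when more than $\countsketchrepetitions/2$ of the $e_i$ exceed $\alpha\Delta$. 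So it suffices to prove the single-table anti-concentration bound
\begin{equation}
\Pr[\,|e_i|\le\alpha\Delta\,]\ \ge\ c_1\alpha\qquad\text{for all }\alpha\in[0,1], \tag{$\star$}
\end{equation}
for an absolute constant $c_1>0$: then $\Pr[e_i>\alpha\Delta]\le\tfrac12-\tfrac{c_1}{2}\alpha$, Hoeffding's inequality for a $\mathrm{Binomial}(\countsketchrepetitions,\tfrac12-\tfrac{c_1}{2}\alpha)$ variable gives $\Pr[\hat x_\ell>x_\ell+\alpha\Delta]\le\exp(-\tfrac{c_1^2}{2}\alpha^2\countsketchrepetitions)$, the symmetric bound handles the lower tail, and a union bound proves the lemma.

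To get $(\star)$, observe that by full independence $e_i$ has the same distribution as $\sum_{\ell'\ne\ell}\sigma_{\ell'}x_{\ell'}b_{\ell'}$ with $\sigma_{\ell'}$ i.i.d.\ Rademacher, $b_{\ell'}\sim\mathrm{Bernoulli}(1/\countsketchrange)$ i.i.d., all independent. Let $H$ be the set of the $\countsketchrange$ largest-magnitude coordinates and let $\mathcal E$ be the event that no $\ell'\in H\setminus\{\ell\}$ hashes to bucket $\countsketchhash_i(\ell)$. Since $|H\setminus\{\ell\}|\le\countsketchrange$ and each collision has probability $1/\countsketchrange$, $\Pr[\mathcal E]\ge(1-1/\countsketchrange)^{\countsketchrange}\ge 1/4$; moreover, by full independence, conditioning on $\mathcal E$ does not change the joint distribution of the light coordinates, while on $\mathcal E$ the heavy coordinates contribute $0$ to $e_i$. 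Writing $L$ for the light coordinates other than $\ell$ and $S:=\sum_{\ell'\in L}\sigma_{\ell'}x_{\ell'}b_{\ell'}$, we get $\Pr[|e_i|\le\alpha\Delta]\ge\tfrac14\Pr[|S|\le\alpha\Delta]$, and crucially $\sigma^2:=\Var(S)=\tfrac1{\countsketchrange}\|x_L\|_2^2\le\tfrac1{\countsketchrange}\|\text{tail}_{\countsketchrange}(x)\|_2^2=\Delta^2$. So $(\star)$ reduces to proving $\Pr[|S|\le\alpha\Delta]=\Omega(\alpha)$ for a sum $S$ of small, independent, symmetric terms with $\Var(S)\le\Delta^2$.

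For this I would use a local-CLT–style Fourier argument. The characteristic function of $S$ is $\psi(\theta)=\prod_{\ell'\in L}\bigl(1-\tfrac{1-\cos(\theta x_{\ell'})}{\countsketchrange}\bigr)$, which lies in $[0,1]$ (each factor is in $[1-2/\countsketchrange,1]$ when $\countsketchrange\ge 4$); combining $\ln(1-z)\ge-2z$ for $z\in[0,\tfrac12]$ with $1-\cos u\le u^2/2$ yields the pointwise lower bound $\psi(\theta)\ge\exp(-\theta^2\sigma^2)$ for all $\theta$. Take the tent function $g(y)=\max(0,1-|y|/(\alpha\Delta))$; its Fourier transform is $\hat g(\theta)=\alpha\Delta\,\bigl(\tfrac{\sin(\alpha\Delta\theta/2)}{\alpha\Delta\theta/2}\bigr)^2\ge 0$, with $\hat g(\theta)\ge 4\alpha\Delta/\pi^2$ for $|\theta|\le\pi/(\alpha\Delta)$ by Jordan's inequality. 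As $\psi\ge 0$ and $\hat g\ge 0$, Fourier inversion gives
\[
\Pr[|S|\le\alpha\Delta]\ \ge\ \E[g(S)]\ =\ \frac{1}{2\pi}\int_{\R}\hat g(\theta)\,\psi(\theta)\,d\theta\ \ge\ \frac{2\alpha\Delta}{\pi^3}\int_{|\theta|\le\pi/(\alpha\Delta)}e^{-\theta^2\sigma^2}\,d\theta .
\]
Substituting $u=\theta\sigma$ and splitting on whether $\alpha\Delta\le\pi\sigma$ (then the integral is at least $(2/e)/\sigma$ and $\Delta/\sigma\ge 1$ makes the bound $\Omega(\alpha)$) or $\alpha\Delta>\pi\sigma$ (then $\int_{|u|\le w}e^{-u^2}\,du\ge 2w/e$ with $w=\pi\sigma/(\alpha\Delta)<1$ makes the bound $\tfrac{4}{e\pi^2}=\Omega(1)\ge\Omega(\alpha)$ since $\alpha\le 1$) establishes $\Pr[|S|\le\alpha\Delta]=\Omega(\alpha)$, and hence $(\star)$.

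The main obstacle is precisely $(\star)$. A priori $e_i$ need not be concentrated at scale $\Delta$ at all: its variance is $\tfrac1{\countsketchrange}(\|x\|_2^2-x_\ell^2)$, which dwarfs $\Delta^2$ when $x$ has heavy coordinates, so one cannot directly argue that $e_i$ lands within $\alpha\Delta$ of its mean with probability $\Omega(\alpha)$. Two ingredients overcome this: conditioning on the $\countsketchrange$ heaviest coordinates avoiding $\ell$'s bucket, which costs only the constant factor $\Pr[\mathcal E]\ge 1/4$ and genuinely brings the variance down to $\le\Delta^2$; and the Fourier step, which converts the crude estimate $\psi(\theta)\ge e^{-\theta^2\sigma^2}$ into the $\Omega(\alpha)$ lower bound through a nonnegative kernel, so that no density upper bound (anti-concentration) is needed. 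The degenerate cases are immediate — if $\Delta=0$ then $S\equiv 0$, so $e_i=0$ on $\mathcal E$ and $\Pr[|e_i|\le\alpha\Delta]\ge 1/4$, while small $\countsketchrange$ is absorbed into constants — and the remaining steps (sign symmetry, i.i.d.\ repetitions, and the Hoeffding/median argument) are routine.
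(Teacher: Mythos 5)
The paper itself does not prove this lemma; it restates it and cites Minton and Price \cite{Minton2014improved}. Your proof therefore has to be judged on its own, and it tracks the Minton--Price argument closely: reduce the median estimator to a binomial tail via the sign-symmetry and i.i.d.\ structure of the per-repetition errors $e_i$; reduce $(\star)$ to a sum $S$ with variance $\sigma^2\le\Delta^2$ by conditioning on the heavy coordinates missing $\ell$'s bucket (probability $\ge(1-1/t)^t\ge 1/4$); and prove the anti-concentration bound $\Pr[|S|\le\alpha\Delta]=\Omega(\alpha)$ by integrating the characteristic function against a nonnegative tent kernel. The decomposition $e_i\stackrel{d}{=}\sum_{\ell'\ne\ell}\sigma_{\ell'}x_{\ell'}b_{\ell'}$, the conditioning argument, the identity $\psi(\theta)=\prod_{\ell'\in L}\bigl(1-\tfrac{1-\cos(\theta x_{\ell'})}{t}\bigr)$, the tent-function Fourier step, the case split, and the Hoeffding/median wrap-up are all correct.

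There is one genuine gap, and your closing remark that ``small $t$ is absorbed into constants'' does not close it. The pointwise bound $\psi(\theta)\ge e^{-\theta^2\sigma^2}$ is derived from $\ln(1-z)\ge -2z$, which you restrict to $z\le 1/2$ and hence $t\ge 4$; in fact $\ln(1-z)\ge -2z$ holds up to $z\approx 0.797$, so $t=3$ (where $z\le 2/3$) is already fine as stated. But for $t=2$ the bound is literally false: there $\psi(\theta)=\prod_{\ell'\in L}\cos^2(\theta x_{\ell'}/2)$, which vanishes at $\theta=\pi/|x_{\ell'}|$, so no strictly positive pointwise lower bound can hold, and some of those zeros can lie inside $|\theta|\le\pi/(\alpha\Delta)$. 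Since the paper allows any power-of-two $t\ge 2$, this case must be handled. The fix is easy and in fact simplifies the argument for all $t\ge 2$: each factor lies in $[0,1]$, so by the Weierstrass product inequality
\[
\psi(\theta)\ \ge\ 1-\sum_{\ell'\in L}\frac{1-\cos(\theta x_{\ell'})}{t}\ \ge\ 1-\frac{\theta^2\sigma^2}{2},
\]
hence $\psi(\theta)\ge 1/2$ for $|\theta|\le 1/\sigma$. Restricting the (nonnegative) integrand to $|\theta|\le\min(1/\sigma,\pi/(\alpha\Delta))$ and using $\hat g(\theta)\ge 4\alpha\Delta/\pi^2$ there gives
\[
\Pr[|S|\le\alpha\Delta]\ \ge\ \frac{1}{2\pi}\cdot\frac{4\alpha\Delta}{\pi^2}\cdot\frac12\cdot 2\min\!\Bigl(\frac1\sigma,\frac{\pi}{\alpha\Delta}\Bigr)\ \ge\ \frac{2\alpha}{\pi^3},
\]
splitting on which term achieves the minimum and using $\Delta\ge\sigma$ and $\alpha\le 1$. (One also needs the trivial case $\sigma=0$, where $S\equiv 0$; note this can occur with $\Delta>0$ when $\ell$ is the only nonzero tail coordinate, so the condition is $\sigma=0$, not $\Delta=0$ as you wrote.) With that substitution your proof is complete and uniform in $t\ge 2$.
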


Literally storing fully random hash functions would require $O(\countsketchrepetitions d \log \countsketchrange)$ bits, so this is not attractive when $d\gg\countsketchrange$, which is the setting where using CountSketch is of interest.
Minton and Price note that for integer vectors $x\in \{-M,\dots,+M\}^d$ where $M$ is polynomial in $d$, it is possible to use the pseudorandom generator of Nisan~\cite{nisan} to replace the fully independent hash functions, keeping the tail bound of Lemma~\ref{lemma:count-sketch} up to a $1/\poly(d)$ additive term.
However, this comes with considerable overhead:
The space complexity increases by an $\Omega(\log (d\countsketchrange))$ factor, and the time per update/query increases by a factor $\Omega(\countsketchrepetitions \countsketchrange)$.
Jayaram and Woodruff~\cite{Jayaram2018perfect} later considered a modification of CountSketch (with the same space and error guarantees) and showed that the multiplicative space overhead can be reduced to $O((\log \log d)^2)$ using a pseudo-random generator for fooling halfspaces.
The time complexity increases by an unspecified polylogarithmic factor compared to the fully random setting.
Though it is technically not accurate we will still refer to their sketch as CountSketch.

\medskip

In this section we present an alternative derandomization of~\cite{Minton2014improved} using hash functions computed using \FastPRG.
Specifically, for $i\in [\countsketchrepetitions]$ and $\rho \in \set{0, \ldots, d-1}$ we use block number $(i-1) d + \rho$ from the output of \FastPRG to get the random bits for $\bs_i(\rho)$ and $\countsketchhash_i(\rho)$.
Since a given output block of \FastPRG can be computed efficiently, these hash functions can be efficiently evaluated.

\begin{theorem}\label{theorem:fastprg-count-sketch}
Let $d$ be the dimension of the vectors and $M$ be the maximum absolute value of coordinates in the vector. Let $\countsketchrange$ and $\countsketchrepetitions$ be the parameters of the CountSketch map as defined above. Let $b \ge 2$ be an integer denoting the branching factor of $\FastPRG$. Let $w = \Omega(\log d + \log M)$. There exists a randomized linear sketch $\CS_{\FastPRG}: \set{-M, \ldots, M}^d \rightarrow \set{-2^{w}, \ldots, 2^w}^{tr}$ that can be implemented on a word RAM with word size $w$ with the following properties:
\begin{itemize}
    \item The parameters required to defined the map $\CS_{\FastPRG}$ can be stored in $O(b \log_b d)$ words of space and given a vector $x \in \set{-M, \ldots, M}^d$, the resulting vector $\CS_{\FastPRG}(x)$ can be stored in $O(\countsketchrepetitions\countsketchrange)$ words of space.
    \item Given $\CS_{\FastPRG}(x)$ and an update $(\ell,u_\ell)$ corresponding to a vector $u$ with a single nonzero entry $u_\ell$, we can compute $\CS_{\FastPRG}(x + u)$ in time $O(\countsketchrepetitions \log_b d)$.
    \item For every $x\in \{-M,\dots,M\}^d$, $\alpha\in [0,1]$, and $\ell\in [d]$, we can compute an estimator $\hat{x}_\ell$ from $\CS_{\FastPRG}$ in time $O(\countsketchrepetitions \log_b d)$ such that
        $\Pr\left[ |\hat{x}_\ell - x_\ell| > \alpha\,\Delta\right] < 2\exp\left(-\Omega\left(\alpha^2 \countsketchrepetitions \right)\right) + 2^{-Cw}$, 
         where $\Delta=\|\text{tail}_{\countsketchrange}(x)\|_2 / \sqrt{\countsketchrange}$.
\end{itemize}
    
\end{theorem}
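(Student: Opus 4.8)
The plan is to instantiate $\CS_{\FastPRG}$ with randomness coming from \FastPRG and then prove the error bound by exhibiting, for each fixed $x,\ell,\alpha$, a small‑space one‑pass algorithm over the pseudorandom string whose output distribution captures the event $|\hat{x}_\ell-x_\ell|\le\alpha\Delta$ ($\Delta$ as in the statement): the Minton--Price analysis (Lemma~\ref{lemma:count-sketch}) handles the fully random case, Theorem~\ref{thm:nisan-main-theorem} transfers it to \FastPRG up to an additive $O(2^{-cn})$ term, and the XOR‑symmetry of \FastPRG removes the one obstacle to the one‑pass simulation. Concretely, write $r=\countsketchrepetitions$, $t=\countsketchrange$; let $\bar d$ be the least power of two with $\bar d\ge d$ and pad $x$ to length $\bar d$ with zeros. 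Take the block length of \FastPRG to be $n=\Theta(w)$ with a large enough constant, and pick $b,k$ with $r\bar d\le b^k\le 2^{cn}$, where $c$ is the constant of Theorem~\ref{thm:nisan-main-theorem}; then $k=O(\log_b(rd))=O(\log_b d)$, and $n=\Omega(\log d+\log M)$ makes both constraints satisfiable. Sample $\bgamma$ from \FastPRG with these parameters, and for $i\in[\countsketchrepetitions]$ and $\rho\in\set{0,\dots,\bar d-1}$ use the first $\lceil\log_2 t\rceil+1$ bits of block number $(i-1)\bar d+\rho$ to define $\countsketchhash_i(\rho)$ and $\bs_i(\rho)$; define $\CS_{\FastPRG}(x)$ and $\hat{x}_\ell$ by the two displayed formulas (coordinates $\rho\ge d$ are zero and are never queried).

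The resource bounds are immediate from Theorem~\ref{thm:nisan-main-theorem}: the $bk=b\cdot O(\log_b d)$ pairwise‑independent hash functions of \cite{Dietzfelbinger96universal}, each stored in $O(n)=O(w)$ bits, together with the $n$‑bit seed, occupy $O(b\log_b d)$ words; a single output block is computed in $O(k)=O(\log_b d)$ word operations (each $n$‑bit operation being $O(1)$ operations on $w$‑bit words since $n=\Theta(w)$); and $|\CS_{\FastPRG}(x)_{i,j}|\le dM\le 2^{w}$ for $w$ large enough, so each of the $\countsketchrepetitions\countsketchrange$ counters fits in $O(1)$ words. An update $(\ell,u_\ell)$ or a query at $\ell$ touches, for each repetition $i$, only block $(i-1)\bar d+\ell$ (to read $\countsketchhash_i(\ell),\bs_i(\ell)$) and one counter, for total time $O(\countsketchrepetitions\log_b d)$.

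For the error bound fix $x$, $\ell$, $\alpha$ and consider the abstract one‑pass algorithm $\mathcal B$ that reads the blocks of a length‑$b^kn$ string in natural order but, while scanning the $\bar d$ blocks of repetition $i$, treats block $(i-1)\bar d+\rho$ as the randomness of index $\rho\oplus\ell$. Since $\bar d$ is a power of two and $\ell<\bar d$, the map $\rho\mapsto\rho\oplus\ell$ permutes $\set{0,\dots,\bar d-1}$, and the first block of the repetition ($\rho=0$) reveals $\countsketchhash_i(\ell)$ and $\bs_i(\ell)$; $\mathcal B$ stores these ($O(\log t)$ bits), then maintains the running sum $\sum_{\rho':\,\countsketchhash_i(\rho')=\countsketchhash_i(\ell)}\bs_i(\rho')x_{\rho'}$ ($O(w)$ bits), and at the end of repetition $i$ increments one of two counters ($O(\log \countsketchrepetitions)$ bits each) according to whether $\bs_i(\ell)$ times this sum exceeds $x_\ell+\alpha\Delta$ or is below $x_\ell-\alpha\Delta$; it outputs $1$ iff both counters are $<\countsketchrepetitions/2$. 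One checks (using $\ell<\bar d$, $\bar d$ a power of two) that $\mathcal B$ run on a string $y$ outputs $1$ exactly when the CountSketch estimator built from $y^{\oplus\ell}$ lies in $[x_\ell-\alpha\Delta,\,x_\ell+\alpha\Delta]$. Moreover $\mathcal B$ uses $O(w+\log d)=O(w)\le cn$ bits of working space (with $x,\ell,\alpha$ hard‑wired / on a read‑only tape), hence is realizable by an FSM with at most $2^{cn}$ states.

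Three facts finish the proof. First, on a uniform input $y^{\oplus\ell}$ is uniform, so $\mathcal B$ evaluates the bad event of a genuinely fully independent CountSketch and Lemma~\ref{lemma:count-sketch} gives $\Pr_{U}[\mathcal B=0]<2\exp(-\Omega(\alpha^2\countsketchrepetitions))$. Second, Theorem~\ref{thm:nisan-main-theorem} gives $|\Pr_{\bgamma\sim\FastPRG}[\mathcal B(\bgamma)=0]-\Pr_{U}[\mathcal B=0]|=O(2^{-cn})\le 2^{-Cw}$ for a suitable constant $C$ determined by $c$ and the constant in $n=\Theta(w)$. Third, the XOR‑symmetry of \FastPRG says $\bgamma^{\oplus\ell}$ has the same law as $\bgamma$, so $\Pr_{\bgamma\sim\FastPRG}[\mathcal B(\bgamma)=0]$ equals $\Pr[\,|\hat{x}_\ell-x_\ell|>\alpha\Delta\,]$ for the estimator of our actual construction. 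Chaining the three yields $\Pr[\,|\hat{x}_\ell-x_\ell|>\alpha\Delta\,]<2\exp(-\Omega(\alpha^2\countsketchrepetitions))+2^{-Cw}$, as claimed. I expect the crux — and the only point where Nisan's generator would not suffice — to be the reindexing by $\ell$: a one‑pass scan in index order cannot learn $\countsketchhash_i(\ell)$ before reaching $\ell$'s block, so it cannot track the bucket sum $\CS(x)_{i,\countsketchhash_i(\ell)}$; XOR‑symmetry lets us move $\ell$'s block to the front of each repetition, and the only remaining care is verifying that this is a clean within‑repetition permutation (hence the padding of $d$ to a power of two) and that $\mathcal B$ genuinely fits in $O(w)$ space.
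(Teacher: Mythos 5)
Your proof is correct and follows essentially the same route as the paper: it instantiates the hash and sign functions from \FastPRG blocks, builds (for each fixed $x,\ell,\alpha$) a one-pass abstract automaton that tracks per-repetition over/under counts while reading blocks reindexed by $\rho\mapsto\rho\oplus\ell$, appeals to Lemma~\ref{lemma:count-sketch} on a uniform input, to Theorem~\ref{thm:nisan-main-theorem} to transfer to \FastPRG, and to the XOR-symmetry of \FastPRG to tie the automaton's output to the actual (un-reindexed) estimator. The one substantive difference is that you explicitly pad $d$ to a power of two $\bar d$ so that $\rho\mapsto\rho\oplus\ell$ is a genuine within-repetition permutation and the block index $((i-1)\bar d+\rho)\oplus\ell$ equals $(i-1)\bar d+(\rho\oplus\ell)$; the paper leaves this implicit, so this is a careful touch rather than a deviation.
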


Figure~\ref{fig:countsketch-overview} compares Theorem~\ref{theorem:fastprg-count-sketch} to previously known ways of choosing the hash functions for CountSketch.
Our construction is the first one that is able to match the space usage of CountSketch with pairwise independent hash functions (for $\countsketchrepetitions = O(\log d)$ repetitions) while showing the strong concentration known for fully random hash functions.
With CountSketch table size $\countsketchrange = d^{\Omega(1)}$ and word length $w=O(\log d)$ we also match the update time of pairwise independence on the Word RAM.

\begin{figure*}
    \begin{center}
        \resizebox{\textwidth}{!}{\begin{tabular}{lccc} 
        \toprule
        \textbf{Hash function} & \textbf{Space in words} & \textbf{Bounds small error} & \textbf{Update time}\\ [0.5ex]
         \midrule
         Pairwise independent~\cite{Charikar2004finding}  & $D$ & No & $\log d$ \\ 
         Fully random~\cite{Minton2014improved} &  $\textcolor{red}{d} \log d$ & Yes & $\log d$\\
         Nisan's generator~\cite{Minton2014improved, nisan} & $D \log(d)$ & Yes & $\textcolor{red}{\countsketchrange} \log^3(d)$ \\
         Halfspace Fooling PRGs~\cite{Jayaram2018perfect} & $D (\log\log d)^2$  & Yes & $(\log d)^{O(1)}$\\
          \textbf{\FastPRG} ($b=t$) & $D$ & Yes & $\log^2(d) / \log t$ \\
          \textbf{\FastPRG} ($b=d^{\Omega(1)}$) & $D + b$ & Yes & $\log d$ \\
          \bottomrule
        \end{tabular}}
        \caption{Overview of CountSketch guarantees with different kinds of random hash functions.
        For simplicity we focus on the case of $\countsketchrepetitions = O(\log d)$ repetitions and $d$-dimensional input vectors that contain $O(\log d)$-bit integers such that the CountSketch itself (without hash functions) uses space $D= O(t\log d)$ words.
        With pairwise independence we can only tightly bound the probability of exceeding error $\Delta = \|\text{tail}_{\countsketchrange}(x)\|_2 / \sqrt{\countsketchrange}$, while the other hash functions allow us to bound the probability of smaller errors.
        Time bounds are for implementation on a Word RAM with word size $w = O(\log d)$.
        Parameters with a particularly bad impact on space or time are highlighted in red color.}\label{fig:countsketch-overview}
    \end{center}
    \end{figure*}

\subsection{PRGs for Space-bounded Computation and CountSketch}

Like Minton and Price~\cite{Minton2014improved} we will consider vectors $x\in \{-M,\dots,M\}^d$ for a positive integer $M$.
For concreteness we consider CountSketch with entries that are $w$-bit machine words.
We can relax the requirement from~\cite{Minton2014improved} that $M$ is polynomial in $d$, and instead assume that $M < 2^{w-1}/d$, which is also necessary to ensure that there are no overflows when computing $\CS(x)$.



To derandomize CountSketch, we describe a small-space algorithm for any fixed input vector $x$, query $\ell$ and threshold $\alpha\Delta$.
The algorithm makes a single pass over the output from \FastPRG and determines whether the estimator $\hat{x}_\ell$ computed using \FastPRG has error exceeding $\alpha\Delta$.
This is done \emph{without} computing $\CS(x)$, and in fact even without computing $\hat{x}_\ell$.
The algorithm makes critical use of the \emph{symmetry} of \FastPRG, namely, that the distribution of hash values is unchanged by permuting the inputs using a mapping of the form $\rho \mapsto \rho\oplus\ell$.
We stress that Nisan's generator does not have this symmetry property, and that we are not aware of an equally space-efficient finite state machine for evaluating the error of CountSketch using Nisan's generator.

\paragraph{The finite state machine.}
Consider $x \in \{-M,\dots,M\}^d$, $\ell\in \set{0, \ldots, d-1}$, and a given error threshold $\alpha\Delta \in \mathbb{R}$.
A choice of hash functions $\countsketchhash_1, \ldots, \countsketchhash_\countsketchrepetitions: \set{0, \ldots, d-1} \rightarrow [\countsketchrange]$ and $\bs_1, \ldots, \bs_\countsketchrepetitions: \set{0, \ldots, d-1} \rightarrow \{-1,+1\}$ can be represented as a binary string $\bgamma\in \{0,1\}^{\countsketchrepetitions d w}$ where a block of $w$ consecutive bits encodes hash values $\bs_i(\rho)$ and $\countsketchhash_i(\rho)$ for $i\in [\countsketchrepetitions]$ and $\rho\in \set{0, \ldots, d-1}$.
We order the blocks such that hash values with $i=1$ come first, then $i=2$ and so on.
Concretely we may take $\bs_i(\rho) = 2\bgamma_{(i-1)dw + \rho w} - 1$ and $\countsketchhash_i(j) = 1 + \sum_{k=1}^{\log(\countsketchrange)} \bgamma_{(i-1)dw + (\rho+1)w - k} 2^{k-1}$ such that the hash values can be extracted from a block in constant time.
We consider two ways of sampling the string $\bgamma$:
\begin{itemize}
\item First, we may choose $\bgamma \sim (U_2)^{\countsketchrepetitions d w}$ with independent, random bits.
It is easy to see that this is the same as choosing the hash functions with full independence, so Lemma~\ref{lemma:count-sketch} holds for this choice of $\bgamma$.
\item Second, for every $b$, $k$ such that $d \countsketchrepetitions \leq b^k < 2^{cw}$ we can use \FastPRG with block size $n = w$ to generate $\bgamma\sim G_k(*,\bh_1,\dots,\bh_k)$.
This corresponds to the hash functions we use to derandomize CountSketch.
\end{itemize}
As we noted in the introduction, the distribution of $\bgamma^{\oplus \ell}$ is the same as the distribution of $\bgamma$. Thus, we can assume that an algorithm that makes a single pass over the string $\bgamma$ reads the $rd$ blocks of bits in the order $0 \oplus \ell, 1 \oplus \ell, \ldots, (rd-1)\oplus \ell$ so that for each repetition $i \in [\countsketchrange]$, the algorithm gets to know the value $\bg_i(\ell)$ before reading the pseudorandom bits corresponding to other blocks. Thus, for each repetition $i \in [t]$, we can assume that an algorithm making a single pass over the string $\bgamma$ \emph{sees} the values $\bg_i(0 \oplus \ell), \bg_i (1 \oplus \ell), \ldots, \bg_i((d-1) \oplus \ell)$ in that order and similarly the values of $\bs_i$.
\medskip

To analyze the properties of the CountSketch data structure constructed using the string $\bgamma$, we consider an FSM $Q = Q_{x,\ell,\alpha\Delta}$ with states 
$$\{0,\dots,\countsketchrepetitions + 1\}^3 \times \{-1, +1\} \times [\countsketchrange] \times \set{0, \ldots, d} \times \{-2^w,\dots,2^w\},$$
plus a special start state $\bot$. 
We use $\countsketchhash_1, \ldots, \countsketchhash_\countsketchrepetitions$ and $\bs_1, \ldots, \bs_\countsketchrepetitions$ to refer to the hash functions encoded by $\bgamma$.
The idea is that the FSM $Q$ computes the $\countsketchrepetitions$ simple estimators in (\ref{eq:countsketch-estimator}) one at a time, and keeps track of the number of these estimators that deviate from $x_\ell$ by more than $\alpha\Delta$ (with separate accounting for overestimates and underestimates).
More precisely, when $Q$ is in state $(\beta_1,\beta_2,\beta_3,\beta_4,\beta_5,\beta_6,\beta_7)$ it signifies that:
\begin{itemize}
\item It has fully processed the hash functions $\bs_i$, $\countsketchhash_i$ for $i < \beta_1$, that $\beta_2$ of these hash function pairs produced an estimate $\bs_i(\ell) \cdot \CS(x)_{i, \countsketchhash_i(\ell)} < x_\ell - \alpha\Delta$, and $\beta_3$ pairs produced an estimate $\bs_i(\ell) \cdot \CS(x)_{i, \countsketchhash_i(\ell)} > x_\ell + \alpha\Delta$,
\item $\bs_{\beta_1}(\ell)=\beta_4$ and $\countsketchhash_{\beta_1}(\ell) = \beta_5$, and
\item $\beta_4 \sum_{0 \le z < \beta_6} x_{z\oplus\ell} \bs_{\beta_1}(z \oplus \ell)[\countsketchhash_{\beta_1}(z \oplus \ell) = \countsketchhash_{\beta_1}(\ell)] = \beta_7$.
\end{itemize}
%
From state $\bot$, the FSM $Q$ transitions to $(1,0,0,\bs_1(\ell),\countsketchhash_1(\ell),0,0)$, where the values $\bs_1(\ell),\countsketchhash_1(\ell)$ are determined by the zeroth block since we assume that the FSM sees the blocks as they are ordered in the string $\bgamma^{\oplus \ell}$.
From this point on, when in state $(\beta_1,\beta_2,\beta_3,\beta_4,\beta_5,\beta_6,\beta_7)$:
\begin{itemize}
    \item If $\beta_6 = d$ we have $\beta_7 = \bs_{\beta_1}(\ell) \CS(x)_{\beta_1, \countsketchhash_{\beta_1}(\ell)}$, so we can decide whether simple estimator number $\beta_1$ has an error above $\alpha\Delta$ and update $\beta_2$ or $\beta_3$ accordingly.
    Finally we can increment $\beta_1$, set $\beta_6 = 0$, and update $\beta_4$, $\beta_5$ to reflect the values of the new hash values, available in the next block.
    \item Otherwise, when $\beta_1 \leq r$, $Q$ has access to $\bs_{\beta_1}(\beta_6 \oplus \ell)$ and $\countsketchhash_{\beta_1}(\beta_6 \oplus \ell)$ from the next block of bits it reads.
    This allows us to increment $\beta_6$ when simultaneously increasing $\beta_7$ by $\beta_4 \bs_{\beta_1}(\beta_6 \oplus \ell) x_{\beta_6 \oplus \ell}$ if $\bg_{\beta_1}(\beta_6 \oplus \ell) = \beta_5$.
    \item Finally, when $\beta_1 = r+1$ we ignore the rest of the input, remaining in the same state.
\end{itemize}
After reaching the end, the values $\beta_2$, $\beta_3$ determine whether the estimator $\hat{x}_\ell$ in $(\ref{eq:countsketch-estimator})$ has an error of more than $\alpha\Delta$:
    If $\beta_2 \geq \lceil \countsketchrepetitions / 2\rceil$ then $\hat{x}_\ell < x_\ell - \alpha\Delta$,
    if $\beta_3 \geq \lceil \countsketchrepetitions / 2\rceil$ then $\hat{x}_\ell > x_\ell + \alpha\Delta$, and
    otherwise $|\hat{x}_\ell - x_\ell| \leq \alpha\Delta$.

The number of states in $Q$ is $O(\countsketchrepetitions^3\countsketchrange d 2^w)$ and the number of blocks of bits read is $\countsketchrepetitions d$, both of which are $2^{O(w)}$.
Theorem~\ref{lma:generalized-nisan} with $n = O(w)$ now implies that the total variation distance $\|Q(G_k(*, \bh_1,\ldots,\bh_k)) - Q((U_n)^{2^k})\|$ is at most $2^{-cw}$ for \emph{some} constant $c>0$.
The additive term $2^{-cw}$ can be made smaller than $2^{-C w}$ for \emph{any} constant $C>1$ by adjusting the parameter $n$ since a Word RAM with a constant factor larger word size can be simulated with a constant factor overhead in time.
In particular, error probabilities grow by at most $2^{-Cw}$ when switching from fully random hash functions to hash functions defined by to \FastPRG.
Finally, note that the space usage of \FastPRG is $O(w^2 b/\log b)$ bits, and that we can compute the hash function values $\countsketchhash_1(\ell), \ldots, \countsketchhash_\countsketchrepetitions(\ell)$ and $\bs_1(\ell), \ldots, \bs_\countsketchrepetitions(\ell)$ in time $O(\countsketchrepetitions w / \log b)$.

\subsection{Alternative Derandomizations of CountSketch with Nisan's PRG}\label{subsec:alternate-derandomizations-nisan}
We note that there are alternative derandomizations using Nisan's PRG that do not incur the na\"ive $O(\log d)$ factor overhead in terms of space in some regimes. We can use the fact that Nisan's PRG also ``fools'' small space algorithms that make multiple passes over the pseudorandom string. Specifically, Lemma~2.3 of \cite{david2011strong} shows that Nisan's PRG fools a small space algorithm that makes $2$ passes over the pseudorandom string. We derandomize each repetition $i \in [\countsketchrepetitions]$ separately. Fix an index $\ell \in [d]$ and repetition $i \in [\countsketchrepetitions]$ and consider the string used to compute hash functions $\bg_i$ and $\bs_i$ for repetition $i$. An algorithm in the first pass over the string computes the bucket into which the index $\ell$ gets hashed into and in the second pass over the string computes the value, denoted by $\hat{x}_{i, \bg_i(\ell)}$, of the bucket into which the $\ell$-th coordinate gets hashed into in the $i$-th repetition. Finally the algorithm terminates with the value $\bs_i(\ell)\hat{x}_{i, \bg_i(\ell)}$. As this algorithm overall takes $O(\log d)$ space, it is ``fooled'' by Nisan's PRG with a seed length of $O(\log^2 d)$. Thus, we obtain $\Pr[|\bs_i(\ell)\hat{x}_{i, \bg_i(\ell)} - x_{\ell}| \le \alpha\Delta] \gtrsim \alpha$ as in proof of Theorem~4.1 of \cite{Minton2014improved} even when each repetition of CountSketch is independently derandomized using a string sampled from Nisan's PRG. Overall, if each repetition is derandomized using an independent pseudorandom string, we obtain that $\Pr[|\hat{x}_{\ell} - x_{\ell}| > \alpha \Delta] \le 2\exp(-\Omega(\alpha^2 r))$. While this derandomization has a fast update time, a drawback is that the overall seed length is $O(r \log^2 d)$ (a string of $O(\log ^2 d)$ bits for each $i \in [r]$) which can be larger than the space complexity of CountSketch ($O(\countsketchrepetitions \countsketchrange \log d)$ bits) when $\countsketchrange = o(\log d)$.

There is also another derandomization using Nisan's PRG which avoids space blow-up in the case of $rt = \omega(\log d)$. Instead of using independent samples from Nisan's PRG for each repetition of CountSketch, we derandomize the construction all-at-once. Consider the following algorithm that uses a single sample from Nisan's PRG to construct all hash functions $\bg_i$ and $\bs_i$. Fix a vector $x$ and coordinate $\ell$. The algorithm makes a first pass over the string to determine into which bucket the index $\ell$ gets hashed into in each of the repetitions.
The information can be stored using $O(r\log t)$ bits. In the second pass over the random string, the algorithm can then determine if the estimate $\hat{x}_{\ell}$ satisfies $|\hat{x}_{\ell} - x_{\ell}| \le \alpha \Delta$. Overall the algorithm uses a space of $O(\log d + r\log t)$ bits. Using this fact, one can obtain a derandomization of CountSketch with the above estimation error guarantee and the CountSketch derandomized using Nisan's generator can be stored in $O(\countsketchrepetitions \cdot \countsketchrange + \log d)$ words of space, which is asymptotically the same as the space complexity of CountSketch derandomized using \FastPRG. However, Nisan's generator needs to fool an $O(\countsketchrepetitions \cdot \log\countsketchrange + \log d)$ space algorithm which makes the update time much slower, on a machine with $O(\log d)$ word size, compared to the derandomization using \FastPRG which only has to fool an $O(\log d)$ space algorithm.

\section{Private CountSketch}
Pagh and Thorup~\cite{Pagh2022improved} recently analyzed the estimation error of CountSketch data structure made private using the Gaussian Mechanism. Their analysis requires the hash functions $\bg_1,\ldots,\bg_{\countsketchrepetitions} : \set{0, \ldots, d-1} \rightarrow [\countsketchrange]$ and $\bs_1,\ldots,\bs_{\countsketchrepetitions} : \set{0, \ldots, d-1} \rightarrow \set{+1, -1}$ to be fully random. They define
\begin{align*}
    \PCS(x) = \CS(x) + \bnu
\end{align*}
where $\bnu$ is a $D = \countsketchrepetitions \cdot \countsketchrange$ dimensional vector with independent Gaussian random variables of mean $0$ and variance $\sigma^2$. By taking $\sigma$ to be appropriately large, we obtain that $\PCS(x)$ is $(\varepsilon, \delta)$-differentially private. For $\ell \in \set{0, 1,\ldots, d-1}$, we can define the estimator $\hat{x}_{\ell}$ as
\begin{align*}
    \hat{x}_{\ell} = \text{median}(\setbuilder{\bs_{i}(\ell) \cdot \PCS(x)_{i, \bg_i(\ell)}}{i \in [\countsketchrepetitions]}).
\end{align*}
As we saw in Section~\ref{sec:derandomizing-countsketch}, if $\sigma$ were $0$, then the hash functions $\bg_i$ and $\bs_i$ can be derandomized using \FastPRG while obtaining tail bounds on the estimation error $|x_{\ell} - \hat{x}_{\ell}|$. A similar argument which crucially uses the symmetry property of \FastPRG shows that PCS can also be derandomized using \FastPRG. For the case of Private CountSketch with fully random hash functions the following theorem is shown in~\cite{Pagh2022improved}:
\begin{theorem}[\cite{Pagh2022improved}]
    For every $\alpha \in [0,1]$ and every $\ell \in [d]$, the estimation error of private CountSketch with $\countsketchrepetitions$ repetitions, table size $\countsketchrange$ and $\bnu \sim N(0,\sigma^{2})^{D}$, then
    \begin{align*}
        \Pr[|\hat{x}_{\ell} - x_{\ell}| \ge \alpha \max(\Delta, \sigma)] \le 2\exp(-\Omega(\alpha^2 \countsketchrepetitions))
    \end{align*}
    where $\Delta = \opnorm{\text{tail}_{\countsketchrange}(x)}/\sqrt{\countsketchrange}$.
    \label{thm:pagh-thorup-pcs}
\end{theorem}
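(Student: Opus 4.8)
\subsubsection*{Proof proposal for Theorem~\ref{thm:pagh-thorup-pcs}}

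The plan is to follow the Fourier-analytic approach of Minton and Price used for Lemma~\ref{lemma:count-sketch} and track the effect of the added Gaussian noise. Fix $\ell \in \set{0,\ldots,d-1}$ and a repetition $i \in [\countsketchrepetitions]$, and write the simple estimator as $\bs_i(\ell)\cdot\PCS(x)_{i,\bg_i(\ell)} = x_\ell + Y_i$ with
\begin{align*}
    Y_i = \bs_i(\ell)\sum_{\ell' \neq \ell}\bs_i(\ell')\,x_{\ell'}\,\bone[\bg_i(\ell')=\bg_i(\ell)] + \bs_i(\ell)\,\bnu_{i,\bg_i(\ell)} \enspace .
\end{align*}
Conditioned on $\bg_i(\ell)$, full independence of $\bs_i,\bg_i$ makes the first sum a sum of independent symmetric variables taking values $\pm x_{\ell'}$ with probability $1/(2\countsketchrange)$ each and $0$ otherwise, and the second term is an independent centred Gaussian of variance $\sigma^2$; thus $Y_i$ is symmetric and (for $\sigma>0$) atomless, with characteristic function $\widehat f(\xi)=e^{-\sigma^2\xi^2/2}\prod_{\ell'\neq\ell}\bigl(1-\tfrac1\countsketchrange+\tfrac1\countsketchrange\cos(\xi x_{\ell'})\bigr)$.

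The first and main step is a single-repetition anti-concentration estimate: for every $\alpha\in[0,1]$, $\Pr[\,Y_i>\alpha\max(\Delta,\sigma)\,]\le\tfrac12-\Omega(\alpha)$ and likewise for the lower tail. By symmetry of $Y_i$ this is equivalent to $\Pr[\,|Y_i|\le\alpha\max(\Delta,\sigma)\,]\ge\Omega(\alpha)$, which I would derive from Fourier inversion $f(y)=\tfrac1{2\pi}\int e^{-\mathrm{i}\xi y}\widehat f(\xi)\,d\xi$: the $\countsketchrange$ heaviest coordinates contribute factors of absolute value at most $1$, the remaining coordinates contribute $\prod(1-(1-\cos(\xi x_{\ell'}))/\countsketchrange)\le\exp(-\Theta(\xi^2)\opnorm{\text{tail}_\countsketchrange(x)}^2/\countsketchrange)=\exp(-\Theta(\xi^2\Delta^2))$ for $|\xi|$ small, and the Gaussian factor contributes $e^{-\sigma^2\xi^2/2}$, which additionally damps all large $\xi$. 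Hence $|\widehat f(\xi)|\lesssim\exp(-\Theta(\xi^2\max(\Delta,\sigma)^2))$ near the origin and is integrably small elsewhere, so the density $f$ is $\Theta(1/\max(\Delta,\sigma))$ on an interval of length $\Theta(\max(\Delta,\sigma))$ around $0$ --- exactly the Minton--Price estimate but with $\Delta$ replaced by $\max(\Delta,\sigma)$. Integrating the density over $[-\alpha\max(\Delta,\sigma),\alpha\max(\Delta,\sigma)]$ gives the claimed bound.

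The second step is the standard median amplification. Since the $\countsketchrepetitions$ repetitions are mutually independent, the count $N_+$ of repetitions with $\bs_i(\ell)\PCS(x)_{i,\bg_i(\ell)}>x_\ell+\alpha\max(\Delta,\sigma)$ is a sum of i.i.d.\ indicators each of mean $\le\tfrac12-\Omega(\alpha)$, so a Chernoff bound gives $\Pr[N_+\ge\countsketchrepetitions/2]\le\exp(-\Omega(\alpha^2\countsketchrepetitions))$, and symmetrically for the number $N_-$ of underestimates. As $\countsketchrepetitions$ is odd, $\hat{x}_\ell>x_\ell+\alpha\max(\Delta,\sigma)$ forces $N_+\ge\countsketchrepetitions/2$ and $\hat{x}_\ell<x_\ell-\alpha\max(\Delta,\sigma)$ forces $N_-\ge\countsketchrepetitions/2$; a union bound then yields $\Pr[\,|\hat{x}_\ell-x_\ell|\ge\alpha\max(\Delta,\sigma)\,]\le2\exp(-\Omega(\alpha^2\countsketchrepetitions))$.

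I expect the crux to be the single-repetition bound at the scale $\max(\Delta,\sigma)$: one has to show that convolving the CountSketch error distribution with the $N(0,\sigma^2)$ noise can only \emph{help} concentration and that the resulting width becomes $\Theta(\max(\Delta,\sigma))$ rather than $\Delta$ alone, which means redoing the density bounds of Minton--Price with the extra $e^{-\sigma^2\xi^2/2}$ factor and checking both regimes $\sigma\gtrsim\Delta$ (the Gaussian dominates) and $\sigma\lesssim\Delta$ (it is negligible) together with their crossover. The fact that we need anti-concentration of $Y_i$ near zero, rather than merely a tail or variance bound, is precisely what forces the Fourier argument here.
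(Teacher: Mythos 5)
This statement is cited from~\cite{Pagh2022improved}; the paper under review does not prove it but takes it as a black box (the paper's contribution is to \emph{derandomize} the fully-random hash functions in Theorem~\ref{thm:pagh-thorup-pcs}, not to re-derive the concentration bound itself). So there is no ``paper's own proof'' to compare against.

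Taken on its own terms, your reconstruction is on the right track and matches the route \cite{Pagh2022improved} follow: they indeed extend the Minton--Price Fourier argument, the characteristic function of the single-repetition error does factor as $e^{-\sigma^2\xi^2/2}\prod_{\ell'\ne\ell}\bigl(1-\tfrac{1}{\countsketchrange}(1-\cos(\xi x_{\ell'}))\bigr)$, and the median step via Chernoff is exactly the amplification used in Lemma~\ref{lemma:count-sketch}. Two points in your sketch are stated more loosely than they can be left. First, the inference ``$|\widehat f(\xi)|\lesssim\exp(-\Theta(\xi^2\max(\Delta,\sigma)^2))$, hence $f$ is $\Theta(1/\max(\Delta,\sigma))$ near $0$'' only delivers the upper half of that $\Theta$; anti-concentration needs a \emph{lower} bound on $f$ near $0$, which requires a matching lower bound $\widehat f(\xi)\gtrsim\exp(-O(\xi^2\max(\Delta,\sigma)^2))$ for small $\xi$ together with an argument that the contribution from $|\xi|\gtrsim 1/\max(\Delta,\sigma)$ cannot cancel it. (One helpful fact you did not state: for $\countsketchrange\ge 2$ every factor $1-\tfrac1\countsketchrange(1-\cos(\xi x_{\ell'}))$ is nonnegative, so $\widehat f\ge 0$ and Fourier inversion gives $f(0)=\tfrac1{2\pi}\int\widehat f$, which makes the lower bound tractable.) Second, ``the $\countsketchrange$ heaviest coordinates contribute factors of absolute value at most $1$'' is not the interesting part of Minton--Price's handling of the head: every factor is at most $1$. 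What one actually needs is that the head factors are bounded \emph{below} by a constant for $|\xi|\lesssim 1/\max(\Delta,\sigma)$, e.g.\ via $\prod(1-2/\countsketchrange)\ge e^{-O(1)}$ over the at-most-$\countsketchrange$ head indices, and that their potential oscillation for larger $|\xi|$ is killed by the Gaussian/tail decay. With those two points tightened the sketch becomes a correct reconstruction of the cited proof.
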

We now derandomize the requirement that the hash functions $\bg_i$ and $\bs_i$ be fully random.
The proof is an extension of the proof in Section~\ref{sec:derandomizing-countsketch} and proceeds very similarly. We detail it below for completeness. Instead of constructing an FSM, we describe a small space algorithm which makes a single pass over the randomness while updating its state after reading a block of random bits. The algorithm can be easily converted to an FSM.

Fix a vector $x \in \R^d$ (corresponds to the final value of the vector in the stream), $\nu \in \R^{D}$ (corresponds to the Gaussian vector we add to CountSketch to make it private), a coordinate $\ell \in [d]$ and a parameter $\alpha$.
We initialize three variables $\textsf{acc} = \textsf{deficit} = \textsf{excess} = 0$.
The algorithm reads a block of $w$ bits from the input string. Using the symmetry of $\FastPRG$ we again assume that the FSM sees the blocks of $\bgamma$ as they are in $\bgamma^{\oplus \ell}$. Thus, for each repetition, the block that the FSM sees first can be used to determine $\bg_i(\ell)$ and $\bs_i(\ell)$. For the first repetition, using the zeroth block of bits, the 
FSM computes and stores the values $\bg_1(\ell)$ and $\bs_1(\ell)$ locally and update the accumulator \textsf{acc} to $\bs_1(\ell) x_{0 \oplus \ell}$ (note that $x_{0 \oplus \ell} = x_\ell$) and move to the next block of bits in the input string.

Suppose we are reading the $j$-th block of random bits.
Again, we can determine $\bg_i(j \oplus \ell)$ and $\bs_i(j \oplus \ell)$.
If $\bh_1(j \oplus \ell) \ne \bg_1(\ell)$, we move to reading the $(j+1)$-th block.
If $\bg_1(j \oplus \ell) = \bg_1(\ell)$, we add $\bs_1(j \oplus \ell)x_{j \oplus \ell}$ to the accumulator and move to the $(j+1)$-th block. 
After reading $d$ blocks, if $\bs_1(\ell) (\textsf{acc} + \nu_{1, \bg_1(\ell)}) \ge x_{\ell} + \alpha \max(\Delta, \sigma)$ we increase the variable $\textsf{excess}$  by $1$. 
If $\bs_1(\ell) (\textsf{acc} + \nu_{1, \bg_1(\ell)}) \le x_{\ell} - \alpha \max(\Delta, \sigma)$, we increase $\textsf{deficit}$ by $1$. 
Then we zero out the variable $\textsf{acc}$, remove the stored values $\bg_1(\ell)$ and $\bs_1(\ell)$ and repeat the process by reading the next block of bits.
We again determine $\bg_2(\ell)$ and $\bs_2(\ell)$ and set $\textsf{acc}$ to $\bs_2(\ell) x_{\ell}$.
We then move to the next block and so on. 
We do the process in total for $\countsketchrepetitions$ of times.
Finally, if $\textsf{excess} < \countsketchrepetitions/2$ and $\textsf{deficit} < \countsketchrepetitions/2$, we set the variable \textsf{Status} to \textsf{Success} and otherwise set \textsf{Status} to \textsf{Failure}. 

The algorithm, at any point of time needs to store only $O(\log d)$ bits. As the algorithm needs only ``read'' access to $x$, the entire algorithm can be converted to an FSM, which we call $Q_{x,\nu, \ell,\alpha}$, that has $\poly(d)$ states over the alphabet $\set{0,1}^{w}$. The variable $\textsf{Status}$ determines if the FSM ends in a state $\textsf{Success}$ or \textsf{Failure}.

If the input string to the FSM is uniformly random, then the hash functions $\bg_i$ and $\bs_i$ are fully random. Therefore by Theorem~\ref{thm:pagh-thorup-pcs}, we obtain that
\begin{align*}
    \E_{\bnu}[\Pr_{\bgamma \sim U}[\text{Final State} = \textsf{Success}]] \ge 1 - 2\exp(-\Omega(\alpha^2 \countsketchrepetitions)).
\end{align*}
Now consider \FastPRG with a block size $w = \Omega(\log d)$. By Theorem~\ref{thm:nisan-main-theorem}, for any fixed $x, \nu, \ell$ and $\alpha$ that
\begin{align*}
    &\Pr_{\bgamma \sim \FastPRG}[\text{Final State of FSM $Q_{x, \nu, \ell, \alpha}$ on input $\bgamma$}]\\
    &\quad \ge \Pr_{\bgamma \sim U}[\text{Final State of FSM $Q_{x, \nu, \ell, \alpha}$ on input $\bgamma$}] - O(2^{-cw}).
\end{align*}
By taking an expectation over $\bnu \sim N(0, \sigma^2)^{D}$, we have
\begin{align*}
    \E_{\bnu}\Pr_{\bgamma \sim \FastPRG}[\text{Final State of FSM $Q_{x, \bnu, \ell, \alpha}$ on input $\bgamma$}]\ge  1 - 2\exp(-\Omega(\alpha^2 r)) -  O(2^{-cw}).
\end{align*}
We therefore obtain that with probability $\ge 1 - 2\exp(-\Omega(\alpha^2r)) - 2 \cdot 2^{-cw}$ over $\bnu \sim N(0,\sigma^2)^D$ and $\bgamma \sim \FastPRG$, if $\bg_i$ and $\bs_i$ are hash functions constructed as a function of $\bgamma$ as described in the above algorithm, then with probability $\ge 1 - \exp(-\Omega(\alpha^2 r)) - O(2^{-cw})$ over $\bnu$ and $\bgamma$,
\begin{align*}
     |\text{median}_{i \in [r]}(\bs_i(\ell)(\sum_{j = 0}^{d-1} [\bg_i(j) = \bg_i(\ell)]\bs_j(i)x_{j} + \nu_{i, \bg_i(\ell)})) - x_{\ell}| \le \alpha \max(\Delta, \sigma).
\end{align*}
Thus we have the following theorem.
\begin{theorem}
    For every $\alpha \in [0,1]$ and every $\ell \in \set{0, \ldots, d-1}$, the estimation error of private CountSketch with $\countsketchrepetitions$ repetitions, table size $\countsketchrange$ derandomized using \FastPRG with a block size of $w$ and $\bnu \sim N(0,\sigma^2)^{D}$, then
    \begin{align*}
        \Pr[|\hat{x}_{\ell} - x_{\ell}| \ge \alpha \max(\Delta, \sigma)] \le 2 \exp(-\Omega(\alpha^2 r)) + O(2^{-cw})
    \end{align*}
    where $\Delta = \opnorm{\text{tail}_t(x)}/\sqrt{t}$.
\end{theorem}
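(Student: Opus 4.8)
The plan is to reuse, essentially verbatim, the derandomization recipe of Section~\ref{sec:derandomizing-countsketch}: for a fixed input and query we design a finite state machine that reads the output of \FastPRG one $w$-bit block at a time and, at the end, holds enough information to decide whether the private estimator $\hat{x}_\ell$ errs by at least $\alpha\max(\Delta,\sigma)$; since this FSM has only $2^{O(w)}$ states, Theorem~\ref{thm:nisan-main-theorem} (with block size $n=\Theta(w)$ chosen large enough) guarantees that its acceptance probability changes by at most $O(2^{-cw})$ when fed \FastPRG output instead of uniform bits. The only new ingredient relative to Section~\ref{sec:derandomizing-countsketch} is the additive Gaussian noise, which we handle by fixing the noise vector first and averaging over it at the very end.

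Concretely, I would fix an arbitrary $x\in\{-M,\dots,M\}^d$, an arbitrary $\nu\in\R^D$ (playing the role of the noise), an index $\ell$, and a threshold parameter $\alpha$, and build an FSM $Q_{x,\nu,\ell,\alpha}$ over the alphabet $\{0,1\}^w$ in which $\nu$, $x_\ell$, and $\alpha\max(\Delta,\sigma)$ are hard-wired into the transition function rather than stored in the state. Using the symmetry property of \FastPRG I may assume the FSM reads the $rd$ blocks in the order $0\oplus\ell,\,1\oplus\ell,\dots,(rd-1)\oplus\ell$; this is the crucial point, since for each repetition $i$ the first block then reveals $\bg_i(\ell)$ and $\bs_i(\ell)$, after which a running accumulator can maintain $\bs_i(\ell)\sum_{j}[\bg_i(j)=\bg_i(\ell)]\bs_i(j)x_j$ in a single pass using $O(w)$ bits, as its magnitude never exceeds $dM<2^w$. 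After finishing the $d$ blocks of repetition $i$, the FSM compares $\bs_i(\ell)(\textsf{acc}+\nu_{i,\bg_i(\ell)})$ to $x_\ell\pm\alpha\max(\Delta,\sigma)$, increments an $\textsf{excess}$ or $\textsf{deficit}$ counter accordingly, zeroes $\textsf{acc}$, and proceeds; after all $r$ repetitions the two counters determine whether $|\hat{x}_\ell-x_\ell|\ge\alpha\max(\Delta,\sigma)$. The state space has size $2^{O(w)}$, so Theorem~\ref{thm:nisan-main-theorem} applies and, for this fixed $\nu$, the probability the FSM reports failure differs by at most $O(2^{-cw})$ between $\bgamma\sim U$ and $\bgamma\sim\FastPRG$.

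To finish, note that when $\bgamma$ is uniform the induced hash functions $\bg_i,\bs_i$ are fully independent, so $\E_{\bnu}\Pr_{\bgamma\sim U}[\text{failure}]\le 2\exp(-\Omega(\alpha^2 r))$ by Theorem~\ref{thm:pagh-thorup-pcs}; averaging the per-$\nu$ inequality of the previous step over $\bnu\sim N(0,\sigma^2)^D$ then yields $\E_{\bnu}\Pr_{\bgamma\sim\FastPRG}[\text{failure}]\le 2\exp(-\Omega(\alpha^2 r))+O(2^{-cw})$, which is exactly the asserted bound on $\Pr[|\hat{x}_\ell-x_\ell|\ge\alpha\max(\Delta,\sigma)]$. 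The main obstacle is keeping the FSM genuinely $O(w)$-space: it must know $\bg_i(\ell)$ \emph{before} it has seen most of the coordinates of repetition $i$, which is impossible for Nisan's generator and is precisely what the XOR-symmetry of \FastPRG buys us, and it must incorporate the fixed real noise $\nu$ without blowing up the state count, which works because $\nu$ enters only through hard-wired threshold comparisons and because the quantifiers are ordered ``fix $\nu$, then average''. One should double-check, exactly as in Section~\ref{sec:derandomizing-countsketch}, that every accumulation and comparison (including the lookup of the Gaussian coordinate $\nu_{i,\bg_i(\ell)}$) only needs the same $O(w)$-bit precision at which CountSketch itself is represented; the rest is routine bookkeeping mirroring the non-private derandomization.
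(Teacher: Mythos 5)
Your proposal is correct and follows essentially the same route as the paper: fix $x,\nu,\ell,\alpha$, use the XOR-symmetry of \FastPRG to read blocks in the order $0\oplus\ell,1\oplus\ell,\dots$, build the single-pass FSM that tracks a running accumulator and excess/deficit counters with $\nu$ hard-wired into the transitions, invoke Theorem~\ref{thm:nisan-main-theorem} for the per-$\nu$ fooling bound, and then average over $\bnu\sim N(0,\sigma^2)^D$ using Theorem~\ref{thm:pagh-thorup-pcs} for the fully-random case. The quantifier order ``fix $\nu$, then average'' is exactly the device the paper uses as well.
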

\section{Estimating \texorpdfstring{$\linf{x}$}{l-inf-x}}
Let $x \in \R^d$ be the underlying vector we are maintaining in a turnstile stream. Assume that the coordinates of $x$ are integers bounded in absolute value by $\poly(d)$. We give a simple algorithm that uses only $O(\varepsilon^{-2}\log d \log 1/\varepsilon)$ bits of space and approximates $\linf{x}$ up to an additive error of $\varepsilon\opnorm{x}$. We give a matching lower bound and show that our algorithm is tight up to constant factors.

Let $t \le d$ be a parameter that we set later. Let $\bL : \R^d \rightarrow \R^t$ be a randomized linear map defined as
\begin{align*}
    (\bL x)_{i} = \sum_{j: \bh(j) = i} \bs(i) x_i.
\end{align*}
Here we assume that the hash function $\bh$ is drawn from a 2-wise independent hash family $\calH = \set{h : [d] \rightarrow [t]}$ and the sign function $\bs$ is drawn from a 4-wise independent hash family $\calS = \set{s : [d] \rightarrow \set{+1, -1}}$. We note that there exist families $\calH$ and $\calS$ such that $\bh$ and $\bs$ can be sampled from their respective families and stored using $O(\log d)$ bits. We prove the following lemma:
\begin{lemma}
    Given a parameter $\alpha$, if $t \ge 1/2\alpha^4\delta$, then with probability $\ge 1 - 3\delta$, the following simultaneously hold:
    \begin{enumerate}
        \item $\linf{\bL} = \linf{x} \pm (2\sqrt{\alpha}/\delta^{1/4})\opnorm{x}$ and
        \item $\opnorm{\bL x}^2 \le (1+2\alpha^2)\opnorm{x}^2$.
    \end{enumerate}
    \label{lemma:first-level-hashing}
\end{lemma}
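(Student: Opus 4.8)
The plan is to split $x$ into a ``heavy'' part supported on its top few coordinates and a ``light'' part, argue that the hashing perfectly \emph{isolates} the heavy coordinates, and control the light part with a fourth-moment bound. Fix $k = \Theta(1/\alpha^2)$; we may assume $x$ has more than $k$ nonzero coordinates, since otherwise the light part below is $0$ and both conclusions are trivial. Let $H\subseteq[d]$ consist of the $k$ coordinates of $x$ of largest absolute value (so $\ell^* := \argmax_j|x_j|\in H$), and let $z := x_{[d]\setminus H}$ be $x$ with the entries of $H$ zeroed out, so that $x = x_H + z$. Then $\opnorm{z}\le\opnorm{x}$, and since the $(k+1)$-st largest magnitude of $x$ is at most $\opnorm{x}/\sqrt{k}$ we get $\linf{z}\le\opnorm{x}/\sqrt{k}$. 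The hypothesis $t\ge 1/(2\alpha^4\delta)$ is exactly what makes $k^2 = O(\delta t)$, so by $2$-wise independence of $\bh$ and a union bound over the $\binom{k}{2}$ pairs, the event $\mathcal{A}$ that all coordinates of $H$ land in distinct buckets has $\Pr[\lnot\mathcal{A}]\le\delta$ (with the constant in $k$ chosen appropriately).

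Conditioned on $\mathcal{A}$, every bucket receives at most one coordinate of $H$. Using $\bL x = \bL(x_H) + \bL z$ and the fact that the $i$-th coordinate of $\bL(x_H)$ is $0$ or $\pm x_j$ for the unique $j\in H$ hashing to bucket $i$, the bucket $\bh(\ell^*)$ satisfies $|(\bL x)_{\bh(\ell^*)}| \ge \linf{x} - \linf{\bL z}$, while every bucket satisfies $|(\bL x)_i| \le \linf{x} + \linf{\bL z}$; hence on $\mathcal{A}$ we have $|\linf{\bL x} - \linf{x}| \le \linf{\bL z}$, and it remains to bound $\linf{\bL z}$. This is where the fourth moment enters: $\linf{\bL z}^4 \le \sum_{i=1}^t (\bL z)_i^4$, and for fixed $\bh$, writing $m_i := \sum_{j:\bh(j)=i} z_j^2$, the $4$-wise independence of $\bs$ gives $\E_{\bs}[(\bL z)_i^4\mid\bh] \le 3 m_i^2$, so $\E[\linf{\bL z}^4] \le 3\,\E_{\bh}\!\big[\sum_i m_i^2\big]$. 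Expanding $\sum_i m_i^2 = \sum_j z_j^4 + \sum_{j\ne j'}[\bh(j)=\bh(j')] z_j^2 z_{j'}^2$ and taking $\E_{\bh}$ (again $\Pr[\bh(j)=\bh(j')]=1/t$) bounds this by $\linf{z}^2\opnorm{z}^2 + \opnorm{z}^4/t \le \opnorm{x}^4(1/k + 1/t) = O(\alpha^2)\opnorm{x}^4$, using $\linf{z}^2\le\opnorm{x}^2/k$ and $1/t\le 2\alpha^4\delta \le \alpha^2$. A single Markov bound on $\linf{\bL z}^4$ then yields $\linf{\bL z}\le(O(\alpha^2)/\delta)^{1/4}\opnorm{x} \le (2\sqrt{\alpha}/\delta^{1/4})\opnorm{x}$ except with probability $\delta$. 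It is essential that this is \emph{one} fourth-moment Markov step carried out jointly over $\bh$ and $\bs$: two separate second-moment steps would cost an extra $\delta^{-1/4}$ factor.

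For the second conclusion I would expand $\opnorm{\bL x}^2 = \opnorm{x}^2 + Y$ with $Y = \sum_{j\ne j'}[\bh(j)=\bh(j')]\,\bs(j)\bs(j')\,x_j x_{j'}$. Then $\E_{\bs}[Y\mid\bh]=0$, and $4$-wise independence of $\bs$ collapses $\E_{\bs}[Y^2\mid\bh]$ to (up to a factor $2$) the diagonal terms $\sum_{j\ne j'}[\bh(j)=\bh(j')] x_j^2 x_{j'}^2$, whence $\E[Y^2]\le 2\opnorm{x}^4/t$. Chebyshev's inequality with threshold $2\alpha^2\opnorm{x}^2$ together with $t\ge 1/(2\alpha^4\delta)$ gives $|Y|\le 2\alpha^2\opnorm{x}^2$, i.e.\ $\opnorm{\bL x}^2\le(1+2\alpha^2)\opnorm{x}^2$, except with probability $\delta$. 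A union bound over the three failure events $\lnot\mathcal{A}$, the failure of the $\linf{\bL z}$ bound, and the failure of the $Y$ bound, each of probability at most $\delta$, shows both conclusions hold simultaneously with probability at least $1-3\delta$.

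I expect the main obstacle to be getting the $\ell_\infty$ error term with the correct dependence on $\alpha$ and $\delta$ simultaneously: a naive argument that bounds $\linf{\bL z}$ by $\opnorm{\bL z}$ loses a $\sqrt{t}$ factor and gives a coefficient $\Theta(1/\sqrt{\delta})$ that does not improve as $\alpha\to 0$, so the proof must exploit both that each bucket carries only an $O(\alpha^2)$ fraction of the $\ell_2^2$ mass of the light part (via the fourth moment over both $\bh$ and $\bs$) and that the heavy coordinates are perfectly isolated (via $\mathcal{A}$). Making everything fit under the stated constants ``$2\sqrt{\alpha}/\delta^{1/4}$'' and ``$3\delta$'' is then a careful but routine choice of $k = \Theta(1/\alpha^2)$ and accounting of the three $\delta$'s.
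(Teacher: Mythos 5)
Your proof is correct and follows essentially the same route as the paper's: split $x$ into a heavy part of $O(1/\alpha^2)$ coordinates (you use top-$k$, the paper uses the value threshold $|x_j|\ge\alpha\opnorm{x}$, which are interchangeable here), show by pairwise independence of $\bh$ that they land in distinct buckets, control $\linf{\bL z}$ with a fourth-moment-over-$(\bh,\bs)$ bound, and bound $\opnorm{\bL x}^2$ by a variance/Chebyshev estimate. The paper organizes the $\linf{\bL z}$ step as per-coordinate Chebyshev on $(\bL z)_i^2$ plus a union bound over the $t$ buckets, whereas you apply Markov once to $\sum_i (\bL z)_i^4$ — the same moment computation, just packaged differently — and your top-$k$ choice for $H$ also avoids the paper's small case analysis for when $\textLarge=\emptyset$.
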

\begin{proof}
Define
$
    \textLarge \coloneqq \setbuilder{j \in [d]}{|x_j| \ge \alpha\opnorm{x}}
$
and $\textsc{Small} \coloneqq [d] \setminus \textLarge$. Note that $|\textLarge| \le 1/\alpha^2$. Let $x_{\textLarge} \in \R^d$ be the $d$-dimensional vector with only the \textLarge{} coordinates of vector $x$ and define $x_{\textSmall} = x - x_{\textLarge}$. The following result is now a simple consequence of the 2-wise independence of $\bh$.
\begin{lemma}
If $t \ge |\textLarge|^2/(2\delta)$, then with probability $\ge 1 - \delta$, for all $j,j' \in \textLarge$ with $j \ne j'$, we have $\bh(j) \ne \bh(j')$.
\end{lemma}
\begin{proof}
For $j, j' \in \textLarge$ with $j \ne j'$, let $\bX_{j,j'} = 1$ if $\bh(j) = \bh(j')$ and $0$ otherwise. Using the 4-wise independence of $\calH$, we have $\E[\bX_{j,j'}] = 1/t$. Hence, $\E[\sum_{j < j'}\bX_{j,j'}] \le |\textLarge|^2/(2t)$. By Markov's inequality, with probability $\ge 1 - \delta$, $\sum_{j < j'}\bX_{j,j
'} \le |\textLarge|^2/(2t\delta) \le 1/2$ if $t \ge |\textLarge|^2/\delta$. By definition, the random variable $\sum_{j < j'}\bX_{j,j'}$ takes only non-negative integer values. Hence, we obtain that with probability $\ge 1 - \delta$, $\sum_{j < j'}\bX_{j,j'} = 0$ which implies that the hash function $\bh$ hashes each of the coordinates in the set $\textLarge$ to distinct locations.  
\end{proof}
As $t \ge 1/2\alpha^4\delta \ge |\textLarge|^2/2\delta$, we get that all the coordinates in the set $\textLarge$ are hashed to distinct buckets by the hash function $\bh$ with probability $1 - \delta$. 
We now bound $\linf{\bL x_{\textSmall}}$. By definition of the set \textSmall, we have $\linf{x_{\textSmall}} \le \alpha\opnorm{x}$ and $\opnorm{x_{\textSmall}} \le \opnorm{x}$. Let $i \in [t]$ be arbitrary. We have
$
    (\bL x_{\textSmall})_i = \sum_{j \in \textSmall} [\bh(j) = i] \bs(j)x_j
$
and
\begin{align*}
    \E[(\bL x_{\textSmall})_i^2] = \sum_{j,j' \in \textSmall}\Pr[\bh(j) = i, \bh(j') = i]\E[\bs(j)\bs(j')]x_j x_{j'}
\end{align*}
using the independence of $\bh$ and $\bs$. For $j \ne j' \in \textSmall$, using the 4-wise independence of $\bs$, we get $\E[\bs(j)\bs(j')] = 0$ and therefore, the above expression simplifies to
\begin{align*}
\E[(\bL x_{\textSmall})_i^2] = \sum_{j \in \textSmall} \Pr[\bh(j) = i]x_j^2 = {\opnorm{x_{\textSmall}}^2}/{t}.
\end{align*}
Similarly, we have
\begin{align*}
    \E[(\bL x_{\textSmall})_i^4] = \sum_{j_1,j_2,j_3,j_4 \in \textSmall}\Pr[\bigwedge_{k=1}^4 (\bh(j_k)=i)] \E[\prod_{k=1}^4\bs(j_k)]\prod_{k=1}^4 x_{j_k}.
\end{align*}
We now see using the 4-wise independence of $\bs$ that $\E[\bs(j_1)\bs(j_2)\bs(j_3)\bs(j_4)]$ is nonzero only when all the indices $j_1,j_2,j_3,j_4$ are equal or we can pair the indices $j_1,j_2,j_3,j_4$ into two groups taking same values. Hence,
\begin{align*}
    \E[(\bL x_{\textSmall})_i^4]
    &= \frac{\|x_{\textSmall}\|_4^4}{t} + \frac{6}{t^2}\sum_{j_1 < j_2 \in \textSmall} x_{j_1}^2x_{j_2}^2\\
    &= \frac{\|x_{\textSmall}\|_4^4}{t} + \frac{3}{t^2}\left(\opnorm{x_{\textSmall}}^4 - \|x_{\textSmall}\|_4^4\right)
\end{align*}
which then implies \[\Var[(\bL x_{\textSmall})_i^2] \le (1/t)\|x_{\textSmall}\|_4^4 + (2/t^2)\|x_{\textSmall}\|_2^4.\] Since $\|x_{\textSmall}\|_4^4 \le \linf{x_{\textSmall}}^2\opnorm{x_{\textSmall}}^2 \le \alpha^2\opnorm{x}^4$, we get $\Var[(\bL x_{\textSmall})_i^2] \le (2/t^2+\alpha^2/t)\opnorm{x}^4$. By Chebyshev's inequality,
\begin{align*}
    \Pr[(\bL x_{\textSmall})_i^2 \ge \opnorm{x_{\textSmall}}^2/t + \gamma] \le \frac{(2/t^2 + \alpha^2/t)\opnorm{x}^4}{\gamma^2}.
\end{align*}
By a union bound over all $t$ values of $i$, we get that $\linf{\bL x_{\textSmall}}^2 \le \opnorm{x_{\textSmall}}^2/t + \gamma$ with probability $\ge 1 - (2/t + \alpha^2)\opnorm{x}^4/\gamma^2$. For $t \ge 1/\alpha^2$ and $\gamma = (2\alpha/\sqrt{\delta})\opnorm{x}^2$, we have that with probability $\ge 1 - \delta$, $\linf{\bL x_{\textSmall}}^2 \le (\alpha^2  + 2\alpha/\sqrt{\delta})\opnorm{x}^2 \le (3\alpha/\sqrt{\delta})\opnorm{x}^2$. We thus finally have that if $t \ge 1/\alpha^2$, then with probability $\ge 1 - \delta$,
\begin{align*}
    \linf{\bL x_{\textSmall}} \le \frac{2\sqrt{\alpha}}{\delta^{1/4}}\opnorm{x}.
\end{align*}
Hence, by a union bound, with probability $\ge 1 - 2\delta$, $\linf{\bL x_{\textLarge}} = \linf{x_{\textLarge}}$ and $\linf{\bL x_{\textSmall}} \le (2\sqrt{\alpha}/\delta^{1/4})\opnorm{x}$. Condition on this event. If $\linf{x} \ge \alpha\opnorm{x}$, then $\linf{x_{\textLarge}} = \linf{x}$ and by triangle inequality, we get
\begin{align*}
    \linf{\bL x} &= \linf{\bL x_{\textLarge} + \bL x_{\textSmall}}\\
    &= \linf{\bL x_{\textLarge}} \pm \linf{\bL x_{\textSmall}}\\
    &= \linf{x} \pm (2\sqrt{\alpha}/\delta^{1/4})\opnorm{x}.
\end{align*}
If $\linf{x} < \alpha\opnorm{x}$, then $\textLarge = \emptyset$ and $\linf{\bL x} = \linf{\bL x_{\textSmall}} \le (2\sqrt{\alpha}/\delta^{1/4})\opnorm{x}$ which clearly satisfies $\linf{\bL x} = \linf{x} \pm (2\sqrt{\alpha}/\delta^{1/4})\opnorm{x}$. 

We will now bound $\opnorm{\bL x}^2$. First we have,
\begin{align*}
    \opnorm{\bL x}^2 &= \sum_{i \in [t]}\left(\sum_{j \in [d] : \bh(j) = i}\bs(j)x_j\right)^2\\
    &= \sum_{i \in [t]}\left(\sum_{\substack{j \in [d]\\\bh(j)=i}}x_j^2 + \sum_{\substack{j_1 \ne j_2 \in [d]:\\\bh(j_1) = \bh(j_2) = i}}\bs(j_1)\bs(j_2)x_{j_1}x_{j_2}\right)\\
    &=\opnorm{x}^2 + 2\sum_{i \in [t]}\sum_{j_1 < j_2 \in [d]}[\bh(j_1)=\bh(j_2) =i]\bs(j_1)\bs(j_2)x_{j_1}x_{j_2}.
\end{align*}
By 4-wise independence of $\bs$, we get $\E[\bs(j_1)\bs(j_2)] = 0$ for $j_1 \ne j_2$ and therefore get $\E[\opnorm{\bL x}^2] = \opnorm{x}^2$. We now bound $\Var(\opnorm{\bL x}^2)$.
\begin{small}
\begin{align*}
    &\Var(\opnorm{\bL x}^2) = \E[(\opnorm{\bL x}^2 - \opnorm{x}^2)^2]\\
    &=4\sum_{i_1, i_2}\sum_{\substack{j_1 < j_2\\ j_3 < j_4}}\Pr[\bh(j_1) = \bh(j_2) = i_1, \bh(j_3)=\bh(j_4)=i_2] \times \E[\bs(j_1)\bs(j_2)\bs(j_3)\bs(j_4)]x_{j_1}x_{j_2}x_{j_3}x_{j_4}.
\end{align*}    
\end{small}
Note that by 4-wise independence of the hash family from which the function $\bs$ is drawn, we get that $\E[\bs(j_1)\bs(j_2)\bs(j_3)\bs(j_4)]$ is $0$ unless $j_1 = j_3$ and $j_2 = j_4$ (since $j_1 < j_2$ and $j_3 < j_4$). If $j_1 = j_3$ and $j_2 = j_4$, we additionally have that $\Pr[\bh(j_1) = \bh(j_2) = i_1, \bh(j_3) = \bh(j_4) = i_2] = 0$  unless $i_1 = i_2$. Thus, the above expression simplifies to
\begin{align*}
    \Var(\opnorm{\bL x}^2) &= 4\sum_{i}\sum_{j_1 < j_2}\Pr[\bh(j_1) = \bh(j_2)= i]x_{j_1}^2x_{j_2}^2\\
    & = 4\sum_i (1/t^2)\sum_{j_1 < j_2}x_{j_1}^2x_{j_2}^2\\
    &= (2/t)(\opnorm{x}^4 - \|x\|_4^4).
\end{align*}
For $t \ge 1/2\alpha^4\delta$, we have $\Var(\opnorm{\bL x}^2) \le 4\alpha^4\delta\opnorm{x}^4$ and by Chebyshev inequality, we get that $\Pr[\opnorm{\bL x}^2 \ge \opnorm{x}^2 + 2\alpha^2\opnorm{x}^2] \le \delta$. By the union bound, we obtain the result.
\end{proof}
We now prove the following theorem.
\begin{theorem}
There is a turnstile stream algorithm using $O(\varepsilon^{-2}\log(1/\varepsilon)\log d)$ bits of space and outputs an estimate to $\linf{x}$ up to an additive error of $\varepsilon\opnorm{x}$ with probability $\ge 9/10$.
\end{theorem}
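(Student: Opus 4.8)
The plan is the two-stage linear sketch outlined in the technical overview: first apply the dimension-reducing map $\bL:\R^{d}\to\R^{t}$ of Lemma~\ref{lemma:first-level-hashing} to pass from $d$ coordinates down to $t=\poly(1/\varepsilon)$ coordinates while approximately preserving $\linf{x}$ and $\opnorm{x}$, and then run CountSketch~\cite{Charikar2004finding} on the $t$-dimensional vector $y:=\bL x$ to recover all of its coordinates up to small additive error. At the end of the stream I form the per-coordinate estimates $\hat{y}_{\ell}$ as in~\eqref{eq:countsketch-estimator} and output $\linf{\hat{y}}$. Since a turnstile update to $x_{i}$ induces a single turnstile update to $(\bL x)_{\bh(i)}$, and this in turn is a single turnstile update fed into CountSketch, the composition is a valid turnstile algorithm; the update time is $O(\countsketchrepetitions)=O(\log(1/\varepsilon))$.

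Concretely I would instantiate Lemma~\ref{lemma:first-level-hashing} with $\delta$ a small absolute constant (so that $3\delta<1/10$) and $\alpha=\Theta(\varepsilon^{2})$ chosen so that $2\sqrt{\alpha}/\delta^{1/4}\le\varepsilon/2$; this forces $t\ge 1/(2\alpha^{4}\delta)=\Theta(\varepsilon^{-8})$, which is still $\poly(1/\varepsilon)$. On the (probability $\ge 1-3\delta$) event of the lemma we then have $\linf{y}=\linf{x}\pm(\varepsilon/2)\opnorm{x}$ and $\opnorm{y}\le 2\opnorm{x}$. Next, run CountSketch on $y$ with table size $\countsketchrange=\Theta(\varepsilon^{-2})$ and $\countsketchrepetitions=\Theta(\log t)=\Theta(\log(1/\varepsilon))$ repetitions, using $2$-wise independent bucket hashes and $4$-wise independent sign hashes. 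The CountSketch guarantee of~\cite{Charikar2004finding} (as quoted in Section~\ref{sec:previous}) gives, with probability $\ge 1-1/\poly(t)$, that $\linf{y-\hat{y}}\le\Delta$ where $\Delta=\opnorm{\text{tail}_{\countsketchrange}(y)}/\sqrt{\countsketchrange}\le\opnorm{y}/\sqrt{\countsketchrange}\le 2\opnorm{x}/\sqrt{\countsketchrange}$; taking the constant in $\countsketchrange=\Theta(\varepsilon^{-2})$ large enough makes this at most $(\varepsilon/2)\opnorm{x}$. Two triangle inequalities give $\bigl|\linf{\hat{y}}-\linf{x}\bigr|\le\linf{\hat{y}-y}+\bigl|\linf{y}-\linf{x}\bigr|\le\varepsilon\opnorm{x}$, and a union bound over the two failure events (the first $\le 3\delta$, the second made $\le 1/100$ by the choice of $\countsketchrepetitions$) gives overall success probability $\ge 9/10$.

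For the space bound I would observe: the CountSketch table has $\countsketchrepetitions\cdot\countsketchrange=O(\varepsilon^{-2}\log(1/\varepsilon))$ counters, each an integer of magnitude at most $\|y\|_{1}\le\|x\|_{1}\le\poly(d)$ and hence representable in $O(\log d)$ bits, for $O(\varepsilon^{-2}\log(1/\varepsilon)\log d)$ bits in total; the hash and sign functions $\bh,\bs$ defining $\bL$ fit in $O(\log d)$ bits, and the $\countsketchrepetitions$ bucket and sign hashes on $[t]$ take $O(\countsketchrepetitions\log t)=O(\log^{2}(1/\varepsilon))$ bits, both of which are lower order since $\varepsilon^{-2}\ge\log(1/\varepsilon)$. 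This yields the claimed $O(\varepsilon^{-2}\log(1/\varepsilon)\log d)$ bits.

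I expect the only genuine wrinkle to be the regime where the required $t=\Theta(\varepsilon^{-8})$ exceeds $d$ (equivalently, $\varepsilon$ polynomially small in $d$), which violates the hypothesis $t\le d$ of Lemma~\ref{lemma:first-level-hashing}. There I would simply skip the first stage, taking $\bL$ to be a signed identity so that $\linf{\bL x}=\linf{x}$ and $\opnorm{\bL x}=\opnorm{x}$ hold exactly, and run CountSketch directly on $x$ with $\countsketchrepetitions=\Theta(\log d)$ and $\countsketchrange=\Theta(\varepsilon^{-2})$. In this regime $\varepsilon^{-8}>d$ forces $\log(1/\varepsilon)=\Omega(\log d)$, so the $O(\varepsilon^{-2}\log d)$ words $=O(\varepsilon^{-2}\log^{2}d)$ bits used is still $O(\varepsilon^{-2}\log(1/\varepsilon)\log d)$, and the remaining argument is unchanged. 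Everything else is routine variance/Chebyshev bookkeeping already carried out in Lemma~\ref{lemma:first-level-hashing} and in~\cite{Charikar2004finding}.
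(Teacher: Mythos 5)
Your proposal is correct and follows essentially the same route as the paper: the two-stage sketch consisting of the dimension-reduction map $\bL$ from Lemma~\ref{lemma:first-level-hashing} (with $\alpha = \Theta(\varepsilon^2)$, $t = \Theta(\varepsilon^{-8})$) followed by CountSketch with $\countsketchrange = \Theta(\varepsilon^{-2})$ and $\countsketchrepetitions = \Theta(\log(1/\varepsilon))$, combined via the triangle inequality and the same space accounting. You are slightly more careful than the paper in one respect: you explicitly handle the regime $\varepsilon^{-8} > d$ where $t \le d$ fails, by falling back on CountSketch applied directly to $x$ with $r = \Theta(\log d)$ and noting that $\log(1/\varepsilon) = \Omega(\log d)$ there keeps the space within budget; the paper's proof glosses over this corner case.
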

\begin{proof}
    In the above lemma, setting $\delta = 1/100$, $\alpha = \varepsilon^2/160$, we get that for $t = C/\varepsilon^8$ for a large enough constant $C$, with probability $\ge 97/100$, $\linf{\bL x} = \linf{x} \pm (\varepsilon/2)\opnorm{x}$ and $\opnorm{\bL x}^2 \le (1+4\varepsilon^4)\opnorm{x}^2$. Condition on this event. From \cite{Charikar2004finding}, if a vector $y \in \R^{m}$ is being updated in a turnstile stream, the CountSketch data structure with parameters $\countsketchrange = O(1/\varepsilon^2)$ and $\countsketchrepetitions = O(\log m)$ can be used to recover a vector $\hat{y}$ such that with probability $\ge 99/100$,
    \begin{align*}
        \linf{y - \hat{y}} \le (\varepsilon/3)\opnorm{y}
    \end{align*}
    and therefore have that $\linf{\hat{y}} = \linf{y} \pm (\varepsilon/3)\opnorm{y}$. Note that the hash functions for the CountSketch data structure can be stored using $O(\countsketchrepetitions \cdot \log m) = O(\log^2 m)$ bits. If the vector $y$ has entries bounded by $\poly(d)$, then the CountSketch data structure can be stored in $O(\countsketchrange \cdot \countsketchrepetitions \cdot \log d)$ bits. We now note that $\bL x$ is a $C/\varepsilon^8$ dimensional vector with entries bounded by $\poly(d)$. Hence using a CountSketch data structure, we can obtain an estimate $\Est$ that satisfies \[\Est = \linf{\bL x} \pm (\varepsilon/3)\opnorm{\bL x} = \linf{x} \pm \varepsilon\opnorm{x}.\]

    The two stage sketching procedure can be implemented in a turnstile stream as follows: when we receive an update $(i, \Delta)$ to the vector $x$, we supply the update $(\bh(i), \bs(i)\Delta)$ to the CountSketch data structure. The overall space usage of the algorithm is $O(\log d + (\log 1/\varepsilon)^2 + \varepsilon^{-2}\log(1/\varepsilon)\log d)$ bits where we use $O(\log d)$ bits to store the hash functions corresponding to $\bL$, $O(\log 1/\varepsilon)^2$ bits to store the hash functions corresponding to the CountSketch data structure and $O(\varepsilon^{-2}\log(1/\varepsilon)\log d)$ bits to store the CountSketch table itself. 

    To process each update in the stream, we require $O(\log 1/\varepsilon)$ time in the Word RAM model with a word size $O(\log d)$ as each update only involves evaluating $O(\log 1/\varepsilon)$ constant wise independent hash functions.
\end{proof}
We will now show that the above is tight up to constant factors.
\subsection{Space Lower Bound for estimating \texorpdfstring{$\linf{x}$}{l-infty-norm of x} in a turnstile stream}
To lower bound the space complexity of the turnstile streaming algorithm, we reduce from the Augmented Sparse Set-Disjointness problem. We define this communication problem as a combination of the so-called Augmented INDEX problem \cite{CW09} and Sparse Set-Disjointness problem \cite{dasgupta2012disjoint}. In this problem, Alice is given sets $A_1, \ldots, A_t \subseteq [n]$ and Bob is given the sets $B_1,\ldots,B_t \subseteq [n]$. Assume that for all $j \in [t]$, $|A_j|=|B_j| = k$. Bob is given an index $j$ and the sets $A_{1},\ldots,A_{j-1}$ and has to output using a one-way message $M$ from Alice if $A_j \cap B_j = \emptyset$ or not. Suppose that Alice and Bob have access to a shared random string. We say that a randomized one-way protocol has $\delta$ error if for any instance of the Augmented Sparse Set-Disjointness problem, when Alice and Bob run the protocol $\Pi$, Bob outputs the correct answer with probability $\ge 1 - \delta$. Note that in the one-way protocol, Alice can only send a single message $M$ (possibly randomized using the shared random string) to Bob. The $\delta$-error communication complexity of the Augmented Sparse Set-Disjointness problem is then defined as the minimum over all $\delta$-error protocols of the maximum length, measured in terms of number of bits, of the message sent by Alice over all the inputs. By Yao's minimax principle, we can lower bound the communication complexity of the problem by exhibiting a distribution over the inputs such that any \emph{deterministic} protocol must have a large communication complexity for Bob to output correct answer with probability $\ge 1 - \delta$ (over the distribution of inputs). We now show that there is a small enough constant $\delta$ for which the $\delta$-error communication complexity of the Augmented Sparse Set-Disjointness problem is $\Omega(tk\log k)$.

\begin{theorem}
Let $t$ be arbitrary. If $n \ge k^2$, there exists a small enough universal constant $\delta$ (independent of $t$, $k$ and $n$) such that the $\delta$ error randomized communication complexity of the Augmented Sparse Set-Disjointness problem is $\Omega(tk\log k)$.
\end{theorem}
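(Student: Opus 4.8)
By Yao's minimax principle it suffices to fix a hard distribution $\mathcal{D}$ over inputs and show that every \emph{deterministic} one-way protocol $\Pi$ whose error under $\mathcal{D}$ is at most $\delta$ must send a message $M$ of length $\Omega(tk\log k)$. Let $\mu$ be the hard distribution over single-instance inputs $(A,B)$ underlying the $t=1$ bound of \cite{dasgupta2010sparse}. As is standard for sparse Set-Disjointness lower bounds, $\mu$ can be taken in an ``almost product'' form: there is an auxiliary \emph{hint} variable $D$ (indexing one of $k$ blocks of $[n]$ of size $n/k$, which is where $n\ge k^2$ is used) such that, conditioned on $D$, the sets $A$ and $B$ are independent while $\mathrm{DISJ}(A,B)$ is still hard to predict from $(D,B)$ alone. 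Define $\mathcal{D}$ by drawing $(A_1,B_1,D_1),\dots,(A_t,B_t,D_t)$ i.i.d.\ from $\mu$ together with an independent uniform index $j\in[t]$; Alice receives $(A_1,\dots,A_t)$, Bob receives $(B_1,\dots,B_t)$, $j$, and $A_1,\dots,A_{j-1}$, and we additionally hand \emph{both} players all the hints $D_1,\dots,D_t$ for free, which only makes the lower bound stronger.

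\textbf{Direct sum via the augmented structure.} Fix $\Pi$ with message $M=M(A_1,\dots,A_t)$ and write $A_{<j}=(A_1,\dots,A_{j-1})$, $B_{\le t}=(B_1,\dots,B_t)$, $D_{\le t}=(D_1,\dots,D_t)$. Using that conditioning does not increase entropy, the chain rule for mutual information, and the mutual independence of the $t$ triples $(A_i,B_i,D_i)$, the communication cost satisfies $|M|\ \ge\ H(M)\ \ge\ I\big(M;A_1,\dots,A_t \mid B_{\le t},D_{\le t}\big)\ =\ \sum_{j=1}^{t} I\big(M;A_j \mid A_{<j},B_{\le t},D_{\le t}\big)$. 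It therefore suffices to prove $I\big(M;A_j \mid A_{<j},B_{\le t},D_{\le t}\big)=\Omega(k\log k)$ for every $j$; this is exactly the ``augmented index'' mechanism, where Bob's free knowledge of $A_{<j}$ is what makes the individual chain-rule terms carry weight.

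\textbf{Reducing one term to the $t=1$ bound.} Fix $j$. From $\Pi$ we build a one-way protocol $\Pi_j$ for single-instance Sparse Set-Disjointness on $(A,B)\sim\mu$ with public hint $D$: from public randomness both players sample the hints $D_i$ for $i\ne j$ and the sets $A_i$ for $i<j$; Alice samples $A_i$ for $i>j$ from her \emph{private} coins; Bob samples $B_i$ for $i\ne j$ from his private coins --- this is a faithful embedding of $\mathcal{D}$ conditioned on the index being $j$ precisely because, conditioned on the $D_i$'s, each $(A_i,B_i)$ is a product, so Bob can generate his $B_i$'s without knowing Alice's $A_i$'s. Alice sets $A_j:=A$, sends $M$; Bob sets $B_j:=B$, runs $\Pi$'s decoder for coordinate $j$, and outputs its verdict, which equals $\mathrm{DISJ}(A,B)$ with probability $\ge 1-\delta$. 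The information $\Pi_j$ reveals to Bob about Alice's input, beyond his prior knowledge, is $I\big(M;A \mid B,D,\text{shared coins},\text{Bob's private coins}\big)=I\big(M;A_j \mid A_{<j},B_{\le t},D_{\le t}\big)$, the point being that the coins generating $A_{>j}$ are private to Alice and so never enter Bob's conditioning. Since $\Pi_j$ has error $\le\delta$, the $t=1$ lower bound in information-cost form (obtained from the analysis of \cite{dasgupta2010sparse}, using $n\ge k^2$ and that $\delta$ is a sufficiently small absolute constant) gives $I\big(M;A_j \mid A_{<j},B_{\le t},D_{\le t}\big)=\Omega(k\log k)$. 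Summing over $j\in[t]$ yields $|M|=\Omega(tk\log k)$.

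\textbf{Main obstacle.} The delicate point is securing the $t=1$ bound in exactly the information-theoretic shape used above: it must be an \emph{internal information} lower bound that is robust to Bob additionally holding the (independent) data of the other $t-1$ instances, which is why we insist on the ``product-given-a-public-hint'' form of $\mu$ so those instances can be simulated by private randomness in the reduction. If \cite{dasgupta2010sparse} states only a communication lower bound, one must re-derive the information-cost version from the same combinatorial argument --- alternatively, one can adapt that argument to the $t$-fold (``$tk$ blocks'') instance directly and skip the reduction entirely. The remaining ingredients --- Yao's principle, the chain rule, and checking that the embedding is distribution-faithful --- are routine.
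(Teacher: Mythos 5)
Your proposal takes a genuinely different route from the paper: you set up a direct-sum / embedding argument and try to invoke a $t=1$ \emph{information-cost} lower bound as a black box, whereas the paper extends the single-instance \emph{fingerprint} argument of~\cite{dasgupta2012disjoint} directly to $t$ copies. However, your argument as written has a real gap, which you yourself flag as the ``main obstacle'' and then leave unresolved; and there is also an inaccuracy in the setup that is worth pointing out.

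\textbf{The gap.} Your proof ultimately relies on a statement of the form ``one-way sparse Set-Disjointness on $(A,B)\sim\mu$ has \emph{internal information} cost $\Omega(k\log k)$.'' The reference you cite proves a communication lower bound via a fingerprint argument, not an information-cost bound stated in the form you need, and for one-way protocols a communication lower bound does not automatically translate into an information-cost lower bound (a long message can still carry little information about the input). You acknowledge this and suggest ``one can adapt that argument to the $t$-fold instance directly and skip the reduction entirely'' --- but this is not a fallback: it is precisely what the paper does, and it is the substantive part of the proof. Concretely, the paper takes $\mathbf{X}_1,\dots,\mathbf{X}_t$ i.i.d.\ uniform over an explicit code-like family $\mathcal{X}\subseteq\binom{[k^2]}{k}$ of size $2^{\alpha k\log k}$ with pairwise intersections at most $\alpha k$, and $\mathbf{Y}_1,\dots,\mathbf{Y}_t$ i.i.d.\ uniform over a small ``tester'' family $\mathcal{Y}$ of size $O(k\log k)$ with the property that $(\mathrm{Disj}(X,Y))_{Y\in\mathcal{Y}}$ uniquely determines $X\in\mathcal{X}$. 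Then $H(\mathbf{X}\mid M)$ is expanded by the entropy chain rule over $i\in[t]$, each term is upper bounded using the fingerprint property plus sub-additivity by $|\mathcal{Y}|\cdot H(\mathrm{Disj}(\mathbf{X}_i,\mathbf{Y}_i)\mid M,\mathbf{X}_{<i},\mathbf{Y})$, and Fano's inequality bounds the latter because the protocol lets Bob predict $\mathrm{Disj}(\mathbf{X}_i,\mathbf{Y}_i)$ from $(M,i,\mathbf{X}_{<i},\mathbf{Y})$ with small error. Combining with $H(M)\geq H(\mathbf{X})-H(\mathbf{X}\mid M)$ and $H(\mathbf{X})=\alpha t k\log k$ finishes the argument. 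This is the piece your sketch outsources and never supplies.

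\textbf{The setup inaccuracy.} You posit a ``product-given-a-public-hint'' hard distribution $\mu$, with a hint $D$ indexing one of $k$ blocks of size $n/k$. That is the structure used for \emph{two-way} set-disjointness information-complexity lower bounds (Bar-Yossef et al., Razborov), not the one underlying the $\Omega(k\log k)$ one-way bound. In~\cite{dasgupta2012disjoint} and in the paper, the hard distribution is simply the product of two uniform distributions over explicit families $\mathcal{X}$ and $\mathcal{Y}$; there is no hint variable, and this actually simplifies your direct-sum plan, since $B_{\le t}$ is then independent of $(M,\mathbf{X}_{\le t})$ so the conditioning on $B_{\le t}$ in your mutual-information terms is vacuous and $\sum_j I(M;A_j\mid A_{<j},B_{\le t})=I(M;\mathbf{X})=H(\mathbf{X})-H(\mathbf{X}\mid M)$, collapsing your route into the paper's. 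In other words: if you drop the hint machinery, fix the distribution, and then \emph{prove} the per-instance bound via fingerprinting plus Fano, you will have reconstructed the paper's argument. As written, though, the missing per-instance bound is the heart of the theorem, so the proposal is not yet a proof.
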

\begin{proof}
Given parameters $k$ and $\alpha < 1/2$, \cite{dasgupta2012disjoint} show that there exists a family $\calX$ of $2^{\alpha k\log k}$ subsets of $[k^2]$ such that (i) $|X| = k$ for all $X \in \calX$ and (ii) for all $X \ne X' \in \calX$, we have $|X \cap X'| \le \alpha k$. They also show that corresponding to the family $\calX$, there is a family $\calY$ of subsets of $[k^2]$ such that (i) $|Y| = k$ for all $Y \in \calY$, (ii) $|\calY| \le ak\log k$ for an absolute constant $a = a(\alpha)$ (independent of $k$) and (iii) for $X \ne X' \in \calX$, there is at least one set $Y \in \calY$ such that exactly one of the sets $X \cap Y$ and $X' \cap Y$ is non-empty. The last property implies that the sequence $(\Disj(X, Y))_{Y \in \calY}$ is distinct for each $X \in \calX$. Here $\Disj(X,Y) = 1$ if $X \cap Y = \emptyset$ and $0$ otherwise.

We will now define a distribution over the instances of the Augmented Set-Disjointness problem such that any deterministic protocol must have Alice sending a message with $\Omega(tk\log k)$ bits which will prove the theorem by Yao's minimax principle. Let $\bX_1, \ldots, \bX_t \sim \calX$ and $\bY_1, \ldots, \bY_t \sim \calY$ be drawn independently. We let $\bX = (\bX_1,\ldots,\bX_t)$ denote the sets given to Alice and $\bY = (\bY_1,\ldots,\bY_t)$ denote the sets given to Bob. Let $\bi \sim [t]$ drawn uniformly at random be the index given to Bob. Let $M(\bX)$ be the message sent by Alice when running an arbitrary deterministic protocol with an error $\delta$ over the distribution defined by the random variables $\bX$, $\bY$,  and $\bi$. By the chain rule of entropy,
\begin{align*}
    H(\bX \mid M(\bX)) = \sum_{i \in [t]} H(\bX_i \mid M(\bX), \bX_{< i}).
\end{align*}
As $\bX_i$ is uniquely identifiable by the sequence $(\Disj(\bX_i, Y))_{Y \in \calY}$, we have
\begin{align*}
    H(\bX \mid M(\bX)) &= \sum_{i \in [t]}H((\Disj(\bX_i, Y))_{Y \in \calY} \mid M(\bX), \bX_{< i})\\
    &\le \sum_{i \in [t]}\sum_{Y \in \calY} H(\Disj(\bX_i, Y) \mid M(\bX), \bX_{< i}) && \text{(sub-additivity)}\\
    &= \sum_{i \in [t]}\sum_{y \in \calY}H(\Disj(\bX_i, \bY_i) \mid M(\bX), \bX_{< i}, \bY_i = y)\\
    &= \sum_{i \in [t]}|\calY| H(\Disj(\bX_i, \bY_i) \mid M(\bX), \bX_{< i}, \bY_i) &&  \text{(since $\bY_i$ is uniform over $\calY$)}\\
    &= |\calY|\sum_{i \in [t]} H(\Disj(\bX_i, \bY_i) \mid M(\bX), \bX_{< i}, \bY).
\end{align*}
Here the last equality follows from the fact that $\bY_1,\ldots,\bY_t$ are mutually independent and are also independent of $\bX_{< i}$ and $M(\bX)$. As the \emph{deterministic} protocol has $\delta$ error over the distribution of the instances we defined above, we have that there is a deterministic function $f$ (which Bob runs to output his answer) such that
\begin{align*}
    \Pr_{\bX, \bY, \bi}[f(M(\bX), \bi, \bY, \bX_{<\bi}) = \Disj(\bX_{\bi}, \bY_{\bi})] \ge 1 - \delta.
\end{align*}
We then have that with probability $1 - \sqrt{\delta}$ over $\bi$ that
\begin{align*}
    \Pr_{\bX, \bY}[f(M(\bX), \bi, \bY, \bX_{< \bi}) = \Disj(\bX_{\bi}, \bY_{\bi})] \ge 1 - \sqrt{\delta}. 
\end{align*}
Let $\textsc{Good} \subseteq [t]$ denote all the indices $i$ for which the above holds. We have $|\textsc{Good}| \ge (1 - \sqrt{\delta})t$. We now note that $\Disj(\bX_i, \bY_i)$ is a $0$/$1$ random variable. By Fano's inequality, for all $i \in \textsc{Good}$, we obtain
\begin{align*}
    H(\Disj(\bX_i, \bY_i) \mid M(\bX), \bY, \bX_{< i}) \le H(\sqrt{\delta}). 
\end{align*}
For $i \notin \textsc{Good}$, we simply have $H(\Disj(\bX_i, \bY_i) \mid M(\bX), \bY, \bX_{<i}) \le H(\Disj(\bX_i, \bY_i)) \le 1$. Thus,
\begin{align*}
    H(\bX \mid M(\bX))
    &\le |\calY|\sum_{i \in [t]} H(\Disj(\bX_i, \bY_i) \mid M(\bX), \bX_{<i}, \bY)\\
    &=|\calY|\sum_{i \in \textsc{Good}} H(\Disj(\bX_i, \bY_i) \mid M(\bX), \bX_{<i}, \bY)\\
    &\qquad \qquad + |\calY|\sum_{i \notin \textsc{Good}}H(\Disj(\bX_i, \bY_i) \mid M(\bX), \bX_{<i}, \bY)\\
    &\le |\calY| t(1-\sqrt{\delta})H(\sqrt{\delta}) + |\calY|t\sqrt{\delta}\\
    &= |\calY|t(\sqrt{\delta} + (1-\sqrt{\delta})H(\sqrt{\delta})).
\end{align*}
Now, $H(\bX \mid M(\bX)) \ge H(\bX) - H(M(\bX))$ by the chain rule and sub-additivity which implies from the above inequality that 
\begin{align*}
    H(M(\bX)) \ge H(\bX) - |\calY|t (\sqrt{\delta} + (1-\sqrt{\delta})H(\sqrt{\delta})).
\end{align*}
As $H(\bX) = H((\bX_1,\ldots,\bX_t)) = tH(\bX_1) = t\alpha k\log k$, we have
\begin{align*}
    H(M(\bX)) &\ge t\alpha k\log k - tak\log k(\sqrt{\delta} + (1-\sqrt{\delta})H(\sqrt{\delta}))\\
    &\ge tk\log k(\alpha - a(\sqrt{\delta} + (1-\sqrt{\delta})H(\sqrt{\delta}))).
\end{align*}
Now we note that $H(\sqrt{\delta}) \le 2\delta^{1/4}$ and get $H(M(\bX)) \ge tk\log k(\alpha - a(\sqrt{\delta} + 2\delta^{1/4}))$. As $a = a(\alpha)$ is purely a function of $\alpha$ independent of $k$, by picking $\delta$ to be a small enough function of $\alpha$, we get $H(M(\bX)) \ge (\alpha/2)tk\log k$. Finally, this implies that $\max_{\bX}|M(\bX)| \ge H(M(\bX)) \ge (\alpha tk \log k)/2$. Picking $\alpha = 1/2$, we obtain that there is a small enough constant $\delta$ and a product distribution $\calD$ over $\calX \otimes \calY \otimes [t]$ such that any \emph{deterministic} one-way protocol that solves the Augmented Sparse Set-Disjointness problem with probability $\ge 1 - \delta$ over the distribution $\calD$ must have a communication complexity of $\Omega(tk\log k)$ bits. 
\end{proof}
Using the above lower bound, we can now show that any turnstile streaming algorithm that approximates the $\ell_{\infty}$ norm of a $d$ dimensional vector $x$ with integer coordinates bounded in absolute value by $\poly(d)$ up to an additive error of $\varepsilon\opnorm{x}$ must use $O(\varepsilon^{-2}\log(1/\varepsilon)\log d)$ bits of space.
\begin{theorem}
There exists a small enough constant $\delta$ such that if $\varepsilon \ge 6((\log d)/d)^{1/4}$, any turnstile streaming algorithm that estimates  $\linf{x}$ up to an additive error of $\varepsilon\opnorm{x}$ with probability $\ge 1 - \delta$, of a $d$-dimensional vector $x$ with integer entries bounded in absolute value by $\poly(d)$, must use $\Omega(\varepsilon^{-2}\log(1/\varepsilon)\log(d))$ bits of space.
\end{theorem}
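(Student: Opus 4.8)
The plan is to reduce from the Augmented Sparse Set-Disjointness problem, invoking the $\Omega(tk\log k)$ randomized communication lower bound proven just above. Set $k=\lfloor 1/(144\varepsilon^2)\rfloor=\Theta(1/\varepsilon^2)$, $n=k^2$, and $t=\lfloor\log_2 d\rfloor=\Theta(\log d)$; the hypothesis $\varepsilon\ge 6((\log d)/d)^{1/4}$ is precisely what guarantees $tn=\Theta(\varepsilon^{-4}\log d)\le d$, while $n\ge k^2$ is exactly the condition required by the communication lower bound. Given an instance with Alice holding $A_1,\dots,A_t\subseteq[n]$ ($|A_i|=k$), Bob holding $B_1,\dots,B_t\subseteq[n]$ ($|B_i|=k$), an index $j\in[t]$, and the prefix $A_1,\dots,A_{j-1}$, we encode it as a $d$-dimensional turnstile stream of updates to a vector $x$ viewed as $t$ consecutive blocks of length $n$ (padding with zeros to reach dimension $d$). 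Assign block $i$ the integer weight $W_i=2^{t-i}$, so $W_1>W_2>\dots>W_t=1$ and $W_1=2^{t-1}\le d=\poly(d)$. Alice's updates: for every $i\in[t]$ and $a\in A_i$, add $W_i$ to coordinate $(i,a)$. Bob's updates, all of which he can form from $A_1,\dots,A_{j-1}$ and $B_j$: for every $i<j$ and $a\in A_i$, add $-W_i$ to coordinate $(i,a)$ (cancelling blocks $1,\dots,j-1$ entirely), and for every $b\in B_j$, add $W_j$ to coordinate $(j,b)$. Turnstile streams allow these decrements, so both players' update lists are valid.

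After processing all updates, blocks $1,\dots,j-1$ are zero; block $j$ equals $2W_j$ on $A_j\cap B_j$, equals $W_j$ on $A_j\triangle B_j$, and $0$ elsewhere; and block $i>j$ equals $W_i\le W_j/2$ on $A_i$ and $0$ elsewhere. Consequently $\linf{x}=2W_j$ when $A_j\cap B_j\ne\emptyset$ and $\linf{x}=W_j$ when $A_j\cap B_j=\emptyset$, a gap of absolute size $W_j$. Meanwhile $\opnorm{x}^2\le 2k(2W_j)^2+k\sum_{i>j}W_i^2\le 8kW_j^2+kW_j^2=9kW_j^2$, so $\opnorm{x}\le 3\sqrt{k}\,W_j$, and the choice of $k$ forces $\varepsilon\opnorm{x}\le 3\sqrt{k}\,\varepsilon\,W_j<W_j/4$. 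Hence a streaming algorithm that outputs $\linf{x}$ within additive error $\varepsilon\opnorm{x}$ with probability $\ge 1-\delta$ yields a one-way protocol of error $\le\delta$: Alice runs the algorithm on her updates and transmits its memory contents; Bob resumes the algorithm on his updates, reads off the estimate $\hat L$, and answers ``disjoint'' iff $\hat L<\tfrac32 W_j$ (he knows $W_j$). Whenever the algorithm is within its additive error, $\hat L\le\tfrac54 W_j$ in the disjoint case and $\hat L\ge\tfrac74 W_j$ otherwise, so Bob is correct with probability $\ge 1-\delta$; the algorithm's internal coins play the role of the protocol's shared randomness, and the message length equals the algorithm's space $S$. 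The communication lower bound then gives $S=\Omega(tk\log k)=\Omega(\varepsilon^{-2}\log(1/\varepsilon)\log d)$, with $\delta$ the same absolute constant as there. (We may assume $\varepsilon$ is below a fixed constant so that $k\ge 2$ and $\log k=\Theta(\log(1/\varepsilon))$; for larger $\varepsilon$ the claimed bound is $O(\log d)$ and holds a fortiori.)

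The one structural choice that matters is the direction of the block weights. Since Bob is handed only the \emph{prefix} $A_1,\dots,A_{j-1}$, he can cancel exactly blocks $1,\dots,j-1$, so these must carry the \emph{largest} weights; then after cancellation block $j$ dominates all surviving blocks ($j,j+1,\dots,t$) in $\linf{\cdot}$, which is what lets $\linf{x}$ read off the disjointness of $A_j$ and $B_j$. Had the weights been increasing, the blocks $j+1,\dots,t$ — which Bob cannot touch — would dominate and destroy the reduction; this is exactly why the \emph{augmented} variant of Sparse Set-Disjointness, rather than the plain one, is needed. The remaining work is purely bookkeeping: checking, for the stated parameter settings, that $tn\le d$, that $W_1\le\poly(d)$, that $n\ge k^2$, and that $\varepsilon\opnorm{x}$ lies below half the $\linf{\cdot}$-gap, all of which reduce to taking the constant in $k=\Theta(1/\varepsilon^2)$ small enough and invoking $\varepsilon\ge 6((\log d)/d)^{1/4}$; the exact constants can be pinned down once the reduction is laid out.
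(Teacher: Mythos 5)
Your proposal is correct and takes essentially the same approach as the paper: reduce from Augmented Sparse Set-Disjointness with $t=\Theta(\log d)$, $k=\Theta(1/\varepsilon^2)$, $n=k^2$, encode the $t$ instances as blocks with geometrically decreasing integer weights so that after Bob cancels the prefix $A_1,\dots,A_{j-1}$ the surviving $\ell_\infty$-value is $W_j$ or $2W_j$ according to disjointness of $A_j,B_j$, bound $\|x\|_2 = O(\sqrt{k}\,W_j)$, and invoke the $\Omega(tk\log k)$ communication bound. The only differences are cosmetic (weight base $2$ versus $10$, and absorbing the constant into the definition of $k$ rather than rescaling $\varepsilon$ at the end).
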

\begin{proof}
    Let $t = \log d$, $k = 1/\varepsilon^2$ and $n = 1/\varepsilon^{4}$. From the above theorem, the Augmented Sparse Set-Disjointness problem with these parameters has a randomized communication complexity of $\Omega(\varepsilon^{-2}\log(1/\varepsilon)\log d)$ bits. 

    Suppose given an instance of the Augmented Sparse Set-Disjointness problem, Alice computes a vector $x \in \R^{(\log d) \cdot 1/\varepsilon^{4}}$ as follows: the vector $x$ is divided into $\log d$ blocks---one for each of the sets $A_1,\ldots,A_{\log d}$. The $i$-th block of vector $x$, for $i=1,\ldots,t$, is defined to be $10^{t-i} \cdot a_i$ where $a_i$ is a binary-vector representation of the set $A_i$. As $t = \log d$, we obtain that $\linf{x} \le \poly(d)$. Let $\bM$ be a randomized turnstile streaming algorithm for estimating $\linf{x}$. Let $\bM(x)$ be the state of the turnstile streaming algorithm after feeding the coordinates of the vector $x$ to $\bM$. Alice transmits the state $\bM(x)$ to Bob. As Bob has an index $i$ and the sets $A_1,\ldots, A_{i-1}$, Bob can construct a vector $y \in \R^{(\log d) \cdot 1/\varepsilon^4}$ such that the $j$-th block of vector $y$ for $j=1,\ldots,i-1$ is the same as the $j$-th block of the vector $x$. The rest of the blocks of $y$ are set to be $0$. We now note that the only non-zero blocks of the vector $x-y$ are $i,i+1,\ldots, t$.

    Bob feeds the updates corresponding to the vector $-y$ to the streaming algorithm $\bM$ starting with the state $\bM(x)$ to obtain $\bM(x-y)$. Finally, Bob defines a vector $z$ with the $i$-th block being the vector $10^{t-i} \cdot b_i$ where $b_i$ is the binary vector corresponding to the set $B_i$. The rest of the blocks of $z$ are set to be $0$. Bob finally updates the state of the streaming algorithm to obtain $\bM(x-y+z)$. If $A_i \cap B_i = \emptyset$, we have $\linf{x - y + z} = 10^{t-i}$ and if $A_i \cap B_i \ne \emptyset$, then $\linf{x - y + z} = 2 \cdot 10^{t-i}$. Additionally, we have $\opnorm{x - y + z}^2 \le 4 \cdot 10^{2(t-i)} \cdot k + \sum_{j = i+1}^t 10^{2(t-j)} \cdot k \le 5 \cdot 10^{2(t-i)} \cdot k$. Thus, $\opnorm{x-y+z} \le 3 \cdot 10^{t-i} \cdot (1/\varepsilon)$ since $k = 1/\varepsilon^2$. Thus an approximation of $\linf{x - y +z}$ up to an additive error of $(\varepsilon/6)\opnorm{x-y+z}$ lets Bob output the correct answer for the instance of Augmented Sparse Set-Disjointness problem. 
    
    By the lower bound on communication complexity of the Augmented Sparse Set-Disjointess problem, we obtain that any turnstile streaming algorithm that outputs, with probability $1 - \delta$ for a small enough constant $\delta$, an approximation to $\linf{x}$ up to an additive error of $(\varepsilon/6)\linf{x}$, for a $(1/\varepsilon^4)\log d$ dimensional vector $x$ with coordinates of absolute values bounded by $\poly(d)$, must use $\Omega(\varepsilon^{-2}\log(1/\varepsilon)\log d)$ bits. As $\varepsilon \ge ((\log d)/d)^{1/4}$ implies $d \ge \varepsilon^{-4}\log d$, we obtain the result.
\end{proof}
Note that the above lower bound crucially uses that the algorithm is a turnstile streaming algorithm and does not hence lower bound the space complexity of the algorithms in the insertion-only streams where only nonnegative updates are allowed to the vector $x$ being maintained in the stream.
\subsection{Tighter bounds for vectors with large \texorpdfstring{$\linf{x}$}{l-inf-x}}
The lower bound in the previous section shows that $\Theta(\varepsilon^{-2}\log(d)\log(1/\varepsilon))$ bits of space is both necessary and sufficient to approximate $\linf{x}$ up to an additive error $\varepsilon\opnorm{x}$. The hard instance in the lower bound has the property that $\linf{x} = O(\varepsilon\opnorm{x})$. We show that assuming the vector $x$ satisfies, $\linf{x} \ge c\opnorm{x}$ for a constant $c$, we can beat the lower bound and obtain a better than $\varepsilon\opnorm{x}$ additive error using $O(\varepsilon^{-2}\log d)$ bits of space. Note that the condition $\linf{x} \ge c\opnorm{x}$ is natural in certain settings where the coordinates of the vector $x$ follow the Zipf's law. We prove the following theorem.
\begin{theorem}
    Suppose a $d$ dimensional vector $x$ is being maintained in a turnstile stream. Assume that the coordinates of $x$ are integers and are bounded in absolute value by $\poly(d)$. If $x$ is such that $\linf{x} \ge c\opnorm{x}$ for a universal constant $c$, then there is an algorithm that uses $O(\varepsilon^{-2}\log d)$ bits of space and outputs an estimate $\Est$ that satisfies
    \begin{align*}
        \Est = \linf{x} \pm \varepsilon\opnorm{x}
    \end{align*}
    with probability $\ge 9/10$. The update time of the algorithm is $O(\log 1/\varepsilon)$ in the Word RAM model with a word size $\Omega(\log d)$.
\end{theorem}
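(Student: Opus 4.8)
The plan is to reuse the two–stage linear sketch from the previous subsection, but to replace the second stage (ordinary CountSketch with $O(1)$-wise independent hash functions) by the derandomized $\CS_{\FastPRG}$ of Theorem~\ref{theorem:fastprg-count-sketch}, so that we can invoke the sharper Minton--Price tail bound of Lemma~\ref{lemma:count-sketch}; this is exactly what buys the saving of a $\log(1/\varepsilon)$ factor. We may assume $\varepsilon$ is below a sufficiently small constant, since otherwise $\linf{x}\in[c\opnorm{x},\opnorm{x}]$ together with a constant-factor estimate of $\opnorm{x}$ (which costs $O(\log d)$ bits) already yields an additive-$\varepsilon\opnorm{x}$ estimate. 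First I would apply the map $\bL:\R^d\to\R^t$ of Lemma~\ref{lemma:first-level-hashing} with $t=\poly(1/\varepsilon)$ (concretely $\Theta(1/\varepsilon^{8})$) and parameters tuned so that, with probability $\ge 97/100$, simultaneously $\linf{\bL x}=\linf{x}\pm(\varepsilon/2)\opnorm{x}$ and $\opnorm{\bL x}\le 2\opnorm{x}$; write $y:=\bL x$ and $M^*:=\linf{y}$. On each update $(i,v)$, feed $(\bh(i),\bs(i)v)$ into one instance of $\CS_{\FastPRG}$ over $\R^t$ with table size $\countsketchrange=\Theta(1/\varepsilon^2)$, $\countsketchrepetitions=O(1)$ repetitions, block size $\Theta(\log d)$, and a constant branching factor $b$; at the end output $\Est:=\max_{j\in[t]}\hat y_j$. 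Reducing first from $d$ to $\poly(1/\varepsilon)$ coordinates is essential: it is what makes the later union bound cheap regardless of how $\varepsilon$ compares to $d$.

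For the resources: the table occupies $O(\countsketchrepetitions\countsketchrange\log d)=O(\varepsilon^{-2}\log d)$ bits; by Theorem~\ref{theorem:fastprg-count-sketch}, generating the $t\countsketchrepetitions=\poly(1/\varepsilon)$ output blocks with a constant $b$ needs a seed of $O(b\log_b(t\countsketchrepetitions)\log d)=O(\log(1/\varepsilon)\log d)$ bits, which is $O(\varepsilon^{-2}\log d)$ since $\log(1/\varepsilon)\le\varepsilon^{-2}$; and storing $\bh,\bs$ for $\bL$ takes $O(\log d)$ bits. Per update we spend $O(1)$ to evaluate $\bh,\bs$ and $O(\countsketchrepetitions\log_b(t\countsketchrepetitions))=O(\log(1/\varepsilon))$ to update $\CS_{\FastPRG}$, for $O(\log(1/\varepsilon))$ total.

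For correctness I would condition on $\bL$ being good and argue over the independent randomness of $\CS_{\FastPRG}$. From $\linf{x}\ge c\opnorm{x}$ we get $M^*\ge(c-\varepsilon/2)\opnorm{x}\ge(c/2)\opnorm{x}$, and with $\opnorm{y}\le 2\opnorm{x}$ this gives $M^*=\Theta(\opnorm{y})=\Theta(\opnorm{x})$. Split $[t]$ into $\mathcal{L}:=\{j:\ |y_j|\ge M^*/2\}$ and $\mathcal{S}:=[t]\setminus\mathcal{L}$; an $\ell_2$-mass count gives $|\mathcal{L}|\,(M^*/2)^2\le\opnorm{y}^2$, so $|\mathcal{L}|\le 4\opnorm{y}^2/(M^*)^2=O(1)$, and the maximizer lies in $\mathcal{L}$. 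Also $\Delta:=\opnorm{\text{tail}_{\countsketchrange}(y)}/\sqrt{\countsketchrange}\le\opnorm{y}/\sqrt{\countsketchrange}=O(\varepsilon\opnorm{x})$. For $j\in\mathcal{L}$ the target accuracy $(\varepsilon/2)\opnorm{x}$ exceeds $\Delta$ once $\countsketchrange$ is a large enough multiple of $1/\varepsilon^2$, so already with $\alpha=1$ Theorem~\ref{theorem:fastprg-count-sketch} gives $\Pr[|\hat y_j-y_j|>(\varepsilon/2)\opnorm{x}]\le 2\exp(-\Omega(\countsketchrepetitions))+1/\poly(d)$, and a union bound over the $O(1)$ such $j$ costs at most $1/100$ for $\countsketchrepetitions$ a large enough constant. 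For $j\in\mathcal{S}$ it suffices that $\hat y_j$ not exceed $y_j+M^*/2\le M^*$; here $M^*/2=\Theta(\opnorm{x})$ corresponds to $\alpha=\Theta(M^*/\Delta)=\Theta(1/\varepsilon)$, so the tail bound gives per-coordinate overshoot probability $\exp(-\Omega(\countsketchrepetitions/\varepsilon^2))+1/\poly(d)$, and a union bound over the $t=\poly(1/\varepsilon)$ such $j$ again costs at most $1/100$ for $\countsketchrepetitions$ a large enough constant, since $\exp(-\Omega(\countsketchrepetitions/\varepsilon^2))\cdot\poly(1/\varepsilon)\to 0$. On the intersection of these events $\Est=M^*\pm(\varepsilon/2)\opnorm{x}=\linf{x}\pm\varepsilon\opnorm{x}$, and the total failure probability is at most $3/100+1/100+1/100+1/\poly(d)<1/10$.

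The main obstacle is the treatment of $\mathcal{S}$: a black-box union bound over the whole $\poly(1/\varepsilon)$-dimensional intermediate vector would force $\countsketchrepetitions=\Omega(\log(1/\varepsilon))$ and reinstate the very $\log(1/\varepsilon)$ factor we are trying to shave. We avoid it by the two observations above --- only $O(1)$ coordinates carry $\ell_2$-mass comparable to $M^*$ and thus need accuracy $\varepsilon\opnorm{x}$, while every other coordinate may be estimated with the much coarser accuracy $\Theta(\opnorm{x})$, i.e.\ with $\alpha=\Theta(1/\varepsilon)$, which drives the per-coordinate error probability down to $\exp(-\Omega(\countsketchrepetitions/\varepsilon^2))$ and swallows the $\poly(1/\varepsilon)$ union bound at $\countsketchrepetitions=O(1)$. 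The point that needs care is that the Minton--Price bound is being invoked with $\alpha=\Theta(\sqrt{\countsketchrange})\gg 1$, outside the literal $\alpha\in[0,1]$ range of the quoted statement; this still lies in the regime where the Fourier-analytic argument of \cite{Minton2014improved} (and hence its \FastPRG-derandomization through the finite-state machine of Section~\ref{sec:derandomizing-countsketch}) yields the stated sub-Gaussian decay, because a deviation of order $M^*$ in a table of $\Theta(1/\varepsilon^2)$ cells is well below the scale at which a single heavy collision would dominate the estimation error.
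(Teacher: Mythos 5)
Your high-level structure (two-stage sketch $\CS_{\FastPRG}\circ\bL$, single out the $O(1)$ coordinates of $\bL x$ with magnitude $\Theta(\opnorm{x})$ and demand higher accuracy only for those) is the same as the paper's, but your parameter choices differ and the crucial step where they diverge has a genuine error. The paper takes $\countsketchrepetitions=\Theta(\log 1/\varepsilon)$ with branching factor $b=1/\varepsilon$, applies Lemma~\ref{lemma:count-sketch} (via Theorem~\ref{theorem:fastprg-count-sketch}) with $\alpha = 1/\sqrt{\log 1/\varepsilon}$ on $h(\textLarge)$ and $\alpha=1$ on all of $[d']$ -- both inside the allowed range $\alpha\in[0,1]$ -- and thereby estimates $\linf{x}$ to accuracy $\Theta(\varepsilon/\sqrt{\log 1/\varepsilon})\opnorm{x}$ in $O(\varepsilon^{-2}\log 1/\varepsilon)$ words; it then re-parameterizes $\varepsilon'\gets 2\varepsilon/\sqrt{\log 1/\varepsilon}$ to obtain the stated $O((\varepsilon')^{-2}\log d)$-bit bound. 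Your plan instead fixes $\countsketchrepetitions=O(1)$ and, for the $t-O(1)$ small coordinates, invokes the tail bound with $\alpha=\Theta(1/\varepsilon)\gg 1$.

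That last step is where the argument breaks. You acknowledge that $\alpha=\Theta(1/\varepsilon)$ is outside the $\alpha\in[0,1]$ range, but then assert that the Fourier-analytic argument still gives the sub-Gaussian decay $\exp(-\Omega(\alpha^2 \countsketchrepetitions))$ because "a deviation of order $M^*$ in a table of $\Theta(1/\varepsilon^2)$ cells is well below the scale at which a single heavy collision would dominate." This is exactly backwards. For a fixed cell and a fixed repetition $i$, the event that some coordinate $j'\in\mathcal L$ (magnitude $\Theta(\opnorm{y})$) collides with the target $j$ has probability $|\mathcal L|/\countsketchrange = \Theta(\varepsilon^2)$, and on that event the single-repetition error is $\Theta(\opnorm{y})=\Theta(M^*)$. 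Hence $\Pr\bigl[\,|\bs_i(j)\CS(y)_{i,\bg_i(j)}-y_j|>M^*/2\,\bigr]=\Theta(\varepsilon^2)$, which is polynomially, not doubly-exponentially, small in $1/\varepsilon$. Consequently the median over $\countsketchrepetitions=O(1)$ repetitions can only give failure probability $\varepsilon^{\Theta(\countsketchrepetitions)}$, not $\exp(-\Omega(\countsketchrepetitions/\varepsilon^2))$. The single-repetition success probability in the Minton--Price scheme is $1/2+\Omega(\alpha)$ only for $\alpha\le 1$; for $\alpha>1$ you are in the Chebyshev regime $1-O(1/\alpha^2)$, and the resulting median bound is $\alpha^{-\Theta(\countsketchrepetitions)}$, not $\exp(-\Omega(\alpha^2\countsketchrepetitions))$.

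The union bound you want can in fact still be closed with this corrected (weaker) estimate -- with $\countsketchrepetitions$ a sufficiently large constant one has $\varepsilon^{\Theta(\countsketchrepetitions)}\cdot\poly(1/\varepsilon)=o(1)$ for small $\varepsilon$ -- but this is a Chebyshev-plus-median argument, not an application of Lemma~\ref{lemma:count-sketch} or Theorem~\ref{theorem:fastprg-count-sketch}. You would then need to observe that the FSM $Q_{x,\ell,\alpha\Delta}$ from Section~\ref{sec:derandomizing-countsketch} makes no use of $\alpha\le 1$ and thus transfers any fully-random tail bound, including the Chebyshev one, to \FastPRG{} with an additive $2^{-Cw}$ slack. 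Either make that argument explicit, or switch to the paper's route of keeping $\alpha\le 1$, taking $\countsketchrepetitions=\Theta(\log 1/\varepsilon)$, proving accuracy $\Theta(\varepsilon/\sqrt{\log 1/\varepsilon})\opnorm{x}$, and re-parameterizing.
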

\begin{proof}
Without loss of generality, assume $\varepsilon \le c/10$. Let $\textLarge = \setbuilder{i}{|x_i| \ge (c/2)\opnorm{x}}$. We have $|\textLarge| \le 4/c^2$. If $\bL : \R^d \rightarrow \R^{d'}$ is a randomized linear map to $d' = \poly(1/\varepsilon)$ dimension constructed using a 2-wise independent hash function $\bh$ and a 4-wise independent sign function $\bs$ as in Lemma~\ref{lemma:first-level-hashing}, then we have that
\begin{enumerate}
    \item all the coordinates in $\textLarge$ are hashed to different coordinates, 
    \item $\opnorm{\bL x}^2 \le (1+\varepsilon^8)\opnorm{x}^2$,
    \item for all $i \notin \textLarge$, $|(\bL x)_{\bh(i)}| \le (c/2 +  \varepsilon^2/8)\opnorm{x} \le (3c/5)\opnorm{x}$, and
    \item for all $i \in \textLarge$, $|(\bL x)_{\bh(i)}| = |x_i| \pm (\varepsilon^2/8)\opnorm{x}$.
\end{enumerate}
Now, let $\countsketchrepetitions = O(\log 1/\varepsilon)$, $\countsketchrange = O(1/\varepsilon^2)$ and $b = 1/\varepsilon$. Instantiate a CountSketch data structure $\CS: \R^{d'} \rightarrow \R^{rt}$ with these parameters and derandomized using \FastPRG as in Theorem~\ref{theorem:fastprg-count-sketch} with a word size $w = \Omega(\log d)$. From Theorem~\ref{theorem:fastprg-count-sketch}, the parameters (random seed for \FastPRG) of the map $\CS$ and the value $\CS(\bL x)$ can be stored using $O(rt + b \log_b d') = O(\varepsilon^{-2}\log 1/\varepsilon + \varepsilon^{-1}\log_{\varepsilon^{-1}}\poly(1/\varepsilon)) = O(\varepsilon^{-2}\log 1/\varepsilon)$ \emph{words} of space. We also have that the update time of the CountSketch data structure instantiated with these parameters is $O(\log 1/\varepsilon)$ in the Word RAM model with a word size $\Omega(\log d)$.
If $x$ receives a turnstile update $(i, \Delta)$, then 
\begin{align*}
    \bL(x + \Delta e_i) = \bL x + \Delta \cdot (\bL e_i).
\end{align*}
By definition of the map $\bL$, the vector $\bL e_i$ is nonzero in exactly one coordinate $\bh(i)$. Thus, we further obtain
\begin{align*}
    \CS( \bL(x + \Delta e_i)) = \CS(\bL x + \bs(i)\Delta e_{\bh(i)}).
\end{align*}
Now, by Theorem~\ref{theorem:fastprg-count-sketch}, the vector $\CS(\bL(x+\Delta e_i))$ can be computed using the value of $\CS(\bL x)$ in time $O(r \log_b d') = O(\log 1/\varepsilon)$ time in Word RAM model. Thus, the randomized two-level sketch $\CS \circ \bL$ can be applied to the underlying vector $x$ in a turnstile stream using a total space of $O(\varepsilon^{-2} \log 1/\varepsilon)$ \emph{words} of space and each turnstile update can be processed in $O(\log 1/\varepsilon)$ time on a Word RAM machine with a word size $\Omega(\log d)$. 

Theorem~\ref{theorem:fastprg-count-sketch} also gives the following recovery guarantees: for any $i \in [d']$ and $\alpha < 1$, we can recover a value $\widehat{(\bL x)_i}$ such that
\begin{align*}
    \Pr_{\CS}[|\widehat{(\bL x)_i} - (\bL x)_{i}| \ge \alpha\varepsilon\opnorm{\bL x}] \le \exp(-\alpha^2 r) + 1/\poly(d).
\end{align*}
Setting $\alpha = 1/\sqrt{\log 1/\varepsilon}$, a union bound over the indices in the set $h(\textLarge) \subseteq [d']$ gives that with probability $\ge 99/100$ over $\CS$, for all $i \in h(\textLarge)$, $|\widehat{(\bL x)_{i}} - (\bL x)_i| \le (\varepsilon/\sqrt{\log 1/\varepsilon})\opnorm{\bL x}$ and setting $\alpha = 1$ and a union bound over all the coordinates $i \in [d']$ gives that with probability $\ge 99/100$, for all $i \in [d]$, $|\widehat{(\bL x)_i} - (\bL x)_i| \le \varepsilon\opnorm{\bL x}$. Conditioned on the properties of the map $\bL$ above, overall, we obtain that with a probability $\ge 9/10$,
\begin{align*}
    \linf{\bL x} = \linf{x} \pm \frac{2\varepsilon}{\sqrt{\log 1/\varepsilon}}\opnorm{x}.
\end{align*}
Hence, $\linf{x}$ can be estimated to an additive error of $(\varepsilon/\sqrt{\log 1/\varepsilon})\opnorm{x}$ using only $O(\varepsilon^{-2}\log 1/\varepsilon)$ words of space and the time to update the state in a turnstile stream is $O(\log 1/\varepsilon)$ in the Word RAM model with a word size $\Omega(\log d)$. Setting $\varepsilon' = 2\varepsilon/\sqrt{\log 1/\varepsilon}$, we obtain the result.
\end{proof}
\section*{Acknowledgements}
Praneeth Kacham and David P. Woodruff were supported by National Institute of Health (NIH) grant 5R01 HG 10798-2. Rasmus Pagh was supported by the VILLUM foundation grant 16582 and by a Providentia, a Data Science Distinguished Investigator grant from Novo Nordisk Fonden. Mikkel Thorup was supported by the VILLUM foundation grant 16582.

\bibliographystyle{alpha}
\bibliography{main}


\appendix

\section{Nisan's Pseudorandom Generator}\label{sec:nisan}

We say that a \emph{randomized} program uses space $w$ with a block size $n$ if it accepts its random bits as an $n$ bit block at a time and uses at most space $w$ between the different blocks of random bits. 
Such programs can be modeled as a finite state machine over at most $2^w$ states, taking an input string over the alphabet $\set{0,1}^n$. Nisan \cite{nisan} constructed a pseudorandom generator which requires only a small uniform random seed that ``fools'' a space $w$ program with a block size $n$.

Let $\bh_1,\ldots,\bh_k$ be independent hash functions drawn from a $2$-wise independent hash family $\calH = \set{h : \set{0,1}^n \rightarrow \set{0,1}^n}$. These hash functions together with $\bx\in\set{0,1}^n$, sampled uniformly at random, serve as the \emph{seed} of the generator $G_k : \set{0,1}^n \rightarrow \set{0,1}^{2^k \cdot n}$, defined recursively as follows:
\begin{align*}
    G_0(x) &:= x\\
    G_k(x, \bh_1,\ldots, \bh_k) &:= G_{k-1}(x, \bh_1,\ldots,\bh_{k-1}) \circ G_{k-1}(\bh_k(x), \bh_1, \ldots, \bh_{k-1}),
\end{align*}
where $\circ$ denotes the string concatenation. 
For a given choice of $\bh_1,\ldots,\bh_k$, define the distribution $G_k(*, \bh_1,\ldots,\bh_k)$ over bitstrings of length $2^k \cdot n$ to be the distribution of $G_k(\bx, \bh_1,\ldots,\bh_k)$ for random $\bx\in\set{0,1}^n$.
Nisan showed that for any fixed FSM with at most $2^w$ states over alphabet $\set{0,1}^n$, with high probability over the hash functions $\bh_1,\ldots,\bh_k$, the distribution $G_k(*, \bh_1,\ldots,\bh_k)$ is indistinguishable from the uniform distribution over $\set{0,1}^{2^k \cdot n}$. 
The power of Nisan's generator is summarized by the following lemma (using notation from section~\ref{sec:hashprg}):
\begin{lemma}
There exists a constant $c > 0$ such that given integers $n$ and $w \le cn$ and parameter $k \le cn$, for any FSM $Q$ with $2^w$ states, if $\bh_1,\ldots,\bh_k : \set{0,1}^n \rightarrow \set{0,1}^n$ are drawn independently from a $2$-wise independent hash family, then with probability $\ge 1 - 2^{-cn}$,
\begin{align*}
    \|Q(G_k(*, \bh_1,\ldots,\bh_k)) - Q((U_n)^{2^k})\| \le 2^{-cn}
\end{align*}
where $\|M\| := \max_i \sum_j |M_{ij}|$.
\end{lemma}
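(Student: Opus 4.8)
The plan is to obtain this lemma as the $b=2$ specialization of the analysis developed for \FastPRG{} in Section~\ref{sec:hashprg}. Recall that Nisan's generator is exactly $G_k$ with branching factor $b=2$ under the convention $\bh_i^{(0)}(x):=x$, so that the single nontrivial hash function per level is $\bh_i:=\bh_i^{(1)}$, drawn from a $2$-wise independent family over $\set{0,1}^n$. Thus, once Lemma~\ref{lem:good-bound} is established in the general $b$ setting, this lemma follows by plugging in $b=2$ together with a bookkeeping check that the stated probability and norm bounds are recovered. (Alternatively one could simply cite \cite{nisan}; we reprove it to keep the paper self-contained.)

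First I would instantiate Lemma~\ref{lem:good-bound} with $b=2$: for any FSM $Q$ of size $2^w$ over $\set{0,1}^n$ and any $\varepsilon>0$, the probability that $(\bh_1,\dots,\bh_k)$ fails to be $((2^k-1)\varepsilon,Q)$-good is at most $2^{3w}k/(\varepsilon^2 2^n)$, since the $(b-1)^2$ in the denominator becomes $1$. The underlying induction is the one recalled in the body: Lemma~\ref{lma:independence-of-hash-function} controls the probability that a fresh $2$-wise independent hash function is $(\delta, B_{i,j}^{h_1,\dots,h_{k-1}})$-independent for the sets $B_{i,j}\subseteq\set{0,1}^{2n}$ of seed pairs driving $Q$ from state $i$ to state $j$; a union bound over the $2^{2w}$ state pairs, using that the sets $B_{i,j}$ partition $\set{0,1}^{2n}$ for each fixed $i$, keeps the loss at $2^{3w}/(\varepsilon^2 2^n)$ per level; and the triangle inequality together with Lemma~\ref{lem:power-difference} (for $b=2$) turns the one-level error $\varepsilon$ and the inductive error $(2^{k-1}-1)\varepsilon$ into $\varepsilon+2(2^{k-1}-1)\varepsilon=(2^k-1)\varepsilon$. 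I would stress that the one place the argument could be botched is precisely this error accounting and the choice to phrase independence in terms of subsets of $\set{0,1}^{2n}$ rather than of $\set{0,1}^{n}\times\set{0,1}^{n}$ (Definition~\ref{def:set-independence} versus Nisan's Definition~3); with the cleaner formulation the induction is routine. Note also that the norm $\|M\|=\max_i\sum_j|M_{ij}|$ of the appendix statement is exactly the operator norm induced by $\|\cdot\|_\infty$ used in Section~\ref{sec:hashprg}, so the submultiplicativity and subadditivity required by Lemmas~\ref{lem:power-difference} and~\ref{lem:good-bound} apply verbatim.

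Finally I would close the lemma by choosing parameters: set $w=cn$ and $\varepsilon=2^{-cn}/(2^k-1)$, so that on the good event $\|Q(G_k(*,\bh_1,\dots,\bh_k))-Q((U_n)^{2^k})\|\le (2^k-1)\varepsilon=2^{-cn}$, the claimed norm bound. For the failure probability, $2^{3w}k/(\varepsilon^2 2^n)=k(2^k-1)^2 2^{5cn-n}$, and using $k\le cn$ and $2^k\le 2^{cn}$ this is at most $cn\,2^{7cn-n}$, which is below $2^{-cn}$ once $c$ is a sufficiently small absolute constant. The expected main obstacle is thus not any single hard estimate but having the general-$b$ machinery of Section~\ref{sec:hashprg} in hand; given that, this lemma is a one-paragraph corollary that also matches Nisan's original bound up to the constant $c$.
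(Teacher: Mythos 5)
The paper itself does not actually prove this appendix lemma: it is stated with a citation to \cite{nisan}, and the body's analysis (Lemmas~\ref{lma:independence-of-hash-function} and \ref{lem:good-bound}) is for the \FastPRG{} construction, not Nisan's. So your intent to make the appendix self-contained by deriving it from the general machinery is reasonable, but there is a real gap in the reduction.

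The gap is that Nisan's generator is \emph{not} the $b=2$ instance of \FastPRG, and Lemma~\ref{lem:good-bound} cannot simply be ``instantiated with $b=2$.'' In \FastPRG{} with $b=2$, both $\bh_i^{(0)}$ and $\bh_i^{(1)}$ are fresh draws from $\calH$, so the concatenated map $\bh_i(x)=\bh_i^{(0)}(x)\circ\bh_i^{(1)}(x)$ lies in a $2$-wise independent family $\calH^b$ with range $\set{0,1}^{2n}$; Lemma~\ref{lma:independence-of-hash-function} relies on exactly this (it uses that $\bh(x)$ is uniform over $\set{0,1}^{2n}$ for each fixed $x$, and that $\bh(x_1),\bh(x_2)$ are independent for $x_1\ne x_2$). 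In Nisan's generator the relevant map is $x\mapsto(x,\bh_i(x))$ with $\bh_i^{(0)}$ the identity. This map is \emph{not} a draw from a $2$-wise independent family with range $\set{0,1}^{2n}$: for fixed $x$ the value $(x,\bh_i(x))$ is supported on the $2^n$ strings with first half equal to $x$, and for $x_1\ne x_2$ the first halves are deterministic, not independent. So invoking Lemma~\ref{lma:independence-of-hash-function} as you do — applying it to the sets $B_{i,j}\subseteq\set{0,1}^{2n}$ with the identity-prefixed hash — is not justified by what that lemma states or proves.

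This is a fixable gap, and I suspect you already half-see it (you flag Definition~3 vs.\ Definition~3' as ``the one place the argument could be botched''), but what you would actually need is not a bookkeeping specialization of Lemma~\ref{lma:independence-of-hash-function}. You would need to reprove its analog under Definition~3': for $A\subseteq\set{0,1}^{2n}$ and $\bh$ drawn from a $2$-wise independent family $\set{0,1}^n\to\set{0,1}^n$, define $f(h)=\Pr_{\bx}[(\bx,h(\bx))\in A]$, show $\E_{\bh}f(\bh)=\varrho(A)$ (for each fixed $x$, $\Pr_{\bh}[(x,\bh(x))\in A]=|A_x|/2^n$ where $A_x=\set{y:(x,y)\in A}$, and averaging over $x$ gives $\varrho(A)$), and rerun the variance calculation. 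The variance computation differs from the proof in the paper because $\Pr_{\bh}[(x,\bh(x))\in A]$ now depends on $x$, but you still get $\Var_{\bh}f(\bh)\le\varrho(A)/2^n$; for $\bx_1\ne\bx_2$ you use that $\bh(\bx_1),\bh(\bx_2)$ are independent to factor $\E_{\bh}[M(\bx_1,\bh)M(\bx_2,\bh)]=(|A_{\bx_1}|/2^n)(|A_{\bx_2}|/2^n)$, and the extra $\E_{\bx}(|A_{\bx}|/2^n)^2$ term only helps. Once that lemma is in place, the induction of Lemma~\ref{lem:good-bound}, the telescoping $(b-1)\varepsilon+(b^k-b)\varepsilon=(b^k-1)\varepsilon$ with $b=2$, and your final parameter setting ($w=cn$, $\varepsilon=2^{-cn}/(2^k-1)$, failure probability $\le cn\cdot 2^{7cn-n}\le 2^{-cn}$ for small $c$) are all correct. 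So the proposal's end-to-end arithmetic is sound; what is missing is the Nisan-specific replacement for Lemma~\ref{lma:independence-of-hash-function}, without which the appeal to the general-$b$ machinery is unjustified.
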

Note that  $\|M\|= \max_{x \ne 0}\|Mx\|_{\infty}/\|x\|_{\infty}$ where $\|x\|_{\infty} = \max_i |x_i|$. We therefore have that for any two matrices $A$ and $B$, $\|A+B\| \le \|A\|+\|B\|$ and $\|AB\| \le \|A\|\|B\|$. Therefore, we obtain that with probability $\ge 1 - 2^{-cn}$ over the hash functions $\bh_1,\ldots,\bh_k$, we have that the total variation distance between the distribution of final state using a random string drawn from $(U_n)^{2^k}$ and a random string drawn from $G_k(*, \bh_1,\ldots,\bh_k)$ is at most $2^{-cn}$.

Specifically, for a $w = O(\log d)$ space algorithms using $\poly(d)$ random bits, we have that we can use Nisan's Generator with $n, k = O(\log d)$. The time to evaluate a block of $n$ random bits is then $\Omega(k) = \Omega(\log d)$ in the Word RAM model as the $k$ hash functions have to be applied sequentially to the random seed. Using our new pseudorandom generator, which we call \FastPRG, we show that we can set $k = O(1)$ at the expense of using more space to store the hash functions.

\section{Finding Heavy Entries}\label{sec:heavy-entries}
See Section~\ref{sec:derandomizing-countsketch} for the definition of CountSketch data structure. Note that $[\countsketchrange]$ denotes the range of locations the coordinate gets hashed into and $\countsketchrepetitions$ denotes the number of repetitions. Further for each $\ell \in [d]$, $\hat{x}_\ell$ defined in \eqref{eq:countsketch-estimator} denotes our estimate for the value of coordinate $x_{\ell}$.

Jowhari, Sa{\u{g}}lam and Tardos \cite{jowhari2011tight} show that if $\countsketchrepetitions = O(\log d)$, then with probability $\ge 1-1/\poly(d)$, for all $\ell$,
\begin{align*}
    |x_{\ell} - \hat{x}_{\ell}| \le \frac{\lp{x}}{\countsketchrange^{1/p}}.
\end{align*}
By picking $\countsketchrange = (\phi/10)^{-p}$ we obtain that with probability $1 - 1/\poly(d)$, for all $\ell \in [d]$, $|x_{\ell} - \hat{x}_{\ell}| \le (\phi/10)\lp{x}$. The algorithm uses $O((\phi/10)^{-p}\log^2 d)$ bits of space and has an update time of $O(\log d)$ per stream element. Condition on the event that for all $\ell \in [d]$, $|x_{\ell}-\hat{x}_{\ell}| \le (\phi/10)\lp{x}$ for all $\ell \in [d]$. 

Concurrently, run the algorithm of \cite{kane2010exact} with $\varepsilon = 1/4$ to obtain a value $v$ such that with probability $\ge 99/100$
\begin{align*}
    (9/10)\lp{x} \le v \le (11/10)\lp{x}.
\end{align*}
Note that for constant $\varepsilon$, their algorithm uses $O(\log d)$ bits of space and has an update time of $O(1)$ per stream element in the Word RAM model. Condition on this event as well. 

Now, let $L'$ be the set returned by heavy-hitters algorithm of \cite{heavy-hitters-fast-query} with parameter $\phi$. Their algorithm uses $O(\phi^{-p}\log ^2 (d))$ bits of space and has an update time of $O(\log d)$ per stream element. At the end of processing the stream, in time $O(\phi^{-p}\poly(\log d))$, they return a set $L'$ satisfying $|L'| = O(\phi^{-p})$ and
\begin{align*}
    L' \supseteq \setbuilder{\ell}{|x_{\ell}| \ge \phi\lp{x}}.
\end{align*}
The set $L'$ contains all the heavy-hitters and may contain additional coordinates as well. To filter the list $L'$, we use the estimates ${\hat x}_{\ell}$ given by the CountSketch data structure. Define
\begin{align*}
    L = \setbuilder{i \in L'}{|\hat x_i| \ge (8/10)\phi v}.
\end{align*}
Conditioned on the correctness of $L', \hat{x}_{\ell}$ and the estimate $v$, we prove properties about the set $L$. If $|x_{\ell}| \ge \phi\lp{x}$, then $|\hat{x}_{\ell}| \ge (9\phi/10)\lp{x} \ge (9\phi/11)v \ge (8\phi/10)v$. Therefore, $\ell \in L$. On the other hand, if $\ell \in L$ then
\begin{align*}
    |x_{\ell}| \ge |\hat{x}_{\ell}| - (\phi/10)\lp{x} \ge (8/10)\phi v - (\phi/10)\lp{x} \ge (6\phi/10)\lp{x}.
\end{align*}
As $|x_{\ell} - \hat{x}_{\ell}| \le (\phi/10)\lp{x}$ for all $\ell$ and $|x_{\ell}| \ge (6\phi/10)\lp{x}$ for all $\ell \in L$, we also obtain that for all $\ell \in L$, $\text{sign}(x_{\ell}) = \text{sign}(\hat{x}_{\ell})$. As the list $L'$ has size at most $O(\phi^{-p})$, the post processing can be performed in time $O(\phi^{-p}\poly(\log d))$. Thus, we over all have the following lemma.
\begin{lemma}
    Given a stream of updates $(i_1, v_1), \ldots, (i_m, v_m) \in [d] \times \set{-M, \ldots, M}$ for $m, M \le \poly(d)$, a parameter $\phi$ and $p \in (0,2)$, there is a streaming algorithm that uses $O(\phi^{-p}\log^2(d))$ bits of space and has an update time of $O(\log d)$ per stream element and outputs a set $L \subseteq [d]$ at the end of the stream that with probability $\ge 9/10$ satisfies:
    \begin{enumerate}
        \item $L \supseteq \setbuilder{\ell \in [d]}{|x_{\ell}| \ge \phi\lp{x}}$.
        \item For all $\ell \in L$, $|x_{\ell}| \ge (6\phi/10)\lp{x}$.
        \item For all $\ell \in L$, the algorithm also outputs $\text{sign}(x_{\ell})$.
    \end{enumerate}
    At the end of the stream, the algorithm takes only $O(\phi^{-p}\poly(\log d))$ time to compute the set $L$.
    \label{lma:heavy-hitters}
\end{lemma}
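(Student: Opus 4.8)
The plan is to construct $L$ by running three standard turnstile primitives in parallel and then filtering a short candidate list using accurate point estimates. First I would maintain a CountSketch data structure \cite{Charikar2004finding} with table size $\countsketchrange = \lceil(\phi/10)^{-p}\rceil$ and $\countsketchrepetitions = \Theta(\log d)$ repetitions; by the $\ell_p$-tail guarantee of Jowhari, Sa\u{g}lam and Tardos \cite{jowhari2011tight} this yields, with probability $1-1/\poly(d)$, estimates $\hat x_\ell$ with $|x_\ell-\hat x_\ell| \le \lp{x}/\countsketchrange^{1/p} \le (\phi/10)\lp{x}$ for all $\ell\in[d]$. In parallel I would run the $F_p$-moment estimation algorithm of \cite{kane2010exact} with constant accuracy $\varepsilon = 1/4$ to obtain a value $v$ with $(9/10)\lp{x} \le v \le (11/10)\lp{x}$ with probability $\ge 99/100$, and a heavy-hitters algorithm (e.g.\ \cite{heavy-hitters-fast-query}) with parameter $\phi$ returning a candidate set $L'$ with $|L'| = O(\phi^{-p})$ and $L' \supseteq \setbuilder{\ell}{|x_\ell| \ge \phi\lp{x}}$ with high probability. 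At the end of the stream I would output
\[
    L = \setbuilder{i \in L'}{|\hat x_i| \ge (8/10)\phi v}
\]
together with $\text{sign}(\hat x_\ell)$ for each $\ell \in L$.

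Conditioning on the (at most three) success events above, which hold simultaneously with probability $\ge 9/10$ by a union bound, I would verify the three required properties by elementary arithmetic. For Property~1: if $|x_\ell| \ge \phi\lp{x}$ then $\ell \in L'$ and $|\hat x_\ell| \ge |x_\ell| - (\phi/10)\lp{x} \ge (9\phi/10)\lp{x} \ge (9\phi/11)v \ge (8/10)\phi v$, hence $\ell \in L$. For Property~2: if $i \in L$ then $|x_i| \ge |\hat x_i| - (\phi/10)\lp{x} \ge (8/10)\phi v - (\phi/10)\lp{x} \ge (8/10)(9/10)\phi\lp{x} - (\phi/10)\lp{x} \ge (6\phi/10)\lp{x}$. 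For Property~3: since every $\ell \in L$ has $|x_\ell| \ge (6\phi/10)\lp{x}$ whereas $|x_\ell - \hat x_\ell| \le (\phi/10)\lp{x} < |x_\ell|$, the sign of $\hat x_\ell$ equals that of $x_\ell$, so reporting $\text{sign}(\hat x_\ell)$ is correct.

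For the resource bounds: since coordinates and updates are bounded by $\poly(d)$, each CountSketch cell holds an integer of magnitude $\poly(d)$, so the table uses $O(\countsketchrange\countsketchrepetitions\log d) = O(\phi^{-p}\log^2 d)$ bits and its hash functions $O(\log^2 d)$ bits; the estimator of \cite{kane2010exact} at constant $\varepsilon$ uses $O(\log d)$ bits; and \cite{heavy-hitters-fast-query} uses $O(\phi^{-p}\log^2 d)$ bits --- summing to $O(\phi^{-p}\log^2 d)$ bits. Per-update time is $O(\countsketchrepetitions) = O(\log d)$ for CountSketch, $O(1)$ for the $F_p$-estimator, and $O(\log d)$ for \cite{heavy-hitters-fast-query}, i.e.\ $O(\log d)$ total. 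Post-processing scans the $O(\phi^{-p})$ elements of $L'$, computing each $\hat x_i$ in $O(\log d)$ time, and adds the $O(\phi^{-p}\poly(\log d))$ query cost of the heavy-hitters structure, for $O(\phi^{-p}\poly(\log d))$ overall.

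I do not expect a serious obstacle here: the content is almost entirely bookkeeping over the black-box primitives and constant arithmetic. The one point requiring a little care is choosing the filtering threshold $(8/10)\phi v$ so that a single constant-factor-off estimate $v$ of $\lp{x}$ simultaneously rules out false negatives (via the lower bound $v \ge (9/10)\lp{x}$ used in Property~1) and keeps false positives within the $(6\phi/10)\lp{x}$ slack of Property~2; one should also note that $\countsketchrange$ must be rounded up to an integer (harmless, affecting only constants), and that it is the $\ell_p$-tail form of the CountSketch estimate from \cite{jowhari2011tight}, rather than the usual $\ell_2$-tail form, that is being invoked.
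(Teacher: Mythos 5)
Your proposal is correct and follows essentially the same route as the paper: CountSketch with the $\ell_p$-tail bound of Jowhari--Sa\u{g}lam--Tardos, a constant-factor $F_p$ estimate from Kane--Nelson--Woodruff, a heavy-hitters candidate list from \cite{heavy-hitters-fast-query}, and the identical filter $L = \{i \in L' : |\hat x_i| \ge (8/10)\phi v\}$ with the same arithmetic verification of the three properties and the same accounting of space, update time, and post-processing time.
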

\end{document}